\documentclass[11pt,a4paper]{article}

\usepackage[utf8]{inputenc}
\usepackage[T1]{fontenc}
\usepackage{lmodern}
\usepackage{amsmath,amssymb,amsthm}
\usepackage[margin=1.1in]{geometry}
\usepackage{booktabs}
\usepackage{enumitem}
\usepackage{microtype}
\usepackage{xcolor}
\usepackage{tikz}\usetikzlibrary{arrows.meta,calc}
\usepackage[colorlinks=true,linkcolor=blue!50!black,citecolor=blue!50!black,urlcolor=blue!50!black]{hyperref}
\hypersetup{pdftitle={Amenable groups with nearly exponential sofic profile, and quantum channels that need nearly linear memory},
pdfauthor={Seth Douglas and Nidhal Mghirbi},pdfkeywords={sofic profile, amenable groups, Houghton groups, Thompson's groups,
wreath products, quantum channels, quantum memory}}

\theoremstyle{plain}
\newtheorem{theorem}{Theorem}[section]
\newtheorem{proposition}[theorem]{Proposition}
\newtheorem{corollary}[theorem]{Corollary}
\newtheorem{lemma}[theorem]{Lemma}
\theoremstyle{definition}

\theoremstyle{remark}
\newtheorem{remark}[theorem]{Remark}
\newtheorem*{theoremA}{Theorem A}
\newtheorem*{theoremB}{Theorem B}
\newtheorem*{theoremC}{Theorem C}

\DeclareMathOperator{\Tr}{Tr}
\DeclareMathOperator{\Sym}{Sym}
\DeclareMathOperator{\FSym}{FSym}
\DeclareMathOperator{\GL}{GL}
\DeclareMathOperator{\AGL}{AGL}
\DeclareMathOperator{\Area}{Area}
\DeclareMathOperator{\rank}{rank}
\newcommand{\ddia}{d_\diamond}
\newcommand{\cK}{\mathcal K}
\newcommand{\F}{\mathbb F}
\newcommand{\Z}{\mathbb Z}
\newcommand{\N}{\mathbb N}

\title{Amenable groups with nearly exponential sofic profile,\\ and quantum channels that need nearly linear memory}
\author{Seth Douglas \and Nidhal Mghirbi}
\date{October 2026}

\begin{document}
\maketitle

\begin{abstract}
How far from finite can an amenable group be? Every finite piece of an amenable group can be imitated by permutations of a finite set, and
Cornulier's sofic profile counts how many points such an imitation needs at accuracy $1/r$. We build a finitely presented elementary amenable
group with a finite piece whose profile is $\exp(r^{1+o(1)})$, close to the most our counting method can ever detect.

The construction is a lamplighter with its lamps on configurations rather than on points. Houghton's group moves the points of three rays,
finitely supported affine maps act on their binary configurations, and a copy of $S_3$ sits on every configuration. Then $N$ points carry
$2^N$ lamps, yet any two lamps can be brought together by moves that each involve at most three points. The group embeds in Brin's group
$3V$, so $3V$ too has a finite piece of nearly exponential profile.

The group also gives an explicit quantum channel on $\mathbb C^{873}$ that is cheap to use once and expensive to use many times. A device that
applies it $n$ times, releasing each output before the next input arrives, needs about $n/B$ qubits of memory when it may spend $B$ bits of
purity, and exact devices achieve this up to polylogarithmic factors. With exchange at the optimal rate, the least memory is about
$\sqrt n$, attained with purity of the same order. The channel is factorizable and lies in the closure of channels with finite maximally mixed baths, yet any such bath that imitates
it to accuracy $u$ needs dimension $\exp(u^{-1+o(1)})$, although its minimal Stinespring dilation implements it exactly with a pure environment of dimension $130$. The engine is a one-round theorem: after a single use, the memory of any such device carries an approximate representation of the
group.
\end{abstract}

{\small\tableofcontents}

\section{Introduction}\label{sec:intro}

Amenable groups are the infinite groups closest to finite ones. Every finite piece of an amenable group can be imitated by permutations of a
finite set, with products correct on all but a small fraction of the points. How many points does such an imitation need? For a finite piece
$E$ of a group, a \emph{chunk}, Cornulier's \emph{sofic profile} $\sigma_E(r)$ is the least number of points on which $E$ can be modelled with
multiplication correct, and distinct elements separated, on all but a fraction $1/r$ of the points~\cite{Cornulier2013}. He asked whether a
sofic group can have a profile that is unbounded and not linear in $r$~\cite[Problem~3.18]{Cornulier2013}.

Groups whose chunks all have finite profile are the sofic groups of Gromov and Weiss~\cite{Gromov1999,Weiss2000}; amenable and residually
finite groups are sofic. Whether every group is sofic was open until 2026, when OpenAI announced a proof that the unit group of the binary
Leavitt algebra is not sofic~\cite{OpenAI2026}. Fournier-Facio then gave a finitely presented torsion-free example~\cite{FournierFacio2026},
and Kun and Thom showed that generalized wreath products, with lamps $\Z/2$ over the cosets of a subgroup, are not sofic for suitable pairs of
residually finite groups with property~(T)~\cite{KunThom2026}. In a non-sofic group some chunk has infinite profile, so the question becomes how large a profile can be
when it is finite. Our group $\Gamma$ has the shape of those of Kun and Thom, lamps over a permutation action, but it is amenable, so all its
profiles are finite. Slofstra and Vidick tied the hyperlinear profile, its analogue for unitary matrices, to the entanglement needed to play
linear-system non-local games near-optimally~\cite{SlofstraVidick2018}. Slofstra then gave a finitely presented group with a chunk whose
profile is at least $\exp(r^{c})$ for some $c>0$, but his group is not known to be sofic, so this profile may be infinite~\cite{Slofstra2018}.

In earlier work we found a profile of this size in an amenable group, where every profile is finite. Houghton's group $H_3$ is elementary amenable and finitely
presented~\cite{Brown1987}, with an explicit presentation due to Johnson~\cite{Johnson1999}, and one of its chunks has profile at least
$\exp(\Omega(r^{0.129}))$~\cite{DouglasMghirbiA}. To our knowledge it was the first amenable group shown to have superpolynomial profile. This contradicts
Cornulier's extension theorem~\cite[Theorem~3.15]{Cornulier2013}, which would bound that profile by $O(r^2)$, and the conclusion drawn from
it in~\cite[Example~3.16]{Cornulier2013}, that every elementary amenable group has polynomial profile; Section~7 of~\cite{DouglasMghirbiA}
shows where the argument for that theorem breaks. The
proof is a counting criterion (Theorem~\ref{thm:criterion}). If a few relations force many displaced copies of $S_3$ to commute cheaply, every
model must be large, because nearly commuting copies of $S_3$ carry entropy. The criterion can never give more than $\exp(O(r))$. This paper
shows that an amenable group can come close to that ceiling.

\begin{theoremA}
There is a finitely presented elementary amenable group $\Gamma$, generated by five elements, with a chunk $E$ for which the following hold.
Put $\kappa=6+\log107<12.75$.
\begin{enumerate}[label=(\alph*),itemsep=1pt]
\item There are absolute constants $c,C>0$ such that for all $r\ge2$
\[
\exp\Bigl(\frac{c\,r}{(\log r)^{\kappa}}\Bigr)\ \le\ \sigma_E(r)\ \le\ \exp(C\,r\log r).
\]
So $\sigma_E(r)=\exp(r^{1+o(1)})$. Every chunk $E'$ of $\Gamma$ satisfies $\sigma_{E'}(r)\le\exp(C_{E'}\,r\log r)$ for all $r\ge2$. The lower
bound is certified by an explicit set of $25$ relations of $\Gamma$.
\item The same relations control unitary models. Two of the generators, $a$ and $c$, give $b=ac^4$, and $a,b$ generate a copy of $S_3$ in
$\Gamma$. Let $0<\Delta\le1$. There are $c_\Delta,e_\Delta>0$ with the following property. Let $\phi$ assign unitaries in $U(D)$ to the five
generators so that each of the $25$ relations is mapped within $e$ of $I$ in the normalized Hilbert--Schmidt norm, where $0<e\le e_\Delta$,
and put $\phi(b)=\phi(a)\phi(c)^4$. If $\|\phi(a)\phi(b)-I\|_2\ge\Delta$, then
\[
\log D\ \ge\ c_\Delta\,\frac{e^{-2}}{(\log(1/e))^{2\kappa}} .
\]
\end{enumerate}
\end{theoremA}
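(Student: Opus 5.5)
The proof splits into the construction of $\Gamma$ with its structural properties, the lower bounds in (a) and (b) obtained from the counting criterion (Theorem~\ref{thm:criterion}), and the upper bounds in (a) from an explicit F{\o}lner-type construction.

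\textbf{Construction.} I will take $\Gamma=\bigl(\bigoplus_{X} S_3\bigr)\rtimes G$. Here $G$ is generated by Houghton's group $H_3$, acting on the three rays $R=\N\times\{1,2,3\}$, together with the finitely supported affine maps on $\F_2^{(R)}$. The set $X=\F_2^{(R)}$ is the set of finitely supported binary configurations, and $G$ permutes $X$. Elementary amenability should follow from closure under extensions and directed unions:
\begin{itemize}[itemsep=1pt]
\item $\bigoplus_X S_3$ is locally finite;
\item the affine part $\F_2^{(R)}\rtimes \GL$ of finitary maps is locally finite;
\item $H_3$ is elementary amenable.
\end{itemize}
For finite presentability I will follow the standard recipe for permutational wreath products. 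The stabilizer of a point of $X$ must be finitely generated, and the action must make it possible to express the commutation of two lamps at distinct configurations by finitely many relations. This is where "any two lamps can be brought together by moves involving at most three points" enters: it gives finitely many orbit types of pairs. I would start from Johnson's presentation of $H_3$, add generators for one transvection, one translation and one $S_3$ lamp, and check the relations by a Baumslag-type argument. Five generators, including $a,c$ with $b=ac^4$, should then suffice.

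\textbf{Lower bounds.} The counting criterion says that if an $e$-approximate model forces $M$ displaced copies of $S_3$ to pairwise nearly commute, each via a word of length $L$, then $\log(\text{size})\gtrsim M$, with a loss polynomial in $L$. I will fix $N$ points of the rays. They carry $2^N$ lamps, and by the three-point moves any two lamps are conjugated together by a word of length $\mathrm{poly}(N)$ in the generators, probably about $N^{\kappa}$. The constant $107$ I expect to be the number of relator applications per elementary move.

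Errors then accumulate to $e\cdot N^{O(1)}$. Choosing $2^N\approx$ the largest number of copies whose accumulated error stays below the $\Delta$ threshold gives $\log D\gtrsim M$ with $M\approx e^{-1}/(\log 1/e)^{\kappa}$ in the permutation setting, hence $\sigma_E(r)\ge \exp(cr/(\log r)^\kappa)$. The unitary version has the squared exponent $e^{-2}$ and the doubled power $2\kappa$. The squaring comes from passing through the Hilbert--Schmidt norm, where near-commutation costs the square of the error, and from using an entropy/dimension bound for nearly commuting unitary copies of $S_3$. The hypothesis $\|\phi(a)\phi(b)-I\|_2\ge\Delta$ ensures the model does not kill the $S_3$ lamp.

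\textbf{Upper bounds.} For every chunk $E'$ I will construct permutation models. I take a F{\o}lner set of $H_3$ of radius $r$, which involves about $r$ ray points, so the lamp configurations number $2^{O(r)}$. I then act on the set of pairs (F{\o}lner element, lamp state over configurations supported in the window). A naive count gives doubly exponential size, so the model must be chosen more carefully: only about $r$ configurations are reachable from a base point by words of length $r$, so the lamp states should be restricted to an $r$-ball of configurations. That gives $\exp(Cr\log r)$ points, with a defect of $1/r$ at the boundary.

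\textbf{Main obstacle.} I expect the polynomial word-length bound for bringing two lamps together to be the hardest step. The exact exponent $\kappa$ comes out of it, and it must hold uniformly over all $2^N$ lamps using only 25 relations.
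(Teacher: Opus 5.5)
Your group and the plan to apply Theorem~\ref{thm:criterion} to $2^N$ lamps on $N$ points match the paper. However, both the lower and the upper bound rest on steps that fail as stated.

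\textbf{Lower bound.} Your lower bound controls the wrong quantity. Theorem~\ref{thm:criterion} depends only on the \emph{areas} $\Area_R([T_uxT_u^{-1},T_wyT_w^{-1}])$ with respect to the fixed $25$ relations. The chunk and the bound do not involve the lengths of the transporting words at all. Short conjugators give no area bound by themselves: along a ray of $H_3$ the conjugators have length $k$, but the commutators have area at least of order $k^2$.

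In the paper the words are cheap, $|T_u|\le 63N^2$. The real work is Theorem~\ref{thm:pair-area}, $\Lambda_N\le 20N^2L(n)$ with $L(n)=10\,\delta_H(1500n)+700n$. This needs two ingredients absent from your plan.
\begin{enumerate}
\item A change of transporter (Lemma~\ref{lem:exchange}). Two words in $p,q$ that agree on the prototype of $t$ or $d$ differ by an element of a point stabilizer. That element is written in the pairs $K_1,K_2$ and moved past $t$ or $d$ by the relations \textsf{S}, at the price of two fillings in $H_3$ of words of length $O(N)$.
\item A normalized clearing. First $T_u^{-1}T_w$ is rewritten as $T_{u+w}$. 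Then at most $N-1$ transvections into a pivot reduce $u+w$ to a single point, renormalizing the translation word after each step. This costs $O(N^2)$ local identities in all.
\end{enumerate}

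The exponent $\kappa=2+p_H$ is these $N^2$ moves times the imported polynomial Dehn bound $\delta_H(n)\le Cn^{4+\log107}$ (Proposition~\ref{prop:dehn}). The $107$ is the factor by which a doubling endomorphism of $H_3$ multiplies area; it is not a count of relator applications per move. With Lee's exponential bound alone, one gets only $\log\sigma_E(r)\ge c\,r^{1/(1+B)}$ (Remark~\ref{rem:lee}). So the step you flag as the main obstacle, a polynomial word length, is easy. The real obstacle, pair areas polylogarithmic in the number of lamps, is not addressed.

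\textbf{Upper bound.} Your upper bound does not give a model of size $\exp(O(r\log r))$. Its premise, that only about $r$ configurations are reachable by words of length $r$, is false. The words $t^{v_1}p\,t^{v_2}p\cdots p\,t^{v_l}p^{1-l}$ have length less than $3l$ and reach $2^l$ configurations; this is exactly why the F{\o}lner function of $\Gamma$ is doubly exponential (Proposition~\ref{prop:folner-lower}). An $r$-ball of configurations therefore has $2^{\Omega(r)}$ elements, and lamp states over it number $6^{2^{\Omega(r)}}$. A F{\o}lner set of $H_3$ also leaves the finitary affine part unaccounted for. Carrying that part as group elements would cost $|\AGL(W_R)|=2^{\Theta(r^2)}$.

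The missing idea is to hash configurations. For a chunk of words of length at most $\ell$, a point is a random affine map $\eta:W_R\to\F_2^k$, with $2^k$ of order $\ell^2r$, together with a lamp record in $S_3^{\F_2^k}$. The affine group acts on $\eta$ by precomposition, and lamps are multiplied into the record through $\eta$ (Lemma~\ref{lem:hash}). Random affine maps form a universal hash family, so the $O(\ell)$ relevant configurations collide with probability $O(\ell^2 2^{-k})$. Meanwhile the number of points is $2^{k(\dim W_R+1)}\,6^{2^k}=\exp(O(r\log r))$. The non-finitary part of $H_3$ is handled by a clock: a box $B_m\subset\Z^2$ with the cocycle $g_{z+\pi(g)}^{-1}gg_z$ taking values in the finite group $S_3^{W_R}\rtimes\AGL(W_R)$ (Lemma~\ref{lem:clock}). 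No F{\o}lner set is used.

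\textbf{Finite presentation.} These issues are less serious. Finitely many orbits on pairs come from the double transitivity of the affine group on $V$, not from the three-point moves. Cornulier's criterion also needs the affine group itself to be finitely presented, which is the substantive part (Steinberg's relations plus a presentation lemma, Proposition~\ref{prop:fp}). Five generators suffice because $c=d\,ab$ with $d=E_{2\leftarrow1}$ satisfies $c^3=d$ and $c^4=ab$.
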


Theorem~\ref{thm:main-group} gives the precise statement, with the chunk $E$ and the $25$ relations of Table~\ref{tab:certificate}. The lower
bound is within a polylogarithmic factor in the exponent of the most the criterion can give. The exponent $\kappa$ records the price of bringing
two lamps together: about $N^2$ local moves, each paid for by the Dehn function of $H_3$, which grows at most like $n^{4+\log107}$~\cite[Theorems~3.2 and~3.3]{DouglasMghirbiA}. That
polynomial bound is the one input of Theorem~A whose proof we do not reproduce; with Lee's exponential bound alone the method gives only
$\exp(c\,r^{1/(1+B)})$ for a constant $B$ that we cannot make explicit (Remark~\ref{rem:lee}). The $107$ is
the factor by which a doubling endomorphism of $H_3$ multiplies area in~\cite[Section~3]{DouglasMghirbiA}; it is the only constant in $\kappa$
that is not structural, and a quadratic Dehn function for $H_3$ would give $\kappa=4$. The upper bound comes from models that hash
configurations by random affine maps. These are far smaller than the models that Følner sets give~\cite[Proposition~3.14]{Cornulier2013},~\cite[Section~4.3]{ArzhantsevaCherix2024},
because the Følner function of $\Gamma$ is doubly exponential (Proposition~\ref{prop:folner-lower}). The finite presentation
(Proposition~\ref{prop:fp}) comes from Cornulier's criterion for permutational wreath products~\cite[Theorem~1.1]{Cornulier2006}. Finally,
$\Gamma$ embeds in Brin's higher-dimensional Thompson group $3V$ (Proposition~\ref{prop:3V}). So $3V$, a finitely presented simple group whose
soficity is open, has a chunk whose profile is finite and $\exp(r^{1+o(1)})$.

\paragraph{Configuration lamps.} In Houghton's group the copies of $S_3$ sit on the points of a ray. Bringing two of them together costs a
word as long as the distance $k$ between them, and commuting them costs at least of order $k^2$ relations~\cite[Proposition~3.4]{DouglasMghirbiA}.
A family of $N$ copies spreads over distance $N$, so its commutators are expensive. We move the lamps off the points and onto the finitely
supported binary configurations of the three rays of $H_3$. Houghton's group moves the points. Finitely supported affine maps act on the
configurations: a translation flips one point, and a transvection adds one point into another. One copy of $S_3$, a lamp, sits on each
configuration (Figure~\ref{fig:lamps}). The $2^N$ configurations supported on the first $N$ points of a ray are reached by words of length
$O(N^2)$, and two of them differ by a configuration that at most $N$ transvections clear down to a single point. Each step touches at most
three points, so its cost is governed by the Dehn function of $H_3$ at scale $N$, which is polynomial~\cite[Theorems~3.2 and~3.3]{DouglasMghirbiA}.
Commuting two of the $2^N$ lamps therefore costs a polynomial in $N$, which is polylogarithmic in the number of lamps. That is exactly what the
criterion needs.

\begin{figure}[t]
\centering
\begin{tikzpicture}[x=1cm,y=1cm,font=\small,>=Stealth,baseline=(current bounding box.center),
  pt/.style={circle,draw,fill=white,inner sep=0pt,minimum size=5pt},
  on/.style={pt,fill=black},
  mv/.style={accent,thick,->,shorten >=1.5pt,shorten <=1.5pt},
  lbl/.style={font=\footnotesize,inner sep=1pt}]
\colorlet{accent}{blue!50!black}
\def\u{0.74}
\foreach \i/\ang in {1/0,2/120,3/240}{
  \draw[black!50] (0,0) -- (\ang:4.4*\u);
  \node at (\ang:4.8*\u) {$\cdots$};
  \foreach \j in {1,...,4}{\node[pt] (p\i\j) at (\ang:\j*\u) {};}
}
\node[lbl,anchor=west] at ($(120:4.8*\u)+(0.3,0)$) {ray $2$};
\node[lbl,anchor=west] at ($(240:4.8*\u)+(0.3,0)$) {ray $3$};
\node[lbl,anchor=north] at ($(0:4.8*\u)+(0.1,-0.2)$) {ray $1$};
\node[lbl,anchor=north] at ($(p11)+(-0.02,-0.2)$) {$1$};
\foreach \j in {2,3,4}{\node[lbl,anchor=north] at ($(p1\j)+(0,-0.16)$) {$\j$};}
\node[lbl,anchor=north east] at ($(p21)+(-0.08,-0.06)$) {$(2,1)$};
\node[lbl,anchor=south east] at ($(p31)+(-0.08,0.06)$) {$(3,1)$};
\node[on] at (p11) {};
\node[on] at (p13) {};
\node[on] at (p22) {};
\node[lbl,anchor=west] at (1.55,1.45) {$v=e_1+e_3+e_{(2,2)}$};
\foreach \j/\k in {4/3,3/2,2/1}{\draw[mv] (p2\j) -- (p2\k);}
\draw[mv] (p21) to[out=-15,in=170] (p11);
\foreach \j/\k in {1/2,2/3,3/4}{\draw[mv] (p1\j) -- (p1\k);}
\draw[mv] (p14) -- ($(0:4.55*\u)$);
\node[lbl,accent] at ($(120:2.5*\u)+(0.32,0.04)$) {$p$};
\foreach \j/\k in {4/3,3/2,2/1}{\draw[mv,dashed] (p3\j) -- (p3\k);}
\draw[mv,dashed] (p31) to[out=15,in=190] (p11);
\node[lbl,accent] at ($(240:2.5*\u)+(0.32,-0.04)$) {$q$};
\node[lbl,accent,anchor=west] at ($(0:4.8*\u)+(0.28,0)$) {$p,q$};
\draw[mv] (1.55,-0.95) -- ($(p11)+(0.18,-0.08)$);
\node[lbl,accent,anchor=north west] at (1.5,-0.95) {$t=\tau_1$: flip $1$};
\draw[mv] (p11) to[bend left=60] (p12);
\node[lbl,accent,anchor=west] at (1.35,0.62) {$d=E_{2\leftarrow1}$: flip $2$ if $1\in v$};
\coordinate (L) at (3.4,-2.15);
\draw ($(L)+(90:0.5)$) -- ($(L)+(210:0.5)$) -- ($(L)+(330:0.5)$) -- cycle;
\draw[accent,thick,<->,shorten >=4pt,shorten <=4pt] ($(L)+(90:0.5)$) -- ($(L)+(210:0.5)$);
\draw[accent,thick,<->,shorten >=4pt,shorten <=4pt] ($(L)+(210:0.5)$) -- ($(L)+(330:0.5)$);
\node[lbl,anchor=south] at ($(L)+(90:0.5)+(0,0.03)$) {$1$};
\node[lbl,anchor=north east] at ($(L)+(210:0.5)+(0.02,-0.02)$) {$2$};
\node[lbl,anchor=north west] at ($(L)+(330:0.5)+(-0.02,-0.02)$) {$3$};
\node[lbl,accent,anchor=east] at ($(L)+(150:0.3)+(-0.08,0)$) {$a$};
\node[lbl,accent,anchor=north] at ($(L)+(270:0.3)+(0,-0.08)$) {$b$};
\node[lbl,anchor=north,align=center] at ($(L)+(0,-0.72)$) {the lamp at $0$:\\ $a=(1\,2)$, $b=(2\,3)$};
\end{tikzpicture}%
\hspace{0.8cm}%
\begin{tikzpicture}[x=1cm,y=1cm,font=\small,>=Stealth,baseline=(current bounding box.center),
  lamp/.pic={\draw[fill=white] (90:0.16) -- (210:0.16) -- (330:0.16) -- cycle;},
  lbl/.style={font=\footnotesize,inner sep=1pt}]
\colorlet{accent}{blue!50!black}
\coordinate (v0) at (0,0);
\coordinate (v1) at (2.4,0);
\coordinate (v2) at (0,2.4);
\coordinate (v12) at (2.4,2.4);
\coordinate (v3) at (1.3,1.1);
\coordinate (v13) at (3.7,1.1);
\coordinate (v23) at (1.3,3.5);
\coordinate (v123) at (3.7,3.5);
\draw (v0) -- (v1) -- (v12) -- (v2) -- cycle;
\draw (v3) -- (v13) -- (v123) -- (v23) -- cycle;
\draw (v0) -- (v3) (v1) -- (v13) (v2) -- (v23) (v12) -- (v123);
\node[lbl,anchor=north] at ($(v0)!0.5!(v1)+(0,-0.08)$) {$t=\tau_1$};
\node[lbl,anchor=east] at ($(v0)!0.5!(v2)+(-0.08,0)$) {$\tau_2$};
\node[lbl,anchor=south east] at ($(v0)!0.55!(v3)+(-0.04,0.04)$) {$\tau_3$};
\draw[accent,thick,dashed,<->,shorten >=5pt,shorten <=5pt] (v1) to[bend right=38] (v12);
\draw[accent,thick,dashed,<->,shorten >=5pt,shorten <=5pt] (v13) to[bend right=38] (v123);
\node[lbl,accent,anchor=west] at (4.2,2.3) {$d=E_{2\leftarrow1}$};
\foreach \v in {v0,v1,v2,v12,v3,v13,v23,v123}{\pic at (\v) {lamp};}
\node[lbl,anchor=north east] at ($(v0)+(-0.1,-0.1)$) {$0$};
\node[lbl,anchor=north] at ($(v1)+(0,-0.18)$) {$e_1$};
\node[lbl,anchor=east] at ($(v2)+(-0.18,0)$) {$e_2$};
\node[lbl,anchor=north east] at ($(v12)+(-0.14,-0.12)$) {$e_1{+}e_2$};
\node[lbl,anchor=north west] at ($(v3)+(0.12,-0.1)$) {$e_3$};
\node[lbl,anchor=north west] at ($(v13)+(0.1,-0.12)$) {$e_1{+}e_3$};
\node[lbl,anchor=south] at ($(v23)+(0,0.18)$) {$e_2{+}e_3$};
\node[lbl,anchor=south] at ($(v123)+(0,0.18)$) {$e_1{+}e_2{+}e_3$};
\end{tikzpicture}
\caption{Left: the points of the three rays, a configuration $v$ (filled points), the lamp at $0$, and the generators as moves: $p$ and $q$
move points, and $t$ and $d$ move configurations. Right: the $2^3$ configurations supported on the points $1,2,3$, each carrying a lamp, joined by the
translations $\tau_1,\tau_2,\tau_3$ and folded by the transvection $d=E_{2\leftarrow1}$, which exchanges $e_1$ with $e_1+e_2$ and $e_1+e_3$ with
$e_1+e_2+e_3$ and fixes the other four. $N$ points carry $2^N$ lamps, and any two are brought together by moves that each involve at most
three points.}
\label{fig:lamps}
\end{figure}

\paragraph{A ladder.} The configuration lamps are the last step of a ladder of transport mechanisms, each of which the criterion turns into a
profile bound.

\begin{center}
\begin{tabular}{llll}
\toprule
transport & group & $\log\sigma_E(r)\ \gtrsim$ & profile finite \\
\midrule
along a ray & $H_3$~\cite{DouglasMghirbiA} & $r^{0.129}$ & yes \\
by one element & dyadic $\mathcal G$ (\S\ref{sec:dyadic}) & $r^{1/3}$ & yes \\
exponential orbits & $\mathcal L\le V$ (\S\ref{sec:thompson}) & $r/\log^2r$ & if $F$ is amenable \\
configurations & $\Gamma$ (\S\S\ref{sec:group}--\ref{sec:profile}) & $r/(\log r)^{\kappa}$ & yes \\
\bottomrule
\end{tabular}
\end{center}

\noindent Transport by the powers of one element gives exponent $1/(p+1)$ when far commutators have area of order $k^p$. Along a ray of $H_3$
the area is at least quadratic~\cite[Proposition~3.4]{DouglasMghirbiA}, so this transport stops at $1/3$, which the dyadic group $\mathcal G$
attains with seventeen relations. Thompson's group $F$ transports cheaply along exponentially growing orbits, and a finitely presented
subgroup $\mathcal L$ of Thompson's group $V$ turns this into a dichotomy: either $V$ is not sofic, or its models of a fixed chunk need
$2^{c\,r/\log^2r}$ points. Models of $\mathcal L$ exist if $F$ is amenable, and $\mathcal L$ is hyperlinear exactly when $F$ is
(Proposition~\ref{prop:thompson-hyperlinear}). The configuration lamps keep the exponential transport and drop the condition on $F$.

\paragraph{Quantum memory.} The second half of the paper turns $\Gamma$ into a quantum channel. Picture a device that must apply the same
channel $n$ times. Each input arrives, the device acts on it together with its memory, and the output leaves before the next input comes in.
The device may swap memory qubits for fresh ones and may start with some purity, but it never resets, and every qubit it keeps counts. How much
memory does it need? This question goes back to Kretschmann and Werner, who showed that every causal quantum process is a concatenation of
memory channels~\cite{KretschmannWerner2005}, and to Rybár and Ziman, who asked when a device that is never reset applies the same channel to
each of its inputs. They showed that random unitary channels can be applied this way with a fixed finite memory, while nonunital channels
cannot~\cite{RybarZiman2008}. We let the memory grow with the number of uses and ask how fast it must grow. We set up the resource accounting in~\cite{DouglasMghirbiB}. The \emph{exchange} counts the qubits
the device swaps out, and the \emph{purity} counts the qubits it starts with or imports, minus their entropy (an entropy deficit, not $\Tr\rho^2$). The error is
measured against observers who choose their inputs adaptively and keep quantum references, as for quantum strategies, combs and
process tensors~\cite{GutoskiWatrous2007,Chiribella2009,Pollock2018}. A finite set of relations of a
group defines a channel, its \emph{gadget channel}, whose memory cost is governed by how well the group can be imitated: small memory would
give small approximate representations. For a gadget channel in the closure of channels with finite tracial factorizations, the optimal
exchange is $\frac12\log r_*$ qubits per use, where $r_*$ is the Choi rank~\cite[Propositions~3.1 and~9.1]{DouglasMghirbiB}. We call it the
rank rate.

The quantum reverse Shannon theorem simulates $n$ uses of a channel as one block, acting on all inputs at once with free shared entanglement,
and charges classical communication~\cite{BDHSW2014,BCR2011}; our device sees one input at a time and pays in memory. Storage and retrieval
keeps an unknown unitary in a quantum memory for later use~\cite{SBZ2019}; our channel is known, and memory is needed because it is not
unitary. Memory costs are known for classical simulations of sequences of quantum measurements~\cite{KGPLC2011,BFK2019}
and for quantum models of stochastic processes and of adaptive agents~\cite{GWRV2012,EGGT2022}; their targets have temporal correlations,
while $n$ independent uses of one channel have none. Our devices have the form of collision models~\cite{CLGP2022}: the memory plays the
system, and the inputs and imported qubits are units that each meet it once. Collision models usually fix the collisions and derive the
dynamics of the system; we fix the channel that the inputs see and ask how large the system must be. Purity, counted as above, is the asymptotic
measure of the resource theory of purity, or informational nonequilibrium~\cite{HHO2003,GMNSY2015}: maximally mixed qubits are free, and its free
operations from a system to itself, the noisy operations, are the channels with a finite tracial factorization that appear in Theorem~C.

One use reveals nothing: the channel below lives on $\mathbb C^{873}$, and a single round is easy. The dimension $873$ is what the gadget
construction gives for the $29$ words that define the channel; we made no attempt to minimize it. The cost shows up only over many rounds.
From $H_3$ we obtained a channel that needs memory $(n/\log n)^{0.069}$ at logarithmic purity~\cite{DouglasMghirbiB}. The configuration lamps
give nearly linear memory, and a trade-off between memory and purity that is sharp up to polylogarithmic factors. All these bounds are
asymptotic: the polylogarithmic factors have exponent up to $2\kappa+1<26.5$, and the constants, which come from the Dehn function of $H_3$,
are not evaluated, so the lower bounds take effect only for very large $n$.

\begin{theoremB}
The $25$ relations of Theorem~A, with two new letters for $b=ac^4$ and $j=ab$ and four short words that relate them ($29$ words in seven
letters), define an explicit quantum channel $\Phi$ on $\mathbb C^{873}$.
It lies in the closure of channels with finite tracial factorizations, and its Choi rank is $r_*=130$. With $\kappa$ as in
Theorem~A, there are constants $c,c_0,K>0$ and $n_0$ such that the following hold for causal services of $n$ uses of $\Phi$ with error at most
$1/16$.
\begin{enumerate}[label=(\alph*),itemsep=1pt]
\item \emph{Memory against purity.} If $n\ge n_0$ and $\log n\le B\le c_0n$, every service with purity at most $B$ has memory at least
$c\,n/(B(\log n)^{2\kappa})$, at any exchange. For every $n\ge1$ and $0<B\le n$ there is an exact service, with error $0$, purity less than
$B$ and exchange $(n-1)Q$, whose memory is $Q\le K(n/B)\log(2n/B)$. In particular, services with purity $O(\log n)$ need memory
$n^{1-o(1)}$, while memory $Kn\log n$ suffices at zero purity.
\item \emph{The rank rate.} If $n\ge n_0$, every service with exchange at most $\frac n2\log r_*$ has memory at least
$c\sqrt n/(\log n)^{\kappa}$, whatever its purity. For every $n\ge2$ there is a service with exchange at most $\frac n2\log r_*$ whose memory
and purity are both at most $K\sqrt{n\log n}$.
\end{enumerate}
So memory and purity trade at a fixed rate: in the range of~(a), the least memory at purity $B$, times $B$, is $n^{1+o(1)}$. With exchange at
most the rank rate the least memory is $n^{1/2+o(1)}$, and it is attained with purity of the same order.
\end{theoremB}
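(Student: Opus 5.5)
Theorem~B splits into three parts: the construction and properties of the single channel $\Phi$, the lower bounds in (a) and (b), and the matching exact services. I would take them in that order. All three rest on Theorem~A(b) and on the gadget-channel machinery of~\cite{DouglasMghirbiB}, which is the accounting for exchange, purity and error, together with the rank-rate identification $\frac12\log r_*$ of~\cite[Propositions~3.1 and~9.1]{DouglasMghirbiB}.

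\emph{Construction and membership in the closure.} I would first write down the $29$ words: the $25$ relations of Table~\ref{tab:certificate}, plus $b=ac^4$, $j=ab$ and the short relations tying $j$ to $S_3$ (e.g.\ $a^2$, $b^2$, $j^3$). Feeding these into the gadget construction gives $\Phi$ on $\mathbb C^{873}$, and the Choi rank $130$ is then read off from the gadget construction; this is a count, not an estimate. Membership in the closure of channels with finite tracial factorizations should follow from amenability of $\Gamma$. Its chunks have finite sofic models (Theorem~A(a), upper bound), and permutation models of accuracy $1/r$ give finite maximally mixed baths whose gadget channels converge to $\Phi$. The same models, at size $\exp(Cr\log r)$, will also supply the constructions below.

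\emph{Lower bounds.} The engine is the one-round theorem announced in the abstract. After one use, the memory of a service with small error carries an approximate unitary representation of $\Gamma$, in normalized Hilbert--Schmidt norm, of dimension about $2^{\text{memory}+\text{purity-related}}$. Its relation defects are controlled by the per-round error, and $\|\phi(a)\phi(b)-I\|_2$ stays bounded below. Over $n$ rounds the total error $1/16$ has to be spread, so some round, or an averaged round, has defect $e\approx$ (resource)$/n$; I would get this by a chain-rule/telescoping argument on the entropy or purity budget. For (a), purity $B$ is spent over $n$ rounds, so some round has only $B/n$ purity left. The gadget analysis should then turn this into a representation on the memory with $e^2\lesssim B/n$ and $\log D\lesssim$ memory. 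Plugging into Theorem~A(b), $\log D\ge c\,e^{-2}/(\log(1/e))^{2\kappa}$, gives memory $\ge c\,n/(B(\log n)^{2\kappa})$. For (b), exchange at the rank rate forces a tracial, near-purity-free factorization in each round, up to an additive deficit that sums over rounds. By an averaging or Cauchy--Schwarz argument the deficit per round is $\lesssim(\text{memory})/n$ while $e\lesssim(\text{memory})/n$ squared-root-wise. This gives memory$^2\gtrsim n/(\log n)^{2\kappa}$, i.e.\ the $\sqrt n/(\log n)^\kappa$ bound. I expect this step to be the main obstacle: the exact conversion of purity and exchange deficits into Hilbert--Schmidt defect $e$, with the right square (since Theorem~A(b) has $e^{-2}$), and the choice of which round to extract the representation from. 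I would try first to reuse the one-round theorem of~\cite{DouglasMghirbiB} verbatim and only replace its group input by Theorem~A(b).

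\emph{Upper bounds.} Exact services come from the hashing models behind the upper bound of Theorem~A(a), via random affine maps. A permutation model on $M$ points realizes $\Phi$ exactly through a tracial factorization with a maximally mixed bath of dimension $M$. Splitting the $n$ uses into about $B/\log(2n/B)$ blocks, and refreshing the bath between blocks with purity, gives memory $K(n/B)\log(2n/B)$ and exchange $(n-1)Q$. For (b) I would balance the block length against the purity needed to refresh, taking $\sqrt{n\log n}$ qubits for each. Then the exchange per use beyond the Choi-rank Stinespring dilation stays within $\frac12\log r_*$ on average.
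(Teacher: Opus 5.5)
There are genuine gaps in both the lower and the upper bounds.

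\textbf{Lower bounds.} The step you flag as the main obstacle does not work as planned. The memory state $\rho$ entering the extracted round is not flat. Leakage and entropy production control the relations only in the $\rho$-weighted norm $\|(W-I)\sqrt\rho\|_F$, together with the centrality defect $\|W\sqrt\rho-\sqrt\rho W\|_F$. On the part of $\mathbb C^{2^Q}$ where $\rho$ has little weight nothing is controlled, so you do not get a model in the normalized Hilbert--Schmidt norm to which the unitary part of Theorem~A applies. Replacing the round by a nearby flat tracial mixture, as the general correspondence of~\cite{DouglasMghirbiB} does, costs a second square root. That gives defect about $(B/n)^{1/4}$, hence memory only of order $\sqrt{n/B}$ up to logarithms. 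Also, the one-round theorem of~\cite{DouglasMghirbiB} that you want to reuse is proved only for the Houghton gadget.

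The missing idea is to run the whole counting argument in the weighted norms, as in Theorem~\ref{thm:weighted}:
\begin{itemize}
\item After normalizing the output basis, entropy production gives $\|[U,1\otimes\sqrt\rho\,]\|_F^2\le4d\,\Delta S\ln2$.
\item The anchors and triangle blocks turn small leakage into unitaries satisfying every relation to within $O(m\sqrt{d\nu})$.
\item $\phi(a)\phi(b)$ stays displaced because the coherence between the anchors of $j$ and of the empty word vanishes, combined with the word $j^{-1}ab$. This is why $j^{-1}ab$ must be one of the four new words.
\item Weighted errors propagate over van Kampen diagrams by Lemma~\ref{lem:charged-area}, and the copies are rounded in place.
\item A weighted sector bound then gives $S(\rho)\ge N/16$, which suffices because $S(\rho)\le Q$.
\end{itemize}
The transport input is Theorem~\ref{thm:pair-area}, not Theorem~A.

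For~(b) you also need an actual mechanism. The rank rate does not force near-purity-free tracial rounds. What it gives is the purity budget $P+C<3Q+6$ of Proposition~\ref{prop:purity-budget}. That budget comes from the flat probe of rank $r_*$, via a rank count and a one-sided Chebyshev bound on the surprisal of the preloaded register. After that, (a) with $B=3Q+6$ yields $Q^2\gtrsim n/(\log n)^{2\kappa}$.

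\textbf{Upper bounds.} These rest on a false claim: a permutation model does not realize $\Phi$ exactly with a maximally mixed bath. Proposition~\ref{prop:sofic-upper} gives only $(1-\vartheta)\Phi+\vartheta\,\Xi$ with $\vartheta\le K_G/r$. In fact $\Phi$ is not even a finite mixture of finite tracial channels (Corollary~\ref{cor:profile}). If it had a single finite tracial factorization, refreshing that bath with maximally mixed qubits would serve it with error at most $1/16$, zero purity and memory $O(\log n)$, contradicting~(a).

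Exactness comes instead from preparing each round's memory maximally mixed on the \emph{good} points of the model only. There the pure-bath channel equals $\Phi$ exactly. That state has deficit about the fraction of bad points. So take accuracy $r=4K_Gn/B$ and refresh the whole memory every round; this is why $J=(n-1)Q$, and no blocks are involved. The total purity then stays below $B$, and Theorem~\ref{thm:main-group}(b) bounds $Q=\lceil\log\sigma_E(4K_Gn/B)+\log(4n/B)\rceil$ by $K(n/B)\log(2n/B)$.

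Your sketch for~(b) does not say how the rounds are served while the memory is kept and only $\frac12\log r_*$ qubits per round are exported, which is the whole difficulty. The paper's streaming theorem handles it as follows:
\begin{itemize}
\item It uses the good points as an exact rectangular dilation $\mathbb C^d\otimes\mathbb C^{\Omega_0}\to\mathbb C^d\otimes\mathbb C^{\Omega}$.
\item The active memory grows by the factor $|\Omega|/|\Omega_0|$ per round, and that growth is the purity bill.
\item Haar-random unitaries keep the Kraus histories nearly orthogonal, and exporting at the rank rate keeps the exterior spread out.
\item Finally it balances $\log\sigma_E(s)\lesssim s\log s$ against $n/s$ at $s\approx\sqrt{n/\log n}$.
\end{itemize}
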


Theorem~\ref{thm:main-channel} gives the precise statement, and Figure~\ref{fig:tradeoff} draws it. The engine is a one-round theorem
(Theorem~\ref{thm:weighted}) for every gadget channel whose relations contain those of $S_3$ and a name for $ab$: a single round of any device
reveals an approximate representation of the group on the device's memory, with relator defect of order the square root of the leakage and
entropy production of that round. In~\cite[Theorem~5.5 and Appendix~B]{DouglasMghirbiB} this was proved for one gadget, built from Houghton's
group. The upper bounds come from sofic models, which give devices directly (Proposition~\ref{prop:sofic-upper}). At most points of a model
the pure bath state reproduces the channel exactly, so a memory that is maximally mixed on those points serves the channel with no error at
all. At the rank rate a device cannot buy memory with purity, because the purity it can spend is at most linear in its memory
(Proposition~\ref{prop:purity-budget}).

\paragraph{Finite environments.} The same bounds say how far from finite the channel itself is. A channel is factorizable if it factors through a finite von Neumann algebra with a faithful trace~\cite{AD2006,HaagerupMusat2011},
and it has a finite tracial factorization if it is $X\mapsto(\mathrm{id}\otimes\tau_M)[U(X\otimes I)U^*]$ for a unitary $U$ on $\mathbb C^d\otimes\mathbb C^M$, that is, if
it can be implemented with a finite bath in its maximally mixed state. By a theorem of Haagerup and Musat, the closure of these channels
consists of the channels that factor through an ultrapower of the hyperfinite $\mathrm{II}_1$ factor, and whether it contains every
factorizable channel, in every dimension, is equivalent to the Connes embedding problem~\cite[Theorems~3.6 and~3.7]{HaagerupMusat2015}.
That problem has a negative answer~\cite{JNVWY2020}, so some factorizable channels lie outside the closure. Inside it, Musat and R{\o}rdam
gave channels that need an infinite-dimensional ancilla~\cite{MusatRordam2020}. Our channel measures how expensive the finite
approximations can be.

\begin{theoremC}
The channel $\Phi$ of Theorem~B is factorizable and lies in the closure of the channels with finite tracial factorizations, and its minimal
Stinespring dilation implements it exactly with a pure environment of dimension $130$. Yet for small $u$, every channel with a finite tracial factorization within
accuracy $u$ of $\Phi$, in Choi or in diamond distance, and every finite mixture of such channels, needs a bath of dimension at least
$\exp\bigl(c\,u^{-1}/(\log(1/u))^{2\kappa}\bigr)$, while dimension $\exp(C\,u^{-1}\log(1/u))$ suffices.
\end{theoremC}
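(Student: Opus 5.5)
Theorem~C will follow from Theorems~A and~B together with the structure of the gadget channel, and I plan to prove it in four steps.

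\emph{Step 1: factorizability and the closure.} Both come from how $\Phi$ is built from $\Gamma$. The gadget channel of a finite set of relations is obtained by inserting a tracial representation of the group into a fixed unitary gadget. Since $\Gamma$ is amenable, its left regular representation generates the hyperfinite $\mathrm{II}_1$ factor, with trace $\tau(g)=\delta_{g,1}$. The group von Neumann algebra $L\Gamma$ therefore gives a factorization of $\Phi$ through a finite von Neumann algebra with faithful trace, so $\Phi$ is factorizable. Every factorization through the hyperfinite factor lies in the closure by Haagerup--Musat~\cite[Theorems~3.6 and~3.7]{HaagerupMusat2015}. More concretely, the sofic models from the upper bound of Theorem~A(a) give finite tracial factorizations converging to $\Phi$.

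\emph{Step 2: the pure environment.} The Choi rank $r_*=130$ is stated in Theorem~B, and a minimal Stinespring dilation has environment dimension equal to the Choi rank. So this part is immediate once Theorem~B's computation of $r_*$ is in hand.

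\emph{Step 3: the upper bound.} I will read it off Proposition~\ref{prop:sofic-upper}. A sofic model of the chunk on $M$ points, with defect $1/r$, gives a finite tracial factorization on a bath of dimension $M$. This factorization reproduces $\Phi$ exactly on all but a fraction $O(1/r)$ of the points, so its Choi distance to $\Phi$ is $O(1/r)$, and the diamond distance is at most $873$ times that. Taking $r\asymp 1/u$ and $M\le\exp(Cr\log r)$ from Theorem~A(a) gives the bath dimension $\exp(Cu^{-1}\log(1/u))$.

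\emph{Step 4: the lower bound, which is the main step.} Let $\Psi(X)=(\mathrm{id}\otimes\tau_M)[U(X\otimes I)U^*]$ be within $u$ of $\Phi$. I will apply the one-round theorem, Theorem~\ref{thm:weighted}, to the single-use device whose memory is the bath $\mathbb C^M$ in its maximally mixed state. This state has zero entropy production, and its leakage is controlled by the distance $u$. The theorem then yields unitaries in $U(M)$ satisfying the $25$ relations within $e=O(\sqrt u)$ in normalized Hilbert--Schmidt norm. The $S_3$ gadget relations together with the letter $j=ab$ force $\|\phi(a)\phi(b)-I\|_2\ge\Delta$ for a fixed $\Delta$; this is the nondegeneracy built into the gadget, which ensures the channel distinguishes $ab$ from the identity. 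Theorem~A(b) then gives $\log M\ge c_\Delta e^{-2}(\log(1/e))^{-2\kappa}\gtrsim u^{-1}(\log(1/u))^{-2\kappa}$.

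For finite mixtures $\sum_i p_i\Psi_i$, I will take the direct sum of the baths, weighted by $p_i$. This realizes the mixture as a single tracial-state device on a non-maximally mixed but tracial bath. The weighted form of Theorem~\ref{thm:weighted} handles this case by averaging the defect over $i$. Markov's inequality then gives a component $i$ of non-negligible weight whose defect is $O(\sqrt u)$, and applying Theorem~A(b) to that component bounds its dimension, hence the largest bath.

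The obstacle I expect is the conversion in Step 4. The question is whether the one-round theorem's hypothesis is met with leakage linear in $u$, so that the defect is $O(\sqrt u)$ and the square in $e^{-2}$ produces exactly $u^{-1}$. For the diamond distance this is immediate. For the Choi distance it should follow because the Choi matrix is tested against the maximally entangled input, which suffices for a single round.
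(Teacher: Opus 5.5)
Your Steps~1--3 are the paper's argument: factorization through $(L(\Gamma),\tau)$, the closure and the upper bound from the sofic models of Proposition~\ref{prop:sofic-upper} with $\log\sigma_{E'}(r)\le C_{E'}r\log r$, and the Stinespring dilation of dimension $r_*$. In Step~3 the decomposition $\Phi_f=(1-\vartheta)\Phi+\vartheta\,\Xi$ already gives $\ddia(\Phi_f,\Phi)\le\vartheta$, so the factor $873$ is not needed.

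Step~4 is correct but organized differently. The paper turns a mixture $\sum_ip_i\Psi_i$ into one round $(\bigoplus_iU_i,\ \bigoplus_ip_i\tau_{m_i})$. That round has zero entropy production and leakage at most $d_J(\Psi,\Phi)\le u$. The paper applies Theorem~\ref{thm:weighted} through Corollary~\ref{cor:amplification} to get $S(\rho)\gtrsim u^{-1}(\log(1/u))^{-2\kappa}$. It then bounds $S(\rho)=H(p)+\sum_ip_i\log m_i\le4\log d+\log M$, using Carath\'eodory to take $k\le d^4$.

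You use only the extraction step, which is Lemma~\ref{lem:weighted-extraction}; Theorem~\ref{thm:weighted} itself concludes an entropy bound, not unitaries. You observe that at a flat state the weighted displacement equals the normalized Hilbert--Schmidt distance. You restrict to $p,q,t,c,a$; the word $b^{-1}ac^4$ transfers the displacement of $\phi(a)\phi(b)$ to $\phi(a)^2\phi(c)^4$ at the cost of one defect. You then invoke the unitary part of Theorem~A, Theorem~\ref{thm:main-group}(c), and for mixtures you pick one flat component by Markov. This bypasses the weighted rounding and sector bound. It also needs no Carath\'eodory, since it bounds the largest bath however many components there are.

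Three points must be supplied.
\begin{enumerate}
\item The identity $E_\rho(W)^2=\sum_ip_i\|W_i-I\|_2^2$ needs the extracted unitaries to be block diagonal. The lemma does not state this, but its proof gives it. Since $U=\bigoplus_iU_i$ and $T(\rho)=\rho$, the normalizing unitary can be taken to be $I$. Every block $A_{oo'}$ is then block diagonal, and the polar parts of the anchor blocks can be chosen block diagonal.
\item The displacement in Lemma~\ref{lem:weighted-extraction}(iii) is also only a $p$-average, so the component must be chosen jointly. Those with $\|\phi_i(a)\phi_i(b)-I\|_2\ge\frac12$ carry weight at least $(4(1-e)^2-1)/15$, because these norms are at most $2$. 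Markov within them yields one whose relator defects are $O(e)$.
\item For the Choi version, the leakage is at most $d_J$ directly. The hypothesis $\ddia\le\frac18$ comes from $\ddia\le d\,d_J$, so take $u\le1/(8d)$.
\end{enumerate}
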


Corollary~\ref{cor:profile} gives the precise statement. Channels in the closure with no finite tracial factorization were
known~\cite{MusatRordam2020}, and for the Houghton channel of~\cite[Theorem~D.1]{DouglasMghirbiB} the exponent of the profile lies between
$0.069$ and $1/3$. What is new here is the quantitative profile: for $\Phi$ the exponent is $1-o(1)$, close to the most the method can
certify.

\paragraph{Physics.} A local lattice system in $\nu$ dimensions whose bath starts maximally mixed produces, in time $T$, only channels that can
be served with memory polynomial in $T$ and $\log n$ (Proposition~\ref{prop:light-cone}). So it needs time at least about $\delta^{-1/\nu}$ to
produce $\Phi$ to accuracy $\delta$ (Corollary~\ref{cor:preparation}). Autonomous dynamics with a bath in its tracial state have long-time
limits that are self-adjoint (Proposition~\ref{prop:autonomous};~\cite[Remark~10.3]{JungeLeMerdyXu2006},~\cite[Remark~5.2]{HaagerupMusat2011}), yet such a limit can still be
expensive: a bounded time-independent Hamiltonian, not local, produces in the long run a channel that trades memory and purity at the same
rate as $\Phi$ (Proposition~\ref{prop:lazy}). Whether local autonomous dynamics can produce an expensive channel is open.

\paragraph{How to read this paper.} Part~I (Sections~\ref{sec:group}--\ref{sec:ladder}) proves Theorem~A. A group theorist can read
Sections~\ref{sec:criterion}--\ref{sec:profile} and stop: Sections~\ref{sec:more} and~\ref{sec:ladder} add the finite presentation, the
embedding in $3V$ and the ladder; the bounds in parts~(a) and~(b) of Theorem~A do not use them. Part~II (Sections~\ref{sec:channel}--\ref{sec:main-channel-proof})
proves Theorem~B and needs from Part~I only the definitions of Section~\ref{sec:criterion}, the statement of Theorem~A and the pair-area
theorem. Part~III treats local physics and lists open problems. Long proofs are in the appendices. Appendix~\ref{app:imported} reproduces the proofs of the two results from the companion papers on which
the lower bounds of Theorems~A and~B stand, and Table~\ref{tab:dependencies} in Appendix~\ref{app:provenance} lists everything else we use from them. The
certificate of $25$ relations is in the ancillary files, with programs that check it. Logarithms are to base two unless written $\ln$.
\section{Counting approximately commuting copies of \texorpdfstring{$S_3$}{S3}}\label{sec:criterion}

Every lower bound in this paper comes from one principle, proved in~\cite{DouglasMghirbiA}. If finitely many relations force many displaced
copies of $S_3$ to commute cheaply, every model of a fixed chunk must be large, because approximately commuting copies of $S_3$ carry
entropy. We fix the conventions used in both halves of the paper, state the principle in the form used here, with $a,b$ in place of the
letters $u,v$ of~\cite{DouglasMghirbiA}, and note the ceiling $\exp(O(r))$ that it can never pass. The scheme, many cheaply transported
copies of a finite group, relations whose cost is controlled by area, and a dimension bound, follows Slofstra~\cite[Sections~2--3]{Slofstra2018}.
The criterion needs only pairwise approximate commutation of displaced copies of $S_3$.

\paragraph{Conventions.} Functions are composed from right to left, groups act on the left, $[x,y]=xyx^{-1}y^{-1}$, and $\log$ is to base
two. For a finite set $S$ and a finite set $R$ of words in $S^{\pm1}$, $\Area_R(w)$ is the least number of conjugates of words of $R^{\pm1}$
whose product is $w$ in the free group on $S$, and $\infty$ if there is none. A \emph{chunk} of a group is a finite subset $E\ni1$ with the
partial multiplication $xy=z$ whenever $x,y,z\in E$ and $xy=z$ in the group. Its \emph{sofic profile} $\sigma_E(r)$ is the least $n$ for which
some $f:E\to\Sym_n$ with $f(1)=\mathrm{id}$ multiplies and separates correctly up to $1/r$: $d(f(x)f(y),f(xy))\le1/r$ whenever $xy$ is defined in
$E$, and $d(f(x),f(y))\ge1-1/r$ for distinct $x,y\in E$~\cite{Cornulier2013}. Here $d$ is the normalized Hamming distance, the fraction of points
on which two permutations differ. The normalized Hilbert--Schmidt norm of a $D\times D$ matrix is $\|X\|_2=(\Tr(X^*X)/D)^{1/2}$.

The criterion is Theorem~8.1 of~\cite{DouglasMghirbiA}, and Appendix~\ref{app:imported} reproduces its proof.

\begin{theorem}[Counting criterion]\label{thm:criterion}
Let $G$ be a group, $S\subset G$ a finite set containing $a,b$, and $R$ a finite set of words in $S^{\pm1}$, each representing $1$ in $G$, with
$a^2,b^2,(ab)^3\in R$ and $ab\ne1$ in $G$. Let $l$ be the largest length of a word in $R$, and let $E$ be the chunk consisting of $1$, $S^{\pm1}$,
the prefixes of the words of $R$ and of $ab$, and the inverses of these elements. Let $w_1,\dots,w_N$ be words in $S^{\pm1}$ and $\Lambda\ge1$ such
that
\[
\Area_R\bigl([w_ixw_i^{-1},w_jyw_j^{-1}]\bigr)\le\Lambda\qquad(i\ne j,\ x,y\in\{a,b\}).
\]
Then $\log\sigma_E(r)\ge N/16$ for every $r\ge10^8N\Lambda(2l-1)$. Moreover, for $0<\Delta\le1$ and
$K_\Delta=\sqrt{80000\,(1+\log(1/\Delta))}/\Delta$, every assignment $\phi$ of unitaries in $U(D)$ to $S$ under which every word of $R$ lies
within $e$ of $I$ in the normalized Hilbert--Schmidt norm and $\|\phi(ab)-I\|_2\ge\Delta$ satisfies $\log D\ge N\Delta^2/12$ whenever
$K_\Delta\sqrt N\,\Lambda e\le1$.
\end{theorem}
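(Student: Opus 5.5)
The plan is to prove the unitary statement first and deduce the permutation statement from it, since a sofic model is a unitary model through the permutation representation. Given $\phi:S\to U(D)$ with every relator within $e$ of $I$, I would first turn the area bound into a defect bound. Conjugation preserves $\|\cdot\|_2$ and $\|XY-I\|_2\le\|X-I\|_2+\|Y-I\|_2$, so a word of area at most $\Lambda$ is mapped within $\Lambda e$ of $I$. Hence the $2N$ unitaries $A_i=\phi(w_iaw_i^{-1})$ and $B_i=\phi(w_ibw_i^{-1})$ satisfy
\[
\|[X_i,Y_j]-I\|_2\le\Lambda e\qquad(i\ne j,\ X,Y\in\{A,B\}).
\]
Each pair $(A_i,B_i)$ is conjugate to $(\phi(a),\phi(b))$. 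It is therefore an $e$-approximate representation of $S_3$ with $\|A_iB_i-I\|_2\ge\Delta$, so it is far from the trivial and sign characters.

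The core step is the entropy count. I would stabilize each approximate $S_3$-representation to a genuine one: the standard Gowers--Hatami or Kazhdan averaging for finite groups gives a representation $\pi_i$ of $S_3$ on $\mathbb C^D$ within $O(e)$ of $(A_i,B_i)$. By the $\Delta$ condition, $\pi_i$ has a constant fraction, of order $\Delta^2$, of its mass outside the trivial and sign isotypic components. Then $\pi_1,\dots,\pi_N$ pairwise almost commute, with error $O(\Lambda e+e)$. I would build the product character: take $P=\prod_i P_i$, where $P_i=\pi_i(\tfrac16\sum_g g)$ is the trivial-isotypic projection, or better the spectral projections of the commuting-ish family $\{\pi_i(j)\}$, where $j=ab$ has order $3$. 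Then I would show that
\[
\Tr\Bigl(\prod_{i\in T}Q_i\prod_{i\notin T}(1-Q_i)\Bigr)/D
\]
behaves like a product measure with each factor bounded away from $0$ and $1$, with total error $O(N^2\cdot\Lambda e)$ in the worst case. That would give $\log D\gtrsim N\Delta^2$ through an entropy or dimension count, since $2^{\Omega(N)}$ nearly orthogonal nonzero pieces must fit in $\mathbb C^D$. The naive error $N^2\Lambda e$ is worse than the hypothesis $\sqrt N\Lambda e\le 1/K_\Delta$, so I would instead use a martingale or chain-rule argument. Adding one copy at a time, the entropy of the joint distribution of the $\mathbb Z/3$-eigenvalues of $\pi_i(j)$ increases by about $\Delta^2/12$ minus a term controlled by the defect of commutation with the previous copies. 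Using Hilbert--Schmidt orthogonality, the errors add in quadrature rather than linearly, which yields $\sqrt N\Lambda e$.

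For the permutation statement, I would take $f:E\to\Sym_n$ correct up to $1/r$ and extend it to $S$ through the words. Every prefix lies in $E$, so each relator of length at most $l$ is evaluated with at most $2l-1$ partial products, and its permutation differs from the identity on at most a fraction $(2l-1)/r$ of the points. In normalized Hilbert--Schmidt norm this gives $e\le\sqrt{2(2l-1)/r}$. Separation of $ab$ from $1$ gives $d(f(ab),\mathrm{id})\ge1-1/r$, so $\Delta$ is about $\sqrt2$, which I take to be at least $1$. I would then apply the unitary bound with $D=n$ and $\Delta=1$. The condition $K_1\sqrt N\Lambda e\le1$ holds when $r\ge C N\Lambda^2(2l-1)$, which has $\Lambda^2$ where the statement has $\Lambda$. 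To get the linear dependence, I would redo the count directly for permutations: the commutator defect in Hamming distance is at most $\Lambda(2l-1)/r$, linear and without a square root. The fixed-point sets of the displaced copies of $j$ then give $\log n\ge N/16$ once $N\Lambda(2l-1)/r\le10^{-8}$.

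The main obstacle is the quadrature error accumulation in the entropy-count step, that is, getting $\sqrt N$ rather than $N$ or $N^2$. I would try first a Cauchy--Schwarz bound on the sum of commutator defects against the orthogonal decomposition into eigenspace words.
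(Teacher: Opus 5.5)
Your outer reductions match the paper. Area bounds the inter-copy commutators by $\Lambda e$, and each pair $(A_i,B_i)$ inherits the relator defect and the displacement of $\phi(ab)$. For permutations, bi-invariance of Hamming distance gives $d\le9\Lambda\delta$ for commutators of normal forms, hence $\|\cdot\|_2=O(\sqrt{\Lambda\delta})$, so the entropy step needs only $N\Lambda(2l-1)/r$ small. That is exactly where the paper's linear dependence on $\Lambda$ comes from. No separate fixed-point count is needed once the entropy step is stated for a commutator defect $\eta$ with $N\eta^2$ small (Lemma~\ref{lem:imp-sector}).

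The gap is the entropy step itself: you assert it rather than prove it, and as formulated it cannot work. If you measure the noncommuting three-cycles $c_i$ one after another on $I/D$, the Shannon entropy of the outcomes is not bounded by $\log D$. For instance, measuring $\sigma_z,\sigma_x,\sigma_z,\dots$ on a maximally mixed qubit yields $N$ independent fair bits. So near commutation must enter the \emph{upper} bound by $\log D$, whereas your chain-rule sentence uses it only in the increments. Counting ``nearly orthogonal pieces'' fails for a similar reason: the number of nearly orthogonal vectors in $M_D$ is exponential in its dimension. Displacement alone also yields no entropy. Two identical copies have commuting, displaced three-cycles, yet the second adds nothing given the first; what excludes this is that $x_i$ must nearly commute with $c_k$.

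The paper supplies both missing ideas in Theorem~\ref{thm:imp-phase}. First, it makes the three-cycles commute exactly at small entropic cost.
\begin{itemize}
\item The isometry $W\xi=3^{-N/2}\sum_gc_1^{g_1}\cdots c_N^{g_N}\xi\otimes|g\rangle$ into $\mathbb C^D\otimes\ell^2(\mathbb Z_3^N)$ gives a state $WW^*/D$ of entropy exactly $\log D$.
\item $W$ nearly intertwines each $c_i$ with an exactly commuting shift $\mathsf C_i$, with error $\hat a_i^2\le\frac43\sum_{j>i}\|[c_i,c_j]\|_2^2$. This is the quadrature you were after, and it comes from Lemma~\ref{lem:projection-product} applied to the pinchings onto the commutants of the $c_j$.
\item Twirling by each $\mathsf C_i$ raises the entropy by at most $\chi(\hat a_i^2)$, of order $\hat a_i^2\log(1/\hat a_i^2)$. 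This is the source of the $\log(1/\Delta)$ in $K_\Delta$.
\item The twirled state dominates the Shannon entropy of the now commuting labels.
\end{itemize}
Equivalently, you can measure on one half of a purification of $I/D$ and bound the mutual information with the reference through Fano's inequality, as in Theorem~\ref{thm:sector}.

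Second, the bit per copy comes from the involution. $\mathsf X_i=x_i\otimes\mathsf F_i$ exchanges $\omega$ and $\omega^2$ in the $i$th label and fixes the others, and $\mathsf X_iW\approx Wx_i$ with error $\hat b_i^2\le\frac43\sum_{j\ne i}\|[x_i,c_j]\|_2^2$. Hence $H(Z_i\mid Z_{-i})\ge\Pr(Z_i\ne1)-\hat b_i^2/(2\ln2)$.

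Finally, your rounding must stay on $\mathbb C^D$. Gowers--Hatami gives each copy a representation on its own dilated space, where commutators between copies are undefined, and Kazhdan's theorem concerns operator-norm defect. For $S_3$ the paper rounds spectrally (Lemma~\ref{lem:imp-s3round}): $A,B$ to $\pm1$, then $ST$ to cube roots of unity by a rule that commutes with complex conjugation.
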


The proof in~\cite{DouglasMghirbiA} has four steps, and only the last is specific to $S_3$. Conjugating the images of $a$ and $b$ by the image
of $w_i$ gives $N$ copies of $S_3$, one for each word. Each copy satisfies the relations of $S_3$ up to the defect of a relator, because those
relations are conjugates of $a^2$, $b^2$ and $(ab)^3$, so it can be rounded to an exact representation of $S_3$. Its three-cycle is a conjugate
of the image of $ab$, so it is displaced from the identity as much as that image is. Two copies commute up to $\Lambda$ times the relator defect,
because each of their commutators is a product of at most $\Lambda$ conjugates of relators. Finally, exact copies of $S_3$ that nearly commute
and have displaced three-cycles cannot share a small space: they force a fixed amount of entropy per copy, at least $\Delta^2/12$ bits for
unitaries and $1/16$ bit for permutations~\cite[Section~4]{DouglasMghirbiA}. The chunk $E$ does not depend on $N$ or on the words $w_i$; only
the areas matter.

The theorem has a ceiling. Its hypotheses force $N\le r/10^8$ and $N\le(K_\Delta e)^{-2}$, so it never gives $\log\sigma_E(r)$ beyond order $r$,
nor $\log D$ beyond order $e^{-2}$. Below the ceiling, everything depends on how cheaply the group carries copies of $S_3$ to many places.
Transport by the powers of one element, with area of order $k^p$ at distance $k$, gives exponent $1/(p+1)$; this is Corollary~8.2
of~\cite{DouglasMghirbiA}.

\begin{corollary}[Transport by the powers of one element]\label{cor:cyclic}
Let $G$, $S$, $R$, $l$ and $E$ be as in Theorem~\ref{thm:criterion}, let $t\in S$, and suppose that
$\Area_R([x,t^kyt^{-k}])\le Ck^p$ for $x,y\in\{a,b\}$ and $k\ge1$, with $C\ge1$ and $p\ge0$. Then
\[
\log\sigma_E(r)\ \ge\ \frac1{16}\Big\lfloor\Big(\frac r{10^8C(2l-1)}\Big)^{1/(p+1)}\Big\rfloor\qquad(r>1),
\]
and every unitary assignment as in Theorem~\ref{thm:criterion} has $\log D\ge\frac{\Delta^2}{12}\bigl((CK_\Delta e)^{-2/(2p+1)}-1\bigr)$.
\end{corollary}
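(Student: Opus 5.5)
We apply Theorem~\ref{thm:criterion} with the words $w_i=t^{i}$ for $i=1,\dots,N$, and choose $N$ as large as the hypotheses allow.

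\emph{Area bound.} For $i\ne j$ and $x,y\in\{a,b\}$, conjugating by $t^{-i}$ gives, in the free group,
\[
[t^ixt^{-i},t^jyt^{-j}]=t^i\,[x,t^{j-i}yt^{i-j}]\,t^{-i}.
\]
Conjugation does not change area, so it suffices to bound $\Area_R([x,t^kyt^{-k}])$ with $k=j-i$. If $k>0$ the hypothesis gives at most $Ck^p$. If $k<0$, write
\[
[x,t^{k}yt^{-k}]=t^{k}\,[t^{-k}xt^{k},y]\,t^{-k},
\qquad
[t^{|k|}xt^{-|k|},y]=[y,t^{|k|}xt^{-|k|}]^{-1}.
\]
Inverses and conjugates have the same area, so this case is also at most $C|k|^p$. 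Since $|k|\le N-1<N$, the hypothesis of Theorem~\ref{thm:criterion} holds with $\Lambda=CN^p$, and $\Lambda\ge1$ because $C\ge1$.

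\emph{Permutation bound.} Set $N=\lfloor (r/(10^8C(2l-1)))^{1/(p+1)}\rfloor$. If $N=0$ the claimed bound is trivial, since $\sigma_E(r)\ge1$. Otherwise
\[
10^8N\Lambda(2l-1)=10^8C(2l-1)N^{p+1}\le r,
\]
so Theorem~\ref{thm:criterion} gives $\log\sigma_E(r)\ge N/16$, which is the stated bound.

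\emph{Unitary bound.} The condition $K_\Delta\sqrt N\,\Lambda e\le1$ becomes $CK_\Delta e\,N^{p+1/2}\le1$, that is,
\[
N\le (CK_\Delta e)^{-2/(2p+1)}.
\]
Take $N=\lfloor (CK_\Delta e)^{-2/(2p+1)}\rfloor$, which is at least $(CK_\Delta e)^{-2/(2p+1)}-1$. If $N\ge1$, Theorem~\ref{thm:criterion} gives
\[
\log D\ge\frac{N\Delta^2}{12}\ge \frac{\Delta^2}{12}\bigl((CK_\Delta e)^{-2/(2p+1)}-1\bigr).
\]
If $N=0$, the right-hand side is negative and the claim is trivial.

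The only step needing care is the sign of $k$ in the area bound, which the commutator identities above handle. Everything else is bookkeeping.
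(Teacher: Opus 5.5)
Your proposal is correct and is the intended argument. The paper does not reprove this corollary but cites Corollary~8.2 of the companion paper, which is exactly this specialization of Theorem~\ref{thm:criterion} to the words $w_i=t^i$, with $\Lambda=CN^p$ and $N$ chosen as the largest integer allowed by each of the two conditions.

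Your treatment of the sign of $j-i$ and of the degenerate case $N=0$ is complete.
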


In Houghton's group $H_3$ the copies of $S_3$ along a ray have commutators of at least quadratic area~\cite[Proposition~3.4]{DouglasMghirbiA},
so transport along a ray never passes the exponent $1/3$. A group that does better must reach many sites cheaply.

One more fact of Cornulier's is used twice, in Sections~\ref{sec:more} and~\ref{sec:ladder}. The profile of a chunk is bounded exactly when the
chunk embeds in a finite group, and otherwise it is at least $r$~\cite[Fact~3.5]{Cornulier2013}, indeed at least
$2r$~\cite[Lemma~2.1]{DouglasMghirbiA}. A chunk with unbounded profile therefore certifies that its group is not locally embeddable into finite
groups, not LEF~\cite[Fact~3.8]{Cornulier2013}.

\part{The group (Theorem A)}
\section{Configuration lamps}\label{sec:group}

This section defines the group $\Gamma$ of Theorem~A, its five generators $p,q,t,c,a$ and the $25$ relations that certify its profile, and
states Theorem~A in its precise form. The idea is visible in Figure~\ref{fig:lamps}: the lamps, copies of $S_3$, sit on configurations rather than on
points, so $N$ points carry $2^N$ lamps while every generator touches at most three points. Two of the $2^N$ lamps are compared by about $N^2$ local moves,
each filled with the Dehn function of $H_3$ at scale $N$, so the area is polylogarithmic in the number of lamps. That is what the criterion
needs. Theorem~\ref{thm:criterion} turns $M$ copies of $S_3$ whose pairwise commutators have area at most $\Lambda$ into
$\log\sigma_E(r)\ge M/16$ at $r$ of order $M\Lambda$, and it approaches its ceiling $\exp(O(r))$ only when $\Lambda$ is tiny compared with $M$.

\label{sec:toward}Put differently, the criterion needs three things at once: exponentially many sites reachable by short words, pairwise
commutators of polylogarithmic area, and an amenable ambient group. The two groups of the ladder in Section~\ref{sec:ladder} each meet two of
the three. The dyadic group $\mathcal G$ is amenable and its orbit of $0$ grows exponentially, but we cannot commute two copies at points of
opposite parity cheaply. In the subgroup $\mathcal L$ of Thompson's group $V$ everything works except amenability, which is the question
whether $F$ is amenable. The configuration lamps meet all three by changing what a site is. The sites are the finitely supported binary
configurations of the points of Houghton's three rays, not the points themselves. Affine maps that touch a bounded number of coordinates at a
time carry the copies between sites. And the ambient group is locally finite-by-$\Z^2$.

\paragraph{Houghton's group.} Let $Y=\{1,2,3\}\times\{1,2,\dots\}$; the point $(i,j)$ lies on ray $i$ at height $j$. Houghton's group
$H=H_3$ consists of the permutations $g$ of $Y$ that are eventually translations on each ray: there are integers $m_1(g),m_2(g),m_3(g)$ with
$g(i,j)=(i,j+m_i(g))$ for all large $j$. The map $g\mapsto(-m_2(g),-m_3(g))$ is a surjective homomorphism $H\to\Z^2$ whose kernel is the group
$\FSym(Y)$ of finitary permutations~\cite{Houghton1978}. Let $p$ be the permutation with $p(1,j)=(1,j+1)$, $p(2,1)=(1,1)$ and
$p(2,j)=(2,j-1)$ for $j\ge2$, fixing ray~$3$, and let $q$ be defined in the same way with rays $2$ and $3$ exchanged. They generate $H$
(Appendix~\ref{sec:houghton}), and their images in $\Z^2$ are $(1,0)$ and $(0,1)$.

\paragraph{The group.} Let $V=\F_2^{(Y)}$ be the space of finitely supported functions $Y\to\F_2$, the \emph{configurations},\footnote{This $V$ is
unrelated to Thompson's group $V$ of Sections~\ref{sec:3V} and~\ref{sec:thompson}.} with basis $(e_y)_{y\in Y}$. The group $H$ acts on $V$ by permutation matrices, $g(e_y)=e_{g(y)}$, and we identify $H$ with its image in $\GL(V)$. Let
$\GL_{\rm fin}(Y)$ be the group of linear automorphisms of $V$ that fix all but finitely many basis vectors. It is the directed union of the finite
groups $\GL(\F_2^J)$ over finite $J\subset Y$, each extended by the identity on the other basis vectors. It is normalized by $H$, and
$H\cap\GL_{\rm fin}(Y)=\FSym(Y)$. Let $L=\GL_{\rm fin}(Y)\,H\le\GL(V)$. Write $\tau_u$ for the translation $v\mapsto v+u$ of $V$, and let
$A=V\rtimes L$ be the group of affine maps $v\mapsto g(v)+u$ with $g\in L$ and $u\in V$. Our group is
\[
\Gamma=\Bigl(\bigoplus_{v\in V}S_3\Bigr)\rtimes A,
\]
where $A$ permutes the summands through its action on $V$. The copy of $S_3$ at $v$ is the \emph{lamp} at $v$; writing $\lambda_v$ for the copy
at $v$ of $\lambda\in S_3$, we have $g\lambda_vg^{-1}=\lambda_{g(v)}$ for $g\in A$. Thus $\Gamma$ is a lamplighter group. The lamps are copies of
$S_3$, one at each finite configuration of lit points of the three rays. The lighter is the affine group $A$, which moves configurations by
permuting points, toggling a point, or adding one point into another. Equivalently, $\Gamma$ is the group of permutations of $V\times S_3$
generated by the maps $(v,s)\mapsto(g(v),s)$ for $g\in A$ and $(v,s)\mapsto(v,\lambda(v)s)$ for finitely supported $\lambda:V\to S_3$. This
action is faithful. Groups obtained from a permutation action by adjoining the finitary linear or symmetric groups of the permuted set, and lamps
over it, appear as enrichments~\cite{Bradford2022,AlekseevBradford2026} and as halo products~\cite{GenevoisTessera2024}.

For distinct points $y,y'\in Y$ let $E_{y\leftarrow y'}$ be the transvection that sends $e_{y'}$ to $e_{y'}+e_y$ and fixes the other basis vectors.
For $g\in H$ and a linear map $M$ of $V$,
\begin{equation}\label{eq:affine-identities}
g\tau_{e_y}g^{-1}=\tau_{e_{g(y)}},\qquad gE_{y\leftarrow y'}g^{-1}=E_{g(y)\leftarrow g(y')},\qquad M\tau_uM^{-1}=\tau_{M(u)} .
\end{equation}
Linear maps fix $0$, so they commute with the lamp at $0$. On the first ray we abbreviate: for $x\ge1$ we write $x$ for the point $(1,x)$,
$e_x$ for $e_{(1,x)}$ and $\tau_x$ for $\tau_{e_x}$.

Let $a$ and $b$ be the transpositions $(1\,2)$ and $(2\,3)$ in the lamp at $0$, so that $ab$ has order three, and put
\[
t=\tau_1,\qquad d=E_{2\leftarrow1},\qquad c=d\,ab ;
\]
thus $t$ adds the point $1$ to a configuration, and $d$ adds the point $2$ to every configuration that contains the point $1$.

\begin{lemma}\label{lem:structure}
\begin{enumerate}[label=(\alph*),itemsep=1pt]
\item The element $c$ has order six, $c^3=d$ and $c^4=ab$; hence $b=ac^4$.
\item $\Gamma$ is generated by the five elements $p,q,t,c,a$.
\item Let $\pi:\Gamma\to\Z^2$ be the composite of $\Gamma\to A\to L\to L/\GL_{\rm fin}(Y)\cong H/\FSym(Y)\cong\Z^2$. Its kernel
$\bigl(\bigoplus_VS_3\bigr)\rtimes\bigl(V\rtimes\GL_{\rm fin}(Y)\bigr)$ is locally finite. Hence $\Gamma$ is locally finite-by-$\Z^2$, and in
particular elementary amenable.
\end{enumerate}
\end{lemma}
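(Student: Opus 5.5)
The plan is to prove the three parts in order, each by a direct structural computation inside $\Gamma=\bigl(\bigoplus_V S_3\bigr)\rtimes A$.

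\emph{Part (a).} The key observation is that $d=E_{2\leftarrow1}$ is linear, so it fixes $0$ and commutes with the lamp at $0$, hence with $ab$. Since $d$ is a transvection over $\F_2$ it has order two, and $ab$ has order three. So $c=d\,ab$ is a product of commuting elements of coprime orders $2$ and $3$, and therefore has order six. The powers follow at once:
\[
c^3=d^3(ab)^3=d,\qquad c^4=d^4(ab)^4=ab.
\]
Finally $a c^4=a\cdot ab=b$, since $a^2=1$.

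\emph{Part (b).} Let $G'=\langle p,q,t,c,a\rangle$; I build up $\Gamma$ inside $G'$ in layers.
\begin{enumerate}[label=(\roman*),itemsep=1pt]
\item By (a), $G'$ contains $d=c^3$, $ab=c^4$ and $b=ac^4$, so it contains the whole lamp $S_3$ at $0$.
\item Since $p,q$ generate $H$ (Appendix~\ref{sec:houghton}), $H\le G'$.
\item Conjugating $d$ by $g\in H$ gives $E_{g(2)\leftarrow g(1)}$ by \eqref{eq:affine-identities}. Because $H\supseteq\FSym(Y)$ acts $2$-transitively on $Y$, every transvection $E_{y\leftarrow y'}$ lies in $G'$.
\item Over $\F_2$ we have $\GL=\mathrm{SL}$, and transvections generate $\mathrm{SL}$. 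Hence each $\GL(\F_2^J)$, and so $\GL_{\rm fin}(Y)$, lies in $G'$, and with it $L=\GL_{\rm fin}(Y)H$.
\item Conjugating $t=\tau_{e_1}$ by elements of $H$ gives every $\tau_{e_y}$, and these generate $V$. Hence $A\le G'$.
\item The lamp at $v$ is $\tau_v$ times the lamp at $0$ times $\tau_v^{-1}$, so all lamps lie in $G'$.
\end{enumerate}
Therefore $G'=\Gamma$.

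\emph{Part (c).} First, $\pi$ is a well-defined homomorphism because each arrow is a quotient map:
\begin{itemize}[itemsep=1pt]
\item $\Gamma\to A$ kills the lamps;
\item $A\to L$ kills $V$;
\item $L/\GL_{\rm fin}(Y)=H\GL_{\rm fin}/\GL_{\rm fin}\cong H/(H\cap\GL_{\rm fin})=H/\FSym(Y)\cong\Z^2$.
\end{itemize}
Its kernel is the preimage of $\GL_{\rm fin}(Y)$, namely $\bigl(\bigoplus_VS_3\bigr)\rtimes\bigl(V\rtimes\GL_{\rm fin}(Y)\bigr)$.

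For local finiteness, take finitely many elements of the kernel. Each one involves:
\begin{itemize}[itemsep=1pt]
\item finitely many lamp positions $v$;
\item a translation vector $u$;
\item a linear part in some $\GL(\F_2^{J})$.
\end{itemize}
Choose a finite $J\subset Y$ containing the supports of all these vectors $u$ and $v$ and all the linear parts. Put $W=\F_2^J$, a finite set invariant under $W\rtimes\GL(\F_2^J)$. Then all the chosen elements lie in the finite group
\[
\Bigl(\bigoplus_{v\in W}S_3\Bigr)\rtimes\bigl(W\rtimes\GL(\F_2^J)\bigr).
\]
So every finitely generated subgroup of the kernel is finite. Thus $\Gamma$ is locally finite-by-$\Z^2$, and elementary amenability follows because the class is closed under extensions and contains locally finite and abelian groups.

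The main point needing care is step (iii)–(iv) of (b): using the high transitivity of $\FSym(Y)\le H$ to get every transvection, and then that transvections generate every $\GL(\F_2^J)$. The finiteness argument in (c) is routine once $J$ is chosen large enough that $W$ is invariant.
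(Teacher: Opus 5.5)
Your proof is correct. Parts (a) and (b) follow the paper's argument essentially step for step. In (a), $d$ and $ab$ commute and have coprime orders. In (b), you use $H=\langle p,q\rangle$, the $2$-transitivity of $\FSym(Y)$ to obtain every $\tau_{e_y}$ and every $E_{y\leftarrow y'}$, the fact that transvections generate each $\GL(\F_2^J)$, and conjugation of the lamp at $0$ by $\tau_u$.

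In (c) you reach local finiteness by a slightly different route. For any finite subset of the kernel, you exhibit an explicit finite subgroup $\bigl(\bigoplus_{v\in W}S_3\bigr)\rtimes\bigl(W\rtimes\GL(\F_2^J)\bigr)$, with $W=\F_2^J$, that contains it. The paper instead writes the kernel as an extension of the locally finite group $\bigoplus_VS_3$ by the locally finite group $V\rtimes\GL_{\rm fin}(Y)$. It then proves the general fact that such an extension is locally finite: a finitely generated subgroup has finite image, so its intersection with the normal subgroup has finite index, is finitely generated, and is therefore finite.

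Your version is more concrete and gives an explicit finite envelope. It is in fact the subgroup $K_R$ that the paper introduces later for the clock lemma and the hash models. The paper's version avoids checking that lamp supports stay inside a finite set invariant under the affine parts, at the cost of the extension lemma.

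For elementary amenability, it would be cleaner to say explicitly why locally finite groups belong to the class: they are directed unions of finite groups. The paper cites closure under directed unions and extensions for exactly this step.
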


\begin{proof}
(a) The linear map $d$ fixes $0$, so it commutes with $ab$; since $d$ has order two and $ab$ order three, $c$ has order six, $c^3=d^3(ab)^3=d$ and
$c^4=d^4(ab)^4=ab$.

(b) Let $\Gamma'$ be the subgroup generated by the five elements. It contains $H$, because $H=\langle p,q\rangle$. Since $\FSym(Y)$ acts
$2$-transitively on $Y$, \eqref{eq:affine-identities} shows that $\Gamma'$ contains every $\tau_{e_y}$, hence $V$, and, since $d=c^3$, every
transvection $E_{y\leftarrow y'}$. Transvections generate every $\GL(\F_2^J)$, so $\GL_{\rm fin}(Y)\le\Gamma'$, and $A\le\Gamma'$. Finally
$b=ac^4\in\Gamma'$, and the lamp at $u\in V$ is $\tau_u\langle a,b\rangle\tau_u^{-1}$.

(c) Since $H\cap\GL_{\rm fin}(Y)=\FSym(Y)$, we have $L/\GL_{\rm fin}(Y)\cong H/\FSym(Y)$, and the kernel of $\pi$ is as stated. It is an
extension of $\bigoplus_VS_3$ by $V\rtimes\GL_{\rm fin}(Y)$, the directed union of the finite groups $\F_2^J\rtimes\GL(\F_2^J)$; both are locally
finite. An extension of a locally finite group $K$ by a locally finite group is locally finite: a finitely generated subgroup $\Lambda$ has finite
image modulo $K$, so $\Lambda\cap K$ has finite index in $\Lambda$, is finitely generated, and is finite. Locally finite groups are directed
unions of finite groups, and elementary amenable groups are closed under directed unions and extensions~\cite{Chou1980}.
\end{proof}

\subsection{A certificate of 25 relations}\label{sec:certificate}

All fillings in $\Gamma$ are built from one finite set $R$ of words in the five generators of Lemma~\ref{lem:structure}, each representing $1$
in $\Gamma$. The set $R$ does not present $\Gamma$ (Section~\ref{sec:fp}). Areas are always taken with respect to $R$.

\paragraph{Abbreviations.} Put
\[
s=[p,q],\qquad B=psp^{-1},\qquad t_m=p^{m-1}\,t\,p^{1-m}\quad(m=1,2,3),\qquad d=c^3,\qquad b=ac^4 .
\]
By Lemma~\ref{lem:transpositions}, $s$ and $B$ are the transpositions $(1\ 2)$ and $(2\ 3)$, and $t_m$ represents $\tau_m$, since
$p^{m-1}(1,1)=(1,m)$. By Lemma~\ref{lem:stabilizers}, with $\zeta_1=s$ and $\zeta_2=Bs$, the pairs
\[
K_1=(sp,\ sq),\qquad K_2=(Bsp,\ Bsq)
\]
generate the pointwise stabilizers $H^{(1)}$ of the point $1$ and $H^{(2)}$ of the points $1,2$. The elements $t$ and $d$ have the
\emph{prototypes} $P_t=\{1\}$ and $P_d=\{1,2\}$, the points on which they depend: by~\eqref{eq:affine-identities}, an element of $H$ that fixes
$P_f$ pointwise commutes with $f$.

\begin{table}[ht]
\centering\small
\begin{tabular}{@{}l p{0.36\textwidth} p{0.29\textwidth} r r@{}}
\toprule
 & relations & says & number & length\tabularnewline
\midrule
\textsf{H} & \raggedright $s^2$,\ $(sps\,p^{-1})^3$,\ $[s,p^2sp^{-2}]$,\ $psp^{-1}qs^{-1}q^{-1}$ & \raggedright the presentation~\eqref{eq:PH} of Houghton's group & 4 & 64\tabularnewline
\textsf{Q} & \raggedright $t^2$ & \raggedright a translation is an involution & 1 & 2\tabularnewline
\textsf{S} & \raggedright $[t,k]$ for $k\in K_1$;\ \ $[d,k]$ for $k\in K_2$ & \raggedright the stabilizer of a prototype commutes with it & 4 & 56\tabularnewline
\textsf{Z} & \raggedright $[g,x]$ for $g\in\{p,q\}$, $x\in\{a,b\}$;\ \ $[d,a]$ & \raggedright linear maps fix the lamp at $0$ & 5 & 40\tabularnewline
\textsf{C} & \raggedright $[t_1,t_2]$ & \raggedright translations commute & 1 & 8\tabularnewline
\textsf{A} & \raggedright $d\,t_1d^{-1}(t_1t_2)^{-1}$,\ $[d,t_2]$,\ $[d,t_3]$ & \raggedright how $d$ moves the translations at $1,2,3$ & 3 & 39\tabularnewline
\textsf{L} & \raggedright $a^2$,\ $b^2$,\ $(ab)^3$ & \raggedright the lamp at $0$ is $S_3$ & 3 & 30\tabularnewline
\textsf{P} & \raggedright $[x,tyt^{-1}]$ for $x,y\in\{a,b\}$ & \raggedright the lamps at $0$ and at $e_1$ commute & 4 & 64\tabularnewline
\midrule
 & total & & 25 & 303\tabularnewline
\bottomrule
\end{tabular}
\caption{The certificate $R$: eight families of relations in the five generators $p,q,t,c,a$, written with the abbreviations $s$, $B$, $t_m$,
$d$ and $b$. Each relation is taken to be the freely reduced word obtained by expanding the abbreviations; lengths are those of these words, and
the longest has length $24$.}
\label{tab:certificate}
\end{table}

\begin{lemma}\label{lem:validity}
Every word of Table~\ref{tab:certificate} represents~$1$ in $\Gamma$.
\end{lemma}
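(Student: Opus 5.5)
The plan is to check the eight families of Table~\ref{tab:certificate} one at a time. In each case I evaluate the abbreviations as concrete elements of $\Gamma$ and apply the identities~\eqref{eq:affine-identities} together with $g\lambda_vg^{-1}=\lambda_{g(v)}$. The proof should not need any relation beyond these identities; it only has to identify what each abbreviation is.

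First, the Houghton family \textsf{H} lives in the subgroup $H=\langle p,q\rangle$. It holds because~\eqref{eq:PH} is a presentation of $H$; this is the content of Appendix~\ref{sec:houghton} and Lemma~\ref{lem:transpositions}, which show that $s=(1\ 2)$ and $B=(2\ 3)$ on ray~$1$. For a direct check, $s^2=1$ and $(sB)^3=1$ because $sps^{-1}p^{-1}$ is $s\cdot B$ up to the inverse, $B^2=1$, and $[s,p^2sp^{-2}]$ commutes the disjoint transpositions $(1\ 2)$ and $(3\ 4)$. The relation $psp^{-1}=qsq^{-1}$ holds because $p$ and $q$ agree on the points $(1,1)$ and $(1,2)$, both shifting them up the first ray.

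Family \textsf{Q} holds since $t=\tau_1$ is a translation in characteristic two. For \textsf{S}, Lemma~\ref{lem:stabilizers} gives that $K_1$ and $K_2$ lie in the pointwise stabilizers of $P_t$ and $P_d$. By~\eqref{eq:affine-identities}, any $g\in H$ fixing $1$ satisfies $gtg^{-1}=\tau_{e_{g(1)}}=t$, and any $g\in H$ fixing $1,2$ satisfies $gdg^{-1}=E_{g(2)\leftarrow g(1)}=d$. Here I use Lemma~\ref{lem:structure}(a) to replace $c^3$ by $d$.

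For \textsf{Z}, I use that $p$, $q$ and $d$ are linear, so they fix $0$ and conjugate $\lambda_0$ to $\lambda_0$. Here $b$ is read as $ac^4=ab\cdot a\cdot$, which is consistent by Lemma~\ref{lem:structure}(a). For \textsf{C}, translations commute. For \textsf{A}, the third identity in~\eqref{eq:affine-identities} gives $d\tau_ud^{-1}=\tau_{d(u)}$, and $t_m=\tau_m$ because $p^{m-1}(1,1)=(1,m)$. Then $d(e_1)=e_1+e_2$ gives $dt_1d^{-1}=\tau_{e_1+e_2}=t_1t_2$, while $d$ fixes $e_2$ and $e_3$. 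Family \textsf{L} holds because $a=(1\ 2)$ and $b=(2\ 3)$ in the lamp at $0$, so $ab$ is a $3$-cycle.

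For \textsf{P}, $tyt^{-1}=y_{e_1}$ lies in the lamp at $e_1\ne0$. Distinct summands of $\bigoplus_VS_3$ commute, so this commutes with $x$.

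The main obstacle is not conceptual but bookkeeping. Because $c=d\,ab$, every occurrence of $d$ and $b$ in the free words must be expanded, and the resulting identities should hold exactly rather than merely up to lamp factors. The key points are Lemma~\ref{lem:structure}(a), which gives $c^3=d$ exactly, and the fact that $ab$ commutes with the linear map $d$. I would also cross-check the expanded words against the ancillary verification programs.
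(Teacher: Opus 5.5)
Your proof is correct and follows the paper's own argument. Like the paper, you check the eight families one at a time, using \eqref{eq:affine-identities}, the action of $A$ on the lamps, and Lemma~\ref{lem:structure}(a) to read $c^3$ as $d$ and $ac^4$ as $b$; your ``$ac^4=ab\cdot a\cdot$'' should read $ac^4=a\cdot ab=b$. Your direct verification of the family \textsf{H}, via $s=(1\ 2)$, $B=(2\ 3)$ and the fact that $p,q$ agree on $(1,1),(1,2)$, is slightly more explicit than the paper, which just cites the presentation~\eqref{eq:PH}.
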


\begin{proof}
\textsf{H}: the words of~\eqref{eq:PH} hold in $H$.
\textsf{Q}: $V$ has exponent two.
\textsf{S}: the elements of $H^{(1)}$ fix the point $1$, so they commute with $t=\tau_1$; those of $H^{(2)}$ fix the points $1$ and $2$, so they
commute with $d=E_{2\leftarrow1}$ (Lemma~\ref{lem:structure}(a) and~\eqref{eq:affine-identities}).
\textsf{Z}: $p$, $q$ and $d$ are linear, so they commute with $a$ and with $b=ac^4=a\cdot ab$, which lie in the lamp at~$0$.
\textsf{C}: translations commute.
\textsf{A}: $dt_md^{-1}=\tau_{d(e_m)}$, and $d(e_1)=e_1+e_2$, while $d$ fixes $e_2$ and $e_3$.
\textsf{L}: these are relations of $S_3$.
\textsf{P}: $tyt^{-1}$ lies in the lamp at $e_1$, which commutes with the lamp at~$0$.
\end{proof}

A program in the ancillary files checks the $25$ words independently of Lemma~\ref{lem:validity}, in the faithful action of $\Gamma$ on
$V\times S_3$; Appendix~\ref{app:provenance} says what it checks.

\subsection{Theorem A}\label{sec:theorem-A}

\begin{theorem}[Theorem A]\label{thm:main-group}
Let $E\subset\Gamma$ be the chunk consisting of $1$, the five generators $p,q,t,c,a$ and their inverses, the elements represented by the prefixes
of the $25$ freely reduced words of Table~\ref{tab:certificate}, and the inverses of these elements. Let $p_H=\log107+4$ and
\[
\kappa=2+p_H=6+\log107<12.75 .
\]
There are absolute constants $c,C>0$ such that for every $r\ge2$:
\begin{enumerate}[label=(\alph*),itemsep=1pt]
\item $\log\sigma_E(r)\ge c\,r/(\log r)^\kappa$;
\item $\log\sigma_E(r)\le C\,r\log r$; more generally, every chunk $E'$ of $\Gamma$ satisfies $\log\sigma_{E'}(r)\le C_{E'}\,r\log r$.
\end{enumerate}
In particular $\sigma_E(r)=\exp(r^{1+o(1)})$. Moreover:
\begin{enumerate}[label=(\alph*),itemsep=1pt,resume]
\item For every $0<\Delta\le1$ there are $c_\Delta,e_\Delta>0$ such that the following holds for $0<e\le e_\Delta$. Let $\phi$ assign unitaries in
$U(D)$ to the five generators so that every word of Table~\ref{tab:certificate} is mapped within $e$ of $I$ in the normalized Hilbert--Schmidt
norm, and put $\phi(b)=\phi(a)\phi(c)^4$. If $\|\phi(a)\phi(b)-I\|_2\ge\Delta$, then
\[
\log D\ \ge\ c_\Delta\,\frac{e^{-2}}{(\log(1/e))^{2\kappa}},\qquad 2\kappa<25.49 .
\]
\item $\Gamma$ is finitely presented and elementary amenable.
\end{enumerate}
\end{theorem}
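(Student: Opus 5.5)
The plan is to deduce (a) and (c) from the counting criterion, Theorem~\ref{thm:criterion}, applied to the family of lamps at the $2^N$ configurations supported on the points $1,\dots,N$ of the first ray. For each $v=\sum_{x\in J}e_x$ with $J\subseteq\{1,\dots,N\}$, let $w_v=\prod_{x\in J}p^{x-1}tp^{1-x}$. This word has length $O(N^2)$ and represents $\tau_v$, so $w_v\{a,b\}w_v^{-1}$ is the lamp at $v$. The key estimate is a \emph{pair-area bound}:
\[
\Area_R\bigl([w_uxw_u^{-1},w_vyw_v^{-1}]\bigr)\le C N^{2+p_H}(\log N)^{O(1)}\qquad(u\ne v).
\]
Given this, take $M=2^N$ and $\Lambda=N^{\kappa}$ in Theorem~\ref{thm:criterion}. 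Then $\log\sigma_E(r)\ge 2^N/16$ once $r\gtrsim 2^N N^\kappa$. Choosing $N$ maximal with $2^N N^\kappa\le r/C'$ gives $2^N\ge c\,r/(\log r)^\kappa$, which is (a). The unitary half of the criterion handles (c) in the same way: $K_\Delta 2^{N/2}N^\kappa e\le1$ allows $2^N\approx e^{-2}/(\log(1/e))^{2\kappa}$. Before applying the criterion we will check that $b=ac^4$ and that the three words in \textsf{L} lie in $R$.

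The pair-area bound is proved in three steps.

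\textbf{Step 1: reduce to two adjacent lamps.} Conjugating by $w_u$, and using relation \textsf{C} together with \textsf{S}/\textsf{H} to reorder the translations $t_m$ (each costing a Houghton filling), we reduce to commuting the lamp at $0$ with the lamp at $z=u+v\ne0$.

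\textbf{Step 2: collapse $z$ to $e_1$.} Choose a point $x_0\in\operatorname{supp}z$. Apply at most $N$ transvections $E_{x_0\leftarrow x}$, one for each other point $x$ in the support, to turn $z$ into $e_{x_0}$, and then a Houghton element to move $x_0$ to $1$. Each transvection is written as $g\,d\,g^{-1}$ with $g\in H$ of word length $O(N)$ sending $(1,2)$ to $(x_0,x)$. Since transvections are linear maps fixing $0$, the relations \textsf{Z} commute them past the lamp at $0$. Relation \textsf{A}, conjugated by $g$, moves a transvection across a translation. The remaining step is to show that different conjugating words for the same element of $H$ agree modulo $R$. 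Here we use \textsf{S}: an element of $H$ fixing the prototype commutes with $d$ or $t$. The cost is the area of a word of length $O(N)$ in $H$, which is at most $N^{p_H}$ by the Dehn function of $H_3$ (imported). There are $O(N^2)$ such local moves, which gives $N^{2+p_H}$.

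\textbf{Step 3: apply \textsf{P}.} Once $z=e_1$, relation \textsf{P} closes the commutator.

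We expect the main obstacle to be the bookkeeping in Step~2. Each local move must touch only $O(1)$ points. Rewriting $g d g^{-1}$ against its neighbours must use only $H$-fillings of words of length $O(N)$, and never long translation words, so that no extra factor of $N$ creeps in. The first approach to try is an induction on $|\operatorname{supp} z|$, in which the induction step is a single conjugated instance of \textsf{A} plus $O(N)$ Houghton fillings.

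For (b), we construct models from hashing. Fix a prime power $2^k\approx r\log r$ and a random affine map $h:\F_2^{Y_{\le n}}\to\F_2^k$. Model $A$ on $\F_2^k\times S_3\times(\text{a Houghton/Følner model of } H\text{ of size }\mathrm{poly}(r))$, and evaluate the lamps through $h$. For a fixed chunk, distinct configurations collide with probability $2^{-k}$, so a union bound shows that products and separations fail on at most a $1/r$ fraction of points. This gives $\log\sigma_{E'}\le C_{E'}\,r\log r$. The counting is routine; the delicate part is lifting the $H$-action compatibly with $h$, and we will use $\FSym$-truncation to the first $n=O(\log r)$ points of each ray for this.

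For (d), elementary amenability is Lemma~\ref{lem:structure}(c). Finite presentation follows from Cornulier's criterion for permutational wreath products (Proposition~\ref{prop:fp}).
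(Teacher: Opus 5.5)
\textbf{Parts (a), (c), (d).} Your plan here is the paper's. The paper takes $2^N$ lamps on configurations of the first $N$ points of ray~$1$. It bounds their pair area by about $N^2$ local moves, each filled by $\delta_H$ at scale $O(N)$ (Theorem~\ref{thm:pair-area}). It feeds this into Theorem~\ref{thm:criterion} after adjoining $b$ as a letter with $b^{-1}ac^4$, $b^2$, $(ab)^3$ and checking that the new chunk lies in $E$. Note that Theorem~\ref{thm:pair-area} gives $\Lambda_N\le C'N^{2+p_H}$ with no $(\log N)^{O(1)}$ factor. You need exactly that to get $(\log r)^\kappa$ in (a); a polylog in $\Lambda$ costs you extra $\log\log r$ factors.

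One caution on your transport words. The paper transports with $U_i=\pi[i]\in\Sigma_n$, a product of at most one transposition. It does so precisely so that two transporters of a prototype differ by a stabilizer element moving at most $10$ points. Such an element has a $K_k$-word of length $O(n)$, and so the $H$-filling is at scale $O(n)$ (Lemma~\ref{lem:exchange}). With your $p^{x-1}$, the ratio elements typically have nonzero image in $\Z^2$ or are supported off ray~$1$. Then Lemma~\ref{lem:stabilizers}(b) does not apply. You would have to write stabilizer words of length $O(N)$ by hand, since anything of length $N^2$ costs $\delta_H(O(N^2))$ per move and ruins the exponent. Alternatively, switch to the paper's transporters.

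\textbf{The gap is in (b).} As written, the model cannot exist. The set $\F_2^k\times S_3\times(\text{a set of size }\mathrm{poly}(r))$ with $2^k\approx r\log r$ has polynomially many points, while (a) forces $\exp(\Omega(r/(\log r)^\kappa))$. A model must carry the whole hashed lamp state $u\in S_3^{\F_2^k}$. More importantly, it must carry the hash itself. A fixed affine $h=h_0+z_0$ is not compatible with the action of $A$: for instance $h(dv)=h(v)+v_1h_0(e_2)$ depends on a bit that $h(v)$ does not determine. So $d$ induces no map on hashed positions, and likewise for Houghton permutations.

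The paper's key idea is to make the hash a coordinate of the state, $\Omega=\operatorname{Aff}(W,\F_2^k)\times S_3^{\F_2^k}$, with $(\lambda,g)$ acting by $(\eta,u)\mapsto(\eta\circ g^{-1},\lambda^{\eta\circ g^{-1}}u)$. A uniform $\eta$ stays uniform under precomposition with affine bijections, and this is what proves Lemma~\ref{lem:hash}. The price is $k(\dim W+1)$ bits for $\eta$. That term is the source of $r\log r$, not $2^k$, which is only $O(\ell^2r)$ and contributes $O(r)$ bits through $S_3^{\F_2^k}$.

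The construction needs $\dim W=O(r)$, with all cocycles in the finite group $S_3^{W}\rtimes\AGL(W)$. The paper gets this from the $\Z^2$-clock. Points are $(z,\omega)\in B_m\times\Omega$, and $g$ acts through $\psi(g,z)=g_{z+\pi(g)}^{-1}gg_z$. Lemma~\ref{lem:clock} puts each $\psi(s,z)$ in $K_R$ with $R=2m+3$. This gives $\log|B_m\times\Omega|=2\log(2m+1)+k(6m+10)+2^k\log6$.

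The box needs side $m\gtrsim\ell r$ for its boundary to be a fraction below $1/(2r)$, so the cocycles reach height $O(r)$. Truncating to the first $O(\log r)$ points therefore cannot work. Nor is there a separate $\mathrm{poly}(r)$ model of $H$ to lean on: chunks of $H_3$ can have superpolynomial profile, as the introduction recalls. In the paper, the finitary permutations produced by the clock are simply absorbed into $\GL(W_R)$ and act on $\eta$.
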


The exponent $\kappa=2+p_H$ has two parts. Two of the $2^N$ lamps are brought together by about $N^2$ local moves
(Theorem~\ref{thm:pair-area}), each priced by the Dehn function of $H_3$ at scale $N$, which is at most $CN^{p_H}$
(Proposition~\ref{prop:dehn}(b)). The $107$ is the factor by which the doubling endomorphism of $H_3$ in~\cite[Section~3]{DouglasMghirbiA}
multiplies a reweighted area at each step, so that far commutators at distance $M$ have area $O(M^{\log107})$~\cite[Theorem~3.2]{DouglasMghirbiA}.
The $4$ comes from reducing the Dehn function to far commutators~\cite[Theorem~3.3]{DouglasMghirbiA}. The $107$ is the only constant in
$\kappa$ that is not structural, and a quadratic Dehn function for $H_3$ would give $\kappa=4$.

The proof occupies Sections~\ref{sec:fillings}--\ref{sec:more}.
\begin{enumerate}[label=(\roman*),itemsep=1pt]
\item Section~\ref{sec:fillings}: every local identity among transported translations, transvections and lamps is filled at a cost $L(n)$
governed by $\delta_H$ at scale $n$.
\item Section~\ref{sec:pair-areas}: two of the $2^N$ lamps on the first $N$ points commute at the cost of about $N^2$ local fillings
(Theorem~\ref{thm:pair-area}).
\item Section~\ref{sec:profile}: feeding these pair areas into Theorem~\ref{thm:criterion} gives the lower bounds (a) and (c).
\item Section~\ref{sec:profile}: models that hash configurations by random affine maps give the upper bound (b).
\item Section~\ref{sec:more}: finite presentation, part (d), by Cornulier's criterion for permutational wreath products.
\end{enumerate}
\section{Local fillings}\label{sec:fillings}

To compare lamps we carry the two prototypes, the translation $t$ and the transvection $d$, around the first $n$ points of ray~$1$ by short
words in $p$ and $q$. This section shows that any two short words that agree on the prototype give the same conjugate, at the cost of two
fillings in $H_3$ at scale $n$. The output is one number, the $L(n)$ of Lemma~\ref{lem:local-filling}: the price of every local identity used
in Section~\ref{sec:pair-areas}.

\paragraph{Three facts about Houghton's group.} All transport takes place among the first points of ray~$1$. For $n\ge1$ let $\Sigma_n\le H$
be the group of permutations of $Y$ supported in $\{(1,1),\dots,(1,n)\}$, identified with $\Sym(n)$, and write $(x\ y)$ for the transposition of
the points $x$ and $y$ of the first ray. Let $\delta_H$ be the Dehn function of the presentation $P_H$ of $H$ in~\eqref{eq:PH}, whose four
relators are the family \textsf{H}. That is, $\delta_H(n)$ is the largest area, with respect to these relators, of a word of length at most $n$
in $p^{\pm1},q^{\pm1}$ that represents $1$ in $H$. Appendix~\ref{sec:houghton} proves the three facts used here.
\begin{enumerate}[label=(H\arabic*),itemsep=1pt,leftmargin=*]
\item \emph{Short words for finite permutations} (Lemma~\ref{lem:transpositions}). The word $\sigma_k=p^ksp^{-k}$ represents the transposition
$(k{+}1\ \,k{+}2)$, and for $1\le x<y$ the transposition $(x\ y)$ is represented by a word $\omega_{x,y}$ of length less than $10y$ in $p,q$. For
distinct $x_1,\dots,x_l\in\{1,\dots,n\}$ there is a word $\pi[x_1,\dots,x_l]$ of length less than $10ln$ that sends $m$ to $x_m$ for $m\le l$; it
is a product of at most $l$ transpositions of points in $\{1,\dots,n\}$.
\item \emph{Point stabilizers} (Lemma~\ref{lem:stabilizers}). Let $H^{(k)}$ be the pointwise stabilizer in $H$ of the points $1,\dots,k$ of
ray~$1$, and put $\zeta_k=\sigma_{k-1}\cdots\sigma_1\sigma_0$. Every element of $\Sigma_n\cap H^{(k)}$ that moves $m\ge2$ points is represented
by a word of length less than $10(m-1)n$ in the two letters of $K_k=(\zeta_kp,\zeta_kq)$. For $k=1,2$ these are the pairs of
Section~\ref{sec:certificate}, whose letters have length at most $5$ and $11$ in $p,q$.
\item \emph{The Dehn function} (Proposition~\ref{prop:dehn}). There are absolute constants $K$ and $C$ with $\delta_H(n)\le e^{Kn}$, by
Lee~\cite[Theorem~D]{Lee2012}, and $\delta_H(n)\le Cn^{p_H}$ with $p_H$ as in Theorem~\ref{thm:main-group},
by~\cite[Theorems~3.2 and~3.3]{DouglasMghirbiA}.
\end{enumerate}

Let $n\ge3$. Three elementary properties of area are also used. The area is invariant under free equality, inversion and conjugation, and
subadditive under products. If $N$ represents~$1$, then $\mu N\nu=(\mu N\mu^{-1})\mu\nu$ for all words $\mu,\nu$, so replacing a subword $U$ by
$U'$ changes the area by at most $\Area_R(UU'^{-1})$. And $[w,z_1z_2]=[w,z_1]\cdot z_1[w,z_2]z_1^{-1}$ and $[w,z^{-1}]=z^{-1}[w,z]^{-1}z$, so
$\Area_R([w,z])$ is at most the sum of the areas $\Area_R([w,z_l])$ over the letters $z_l$ of any factorization $z=z_1^{\pm1}\cdots z_m^{\pm1}$.

\paragraph{Canonical transporters.} For points $i\ne j$ in $\{1,\dots,n\}$ put
\[
U_i=\pi[i],\qquad \tau_i=U_i\,t\,U_i^{-1},\qquad U_{ij}=\pi[j,i],\qquad X_{ij}=U_{ij}\,d\,U_{ij}^{-1},
\]
with the words of Lemma~\ref{lem:transpositions}(c). We use $\tau_i$ both for this word and for the translation by $e_i$ that it represents, since
$U_i(1)=i$; and $X_{ij}$ represents $E_{i\leftarrow j}$, since $U_{ij}$ maps $1$ to $j$ and $2$ to $i$. By Lemma~\ref{lem:transpositions},
$|U_i|<10n$ and $|U_{ij}|<20n$, and $U_i$ is a product of at most one transposition and $U_{ij}$ of at most two.

Lemma~\ref{lem:exchange} says that it does not matter which short word transports a prototype: two such words differ by an element of a
stabilizer, which commutes with the transported element by the family~\textsf{S}. We allow the transporting word to end with a power of $p$,
because the relations of Table~\ref{tab:certificate} use $t_2=ptp^{-1}$ and $t_3=p^2tp^{-2}$.

\begin{lemma}[Change of transporter]\label{lem:exchange}
Let $n\ge3$ and $f\in\{t,d\}$. Let $U$ and $W_0$ be words in $p^{\pm1},q^{\pm1}$ of length less than $30n$ that represent elements of $\Sigma_n$,
such that $W_0^{-1}U$ moves at most $10$ points, and let $W=W_0p^j$, where $j=0$ if $f=d$ and $j\in\{0,1,2\}$ if $f=t$. If $W$ and $U$ agree
on $P_f$, then
\[
\Area_R\bigl(WfW^{-1}\cdot Uf^{-1}U^{-1}\bigr)\ \le\ \mathcal E(n):=2\,\delta_H(1500n)+130n .
\]
\end{lemma}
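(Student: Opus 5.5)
The plan is to reduce the statement to the family \textsf{S} of Table~\ref{tab:certificate}. Whichever short word transports a prototype, the two transporters differ by an element of the stabilizer of $P_f$, and that stabilizer commutes with $f$ letter by letter. Put $k=1$ if $f=t$ and $k=2$ if $f=d$, so that $P_f=\{1,\dots,k\}$. The hypothesis that $W$ and $U$ agree on $P_f$ says that the element $g=U^{-1}W$ of $H$ fixes $P_f$ pointwise, that is, $g\in H^{(k)}$. We want a word $\kappa$ in the two letters of $K_k$ with $g=\kappa$ in $H$, and we will pay twice for the $H$-identity $W=U\kappa$.

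\emph{Step 1: absorbing the power of $p$.} If $j\ne0$ (so $f=t$), $g$ is not finitary, since $p^j$ has nonzero image in $\Z^2$. First I replace $W=W_0p^j$ by $W_0'=W_0\pi[j+1]$, which lies in $\Sigma_n$. The element $\pi[j+1]^{-1}p^j$ fixes the point $1$, and for $j\le2$ it is one of three fixed elements of $H^{(1)}$. By (H2), $H^{(1)}=\langle K_1\rangle$, so each of these three elements is equal in $H$ to a fixed word in $K_1$. Hence the bounded identity $p^j=\pi[j+1]\kappa_j$ costs an absolute constant of area in \textsf{H}, and $[t,\kappa_j]$ costs $|\kappa_j|$ relations of \textsf{S}. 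After this step $j=0$, at the price of an absolute constant, absorbed in the $130n$. Also $W_0'^{-1}U$ still moves at most $12$ points, because $\pi[j+1]$ is at most one transposition.

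\emph{Step 2: the stabilizer word.} Now $g=U^{-1}W_0'\in\Sigma_n\cap H^{(k)}$ moves $m\le12$ points. By (H2), $g$ is represented by a word $\kappa$ in the letters of $K_k$ of length less than $10\cdot11\,n=110n$; if $g=1$, take $\kappa$ empty. Each letter of $K_k$ has length at most $11$ in $p,q$. So the word $U^{-1}W_0'\kappa^{-1}$ in $p^{\pm1},q^{\pm1}$ has length at most $30n+30n+O(n)+1210n\le1500n$ for $n\ge3$, and it represents $1$ in $H$. Its area with respect to \textsf{H} is therefore at most $\delta_H(1500n)$.

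\emph{Step 3: assembling.} Substitute $W_0'\to U\kappa$ in both occurrences, in $W_0'$ and in $W_0'^{-1}$; each substitution costs at most $\delta_H(1500n)$ by the subword replacement rule. The word becomes freely equal to $U[\kappa,f]U^{-1}$. By conjugation invariance and the commutator expansion of the preliminaries, $\Area_R([\kappa,f])$ is at most the sum over the letters of $\kappa$ of $\Area_R([k',f])$ with $k'\in K_k$. Each of these terms equals $1$ by the family \textsf{S}, so this contributes at most $110n$. The total is at most $2\delta_H(1500n)+110n+O(1)\le\mathcal E(n)$ for $n\ge3$.

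The main obstacle is the bookkeeping rather than any idea. One must check that Step 1 uses only fixed words, with the generation of $H^{(1)}$ by $K_1$ from Lemma~\ref{lem:stabilizers}. One must also track the lengths carefully enough that the word filled in $H$ really has length at most $1500n$ and the \textsf{S}-cost stays below $130n$. If the constant from the $p^j$ step does not fit, I would instead treat $p^j$ through the relations $t_2=ptp^{-1}$ and $t_3=p^2tp^{-2}$ directly, comparing with $\tau_{j+1}$ by the same stabilizer argument.
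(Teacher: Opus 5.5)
Your proof is correct and follows the paper's argument. You write the stabilizer element as a word in $K_k$ by Lemma~\ref{lem:stabilizers}(b), fill the resulting null word of length at most $1500n$ in $H$ twice at cost $\delta_H(1500n)$, and pay one relation of \textsf{S} per letter of the stabilizer word. The only difference is the power $p^j$. The paper folds it into the stabilizer word by writing $W^{-1}U=(sp)^{-j}g'$ with $(sp)^jp^{-j}\in\Sigma_3$, so no separate constant appears. Your preliminary rewrite leaves a constant you did not evaluate. It fits in the slack $20n\ge60$: $p=s\cdot sp$ costs one relation $s^2$, and $p^2=\pi[3]\,[sp,sq]\,(sp)^2$ costs four relations of \textsf{H} (one conjugate of the last relator of \textsf{H} and three of $s^2$), so your Step~1 costs at most $2\cdot4+6=14$.
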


\begin{proof}
Let $k=|P_f|$ and let $g$ be the element represented by $W^{-1}U$; it fixes $P_f$ pointwise, so $g\in H^{(k)}$. If $j=0$, then $g\in\Sigma_n$ moves at
most $10$ points, and by Lemma~\ref{lem:stabilizers}(b) it is represented by a word $Z$ with fewer than $90n$ letters from $K_k^{\pm1}$. If
$j\ge1$, then $f=t$, $k=1$, and $g=(sp)^{-j}g'$ with $g'=(sp)^jp^{-j}\cdot W_0^{-1}U$. Here $(sp)p^{-1}=s$ and $(sp)^2p^{-2}=sB$ lie in
$\Sigma_3$, so $g'\in\Sigma_n$ moves at most $13$ points; and $g'$ fixes the point $1$, since $g$ and $sp$ do. By Lemma~\ref{lem:stabilizers}(b),
$g'$ is represented by a word $Z'$ with fewer than $120n$ letters from $K_1^{\pm1}$, and we put $Z=(sp)^{-j}Z'$. In both cases $Z$ has fewer than
$122n$ letters from $K_k^{\pm1}$, each of length at most $11$ in $p,q$.

Let $N=W^{-1}UZ^{-1}$. It represents $1$ in $H$ and has length less than $30n+2+30n+1342n\le1500n$, so by the family \textsf{H} its area is at
most $\delta_H(1500n)$. By the family \textsf{S}, $\Area_R([f,Z])<122n$. Since $U=WNZ$ freely,
\[
WfW^{-1}\cdot Uf^{-1}U^{-1}=W\,(fNf^{-1})\,[f,Z]\,N^{-1}\,W^{-1},
\]
whose area is at most $2\delta_H(1500n)+122n\le\mathcal E(n)$.
\end{proof}

\begin{lemma}[Local fillings]\label{lem:local-filling}
Let $n\ge3$ and
\[
L(n)=10\,\delta_H(1500n)+700\,n .
\]
Let $i,j,k\in\{1,\dots,n\}$ be points with $i\ne j$, and $x,y\in\{a,b\}$, where $b$ stands for the word $ac^4$. Each of the following words
represents $1$ and has area at most $L(n)$ with respect to $R$.
\begin{enumerate}[label=(\alph*),itemsep=1pt]
\item $\tau_i^2$, of area $1$, and $[\tau_i,\tau_j]$.
\item $X_{ij}\tau_kX_{ij}^{-1}\tau_k^{-1}$ if $k\ne j$, and $X_{ij}\tau_jX_{ij}^{-1}\tau_i^{-1}\tau_j^{-1}$.
\item $[X_{ij},x]$, of area less than $40n+1$.
\item $[x,\tau_iy\tau_i^{-1}]$, of area less than $40n+1$.
\end{enumerate}
\end{lemma}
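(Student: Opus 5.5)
The plan is to reduce each word to a single relation of Table~\ref{tab:certificate}, conjugated by one common short transporter. Every canonical transporter that occurs will be swapped for that common word using Lemma~\ref{lem:exchange}, and each swap costs at most $\mathcal E(n)=2\delta_H(1500n)+130n$. Part~(b) in the case $k=j$ needs five swaps and one relation, so it costs $5\mathcal E(n)+1=10\delta_H(1500n)+650n+1\le L(n)$. This is the case that fixes the constant in $L(n)$.

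For (a), the word $\tau_i^2=U_it^2U_i^{-1}$ is a conjugate of the relation $t^2$ of family \textsf{Q}, so its area is $1$. For $[\tau_i,\tau_j]$ I will take $U=\pi[i,j]$, which sends $1\mapsto i$ and $2\mapsto j$. The word $U$ agrees with $U_i$ on $P_t=\{1\}$. The word $Up$ agrees with $U_j$ on $P_t$, because $Up(1,1)=U(1,2)=j$. The lengths are below $30n$. The relevant differences, $U^{-1}U_i$ and $U^{-1}U_j$, are products of at most three transpositions, so they move at most $6$ points. Lemma~\ref{lem:exchange} therefore applies with $W_0=U$ and $j\in\{0,1\}$. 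Each of the four conjugates in the commutator is replaced, so $[\tau_i,\tau_j]$ becomes $U[t_1,t_2]U^{-1}$ at a cost of $4\mathcal E(n)$, and family \textsf{C} adds $1$.

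For (b), I will take $U=\pi[j,i,k]$ when $k\notin\{i,j\}$, and $U=\pi[j,i]$ otherwise. Then $U$ agrees with $U_{ij}$ on $P_d=\{1,2\}$, so $X_{ij}$ may be replaced by $UdU^{-1}$. The translation $\tau_k$ is replaced by $Ut_mU^{-1}$ with $m=3$, $m=2$ or $m=1$ according as $k\notin\{i,j\}$, $k=i$ or $k=j$. In the last case $\tau_i$ is also replaced by $Ut_2U^{-1}$. The remaining word is then a conjugate of $[d,t_3]$, of $[d,t_2]$, or of $dt_1d^{-1}(t_1t_2)^{-1}$, all from family \textsf{A}. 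The cost is $4\mathcal E(n)+1$, or $5\mathcal E(n)+1$ when $k=j$.

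Parts (c) and (d) need no exchange. Write $V$ for $U_{ij}$ in (c) and for $U_i$ in (d). In (c), $[X_{ij},x]=V\,[d,V^{-1}xV]\,V^{-1}$. The relations $[g,x]$ of family \textsf{Z} let $x^{\pm1}$ pass each letter of $V$ at cost one. Since $|V|<20n$ and $x$ occurs twice, this gives $V^{-1}xV\to x$ for fewer than $40n$. What is left is $[d,x]$. For $x=a$ this is a relation. For $x=b=ac^4$ we have $[d,ac^4]=[d,a]\cdot a[d,c^4]a^{-1}$, and $[c^3,c^4]$ is freely trivial, so the area is $1$. Part (d) is the same with $V=U_i$, $|U_i|<10n$, followed by the relation $[x,tyt^{-1}]$ of family \textsf{P}.

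The step I expect to need the most care is checking the hypotheses of Lemma~\ref{lem:exchange} in every case. These are the length bound $30n$, membership in $\Sigma_n$, that $W_0^{-1}U$ moves at most $10$ points, and agreement on the prototype. The appended powers of $p$ must also be correct: $p^{m-1}$ sends $(1,1)$ to $(1,m)$, so $Up^{m-1}$ maps $1$ to the $m$-th entry of $U$. The case $k=j$ with five replacements is the one that has to fit exactly within $L(n)$.
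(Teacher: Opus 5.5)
Your proposal is correct and follows the paper's proof essentially step for step. You use the same common transporters, $\pi[i,j]$ in (a) and $\pi[j,i,k]$ or $\pi[j,i]$ in (b), with four or five changes of transporter charged at $\mathcal E(n)$ each via Lemma~\ref{lem:exchange}. In (c) and (d) you use the same direct \textsf{Z}-moves, with the free identity $[c^3,ac^4]=[c^3,a]$ handling $x=b$ in (c).
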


\begin{proof}
All the words represent $1$ by~\eqref{eq:affine-identities}. Call replacing a subword $UfU^{-1}$ by $WfW^{-1}$ as in Lemma~\ref{lem:exchange} a
\emph{change of transporter}; it costs at most $\mathcal E(n)$, and $5\mathcal E(n)+1\le L(n)$. In each part, $W_0$ is a product of at most three
transpositions, of length less than $30n$, and $U\in\{U_i,U_j,U_k,U_{ij}\}$ is a product of at most two, so $W_0^{-1}U$ moves at most $10$ points.

(a) $\tau_i^2=U_it^2U_i^{-1}$ freely, a conjugate of the relation \textsf{Q}. For $[\tau_i,\tau_j]$ let $W_0=\pi[i,j]$. Replace both occurrences
of $\tau_i^{\pm1}$ by $(W_0tW_0^{-1})^{\pm1}$, and both occurrences of $\tau_j^{\pm1}$ by $(W_0t_2W_0^{-1})^{\pm1}=(W_0p\,t\,p^{-1}W_0^{-1})^{\pm1}$.
These are four changes of transporter, since $W_0$ and $U_i$ both map $1$ to $i$, and $W_0p$ and $U_j$ both map $1$ to $j$. The result
$W_0[t_1,t_2]W_0^{-1}$ is a conjugate of the relation \textsf{C}, so the area is at most $4\mathcal E(n)+1$.

(b) Let $W_0=\pi[j,i,k]$, or $\pi[j,i]$ if $k\in\{i,j\}$, so that $W_0$ maps $1$ to $j$, $2$ to $i$, and $3$ to $k$ when $k\notin\{i,j\}$; let
$m\in\{1,2,3\}$ be the point with $W_0(m)=k$, so that $m=1$ exactly when $k=j$. Replace $X_{ij}^{\pm1}$ by $(W_0dW_0^{-1})^{\pm1}$, a change of
transporter because $W_0$ and $U_{ij}$ agree on $P_d=\{1,2\}$. Replace $\tau_k^{\pm1}$ by $(W_0t_mW_0^{-1})^{\pm1}=(W_0p^{m-1}\,t\,p^{1-m}W_0^{-1})^{\pm1}$,
and $\tau_i^{\pm1}$ by $(W_0t_2W_0^{-1})^{\pm1}$, again changes of transporter. If $k\ne j$, four of them turn the word into
$W_0(dt_md^{-1}t_m^{-1})W_0^{-1}$ with $m\in\{2,3\}$, a conjugate of a relation \textsf{A}. If $k=j$, five of them turn it into
$W_0(dt_1d^{-1}t_2^{-1}t_1^{-1})W_0^{-1}$, again a conjugate of a relation \textsf{A}. So the area is at most $5\mathcal E(n)+1$.

(c) Moving $x$ across a letter of $U_{ij}$ costs one relation \textsf{Z}, and moving it across $d=c^3$ costs one relation $[d,a]$, since
$[c^3,ac^4]=[c^3,a]$ freely. Hence $\Area_R([X_{ij},x])\le2|U_{ij}|+1<40n+1$.

(d) Freely the word is $U_i[U_i^{-1}xU_i,\,t(U_i^{-1}yU_i)t^{-1}]U_i^{-1}$. Replacing the two occurrences of $U_i^{-1}x^{\pm1}U_i$ and the two of
$U_i^{-1}y^{\pm1}U_i$ by $x^{\pm1}$ and $y^{\pm1}$ costs at most $4|U_i|$ relations \textsf{Z} and leaves $U_i[x,tyt^{-1}]U_i^{-1}$, a conjugate of a
relation \textsf{P}; so the area is at most $4|U_i|+1<40n+1$.
\end{proof}
\section{Pair areas}\label{sec:pair-areas}

We now bring together two of the $2^N$ lamps supported on the first $N$ points of ray~$1$. Their transport words are products of the $\tau_i$,
and the two configurations differ by a configuration that at most $N$ transvections clear down to a single point, where the commutator of the
two lamps is a local filling. Clearing naively would let the translation word grow; we keep it normalized after every step, so that it never
has more than $N$ factors, and each step costs $O(N)$ local fillings.

\paragraph{A worked example.} Let $N=3$ and take the lamps at $u=\{1,3\}$ and $w=\{2\}$, with transport words $T_u=\tau_1\tau_3$ and
$T_w=\tau_2$. Figure~\ref{fig:clearing} draws the clearing. Translations commute, so the commutator of the two lamps is $T_u[x,ZyZ^{-1}]T_u^{-1}$
with $Z=T_u^{-1}T_w=\tau_3^{-1}\tau_1^{-1}\tau_2$, and two relations \textsf{Q} and two swaps turn $Z$ into $T_{u_0}=\tau_1\tau_2\tau_3$, where
$u_0=u+w=\{1,2,3\}$. Choose the point $3$ of $u_0$ as the \emph{pivot}, into which the other points are cleared. The transvection $X_{13}$ adds
the value at $3$ into $1$. Conjugating $T_{u_0}$ by it leaves $\tau_1$ and $\tau_2$ alone and turns $\tau_3$ into $\tau_1\tau_3$; one swap brings
the new $\tau_1$ next to the old one, and one relation \textsf{Q} cancels them, leaving $\tau_2\tau_3$. Then $X_{23}$ turns $\tau_2\tau_3$ into
$\tau_3$ in the same way. Conjugating by $G=X_{23}X_{13}$ therefore carries the lamp at $u_0$ to the lamp at $e_3$, while $G$ is linear and
commutes with the lamp at $0$. What is left is $[x,\tau_3y\tau_3^{-1}]$, a local filling. Every identity used involves at most three points, so
Lemma~\ref{lem:local-filling} prices each of them at $L(3)$.

\begin{figure}[ht]
\centering
\begin{tikzpicture}[x=1cm,y=1cm,>=stealth]
\def\cw{0.55}   
\def\rh{1.05}   
\newcommand{\cellrow}[5]{%
  \foreach \c/\lit in {1/#2,2/#3,3/#4}{%
    \ifnum\lit=1 \fill[black!75] ({(\c-1)*\cw},#1) rectangle ({\c*\cw},{#1+\cw}); \fi
    \draw ({(\c-1)*\cw},#1) rectangle ({\c*\cw},{#1+\cw});
  }%
  \ifnum#5>0 \draw[line width=1pt] ({(#5-0.5)*\cw},{#1+0.5*\cw}) circle ({0.38*\cw}); \fi
}
\foreach \c in {1,2,3}{\node[font=\scriptsize] at ({(\c-0.5)*\cw},{5*\rh+\cw+0.25}) {$\c$};}
\cellrow{5*\rh}{1}{0}{1}{0}
\node[anchor=east,font=\small] at (-0.3,{5*\rh+0.5*\cw}) {$u=\{1,3\}$};
\node[anchor=west,font=\small] at ({3*\cw+0.3},{5*\rh+0.5*\cw}) {$T_u=\tau_1\tau_3$};
\cellrow{4*\rh}{0}{1}{0}{0}
\node[anchor=east,font=\small] at (-0.3,{4*\rh+0.5*\cw}) {$w=\{2\}$};
\node[anchor=west,font=\small] at ({3*\cw+0.3},{4*\rh+0.5*\cw}) {$T_w=\tau_2$};
\draw[->] ({1.5*\cw},{4*\rh-0.1}) -- ({1.5*\cw},{3*\rh+\cw+0.1});
\node[anchor=west,font=\footnotesize] at ({3*\cw+0.3},{3.5*\rh+0.5*\cw}) {Step 1: $T_u^{-1}T_w\to T_{u_0}$; two relations \textsf{Q}, two swaps};
\cellrow{3*\rh}{1}{1}{1}{3}
\node[anchor=east,font=\small] at (-0.3,{3*\rh+0.5*\cw}) {$u_0=u+w=\{1,2,3\}$};
\node[anchor=west,font=\small] at ({3*\cw+0.3},{3*\rh+0.5*\cw}) {$T_{u_0}=\tau_1\tau_2\tau_3$, pivot $3$};
\draw[->,thick] ({2.5*\cw},{3*\rh-0.05}) .. controls ({2.5*\cw},{2*\rh+0.5*\cw}) and ({0.5*\cw},{3*\rh-0.1}) .. ({0.5*\cw},{2*\rh+\cw+0.05});
\node[anchor=west,font=\footnotesize] at ({3*\cw+0.3},{2.5*\rh+0.5*\cw}) {Step 2: $X_{13}$ adds $3$ into $1$; one swap, one relation \textsf{Q}};
\cellrow{2*\rh}{0}{1}{1}{3}
\node[anchor=east,font=\small] at (-0.3,{2*\rh+0.5*\cw}) {$u_1=\{2,3\}$};
\node[anchor=west,font=\small] at ({3*\cw+0.3},{2*\rh+0.5*\cw}) {$T_{u_1}=\tau_2\tau_3$};
\draw[->,thick] ({2.5*\cw},{2*\rh-0.05}) .. controls ({2.5*\cw},{1*\rh+0.5*\cw}) and ({1.5*\cw},{2*\rh-0.1}) .. ({1.5*\cw},{1*\rh+\cw+0.05});
\node[anchor=west,font=\footnotesize] at ({3*\cw+0.3},{1.5*\rh+0.5*\cw}) {$X_{23}$ adds $3$ into $2$; one relation \textsf{Q}};
\cellrow{1*\rh}{0}{0}{1}{3}
\node[anchor=east,font=\small] at (-0.3,{1*\rh+0.5*\cw}) {$u_2=e_3$};
\node[anchor=west,font=\small] at ({3*\cw+0.3},{1*\rh+0.5*\cw}) {$T_{u_2}=\tau_3$};
\node[anchor=west,font=\footnotesize] at ({3*\cw+0.3},{0.5*\rh+0.5*\cw-0.15}) {Step 3: $G=X_{23}X_{13}$ is linear; $[x,\tau_3y\tau_3^{-1}]$ is a local filling};
\end{tikzpicture}
\caption{The clearing of the worked example, $N=3$: the lit cells are the support of a configuration on the points $1,2,3$ of ray~$1$, the ringed
cell is the pivot, and each arrow is a transvection that adds the value at the pivot into the cell it points to. After $X_{13}$ and $X_{23}$ the
configuration $u_0=\{1,2,3\}$ has become $e_3$, and the translation word has been renormalized after each step.}
\label{fig:clearing}
\end{figure}

\begin{theorem}[Pair areas]\label{thm:pair-area}
Let $N\ge1$ and $n=\max(3,N)$. For a configuration $u$ supported in the first $N$ points of ray~$1$ let $T_u=\tau_{i_1}\cdots\tau_{i_m}$,
where $i_1<\dots<i_m$ are the points of the support of $u$. Then $T_u$ represents $\tau_u$, so $T_uaT_u^{-1}$ and $T_ubT_u^{-1}$ generate the copy
of $S_3$ at $u$; $|T_u|\le21Nn$; and for distinct such $u,w$ and $x,y\in\{a,b\}$, with $b$ the word $ac^4$,
\[
\Area_R\bigl([T_uxT_u^{-1},\,T_wyT_w^{-1}]\bigr)\ \le\ \Lambda_N:=20N^2L(n)\ \le\ C_0N^2\bigl(\delta_H(C_0N)+N\bigr)
\]
with $C_0=4.5\cdot10^4$. Consequently there are absolute constants $K'$ and $C'$ with
\[
\Lambda_N\le e^{K'N}\qquad\text{and}\qquad \Lambda_N\le C'N^{2+p_H}\qquad(N\ge1),\qquad p_H=\log107+4 .
\]
\end{theorem}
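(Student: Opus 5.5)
The plan follows the worked example: reduce the commutator to a single local filling by three moves, each priced in local fillings of Lemma~\ref{lem:local-filling} at scale $n$.

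\emph{Length bound.} $T_u$ represents $\tau_u$ because the translations $\tau_{e_i}$ commute and $\tau_u=\prod_{i\in u}\tau_{e_i}$. Each $\tau_i=U_itU_i^{-1}$ has length less than $20n+1\le21n$, and there are at most $N$ factors, so $|T_u|\le21Nn$.

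\emph{Step 1: combine the two transports.} Freely,
\[
[T_uxT_u^{-1},T_wyT_w^{-1}]=T_u[x,ZyZ^{-1}]T_u^{-1},\qquad Z=T_u^{-1}T_w .
\]
Replace each $\tau_i^{-1}$ in $T_u^{-1}$ by $\tau_i$, at one relation \textsf{Q} each. Then sort the resulting product of at most $2N$ factors into increasing order and cancel equal adjacent pairs by Lemma~\ref{lem:local-filling}(a). This costs at most about $N^2$ swaps. The result is $T_{u_0}$ with $u_0=u+w\neq0$.

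The replacement must be made inside $[x,ZyZ^{-1}]$, where $Z$ occurs twice. By the subword-replacement principle, each occurrence costs the area of the identity $Z=T_{u_0}$, so this step costs $O(N^2)L(n)$.

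\emph{Step 2: clear $u_0$ to a single point.} Pick a pivot $k\in u_0$. For each other point $i\in u_0$ in turn, conjugate by $X_{ik}$. Using Lemma~\ref{lem:local-filling}(b), the current normalized word $T_{u_j}$ becomes $X_{ik}T_{u_j}X_{ik}^{-1}$, which is freely $\prod X_{ik}\tau_lX_{ik}^{-1}$. Each factor with $l\ne k$ returns to $\tau_l$, and the factor $\tau_k$ becomes $\tau_i\tau_k$. This uses at most $N$ fillings of type (b).

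Then move the new $\tau_i$ next to the old one by at most $N$ swaps, and cancel with \textsf{Q}. This restores the normalized form $T_{u_{j+1}}$ with $u_{j+1}=u_j-e_i$, at cost $O(N)L(n)$ per step and $O(N^2)L(n)$ over at most $N$ steps. Setting $G=\prod X_{ik}$, we get $T_{u_0}=G^{-1}\tau_kG$ up to this area.

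Substituting into $[x,T_{u_0}yT_{u_0}^{-1}]$ leaves $[x,G^{-1}\tau_kG\,y\,G^{-1}\tau_k^{-1}G]$. Now commute $G$ past $x$ and $y$ using Lemma~\ref{lem:local-filling}(c), at most $4N$ fillings. This yields $G^{-1}[x,\tau_ky\tau_k^{-1}]G$, and (d) closes it with one more filling.

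The main obstacle is bookkeeping rather than any single identity. The conjugation by $X_{ik}$ must keep the word normalized with at most $N$ factors, so that every identity used involves at most three points in $\{1,\dots,n\}$ and each swap costs only $L(n)$. Counting the doubled occurrences (the $Z$ and $Z^{-1}$, and both sides of each conjugate) gives the constant $20N^2L(n)$.

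\emph{Final bounds.} Substituting $L(n)=10\delta_H(1500n)+700n$ gives $\Lambda_N\le C_0N^2(\delta_H(C_0N)+N)$. Applying (H3) then gives both the exponential bound and the polynomial bound $C'N^{2+p_H}$.
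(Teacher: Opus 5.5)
Your proposal is correct and follows the paper's proof essentially step for step. You merge the two transports into $T_{u_0}$ by \textsf{Q} relations and swaps, clear $u_0$ to a pivot by transvections while renormalizing the translation word after each step, then move the linear element $G$ past the lamp at $0$ and close with Lemma~\ref{lem:local-filling}(d). One slip: $Z^{\pm1}$ (and later $T_{u_0}^{\pm1}$) occurs four times in the commutator, not twice. Your closing count of ``the $Z$ and $Z^{-1}$'' in fact uses four, which is what the paper's accounting needs to reach $8N^2L+9N^2L=17N^2L\le\Lambda_N$.
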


\begin{proof}
Translations commute, so $T_u$ represents $\tau_u$, and $|T_u|\le N(2\cdot10n+1)\le21Nn$. Write $L=L(n)$; by Lemma~\ref{lem:local-filling},
each of its four local identities has area at most $L$.

\emph{Step 1: one translation word.} Freely, $[T_uxT_u^{-1},T_wyT_w^{-1}]=T_u[x,ZyZ^{-1}]T_u^{-1}$ with $Z=T_u^{-1}T_w$. Replace each of the at
most $N$ factors $\tau_i^{-1}$ of $T_u^{-1}$ by $\tau_i$, at one relation \textsf{Q} each. Sort the resulting product of at most $2N$ factors
$\tau_i$ into increasing order; this takes at most $\binom{2N}2$ swaps of adjacent distinct factors, each a local filling of
Lemma~\ref{lem:local-filling}(a). Cancel the at most $N$ adjacent equal pairs, again at one relation \textsf{Q} each. The result is $T_{u_0}$ with
$u_0=u+w\ne0$, and $\Area_R(ZT_{u_0}^{-1})\le2N+N(2N-1)L\le2N^2L$. Since $Z^{\pm1}$ occurs four times in $[x,ZyZ^{-1}]$,
\[
\Area_R\bigl([T_uxT_u^{-1},T_wyT_w^{-1}]\bigr)\le8N^2L+\Area_R\bigl([x,T_{u_0}yT_{u_0}^{-1}]\bigr).
\]

\emph{Step 2: normalized clearing.} Let $s\le N$ be the size of the support of $u_0$, choose a point $j$ in it as the pivot, and list the other
points of the support as $i_1,\dots,i_{s-1}$. Put $G_l=X_{i_lj}\cdots X_{i_1j}$ and $u_l=u_0+e_{i_1}+\dots+e_{i_l}$, whose support is that of $u_0$
with $i_1,\dots,i_l$ removed; it still contains $j$, and $u_{s-1}=e_j$. We claim
\[
\Area_R\bigl(G_lT_{u_0}G_l^{-1}\,T_{u_l}^{-1}\bigr)\le2lsL\qquad(0\le l\le s-1).
\]
For $l=0$ there is nothing to prove. Given the claim for $l$, write $X=X_{i_{l+1}j}$ and $i=i_{l+1}$. Freely
$G_{l+1}T_{u_0}G_{l+1}^{-1}=X(G_lT_{u_0}G_l^{-1})X^{-1}$, which differs from $XT_{u_l}X^{-1}$ by a conjugate of a word of area at most $2lsL$. Next,
$XT_{u_l}X^{-1}$ is freely the product of the words $X\tau_{i'}X^{-1}$ over the support of $u_l$, in order. By Lemma~\ref{lem:local-filling}(b),
at a cost of at most $sL$, each of these becomes $\tau_{i'}$, except that $X\tau_jX^{-1}$ becomes $\tau_j\tau_i$. The new factor $\tau_i$ has a
partner $\tau_i$ in $T_{u_l}$, because $i$ lies in the support of $u_l$; at most $s-1$ swaps (Lemma~\ref{lem:local-filling}(a)) make the two
adjacent, and one relation \textsf{Q} cancels them. What remains is $T_{u_{l+1}}$. This step costs at most $sL+(s-1)L+1\le2sL$, which proves the
claim. The translation word is renormalized after every transvection and never has more than $s\le N$ factors.

\emph{Step 3: a single point.} Let $G=G_{s-1}$ and let $M=GT_{u_0}G^{-1}\tau_j^{-1}$, which has area at most $2s(s-1)L\le2N^2L$ by Step~2. Freely
$T_{u_0}=G^{-1}M\tau_jG$. Substituting this into the four occurrences of $T_{u_0}^{\pm1}$ in $[x,T_{u_0}yT_{u_0}^{-1}]$ and deleting the four
occurrences of $M^{\pm1}$ costs at most $8N^2L$. By Lemma~\ref{lem:local-filling}(c), $\Area_R([G,z])\le(s-1)(40n+1)$ for $z\in\{a,b\}$, and the
same bound holds for $[G,z^{-1}]$ and $[z,G^{-1}]$. Moving $y^{\pm1}$ across $G$ and $G^{-1}$, and then $x^{\pm1}$, leaves
$G^{-1}[x,\tau_jy\tau_j^{-1}]G$, at a cost of at most $4(s-1)(40n+1)$; and $[x,\tau_jy\tau_j^{-1}]$ has area at most $40n+1$ by
Lemma~\ref{lem:local-filling}(d). Altogether, since $L\ge700n$,
\[
\Area_R\bigl([x,T_{u_0}yT_{u_0}^{-1}]\bigr)\le8N^2L+4N(40n+1)+40n+1\le9N^2L,
\]
and with Step~1 the area is at most $17N^2L\le\Lambda_N$.

\emph{Growth.} Since $n\le3N$, $L(n)\le10\delta_H(4500N)+2100N$, which gives the first bound on $\Lambda_N$. With Proposition~\ref{prop:dehn}(a)
it is at most $e^{K'N}$ for a suitable $K'$. With Proposition~\ref{prop:dehn}(b) it is at most $C_0N^2(C(C_0N)^{p_H}+N)\le C'N^{2+p_H}$, since $p_H>1$.
\end{proof}

The factor $N^2$ counts local moves: at most $N$ transvections, each followed by at most $N$ commutations. The Dehn function of $H$ enters
through the local fillings, at the scale $N$ of the transport. Theorem~\ref{thm:pair-area} is the only transport input of the paper: the lower
bounds of Theorem~A (Section~\ref{sec:profile}) and of Theorem~B (Section~\ref{sec:lamp-channel}) feed it into the criterion and into the
one-round theorem.
\section{Hash models, and the proof of Theorem A}\label{sec:profile}

The lower bound of Theorem~A follows by feeding $2^N$ lamps of pair area polynomial in $N$ (Theorem~\ref{thm:pair-area}) into the criterion.
The upper bound needs models far smaller than Følner sets, which we build by hashing configurations through random affine maps, a universal
hash family that the affine group moves covariantly. The two halves meet in the proof of Theorem~A at the end of the section.

\subsection{Affine hashing and the upper bound}\label{sec:hashing}

The Følner sets of $\Gamma$ are doubly exponential (Proposition~\ref{prop:folner-lower}), but its chunks have far smaller models. A Følner set
gives models of every chunk, by Cornulier's comparison of the sofic profile with the Følner function~\cite[Proposition~3.14]{Cornulier2013}, but
it records the lamp at every configuration of a large finite part of $V$. A model need only record the lamps that are lit, and it can store them
through a random affine hash of the configurations, which the affine group moves covariantly. Random $\F_2$-linear maps form a universal
hash family of Carter and Wegman~\cite{CarterWegman1979}, and a random translation does not change collision probabilities.

Let $W$ be a finite-dimensional vector space over $\F_2$, and let $K_W=S_3^W\rtimes\AGL(W)$, where $S_3^W$ is the group of functions
$W\to S_3$ under pointwise multiplication, and $g\in\AGL(W)$ acts by $(g\lambda)(w)=\lambda(g^{-1}w)$. Thus $(\lambda,g)(\mu,h)=(\lambda\cdot g\mu,\,gh)$.
The support of $\lambda$ is the set of $w$ with $\lambda(w)\ne1$. Fix $k\ge1$ and a total order on $W$, let $\operatorname{Aff}(W,\F_2^k)$ be the
set of affine maps $W\to\F_2^k$, and let
\[
\Omega=\operatorname{Aff}(W,\F_2^k)\times S_3^{\F_2^k},\qquad |\Omega|=2^{k(\dim W+1)}\,6^{2^k}.
\]
For $\eta\in\operatorname{Aff}(W,\F_2^k)$ and $\lambda\in S_3^W$ let $\lambda^\eta:\F_2^k\to S_3$ send $z$ to the product, in the order of $W$, of the
$\lambda(w)$ with $\eta(w)=z$. For $\gamma=(\lambda,g)\in K_W$ define a permutation $\Theta_\gamma$ of $\Omega$ by
\[
\Theta_\gamma(\eta,u)=\bigl(\eta\circ g^{-1},\ \lambda^{\eta\circ g^{-1}}\cdot u\bigr);
\]
its inverse sends $(\eta',u')$ to $(\eta'\circ g,(\lambda^{\eta'})^{-1}u')$, and $\Theta_1$ is the identity.

\begin{lemma}[Affine hashing]\label{lem:hash}
Let $b\ge1$, let $\gamma_1=(\lambda,g)$ and $\gamma_2=(\mu,h)$ be elements of $K_W$ whose lamp parts $\lambda,\mu$ have supports of size at most
$b$, and let $\omega$ be uniformly distributed on $\Omega$. Then
\begin{enumerate}[label=(\roman*),itemsep=1pt]
\item $\Theta_{\gamma_1}\Theta_{\gamma_2}\omega=\Theta_{\gamma_1\gamma_2}\omega$ except with probability at most $\binom{2b}2\,2^{-k}$;
\item if $\gamma_1\ne\gamma_2$, then $\Theta_{\gamma_1}\omega\ne\Theta_{\gamma_2}\omega$ except with probability at most $\binom{2b}2\,2^{-k}$.
\end{enumerate}
\end{lemma}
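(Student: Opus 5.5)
The plan is a direct computation of $\Theta_{\gamma_1}\Theta_{\gamma_2}$ and $\Theta_{\gamma_1\gamma_2}$. The only probabilistic input is pairwise universality of random affine maps. Fix an affine bijection $f$ of $W$. If $\eta$ is uniform on $\operatorname{Aff}(W,\F_2^k)$, then so is $\eta\circ f$. Moreover, for fixed $w\ne w'$ in $W$ we have $\Pr[\eta(w)=\eta(w')]=2^{-k}$, because the linear part of $\eta$ applied to $w-w'\ne0$ is uniform on $\F_2^k$. By a union bound, a uniform affine map is injective on a fixed set of size at most $2b$ except with probability at most $\binom{2b}2 2^{-k}$. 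Both parts will reduce to such an injectivity event.

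For (i), write $\omega=(\eta,u)$ and $\eta'=\eta\circ h^{-1}\circ g^{-1}$. Unwinding the definition gives
\[
\Theta_{\gamma_1}\Theta_{\gamma_2}(\eta,u)=\bigl(\eta',\ \lambda^{\eta'}\cdot\mu^{\eta\circ h^{-1}}\cdot u\bigr).
\]
Since $\gamma_1\gamma_2=(\lambda\cdot g\mu,\,gh)$, we also have
\[
\Theta_{\gamma_1\gamma_2}(\eta,u)=\bigl(\eta',\ (\lambda\cdot g\mu)^{\eta'}\cdot u\bigr).
\]
The first coordinates agree, so it remains to show
\[
(\lambda\cdot g\mu)^{\eta'}=\lambda^{\eta'}\,\mu^{\eta\circ h^{-1}}.
\]
Substituting $w=g w'$ shows that $(g\mu)^{\eta'}(z)$ is the product of the same factors $\mu(w')$ over the same fibre of $\eta\circ h^{-1}$, only in a possibly different order. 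The pointwise product $\lambda\cdot g\mu$ also interleaves factors differently from $\lambda^{\eta'}(g\mu)^{\eta'}$. Both discrepancies disappear if every fibre of $\eta'$ meets $\operatorname{supp}\lambda\cup\operatorname{supp}(g\mu)$ in at most one point. In that case each fibre product has at most one nontrivial factor, possibly of the form $\lambda(w)\,(g\mu)(w)$ at a single $w$. The condition is exactly that $\eta'$ is injective on this union, which has at most $2b$ points because $|\operatorname{supp} g\mu|=|\operatorname{supp}\mu|$. Injectivity of $\eta'$ on $\operatorname{supp}(g\mu)$ is the same as injectivity of $\eta\circ h^{-1}$ on $\operatorname{supp}\mu$. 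Since $\eta'$ is uniform, the universality bound gives (i).

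For (ii), split into two cases. If $g\ne h$, pick $w$ with $g^{-1}h\,w\ne w$. Equality of the first coordinates $\eta\circ g^{-1}=\eta\circ h^{-1}$ forces $\eta(w)=\eta(g^{-1}hw)$, which happens with probability $2^{-k}\le\binom{2b}2 2^{-k}$, using $b\ge1$. If $g=h$, then $\lambda\ne\mu$ and the first coordinates coincide, equal to $\eta'=\eta\circ g^{-1}$, which is uniform. Off the event that $\eta'$ is injective on $\operatorname{supp}\lambda\cup\operatorname{supp}\mu$, we have $\lambda^{\eta'}(\eta'(w))=\lambda(w)$ and $\mu^{\eta'}(\eta'(w))=\mu(w)$ for every $w$ in the union. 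Choosing $w$ with $\lambda(w)\ne\mu(w)$ then gives $\lambda^{\eta'}\cdot u\ne\mu^{\eta'}\cdot u$. The universality bound again applies.

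The main obstacle is the bookkeeping in (i): checking that the order-dependence of $\lambda^\eta$ (products in the fixed order of $W$, which $g$ does not preserve) and the pointwise product in $S_3^W$ are both harmless on the injectivity event. I would handle this by writing out each fibre product explicitly and observing that on that event every fibre product has at most one nontrivial factor. Once that is done, the rest is the standard Carter--Wegman estimate.
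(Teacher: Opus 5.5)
Your proposal is correct and follows the paper's proof essentially step for step. It uses the same collision estimate via the linear part of $\eta$, the same observation that precomposition with an affine bijection preserves the uniform law, and the same injectivity event for $\eta'=\eta\circ(gh)^{-1}$ on $\operatorname{supp}\lambda\cup g(\operatorname{supp}\mu)$ in (i); in (ii) it uses the same two cases $g\ne h$ and $g=h$. One wording slip: in the case $g=h$ you mean \emph{on} the event that $\eta'$ is injective on $\operatorname{supp}\lambda\cup\operatorname{supp}\mu$, not off it.
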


\begin{proof}
Write $\omega=(\eta,u)$ and $\eta=\eta_0+z_0$ with $\eta_0$ linear; then $\eta_0$ and $z_0$ are independent and uniform. For $w\ne w'$ in $W$,
$\eta(w)-\eta(w')=\eta_0(w-w')$ is uniform on $\F_2^k$, because $\eta_0\mapsto\eta_0(w-w')$ is a surjective linear map; so
$\eta(w)=\eta(w')$ with probability exactly $2^{-k}$. Precomposition with a fixed affine bijection of $W$ permutes $\operatorname{Aff}(W,\F_2^k)$, so
$\eta\circ f$ is again uniform for every $f\in\AGL(W)$. Hence, for a set $S\subset W$ of size at most $2b$, $\eta\circ f$ is injective on $S$ except
with probability at most $\binom{2b}2\,2^{-k}$. If $\eta'$ is injective on a set $S$ containing the support of $\lambda$, then
$\lambda^{\eta'}(\eta'(w))=\lambda(w)$ for $w\in S$, and $\lambda^{\eta'}=1$ off $\eta'(S)$.

(i) Put $\eta'=\eta\circ(gh)^{-1}$. Then $\Theta_{\gamma_1}\Theta_{\gamma_2}(\eta,u)=(\eta',\lambda^{\eta'}\mu^{\eta\circ h^{-1}}u)$ and
$\Theta_{\gamma_1\gamma_2}(\eta,u)=(\eta',(\lambda\cdot g\mu)^{\eta'}u)$. Let $S$ be the union of the support of $\lambda$ and the image under $g$ of
the support of $\mu$, and suppose that $\eta'$ is injective on $S$. Since $\eta\circ h^{-1}(v)=\eta'(gv)$, the map $\eta\circ h^{-1}$ is injective on
the support of $\mu$, and $\mu^{\eta\circ h^{-1}}(\eta'(w))=\mu(g^{-1}w)$ for $w\in S$. The support of $\lambda\cdot g\mu$ lies in $S$. So all three
functions are trivial off $\eta'(S)$, and at $\eta'(w)$, $w\in S$, both sides equal $\lambda(w)\mu(g^{-1}w)$.

(ii) If $g\ne h$, choose $w_0$ with $h^{-1}g(w_0)\ne w_0$. If $\eta\circ g^{-1}=\eta\circ h^{-1}$, then $\eta=\eta\circ h^{-1}g$, so
$\eta(h^{-1}g(w_0))=\eta(w_0)$, which has probability $2^{-k}$. If $g=h$, then $\lambda\ne\mu$; put $\eta'=\eta\circ g^{-1}$ and let $S$ be the union
of the supports. If $\eta'$ is injective on $S$, then $\lambda^{\eta'}$ and $\mu^{\eta'}$ take the different values $\lambda(w)$ and $\mu(w)$ at
$\eta'(w)$ for some $w\in S$.
\end{proof}

To use the lemma in $\Gamma$ we need finite subgroups that contain the relevant cocycles. The quotient $\Z^2$ of Lemma~\ref{lem:structure}(c)
serves as a clock. A model is a box in $\Z^2$ times the hash space $\Omega$; a generator advances the clock and applies a cocycle, which lies in
a finite subgroup of the locally finite kernel of $\pi$. For $z\in\Z^2$ put $g_z=p^{z_1}q^{z_2}$, so that
$\pi(g_z)=z$, and for $m\ge1$ let $B_m=\{-m,\dots,m\}^2$. For $R\ge1$ let $Y_R\subset Y$ be the set of points of height at most $R$, and let
$W_R=\F_2^{Y_R}\subset V$. Let $K_R\le\ker\pi$ consist of the elements whose lamps lie at configurations in $W_R$ and whose affine part lies in
$W_R\rtimes\GL(W_R)$, with $\GL(W_R)$ extended by the identity on the other basis vectors. Restricting to $W_R$ identifies $K_R$ with $K_{W_R}$.

\begin{lemma}[The clock]\label{lem:clock}
Let $m\ge1$, $R=2m+3$, $z\in B_m$ and $s\in\{p,q,t,c,a\}^{\pm1}$. Then $g_{z+\pi(s)}^{-1}sg_z$ and $g_zsg_{z+\pi(s)}^{-1}$ lie in $K_R$, and their
lamps sit at no more than one configuration.
\end{lemma}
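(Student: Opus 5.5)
The plan is a case split over the generators, after two general remarks. First, $\pi(g_{z+\pi(s)})=z+\pi(s)=\pi(g_z)+\pi(s)$, so both elements $g_{z+\pi(s)}^{-1}sg_z$ and $g_zsg_{z+\pi(s)}^{-1}$ lie in $\ker\pi$. It therefore remains to check two things: the translation and finitary linear parts live on $Y_R$, and the lamps sit at configurations of $W_R$. Second, I will use a height estimate for $p^{\pm1}$ and $q^{\pm1}$ as permutations of $Y$: each changes the height of a point by at most one. For instance $p(2,1)=(1,1)$ keeps the height and every other point moves by $\pm1$. Since $g_z=p^{z_1}q^{z_2}$ is a word of length $|z_1|+|z_2|\le2m$, the points $g_z^{\pm1}(y)$ have height at most $\mathrm{ht}(y)+2m$.

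\emph{The generators $t,c,a$.} Here $\pi(s)=0$, so both elements are conjugates $g_z^{\mp1}sg_z^{\pm1}$. By \eqref{eq:affine-identities}, conjugating $t=\tau_{e_{(1,1)}}$ gives $\tau_{e_y}$ with $y=g_z^{\mp1}(1,1)$, which has height at most $2m+1\le R$. Similarly $d=c^3=E_{2\leftarrow1}$ goes to a transvection $E_{y\leftarrow y'}$ with heights at most $2m+2$. The lamp parts of $a$ and of $ab$ sit at $0$. Since $g_z$ is linear it fixes $0$, so after conjugation they still sit only at the configuration $0\in W_R$. Writing $c=d\,ab$ shows that the conjugate of $c$ has affine part a transvection supported in $Y_R$ and lamp part at the single configuration $0$. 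The inverses $s^{-1}$ are handled the same way.

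\emph{The generators $p^{\pm1},q^{\pm1}$.} These have no lamp part, so the elements are finitary permutations, and I must show their support lies in $Y_R$. Then their permutation matrices lie in $\GL(W_R)$, extended by the identity. For $s=p$ the first element is $q^{-z_2}p^{-z_1-1}\,p\,p^{z_1}q^{z_2}=1$ freely. For $s=q$ it is $q^{-z_2-1}(p^{-z_1}qp^{z_1}q^{-1})q^{z_2+1}$, which is a conjugate by $q^{z_2+1}$ of the finitary permutation $[p^{-z_1},q]$. The second elements reduce in the same way: $g_zpg_{z+(1,0)}^{-1}=p^{z_1}[q^{z_2},p]p^{-z_1}$, and $g_zqg_{z+(0,1)}^{-1}=1$ freely. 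The inverses $s^{-1}$ follow by replacing $z$ with $z-\pi(s)$ and inverting; this point lies in $B_{m+1}$, which is why $R$ has the slack $2m+3$.

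The key estimate, and the one step that needs a genuine computation, is that $[p^k,q]$ and $[q^k,p]$ have support of height at most $|k|+2$. To see this, note that $p$ acts only on rays $1,2$, as the shift that carries ray $2$ down into ray $1$, and $q$ acts likewise on rays $1,3$. On points of height greater than $|k|+2$, both composites are the product of the eventual translations. On ray $1$ these translations commute, and on rays $2,3$ each factor touches only its own ray. So the commutator is the identity there, which gives the bound.

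Conjugating by a word of length at most $2m+1$ raises heights by at most $2m+1$. The support therefore lies at height at most $|k|+2+2m+1$, with $|k|\le m$. I expect this to need a sharper bookkeeping: the conjugating word and $k$ share the budget $|z_1|+|z_2|\le2m$, so the total is at most $2m+3=R$. That sharper count is the main obstacle, and I would verify it by tracking heights ray by ray.
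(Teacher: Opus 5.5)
This is the paper's proof: the same case split over the generators, the same support bound for $[p^{y_1},q^{y_2}]$, and the same conjugation count. The support bound says heights at most $|y_1|+|y_2|+1$. It holds because a higher point stays at height at least $2$ while the word acts, so only the commuting ray translations touch it; your sketch uses this without saying it.

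The step you call the main obstacle is immediate from your own formulas. The conjugators are single powers: $p^{z_1}$ with $k=z_2$, and $q^{-z_2-1}$ with $k=-z_1$. So the supports lie at heights at most $|z_1|+|z_2|+2\le 2m+2$ and $|z_1|+|z_2|+3\le 2m+3=R$ respectively.

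One correction concerns the inverses. The slack in $R$ is used up by the extra letter $q$ in the conjugator $q^{-z_2-1}$, not by the shift of $z$ into $B_{m+1}$. Suppose you treat $s^{-1}$ as the inverse of the element for $s$ at $z'=z-\pi(s)$, and use only $|z'_1|+|z'_2|\le 2m+1$. Then the $q$-formula gives height $2m+4>R$. You must use that $z'_2+1=z_2$, so the conjugator is $q^{-z_2}$ and the bound is $2m+2$. For $s=p$ the cruder count $|z'_1|+|z'_2|+2\le 2m+3$ does suffice.

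The paper avoids the shift altogether. It writes the elements for $p^{\pm1}$ and $q^{\pm1}$ directly at $z$, as $p^{z_1}[q^{z_2},p^{\pm1}]p^{-z_1}$ and $q^{-z_2\mp1}[p^{-z_1},q^{\pm1}]q^{z_2\pm1}$.
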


\begin{proof}
Each of $p^{\pm1},q^{\pm1}$ changes heights by at most one and acts on the points of height at least $2$ as a translation along the rays, and these
translations commute. Hence for $(y_1,y_2)\in\Z^2$ the commutator $[p^{y_1},q^{y_2}]$ fixes every point of height greater than $|y_1|+|y_2|+1$:
while it is applied, such a point stays at height at least $2$. For $s=p^{\pm1}$ we have $\pi(s)=(\pm1,0)$, $g_{z+\pi(s)}^{-1}sg_z=1$ and
$g_zsg_{z+\pi(s)}^{-1}=p^{z_1}[q^{z_2},p^{\pm1}]p^{-z_1}$, which is supported in heights at most $|z_2|+2+|z_1|\le2m+2$. For $s=q^{\pm1}$ we have
$\pi(s)=(0,\pm1)$, $g_zsg_{z+\pi(s)}^{-1}=1$ and $g_{z+\pi(s)}^{-1}sg_z=q^{-z_2\mp1}[p^{-z_1},q^{\pm1}]q^{z_2\pm1}$, supported in heights at most
$|z_1|+2+|z_2|+1\le2m+3$. These are permutations of
$Y_R$, hence elements of $\GL(W_R)$. For $s\in\{t,c,a\}^{\pm1}$ we have $\pi(s)=0$, and $g_z^{\pm1}$ moves the points $1$ and $2$ to height at most
$2m+2$; so the conjugates of $t$ and $d$ by $g_z^{\pm1}$ are a translation and a transvection of $W_R$, while those of $a$ and $ab$ are $a$ and
$ab$, at the configuration~$0$.
\end{proof}

\begin{proposition}[Hash models]\label{prop:upper}
Let $E'$ be a chunk of $\Gamma$ each of whose elements is a product of at most $\ell\ge1$ of the generators $p,q,t,c,a$ and their inverses. Then
\[
\log\sigma_{E'}(r)\ \le\ 500\,\ell^2\,r\log(2+r)\qquad(r>1).
\]
\end{proposition}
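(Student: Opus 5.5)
The model will be a clock box times a hash space. Fix $r>1$, put $m=\ell\lceil r\rceil$ (or a constant multiple), $R=2m+3$, $W=W_R$, and take $\Omega$ as above with $k=\lceil\log(C\ell^2 r)\rceil$. The point set is $X=B_{m'}\times\Omega$ with $m'$ slightly larger than $m$. A generator $s$ acts by
\[
(z,\omega)\mapsto\bigl(z+\pi(s),\ \Theta_{c_s(z)}\omega\bigr),\qquad c_s(z)=g_{z+\pi(s)}^{-1}sg_z\in K_R ,
\]
using Lemma~\ref{lem:clock}. This is defined when $z+\pi(s)\in B_{m'}$. On the boundary layer it is closed up into a permutation arbitrarily, for instance by wrapping the clock or letting it act trivially. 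An element $x\in E'$ is assigned the permutation $f(x)$ obtained from a fixed word of length at most $\ell$ representing it, or better, directly as $(z,\omega)\mapsto(z+\pi(x),\Theta_{c_x(z)}\omega)$ with $c_x(z)=g_{z+\pi(x)}^{-1}xg_z$. The latter is the cleaner choice, since then $f(x)f(y)$ and $f(xy)$ differ only through the hashing defect.

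Next, control the cocycles. For $|z|_\infty\le m$ and $x$ a product of at most $\ell$ generators, $c_x(z)$ is a product of at most $\ell$ cocycles of generators at clock positions within $\ell$ of $z$. By Lemma~\ref{lem:clock} each of these lies in $K_{R'}$ with $R'=O(m+\ell)$ and has lamps at most at one configuration. Hence $c_x(z)\in K_{W}$ for $W=W_{R'}$, and its lamp part has support of size at most $\ell$. Enlarge $R$ accordingly.

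Then verify multiplication and separation. For $xy$ defined in $E'$ we have $c_{xy}(z)=c_x(z+\pi(y))c_y(z)$ exactly in $K_W$. Lemma~\ref{lem:hash}(i), with $b=2\ell$, gives agreement except on a fraction $\binom{4\ell}{2}2^{-k}$ of $\omega$. Points with $z$ in the outer layer of width $2\ell$ are also lost, a fraction $O(\ell/m)$. For separation of distinct $x,y$, two cases arise.
\begin{itemize}
\item If $\pi(x)\ne\pi(y)$, the clock coordinates already differ on every interior point.
\item Otherwise $c_x(z)\ne c_y(z)$, since $g_z$ is invertible, and Lemma~\ref{lem:hash}(ii) applies.
\end{itemize}
Choosing $m\asymp\ell r$ and $2^k\asymp\ell^2 r$ makes all defects at most $1/r$.

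Finally, count the points:
\[
\log|X|\le 2\log(2m'+1)+k(\dim W+1)+2^k\log6 ,
\]
with $\dim W=3R'=O(\ell r)$ and $k=O(\log(\ell r))$. The terms are $O(\ell r\log(2+r)+\ell^2 r)$, which is within $500\ell^2r\log(2+r)$ after bookkeeping of constants.

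The main obstacle is the boundary of the clock box. The maps must be genuine permutations on all of $X$, and the defect from the boundary layer must be charged as a fraction $O(\ell/m)$ of points. I would handle this by using a cyclic clock $\Z/(2m'+1)$. The cocycle at wrapped positions is then just some fixed element, such as the identity, and only the $O(\ell m')$ boundary rows of $B_{m'}$ are affected, so they are counted as errors. A second detail is making sure the ordered-product definition of $\lambda^\eta$ does not interfere, which the injectivity argument in Lemma~\ref{lem:hash} already covers.
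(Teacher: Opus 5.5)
Your proposal is correct and is essentially the paper's proof. The paper likewise builds the model on the clock box $B_m$ times the hash space $\Omega$, lets each $x\in E'$ act through its cocycle $g_{z+\pi(x)}^{-1}xg_z$, gets multiplication from the exact cocycle identity together with Lemma~\ref{lem:hash}(i), gets separation from the $\pi$-coordinate or Lemma~\ref{lem:hash}(ii), and charges a boundary loss of order $\ell/m$.

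The boundary is not a real obstacle, and you need neither a cyclic clock nor a larger $R$. The map $(z,\omega)\mapsto(z+\pi(x),\Theta_{\psi(x,z)}\omega)$ is injective on $B_{m-\ell}\times\Omega$, so the paper extends it arbitrarily to a permutation of $B_m\times\Omega$. Lemma~\ref{lem:clock} with $R=2m+3$ already covers every cocycle used. You can take $b=\ell$ in Lemma~\ref{lem:hash}, since each of the two cocycles has at most $\ell$ lamps. With $m=\lceil 8\ell r\rceil$ and $k=\lceil\log(4\ell^2 r)\rceil$, this gives the explicit constant $500$.
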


\begin{proof}
Let $m=\lceil8\ell r\rceil$, $R=2m+3$, $D=|Y_R|=6m+9$ and $k=\lceil\log(4\ell^2r)\rceil$, and let $\Omega$ be the hash space for $W=W_R$. For $g\in\Gamma$
and $z\in\Z^2$ put $\psi(g,z)=g_{z+\pi(g)}^{-1}gg_z\in\ker\pi$; then $\psi(gh,z)=\psi(g,z+\pi(h))\,\psi(h,z)$. If $g$ is a product of at most $\ell$
generators and inverses and $z\in B_{m-\ell}$, then $\psi(g,z)$ is the product of the $\psi(s,z')$ for the letters $s$ of the word, at points
$z'\in B_m$, so by Lemma~\ref{lem:clock} it lies in $K_R$ with lamps at no more than $\ell$ configurations.

Let $X=B_m\times\Omega$. For $g\in E'$ let $F_g$ be a permutation of $X$ that extends the injective partial map
$(z,\omega)\mapsto(z+\pi(g),\Theta_{\psi(g,z)}\omega)$, defined for $z\in B_{m-\ell}$, with $F_1=\mathrm{id}$. Let $g,h,gh\in E'$ and
$z\in B_{m-2\ell}$. Then $z$ and $z+\pi(h)$ lie in $B_{m-\ell}$, so $F_gF_h(z,\omega)=(z+\pi(gh),\Theta_{\psi(g,z+\pi(h))}\Theta_{\psi(h,z)}\omega)$ and
$F_{gh}(z,\omega)=(z+\pi(gh),\Theta_{\psi(g,z+\pi(h))\psi(h,z)}\omega)$; by Lemma~\ref{lem:hash}(i) these agree for all but a fraction
$\binom{2\ell}2\,2^{-k}$ of the $\omega$. Since $|B_{m-2\ell}|\ge(1-8\ell/(2m+1))|B_m|$,
\[
d\bigl(F_gF_h,F_{gh}\bigr)\ \le\ \varepsilon:=\frac{8\ell}{2m+1}+\binom{2\ell}2\,2^{-k}<\frac1{2r}+\frac1{2r}=\frac1r .
\]
If $g\ne h$ in $E'$ and $z\in B_{m-\ell}$, then either $\pi(g)\ne\pi(h)$ and $F_g(z,\omega)\ne F_h(z,\omega)$, or $\psi(g,z)\ne\psi(h,z)$ and
Lemma~\ref{lem:hash}(ii) applies; so $d(F_g,F_h)\ge1-\varepsilon$. Hence $\sigma_{E'}(r)\le|X|$.

Finally $\log|X|=2\log(2m+1)+k(D+1)+2^k\log6$. Here $2^k<8\ell^2r$, $k<\log(8\ell^2r)\le5\ell\log(2+r)$ and $D+1\le64\ell r$, so
$k(D+1)\le320\ell^2r\log(2+r)$; also $2^k\log6\le21\ell^2r\le14\ell^2r\log(2+r)$ and $2\log(2m+1)\le10\ell^2r\log(2+r)$. The sum is less than
$500\ell^2r\log(2+r)$.
\end{proof}

\subsection{Proof of Theorem~\ref{thm:main-group}}

We now feed the pair areas of Theorem~\ref{thm:pair-area} into the criterion and the hash models of Proposition~\ref{prop:upper} into the
upper bound.

\begin{proof}[Proof of Theorem~\ref{thm:main-group}]
\emph{Setting up the criterion.} Theorem~\ref{thm:criterion} asks for $a$ and $b$ among the generators, while in $R$ the letter $b$ abbreviates
$ac^4$. So let $S^+$ consist of the five generators and $b=ac^4\in\Gamma$, and let $R^+$ consist of $R$ and the three words $b^{-1}ac^4$, $b^2$
and $(ab)^3$ in the letters of $S^+$. These words represent $1$, $a^2,b^2,(ab)^3\in R^+$, $ab\ne1$, and the longest word of $R^+$ has length
$l=24$. The chunk $E^+$ that Theorem~\ref{thm:criterion} associates with $S^+$ and $R^+$ is contained in $E$. The reason, in one line, is that
every new element is a power of $c$ or is represented, using $a^2=1$, by a prefix of the expanded words $b^2$ and $(ab)^3$. In detail, besides
the elements of $E$, the chunk $E^+$ contains $b^{\pm1}$, the prefixes $b^{-1},c^{-4},c^{-3},c^{-2},c^{-1}$ of $b^{-1}ac^4$, and the prefixes $b$,
$ab=c^4$, $aba=c^4a$, $abab=c^2$ and $ababa=c^2a$ of $b^2$, $(ab)^3$ and $ab$. These lie in $E$, because $ac^4$, $a^2c^4$, $a^2c^4a$ and
$a^2c^4a^2c^4a$ are prefixes of $b^2$ and $(ab)^3$ in Table~\ref{tab:certificate}, $c^2$ and $c^3$ are prefixes of the relations $[d,k]$ of the
family \textsf{S}, $a^2=1$, and $c^6=1$. Restricting a model of $E$ to $E^+$ gives a model of $E^+$, so $\sigma_E\ge\sigma_{E^+}$.

For $N\ge1$ take the $M=2^N$ words $T_u$ of Theorem~\ref{thm:pair-area}, one for each configuration $u$ supported in the first $N$ points of
ray~$1$. A commutator $[T_uxT_u^{-1},T_wyT_w^{-1}]$ with $x,y\in\{a,b\}$ contains the letter $b^{\pm1}$ at most four times, and replacing each
occurrence by $(ac^4)^{\pm1}$ costs one relation $b^{-1}ac^4$. So its area with respect to $R^+$ is at most $\Lambda^+_N=\Lambda_N+4$, and
Theorem~\ref{thm:criterion} gives
\begin{equation}\label{eq:criterion-applied}
\log\sigma_E(r)\ge2^N/16\qquad\text{whenever}\qquad r\ge4.7\cdot10^{9}\cdot2^N\Lambda^+_N .
\end{equation}
Since $a\ne1$, a $(1-1/r)$-expansive map needs at least two points, so $\log\sigma_E(r)\ge1$; for $r$ below any fixed bound $r_0$ claim (a)
therefore holds once $c$ is small enough, and we may assume $r$ large.

(a) By Theorem~\ref{thm:pair-area}, $\Lambda^+_N\le C_1N^\kappa$ with $C_1$ absolute. Let $r'=r/(4.7\cdot10^9C_1)$ and
$N=\lfloor\log r'-\kappa\log\log r'\rfloor$, which is at least $1$ for large $r$. Then
$2^N\Lambda^+_N\le C_1r'(\log r')^{-\kappa}(\log r')^\kappa=r/(4.7\cdot10^9)$, so~\eqref{eq:criterion-applied} applies, and
$\log\sigma_E(r)\ge2^N/16\ge r'/(32(\log r')^\kappa)$. Since $\log r'\le\log r$, this is~(a).

(b) Every element of $E$ is a generator, the inverse of one, or represented by a prefix of a word of length at most $24$ or by its inverse.
Proposition~\ref{prop:upper} with $\ell=24$ gives $\log\sigma_E(r)\le500\cdot24^2r\log(2+r)$, and $\log(2+r)\le2\log r$ for $r\ge2$. The same
proposition applies to every chunk.

(c) Extend $\phi$ to $S^+$ by $\phi(b)=\phi(a)\phi(c)^4$. Then $b^{-1}ac^4$ is mapped to $I$ exactly, and $b^2$ and $(ab)^3$ are mapped to the
images of the words $b^2$ and $(ab)^3$ of Table~\ref{tab:certificate}; so every word of $R^+$ is mapped within $e$ of $I$. With
$\Lambda^+_N\le C_1N^\kappa$ as in (a), the condition $K_\Delta\sqrt M\,\Lambda^+_Ne\le1$ of Theorem~\ref{thm:criterion} holds when
$2^NN^{2\kappa}\le Y:=(C_1K_\Delta e)^{-2}$, which is the case for $N=\lfloor\log Y-2\kappa\log\log Y\rfloor$; this $N$ is at least $1$ for $e$ small
in terms of $\Delta$. Then $2^N\ge Y/(2(\log Y)^{2\kappa})$, and since $C_1K_\Delta\ge1$ we have $\log Y\le2\log(1/e)$. Hence
\[
\log D\ \ge\ \frac{\Delta^2}{12}\,2^N\ \ge\ \frac{\Delta^2}{24\cdot4^\kappa C_1^2K_\Delta^2}\cdot\frac{e^{-2}}{(\log(1/e))^{2\kappa}}.
\]

(d) Proposition~\ref{prop:fp} gives finite presentation and Lemma~\ref{lem:structure} elementary amenability.
\end{proof}

\begin{remark}\label{rem:lee}
Part~(a) uses the polynomial Dehn bound of Proposition~\ref{prop:dehn}(b). With Lee's exponential bound alone, Theorem~\ref{thm:pair-area} gives
$\Lambda^+_N\le2^{BN}$ for a constant $B$ that depends on the constant in Lee's inequality, and $N=\lfloor\log(r/(4.7\cdot10^9))/(1+B)\rfloor$
in~\eqref{eq:criterion-applied} gives only $\log\sigma_E(r)\ge c\,r^{1/(1+B)}$: a superpolynomial profile, with an exponent we cannot make
explicit. Transport at scale $N$ is too expensive for an exponential Dehn bound. The grid host transports at scale $\sqrt N$ and reaches
$\log\sigma\ge c\,r\,e^{-C\sqrt{\log r}}$ with Lee's bound alone (conditionally, Remark~\ref{rem:grid}).
\end{remark}
\section{More about $\Gamma$}\label{sec:more}

Theorem~A asserts that $\Gamma$ is finitely presented, and the introduction claims that $\Gamma$ embeds in Brin's group $3V$. This section
proves both, and then shows that the F\o lner function of $\Gamma$ is doubly exponential, so that the hash models of Section~\ref{sec:profile}
are exponentially smaller than any model built from a F\o lner set. Nothing here is used in the profile bounds of
Theorem~\ref{thm:main-group}, parts (a) to (c), or in any proof of Part~II, where the embedding in $3V$ serves only as a second
check of the Choi rank; part~(d) of that theorem, finite presentation, is
Proposition~\ref{prop:fp}. A reader heading for Theorem~B can skip this section and Section~\ref{sec:ladder}.

\subsection{A finite presentation}\label{sec:fp}

The proof that $\Gamma$ is finitely presented has three steps. A presentation lemma for groups of the form $F(\mathcal X)\rtimes H$ reduces
everything to finitely many $H$-orbits of relations. Steinberg's relations for the transvections, with Johnson's presentation of $H$ and the
relations of the translations, then present the affine group $A$, and omitting the translations presents $L$. Cornulier's criterion for
permutational wreath products presents $\Gamma$, because $A$ acts doubly transitively on the configurations with finitely generated
stabilizers.

The certificate of Section~\ref{sec:certificate} is a finite set of relations, but it does not present $\Gamma$. The group it presents has
abelianization $\Z^2\times\Z/2\times\Z/4$, in which $c$ has order four; this is a computation with exponent sums, for instance $a^2$, $b^2=(ac^4)^2$
and $(ab)^3$ give $2a=8c=12c=0$. In $\Gamma$, by contrast, $c=d\,ab$ is a product of commutators, since $d=[E_{2\leftarrow3},E_{3\leftarrow1}]$
and $ab$ is a three-cycle.

\begin{proposition}[Finite presentation]\label{prop:fp}
The groups $L$, $A$ and $\Gamma$ are finitely presented.
\end{proposition}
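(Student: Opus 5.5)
We follow the three steps announced above: a presentation lemma for groups $F(\mathcal X)\rtimes H$, then $L$ and $A$, then $\Gamma$ by Cornulier's criterion~\cite[Theorem~1.1]{Cornulier2006}.

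\emph{The presentation lemma.} Suppose $H$ is finitely presented and acts on a set $\mathcal X$ with finitely many orbits. Suppose $F(\mathcal X)$ is the group generated by $\mathcal X$ subject to a set of relations $\mathcal R$ that is a union of finitely many $H$-orbits. Then $F(\mathcal X)\rtimes H$ is presented by three kinds of relations: the relations of $H$; one representative $x_i$ of each orbit together with the relations $[h,x_i]$ for $h$ in a finite generating set of the stabilizer $H_{x_i}$; and one representative relation of each orbit of $\mathcal R$, rewritten through $x=gx_ig^{-1}$. This needs each stabilizer $H_{x_i}$ to be finitely generated, and it is the usual Reidemeister--Schreier bookkeeping.

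\emph{The group $L$.} We apply the lemma to $L=\GL_{\rm fin}(Y)\rtimes_{\FSym}H$. More precisely, $L$ is the quotient of $\mathrm{St}\rtimes H$ obtained by identifying $\FSym(Y)$ inside both. We take $\mathcal X$ to be the transvections $E_{y\leftarrow y'}$, a single $H$-orbit because $\FSym(Y)$ is $2$-transitive. The relations $\mathcal R$ are Steinberg's relations over $\F_2$, which finitely many orbits cover:
\begin{itemize}
\item $E^2=1$;
\item $[E_{y\leftarrow y'},E_{z\leftarrow z'}]=1$ for disjoint index pairs;
\item $[E_{y\leftarrow y'},E_{y'\leftarrow y''}]=E_{y\leftarrow y''}$.
\end{itemize}
These relations present $\GL_{\rm fin}(Y)$ when $|Y|=\infty$: the Steinberg group of $\F_2$ in large rank equals $\GL$, since $K_2(\F_2)=0$. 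We also add the relations identifying the permutation matrix of a transposition with a short product of three transvections. This gives a finite orbit of relations making $\FSym(Y)\le H$ coincide with its copy in $\GL_{\rm fin}$. Stabilizers of ordered pairs in $H$ are finitely generated by Lemma~\ref{lem:stabilizers}, and the presentation of $H$ is Johnson's~\eqref{eq:PH}. For $A=V\rtimes L$ we add the generator $\tau_{e_1}$, whose $L$-orbit is all of $V\setminus\{0\}$ but whose conjugates by $\FSym$ give the $\tau_{e_y}$. The extra relations are $\tau^2$, $[\tau_{e_y},\tau_{e_{y'}}]$, and $E\tau E^{-1}=\tau\tau'$ or $\tau$ as in~\textsf{A}. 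These are finitely many orbits, and the stabilizer of $e_1$ in $L$ is finitely generated, being of the same shape as $L$ with one point removed.

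\emph{The group $\Gamma$ via Cornulier.} Cornulier's theorem states the following. Let $G$ be a finitely presented group acting transitively on a set $X$ such that the stabilizer of a point is finitely generated and the stabilizer of a pair has finitely many orbits of suitable type. Then $F\wr_X G$ is finitely presented for finitely presented $F$. In our case $X=V$ and the transitive action is by $A$. The stabilizer of $0$ is $L$, which is finitely presented, hence finitely generated. $A$ is $2$-transitive on $V$ because $L$ is transitive on $V\setminus\{0\}$: any nonzero finitely supported vector is moved to $e_1$ by $\GL_{\rm fin}$. So the condition on pairs holds with one orbit of pairs, and $\Gamma=S_3\wr_V A$ is finitely presented.

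\emph{Main obstacle.} We expect the main obstacle to be the step for $L$. It requires checking that Steinberg's relations, glued to Johnson's presentation along $\FSym(Y)$, yield exactly the amalgam $\GL_{\rm fin}(Y)\,H$ and not a proper cover. The danger is that $\mathrm{St}(\F_2)$ in infinite rank equals $\GL_{\rm fin}$ only up to finitary issues. We would first verify this through the vanishing of $K_2(\F_2)$ in the stable range, applied to each finite $J$, together with a direct limit argument. We would then use that the identification of the two copies of $\FSym$ is forced by finitely many relations, via Lemma~\ref{lem:transpositions}.
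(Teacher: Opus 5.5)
Your proposal is correct and follows the paper's overall plan. The plan has three parts: the presentation lemma for $F(\mathcal X)\rtimes H$; the transvections as one $H$-orbit, with Steinberg's relations and the identification $\varsigma_{ij}=x_{ij}x_{ji}x_{ij}$; and Cornulier's criterion via double transitivity. On the last part, the criterion asks for finitely many orbits on $V\times V$, not a condition on stabilizers of pairs, and your double-transitivity check is exactly what supplies it.

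The real difference is in the key step: showing that these relations present exactly $\GL_{\rm fin}(Y)$, so that the glued group is exactly $L$. You cite $K_1(\F_2)=K_2(\F_2)=0$. The paper proves it from scratch: it bounds $|U_D|$ by $|\mathrm{UT}_m(\F_2)|$, uses the relabeling by $\Sym(D)\le H$ together with $\varsigma_{ij}=x_{ij}x_{ji}x_{ij}$ to get $G_D=U_DW_DU_D$, and reads injectivity off this Bruhat form. Your route is shorter but imports Steinberg's theorem; the paper's is self-contained.

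The obstacle you flag closes at once.
\begin{itemize}
\item Each Steinberg relation involves at most four indices, so the Steinberg group on $Y$ is the direct limit of the finite ones. That limit is the stable $\mathrm{St}(\F_2)\cong\GL(\F_2)\cong\GL_{\rm fin}(Y)$, so no finitary subtlety arises.
\item Let $\iota$ be the inverse of this isomorphism on $\FSym(Y)$. The kernel of $\mathrm{St}\rtimes H\to L$ is $\{(\iota(h)^{-1},h):h\in\FSym(Y)\}$. This set is the image of the homomorphism $h\mapsto(\iota(h)^{-1},h)$, so it is generated by its values at transpositions. Those values are exactly your identification relators. (This is the paper's $jh$ argument in another form.)
\end{itemize}

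You also build $L$ first and then $A=V\rtimes L$, with $L$ acting on $V\setminus\{0\}$. That requires the stabilizer $L_{e_1}$ of $e_1$ to be finitely generated. It is, but it is not ``$L$ with one point removed.'' It equals $\GL_{\rm fin}(Y)_{e_1}H^{(1)}$, which also contains the transvections $E_{1\leftarrow y}$, and it is generated by $H^{(1)}$, $E_{2\leftarrow3}$ and $E_{1\leftarrow2}$. The paper avoids this by adding the translations $t_i$ as a second $H$-orbit, with stabilizer $H_{(\{i\})}$. It then presents $A$ directly and obtains $L$ by omitting the $t_i$.
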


\begin{proof}
\emph{Stabilizers.} For a finite set $D\subset Y$ the pointwise stabilizer $H_{(D)}$ of $D$ in $H$ is isomorphic to $H$. Renumbering the points of
each ray outside $D$ in increasing order turns $Y\setminus D$ into three rays, and a permutation fixing $D$ is eventually a translation on the
old rays exactly when it is on the new ones. So $H_{(D)}$ is finitely generated. Since $H$ contains $\FSym(Y)$, it acts transitively on ordered
$k$-tuples of distinct points for every $k$.

\emph{A presentation lemma.} Let a finitely presented group $H$ act on a set $\mathcal X$ with finitely many orbits and finitely generated
stabilizers, and let $F(\mathcal X)$ be the free group on $\mathcal X$. Then $F(\mathcal X)\rtimes H$, with $H$ permuting the free generators, is
finitely presented. Take a presentation of $H$, adjoin one representative $x_i$ of each orbit, and add the relations $[x_i,k]=1$ for $k$ in a
finite generating set of the stabilizer of $x_i$. In the group so presented, $hx_ih^{-1}$ depends only on $h\cdot x_i$, and the group is
isomorphic to $F(\mathcal X)\rtimes H$: both are the iterated amalgam of $H$ with the groups $\mathrm{Stab}(x_i)\times\langle x_i\rangle$ over
the stabilizers $\mathrm{Stab}(x_i)$. Imposing in addition a set of relations that is a finite union of $H$-orbits keeps the group finitely
presented, since one representative of each orbit suffices.

\emph{A presentation of $A$.} For distinct $i,j\in Y$ let $x_{ij}$ be the transvection $e_j\mapsto e_j+e_i$, let $t_i=\tau_{e_i}$, and let
$\varsigma_{ij}\in\FSym(Y)\le H$ be the transposition of $i$ and $j$. Let $\widetilde A$ be the quotient of $F(\mathcal X)\rtimes H$, where
$\mathcal X=\{x_{ij}\}\sqcup\{t_i\}$ and $H$ relabels indices, by the relations
\begin{gather*}
x_{ij}^2=1,\qquad [x_{ij},x_{kl}]=1\ (i\ne l,\ j\ne k),\qquad [x_{ij},x_{jk}]=x_{ik}\ (i,j,k\text{ distinct}),\\
\varsigma_{ij}=x_{ij}x_{ji}x_{ij},\qquad t_i^2=1,\qquad [t_i,t_j]=1,\\
x_{ij}t_jx_{ij}^{-1}=t_it_j,\qquad x_{ij}t_kx_{ij}^{-1}=t_k\ (k\ne j).
\end{gather*}
They hold in $A$, so there is a surjection $\widetilde A\to A$. Each relation involves at most four points, so by transitivity the relations form
finitely many $H$-orbits; the stabilizers of $x_{ij}$ and $t_i$ are $H_{(\{i,j\})}$ and $H_{(\{i\})}$. Since $H$ is finitely
presented~\cite{Brown1987}, for instance by Johnson's presentation~\eqref{eq:johnson}~\cite{Johnson1999}, $\widetilde A$ is finitely presented, and
it remains to show that $\widetilde A\to A$ is injective. The linear relations are Steinberg's~\cite{Milnor1971}; we give the short direct argument.

Fix a finite set $D\subset Y$, number its points $1,\dots,m$ in an order unrelated to heights, and let $U_D\le\widetilde A$ be generated by the
$x_{ij}$ with $i<j$ in $D$. By the first three relations, $x_{1m},\dots,x_{m-1,m}$ are commuting involutions, and they generate a subgroup
normalized by the $x_{ij}$ with $i<j<m$. By induction $|U_D|\le2^{m-1}2^{(m-1)(m-2)/2}=2^{m(m-1)/2}$, the order of the upper unitriangular
group $\mathrm{UT}_m(\F_2)$, onto which $U_D$ maps. So $U_D\cong\mathrm{UT}_m(\F_2)$. Let $W_D=\Sym(D)\le H$ and $\varsigma_i=\varsigma_{i,i+1}$. By
the relation $\varsigma_{ij}=x_{ij}x_{ji}x_{ij}$ and relabeling, the subgroup $G_D$ generated by all $x_{ij}$ with $i,j\in D$ contains $W_D$ and is
generated by $U_D$ and the $\varsigma_i$. We claim $G_D=U_DW_DU_D$; since this set contains $1$, it suffices that it is closed under right
multiplication by $U_D$ and by each $\varsigma_i$. Every $b\in U_D$ is $x_{i,i+1}^\epsilon b_0$ with $b_0$ in the subgroup generated by the other
positive roots, the kernel of the entry $(i,i+1)$, which $\varsigma_i$ normalizes. So it suffices to treat $wx_{i,i+1}\varsigma_i$ with $w\in W_D$;
put $k=w(i)$ and $l=w(i+1)$. If $k<l$, then $wx_{i,i+1}\varsigma_i=x_{kl}w\varsigma_i\in U_DW_D$. If $k>l$, then $w\varsigma_i=\varsigma_{kl}w$ and
$x_{kl}\varsigma_{kl}=x_{lk}x_{kl}$, so $wx_{i,i+1}\varsigma_i=x_{kl}\varsigma_{kl}w=x_{lk}wx_{i,i+1}\in U_DW_DU_D$. The map $G_D\to\GL(\F_2^D)$ is
onto, since transvections generate $\GL(\F_2^D)$. If $b_1wb_2$ maps to the identity of $\GL(\F_2^D)$, the permutation matrix of $w$ is upper
unitriangular, so $w=1$ in $H$, and then $b_1b_2=1$ in $U_D$. Hence $G_D\cong\GL(\F_2^D)$. Every element of the subgroup $\widetilde J$ generated
by all $x_{ij}$ lies in some $G_D$, so $\widetilde J\cong\GL_{\rm fin}(Y)$. By relabeling, every element of the subgroup generated by $\widetilde J$
and $H$ has the form $jh$. If it maps to $1$, then $h\in\GL_{\rm fin}(Y)\cap H=\FSym(Y)$, so $h$ is a product of transpositions, each equal in
$\widetilde A$ to a word in the $x_{ij}$, and $jh\in\widetilde J$ maps to $1$, hence is $1$. Finally the $t_i$ generate a normal elementary abelian
subgroup. It is a quotient of $\F_2^{(Y)}$ and maps onto $V$ by $t_i\mapsto e_i$, hence isomorphically, and the linear part acts on it as on $V$.
Hence every element of $\widetilde A$ is $t\ell$ with $t$ in that subgroup. If its image is the identity affine map, evaluating at $0$ gives
$t=1$, and then $\ell=1$. So $\widetilde A\cong A$. Omitting the $t_i$ gives $L$.

\emph{The lamps.} The group $A$ acts on $V$ doubly transitively: a translation carries $v$ to $0$, and a finitary linear map carries any nonzero
vector to $e_1$. The stabilizer of $0$ is $L$, which is generated by $p$, $q$ and $d$, because the $H$-conjugates of $d$ are all the transvections
$E_{y\leftarrow y'}$. So $A$ is finitely presented, its point stabilizers are finitely generated, and it has two orbits on $V\times V$. By
Cornulier's criterion for permutational wreath products~\cite[Theorem~1.1]{Cornulier2006}, $\Gamma=S_3\wr_VA$ is finitely presented. Explicitly,
let $\mathcal R_A$ present $A$ on generators that include $p,q,d,t$. Add the generators $a,b$ and the relations $a^2$, $b^2$, $(ab)^3$, $[a,\ell]$
and $[b,\ell]$ for $\ell\in\{p,q,d\}$, and $[u,tvt^{-1}]$ for $u,v\in\{a,b\}$. The result presents $\Gamma$. The lamp at $v\in V$ is well defined
as a conjugate of $\langle a,b\rangle$, because $L$ centralizes $\langle a,b\rangle$; and double transitivity carries the commutation of the lamps
at $0$ and at $e_1$ to every pair of lamps.
\end{proof}

\subsection{An embedding in a Brin--Thompson group}\label{sec:3V}

This subsection places $\Gamma$ inside a finitely presented simple group whose soficity is open, so that this group has a chunk with finite,
nearly exponential profile. Let $\mathfrak C=\{0,1\}^{\N}$ be Cantor space. For $n\ge1$, Brin's group $nV$~\cite{Brin2004} consists of the
homeomorphisms $h$ of $\mathfrak C^n$ given by a finite table of prefix replacements. Such a table is a pair of partitions of $\mathfrak C^n$ into
the same number of rectangles $u_1\mathfrak C\times\dots\times u_n\mathfrak C$, with $u_1,\dots,u_n$ finite binary words, matched in pairs. The
map $h$ sends each rectangle of the first partition onto its partner by replacing the $n$ prefixes and keeping the $n$ suffixes. Thus $1V$ is Thompson's group
$V$, and $nV$ embeds in $n'V$ for $n\le n'$ by acting trivially on the additional coordinates. The groups $nV$ are finitely presented, by Brin for
$2V$~\cite{Brin2005} and by Hennig and Matucci for every $n$~\cite{HennigMatucci2012}, and simple~\cite{Brin2010}. As for $V$, it is not known
whether they are sofic. The letter $V$ in these names has nothing to do with our space of configurations, which is also called $V$.

\begin{proposition}[Embedding in $3V$]\label{prop:3V}
$\Gamma$ embeds in $3V$, and hence in $nV$ for every $n\ge3$. Consequently $3V$ has a chunk $E_{3V}$ with
\[
\exp\Bigl(\frac{c\,r}{(\log r)^{12.75}}\Bigr)\ \le\ \sigma_{E_{3V}}(r)\ \le\ \exp(C\,r\log r)\qquad(r\ge2).
\]
\end{proposition}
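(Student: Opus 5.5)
The plan is to realize $\Gamma$ as a group of prefix-replacement homeomorphisms of $\mathfrak C^3$. I use the faithful action of $\Gamma$ on $V\times S_3$, and then transfer the profile bounds through the embedding.

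\emph{Coding the set.} The first coordinate will code the configurations together with the lamp states, and the other two coordinates are there to supply room for the coding.

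\emph{First coordinate.} I code $V\times S_3$ by a prefix code in $\mathfrak C$. A configuration $v\in V=\F_2^{(Y)}$ is read from a point of $\mathfrak C$ through the three rays interleaved. For instance, the bit in position $3(j-1)+i$ gives the value of $v$ at $(i,j)$. Then finite configurations correspond to sequences that are eventually zero, and to be robust I would act on all of $\mathfrak C$, letting $v$ range over all of $\F_2^Y$. The $S_3$-factor is split off by a preliminary partition of $\mathfrak C$ into six cylinders. Since $S_3$ does not have order a power of two, I would instead use a partition of $\mathfrak C$ into six clopen pieces $C_s$, for example with prefixes $00,01,100,101,110,111$, each followed by a copy of the configuration coding. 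The action of a lamp generator $\lambda_0$ at configuration $0$ is then the problem: its support is the single configuration $0$, which is not clopen in the infinite setting.

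\emph{Second and third coordinates.} This is exactly why I would use the extra coordinates. I replace "$v=0$" by "the finitely many coordinates that matter vanish", and use the second and third coordinates as tape-like stacks in the style of Brin's embedding of wreath products. The guiding fact is that $V$ embeds groups like $\Z\wr\Z$ and $\FSym\rtimes\Z$ by shifting prefixes, while $2V$ and $3V$ allow independent shifts in several coordinates.

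\emph{Generators.} I check that each generator $p,q,t,c,a$ acts by finitely many prefix replacements.
\begin{itemize}
\item $t$ flips one bit, and $d$ adds one bit into another; both are finite prefix tables.
\item $a$ acts on $C_s$ when the first bit of each ray is $0$. This condition is not clopen for infinite-support codings, so I would encode the configuration instead by a pair (length, word) stored in a second coordinate, making "$v=0$" a prefix condition.
\item $p$ and $q$ shift the rays. A shift of an interleaved sequence is not a prefix replacement in general, so the three rays are put in three separate coordinates of $\mathfrak C^3$, each ray's configuration read as a binary sequence in its own coordinate. Then $p$ becomes a prefix replacement: it moves the first bit of coordinate $2$ onto the front of coordinate $1$. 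This uses three coordinates, and it is the reason the target is $3V$.
\end{itemize}
With rays in separate coordinates, $t$ and $d$ become prefix replacements in coordinate $1$. The lamp part is handled by placing the $S_3$-state in coordinate $1$ before the ray bits, and checking that $\lambda$ acting at $v=0$ is a clopen condition on the restricted (eventually zero) domain.

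\emph{Faithfulness.} I verify faithfulness on a dense invariant set of eventually-zero points. This needs care, and I expect it to be the main obstacle: the action on all of $\mathfrak C^3$, not only on finite configurations, must restrict faithfully and must be a homomorphism.

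\emph{Profile bounds for $3V$.} Given the embedding, let $E_{3V}$ be the image of $E$.
\begin{itemize}
\item The sofic profile of a chunk depends only on the chunk's partial multiplication. The embedding preserves all products and distinctness, so $\sigma_{E_{3V}}=\sigma_E$.
\item Theorem~\ref{thm:main-group}(a),(b) then give the displayed bounds, using $\kappa<12.75$ and absorbing $\log r$ into the constant $C$ for the upper bound.
\item The embedding in $nV$ follows from $3V\le nV$.
\end{itemize}
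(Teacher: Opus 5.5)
There is a genuine gap, and it is exactly at the lamps. In the coding you finally settle on, each ray's bit sequence sits in its own coordinate and the $S_3$-state is a short prefix $\beta_s$ of coordinate~$1$. The affine part does work there: $p$ acts as $(\beta_sx,by,z)\mapsto(\beta_sbx,y,z)$, and $t$ and $d$ are finite prefix tables in coordinate~$1$. Since this action of $A$ on arbitrary configurations extends its action on the dense set of finite ones, it gives an injective homomorphism of $A$ into $3V$. But the points coding the configuration $0$ then form the finite set $\{(\beta_s0^\infty,0^\infty,0^\infty)\}$. The lamp at $0$ would have to permute these points and fix everything else, which is not even continuous. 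More generally, an element of $3V$ that is the identity off a nowhere dense closed set is the identity. There is no ``clopen condition on the restricted (eventually zero) domain'': eventually-zero points are dense, so $0$ is not isolated among them, and elements of $3V$ are finite tables on all of $\mathfrak C^3$, not maps on a subdomain. Your alternative, a (length, word) field in a second coordinate, collides with putting ray~$2$ there. You also never check that prepending a site changes such a field only in a bounded prefix, which is what $p$ and $q$ need. So the step you flag as the main obstacle is not merely delicate: as written it fails.

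The missing idea is a self-delimiting prefix code for finite subsets of a ray with two properties. First, the cylinders $e(S)\mathfrak C$ are disjoint with dense union. Then every finite configuration $v$ owns a clopen rectangle $Z_v=\prod_ie(S_i(v))\mathfrak C$, and its free suffix can carry a faithful $S_3$-action by prefix replacement, extended by the identity on the complement of $Z_v$, which is a finite union of rectangles. Second, prepending an empty or an occupied site to a ray must be a finite prefix replacement. The paper takes $e(S)=\bigl(\prod_k1\,0^{j_k-j_{k-1}-1}\,1\bigr)0$. For this code, prepending an empty site is $s_0$, with $0\xi\mapsto0\xi$ and $1\xi\mapsto10\xi$, and prepending an occupied site is $s_1:\xi\mapsto11\xi$; their images partition $\mathfrak C$. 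Then $p,q,t,d$ become explicit tables with $\hat g\,\iota_v=\iota_{g(v)}$, where $\iota_v$ prefixes the three codewords. The homomorphism property and injectivity follow from the density of $\bigcup_vZ_v$ and the disjointness of the $Z_v$. The lamps $\iota_v\theta(\lambda)\iota_v^{-1}$, with $\theta$ permuting the prefixes $0,10,11$ in the first coordinate, satisfy the semidirect-product relations by this equivariance. Your final step agrees with the paper: transfer the profile along the injective homomorphism and use $3V\le nV$. There is nothing to absorb in the upper bound, which is already $\exp(C\,r\log r)$ in Theorem~\ref{thm:main-group}(b).
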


\begin{proof}
\emph{Coding configurations.} For a finite set $S=\{j_1<\dots<j_m\}$ of positive integers, with $j_0=0$, let
\[
e(S)=\Bigl(\prod_{k=1}^m1\,0^{\,j_k-j_{k-1}-1}\,1\Bigr)\,0,
\]
so that $e(\emptyset)=0$, $e(\{1\})=110$ and $e(\{2\})=1010$. Read a word from the left: at a record boundary the letter $0$ ends the word and the
letter $1$ opens a record $10^k1$, which ends at its next letter $1$. So the words $e(S)$ form a prefix code. A finite word whose reading has not
ended can be extended to one that begins with some $e(S)$, by appending $0$ at a record boundary or $10$ inside a record. Hence the cylinders
$e(S)\mathfrak C$ are disjoint and their union is dense in $\mathfrak C$. For $v\in V$ and $i=1,2,3$ let $S_i(v)=\{j:v(i,j)=1\}$, let
\[
Z_v=e(S_1(v))\mathfrak C\times e(S_2(v))\mathfrak C\times e(S_3(v))\mathfrak C ,
\]
and let $\iota_v:\mathfrak C^3\to Z_v$ prefix the three codewords. The rectangles $Z_v$ are disjoint, and their union is dense in $\mathfrak C^3$.

\emph{Prepending a site.} Define $s_0,s_1:\mathfrak C\to\mathfrak C$ by $s_0(0\xi)=0\xi$, $s_0(1\xi)=10\xi$ and $s_1(\xi)=11\xi$. Their images
$0\mathfrak C\cup10\mathfrak C$ and $11\mathfrak C$ partition $\mathfrak C$, and, with $S+1=\{j+1:j\in S\}$,
\begin{equation}\label{eq:prepend}
s_0\bigl(e(S)\xi\bigr)=e(S+1)\,\xi,\qquad s_1\bigl(e(S)\xi\bigr)=e\bigl(\{1\}\cup(S+1)\bigr)\,\xi :
\end{equation}
an empty site in front lengthens the first gap by one, and an occupied site in front adds the record $11$.

\emph{The affine group.} For $b\in\{0,1\}$ put
\begin{gather*}
\hat P(x,s_b(y),z)=(s_b(x),y,z),\qquad \hat Q(x,y,s_b(z))=(s_b(x),y,z),\\
\hat T(s_b(x),y,z)=(s_{1-b}(x),y,z),
\end{gather*}
and let $\hat D$ fix $s_0(\mathfrak C)\times\mathfrak C^2$ pointwise and send $(11x,y,z)$ to $(11x',y,z)$, where $\hat T(x,y,z)=(x',y,z)$. These maps lie
in $3V$. In prefix tables, with $\epsilon$ the empty word, $\hat P$ replaces the prefixes $(\epsilon,11,\epsilon)$, $(0,0,\epsilon)$, $(1,0,\epsilon)$,
$(0,10,\epsilon)$ and $(1,10,\epsilon)$ by $(11,\epsilon,\epsilon)$, $(0,0,\epsilon)$, $(10,0,\epsilon)$, $(0,1,\epsilon)$ and $(10,1,\epsilon)$. $\hat Q$ does
the same with the second and third coordinates exchanged. $\hat T$ exchanges the first-coordinate prefixes $0\leftrightarrow110$ and
$10\leftrightarrow111$. $\hat D$ exchanges $110\leftrightarrow11110$ and $1110\leftrightarrow11111$ and keeps $0$ and $10$. The permutation $p$ moves
the first point of ray~$2$ to the first point of ray~$1$ and shifts the other points of these rays, and $q$ does the same with ray~$3$. The element
$t$ changes the value of a configuration at the point $1$, and $d$ adds the value at $1$ to the value at $2$. So~\eqref{eq:prepend} gives, for
every $v\in V$,
\[
\hat P\iota_v=\iota_{p(v)},\qquad \hat Q\iota_v=\iota_{q(v)},\qquad \hat T\iota_v=\iota_{t(v)},\qquad \hat D\iota_v=\iota_{d(v)} .
\]
By the proof of Lemma~\ref{lem:structure}(b), $p,q,t,d$ generate $A$. If a word in them represents $g\in A$, the same word in
$\hat P,\hat Q,\hat T,\hat D$ maps $\iota_v(\xi)$ to $\iota_{g(v)}(\xi)$ for all $v$ and $\xi$. If $g=1$, it is the identity on the dense set
$\bigcup_vZ_v$, hence the identity. So $p,q,t,d\mapsto\hat P,\hat Q,\hat T,\hat D$ defines a homomorphism $\rho:A\to3V$ with
\begin{equation}\label{eq:3V-equivariant}
\rho(g)\,\iota_v=\iota_{g(v)}\qquad(g\in A,\ v\in V),
\end{equation}
and $\rho$ is injective: if $g\ne1$, then $g(v)\ne v$ for some $v$, and $\rho(g)$ maps $Z_v$ onto the disjoint rectangle $Z_{g(v)}$.

\emph{The lamps.} Let $\beta_1=0$, $\beta_2=10$ and $\beta_3=11$, and let $S_3$ act faithfully on $\mathfrak C^3$ by the prefix replacements
$\theta(\lambda)(\beta_i\xi,y,z)=(\beta_{\lambda(i)}\xi,y,z)$. For $\lambda\in S_3$ and $v\in V$ let $\rho(\lambda_v)$ be $\iota_v\theta(\lambda)\iota_v^{-1}$ on
$Z_v$ and the identity off $Z_v$; it lies in $3V$, since the complement of a rectangle is a finite union of rectangles. Copies at different
configurations have disjoint supports and commute, each copy is faithful, and by~\eqref{eq:3V-equivariant}
$\rho(g)\rho(\lambda_v)\rho(g)^{-1}=\rho(\lambda_{g(v)})$ for $g\in A$. These are the defining relations of $\Gamma$ as a semidirect product, so $\rho$
extends to a homomorphism $\rho:\Gamma\to3V$. Suppose that $\rho(\gamma)=1$, where $\gamma$ has lamp part $\lambda:V\to S_3$ and affine part $g$. The
lamps preserve every $Z_v$, so $\rho(g)$ does too, and $g(v)=v$ for all $v$; so $g=1$. Then $\rho(\gamma)$ restricts on $Z_v$ to
$\iota_v\theta(\lambda(v))\iota_v^{-1}$, so $\lambda(v)=1$ for all $v$. Hence $\rho$ is injective. Explicitly, $\rho(a)$ exchanges the rectangles
$00\mathfrak C\times0\mathfrak C\times0\mathfrak C$ and $010\mathfrak C\times0\mathfrak C\times0\mathfrak C$. Also $\rho(c)=\hat D\rho(ab)$, where
$\rho(ab)$ permutes the rectangles with prefixes $(00,0,0)$, $(010,0,0)$ and $(011,0,0)$ cyclically in this order and $\hat D$ is the identity on
$Z_0$.

\emph{The chunk.} The sofic profile of a chunk depends only on its identity element and its partial multiplication
(Section~\ref{sec:criterion}), and an injective homomorphism preserves both. So the chunk $E_{3V}=\rho(E)$, for the chunk $E$ of
Theorem~\ref{thm:main-group}, has $\sigma_{E_{3V}}=\sigma_E$, and parts (a) and (b) of that theorem give the bounds.
\end{proof}

By Section~\ref{sec:thompson}, Thompson's group $V$ has chunks with finite superpolynomial profile, from its Houghton subgroups, and a chunk with
profile at least $\exp(c\,r/\log^2r)$, which is finite if $F$ is amenable. Proposition~\ref{prop:3V} gives finite, nearly exponential profile in
$3V$ with no assumption. Whether $\Gamma$ embeds in $V$ or in $2V$ is open (Section~\ref{sec:open}).

\subsection{The F\o lner function}\label{sec:folner}

The hash models of Proposition~\ref{prop:upper} are far smaller than any F\o lner set of $\Gamma$, and this subsection measures the gap. For
$r\ge1$ let $\textup{Føl}(r)$ be the least size of a finite nonempty set $F\subset\Gamma$ with $|Fs\setminus F|\le|F|/r$ for
$s\in\{p,q,t,c,a\}$.

\begin{proposition}[The F\o lner function]\label{prop:folner-lower}
For $r\ge12$,
\[
2^{\,2^{\lfloor r/12\rfloor-1}}\ \le\ \textup{Føl}(r)\ \le\ 2^{\,2^{3r+14}} .
\]
Thus the Følner function of $\Gamma$ is doubly exponential, while by Theorem~\ref{thm:main-group}(b) the sofic profile of every chunk is
$\exp(O(r\log r))$.
\end{proposition}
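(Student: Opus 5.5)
The plan is to prove the two bounds separately. The upper bound comes from an explicit Følner set built from the clock of Lemma~\ref{lem:clock}. The lower bound comes from an averaging argument over a finite subgroup of $\ker\pi$ whose lamp part has doubly exponential order but which is reached by short words.

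\emph{Upper bound.} Take $m\approx r/2$, $R=2m+3$, and
\[
F=\{\kappa\,g_z:\ z\in B_m,\ \kappa\in K_R\}.
\]
Here $K_R\cong K_{W_R}=S_3^{W_R}\rtimes\AGL(W_R)$. The map $(z,\kappa)\mapsto\kappa g_z$ is injective, since $\pi(\kappa g_z)=z$. For a generator $s$ with $z+\pi(s)\in B_m$, Lemma~\ref{lem:clock} gives
\[
\kappa g_z s=\kappa\,(g_z s g_{z+\pi(s)}^{-1})\,g_{z+\pi(s)},
\]
with the middle factor in $K_R$. So $\kappa g_z s$ lies in $F$ unless $z$ is on the boundary of $B_m$, and
\[
|Fs\setminus F|\le|F|\cdot\frac{2m+1}{(2m+1)^2}\le\frac{|F|}{r}.
\]
Finally $|F|=(2m+1)^2\,6^{2^{D}}\,|\AGL(W_R)|$ with $D=6m+9$. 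Hence $\log|F|\le 2^{D}\log 6+D^2+D+2\log(2m+1)$. With $m=\lceil r/2\rceil$ this gives $D\le 3r+12$, and the total is at most $2^{3r+14}$ after checking constants. The only care needed is to choose $m$ so that both the boundary fraction and the exponent $3r+14$ hold at once.

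\emph{Lower bound.} Set $k=\lfloor r/12\rfloor-1$. Let $G_k\le\Gamma$ be the finite group generated by the lamps at configurations supported in the first $k$ points of ray~$1$ together with the translations $\tau_1,\dots,\tau_k$; it is $S_3\wr_{\F_2^k}\F_2^k$. First I would record the averaging lemma: if $|Fs\setminus F|\le|F|/r$ for every generator and every element of a finite subgroup $G$ is a word of length at most $\ell$ in the generators, then
\[
\textstyle\sum_{h\in G}|F\cap Fh|\ge|G|(1-\ell/r)|F|.
\]
Writing the left side as $\sum_x|F\cap xG|^2\le|F|\max_x|F\cap xG|$ then gives $|F|\ge(1-\ell/r)|G|$.

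The difficulty is that $G_k$ has diameter of order $2^k$ in the word metric, far more than $r$. So I would not apply the lemma to $G_k$ directly. Instead I would iterate along the coordinate filtration $V_1\subset V_2\subset\dots\subset V_k$ of configurations on the first $j$ points. The subgroup $G_j$ is a $\Z/2$-extension of $G_{j-1}\times G_{j-1}$, where the two copies are the lamp groups on $V_{j-1}$ and on $e_j+V_{j-1}$, exchanged by $\tau_j$. The key claim is an inductive inequality of the form
\[
|F\cap xG_j|\ \gtrsim\ |F\cap xG_{j-1}|^{2}
\]
for most cosets $x$. It should hold because $\tau_j=p^{j-1}tp^{1-j}$ has length at most $2j\le r/6$. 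By the averaging argument applied to the single element $\tau_j$, $F\tau_j$ mostly agrees with $F$. Since lamps in the two halves commute, the fibre of $F$ over a coset of $G_j$ then contains a near-product of two fibres over $G_{j-1}$. Squaring $k$ times from a base fibre of size at least $2$ (which holds because $a$ has length $1$) gives $|F|\ge 2^{2^{k}}$. The constant $12$ absorbs the lengths $2j$ of the transporters and the loss $1-O(j/r)$ at each level.

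\emph{Main obstacle.} I expect the inductive squaring step to be the main obstacle. The errors from non-invariance must not compound multiplicatively over $k$ levels, since a loss factor $1-c\,j/r$ per level could destroy the double exponent. I would first try a Cauchy--Schwarz or entropy version: bound the Shannon entropy of the lamp part of a uniformly random element of $F$ restricted to $V_j$. If the entropy doubles up to an additive error of $O(j/r)\cdot 2^{j}$ at each level, summing over levels keeps the total at least $2^{k-1}$ bits.
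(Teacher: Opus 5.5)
\paragraph{Upper bound.} Your upper bound is the paper's argument: the same set $F=\bigcup_{z\in B_m}K_Rg_z$, the same use of Lemma~\ref{lem:clock}, and the same boundary count $|Fs\setminus F|\le|F|/(2m+1)$. Take $m=\lceil(r-1)/2\rceil$, as the paper does. With your $m=\lceil r/2\rceil$ and non-integer $r$, $D=6m+9$ can exceed $3r+12$, and then $2^{D+2}$ overshoots $2^{3r+14}$.

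\paragraph{The gap in the lower bound.} The squaring step $|F\cap xG_j|\gtrsim|F\cap xG_{j-1}|^2$ is never proved, and the mechanism you give for it does not apply. The F\o lner condition concerns right multiplication. If $g=\lambda\alpha$ with $\lambda\in\bigoplus_VS_3$ and $\alpha\in A$, then $g\tau_j=\lambda(\alpha\tau_j)$, so right multiplication by $\tau_j$ leaves the lamp part untouched. The exchange of the lamps on $V_{j-1}$ and on $e_j+V_{j-1}$ is \emph{conjugation} by $\tau_j$. So $F\tau_j\approx F$ says nothing about how the lamps on the two halves are distributed in $F$. Even exact invariance under a swap would allow a diagonal set, whose fibres do not multiply. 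Only right multiplication by lamps moves lamp coordinates: $g\,a_v=\lambda\,a_{\alpha(v)}\,\alpha$, where $a_v$ is the copy of $a$ at $v$.

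Your entropy bookkeeping also fails quantitatively. A recursion $H_j\ge2H_{j-1}-\delta_j$ doubles every earlier error. With $\delta_j=O(j/r)\,2^j$ the total loss is $\sum_j2^{k-j}\delta_j=O(k^2/r)\,2^k$, which is not below $2^{k-1}$ when $k$ is of order $r/12$.

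\paragraph{The missing idea.} No induction over points is needed. Let $l=\lfloor r/12\rfloor$ and $M=2^l$. Let $Q\cong(\Z/2)^M$ be generated by the commuting involutions $a_v=T_vaT_v^{-1}$, $v\in\{0,1\}^l$, where $T_v=t^{v_1}p\,t^{v_2}p\cdots p\,t^{v_l}p^{1-l}$ has length $<3l$.
\begin{enumerate}[label=(\roman*),itemsep=1pt]
\item Each $a_v$ is a word of length $<6l$. Hence $|Fa_v\setminus F|\le6l|F|/r\le|F|/2$, and $\sum_v|Fa_v\setminus F|\le M|F|/2$.
\item Right multiplication by the $a_v$ preserves every coset $xQ$. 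The edge-isoperimetric inequality of the cube says $\sum_v|Sa_v\setminus S|\ge|S|(M-\log|S|)$ for $S\subset Q$.
\item Apply it to $x^{-1}(F\cap xQ)$, sum over cosets, and use $|F\cap xQ|\le|F|$. This gives $\sum_v|Fa_v\setminus F|\ge|F|(M-\log|F|)$.
\end{enumerate}
Comparing (i) and (iii) gives $\log|F|\ge M/2=2^{l-1}$. The diameter of $Q$ is irrelevant, and the non-invariance enters once, linearly.

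Your entropy idea works in exactly this form. For $X$ uniform on $S\subset Q$,
$\log|S|=H(X)\ge\sum_vH(X_v\mid X_{\ne v})=\sum_v\bigl(1-|Sa_v\setminus S|/|S|\bigr)$.
This is one chain rule over all $M$ lamp coordinates, not a doubling over the $k$ points.
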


\begin{proof}
\emph{Lower bound.} Let $F$ be such a set and $l=\lfloor r/12\rfloor\ge1$. For $v\in\{0,1\}^l$ the word
$T_v=t^{v_1}p\,t^{v_2}p\cdots p\,t^{v_l}p^{1-l}$ is freely equal to $\prod_{j=1}^lp^{j-1}t^{v_j}p^{1-j}$, so it represents the translation by
$\sum_jv_je_j$, and it has length less than $3l$. The elements $a_v=T_vaT_v^{-1}$ are the copies of $a$ at $2^l$ distinct configurations, so they
are commuting involutions and form a basis of an elementary abelian group $Q$ of order $2^{M}$, $M=2^l$. Each $a_v$ is a product of fewer than
$6l$ generators and inverses. Since $|Fxy\setminus F|\le|Fx\setminus F|+|Fy\setminus F|$ and $|Fx^{-1}\setminus F|=|Fx\setminus F|$, we get
$|Fa_v\setminus F|\le6l|F|/r\le|F|/2$, and so $\sum_v|Fa_v\setminus F|\le M|F|/2$.

On the other hand, for every finite $S\subset Q$,
\begin{equation}\label{eq:cube}
\sum_v|Sa_v\setminus S|\ \ge\ |S|\,(M-\log|S|).
\end{equation}
Indeed, the left side is $M|S|-2e(S)$, where $e(S)$ is the number of pairs $\{x,xa_v\}\subset S$, and $e(S)\le\frac12|S|\log|S|$ by the
edge-isoperimetric inequality of the cube~\cite{Harper1964}. We recall the proof, by induction on $M$. Split $S$ by the last coordinate into
$S_0,S_1$ of sizes $s_0\le s_1$. Then $e(S)\le e(S_0)+e(S_1)+s_0\le\frac12(s_0\log s_0+s_1\log s_1)+s_0$, and this is at most $\frac12s\log s$ with
$s=s_0+s_1$, because the binary entropy function $h$ satisfies $h(x)\ge2x$ for $0\le x\le\frac12$. The right multiplications by the $a_v$ preserve
each coset $xQ$; applying~\eqref{eq:cube} to the sets $x^{-1}(F\cap xQ)$ and using $|F\cap xQ|\le|F|$ gives
$\sum_v|Fa_v\setminus F|\ge|F|(M-\log|F|)$. Hence $\log|F|\ge M/2=2^{l-1}$.

\emph{Upper bound.} Let $m=\lceil(r-1)/2\rceil$, $R=2m+3$ and $F=\bigcup_{z\in B_m}K_Rg_z$, a disjoint union because $\pi(g_z)=z$. By
Lemma~\ref{lem:clock}, $g_zs\in K_Rg_{z+\pi(s)}$ for $z\in B_m$ and each generator $s$, so $|Fs\setminus F|\le|K_R|(2m+1)=|F|/(2m+1)\le|F|/r$. With
$D=|Y_R|=6m+9$ we have $|K_R|=6^{2^D}\cdot2^D\cdot|\GL_D(\F_2)|$, so $\log|F|\le2^D\log6+D+D^2+2\log(2m+1)\le2^{D+2}\le2^{3r+14}$.
\end{proof}

The group $\Gamma$ is not LEF. This follows from $H_3\le\Gamma$ in the standard way. $H_3$ is finitely presented and not residually finite,
because the finitary alternating group of an infinite set is simple. A finitely presented LEF group is residually
finite~\cite{VershikGordon1997}, and subgroups of LEF groups are LEF. The chunk $E$ shows it too. Its profile is unbounded by
Theorem~\ref{thm:main-group}(a), and a chunk with unbounded profile embeds in no finite group (Section~\ref{sec:criterion}).
\section{The ladder below $\Gamma$}\label{sec:ladder}

Before $\Gamma$ we found two groups that each supply part of what the criterion needs. We record them here for comparison, and because each
leaves an open problem; nothing in this section is used in the proof of Theorem~A or in Part~II. The dyadic group $\mathcal G$ of
Section~\ref{sec:dyadic} is elementary amenable and transports copies of $S_3$ by the powers of one element, with far commutators of area at most
quadratic. Corollary~\ref{cor:cyclic} turns this into the exponent $1/3$, the most that transport along a ray of $H_3$ can give
(Section~\ref{sec:criterion}). The subgroup $\mathcal L$ of Thompson's group $V$ in Section~\ref{sec:thompson} transports along the exponentially
growing orbits of Thompson's group $F$: words of length $O(\log N)$ reach $N$ sites, and the quadratic Dehn function of $F$ fills all pairwise
commutators cheaply. That comes within a factor $\log^2r$ in the exponent of the ceiling of Theorem~\ref{thm:criterion}, but inside a group that
contains $F$, whose soficity is unknown. The proof that $\mathcal L$ is hyperlinear exactly when $F$ is uses tools that nothing else in the paper
needs, and it is in Appendix~\ref{app:hyperlinear}.

\subsection{An amenable group at exponent \texorpdfstring{$1/3$}{1/3}}\label{sec:dyadic}

The group $\mathcal G$ is a lamplighter over the dyadic rationals $\mathbb Z[\frac12]$: the lamps are copies of $S_3$, and the lighter is
generated by the shift of the integers and by doubling. In this section a group acting on a set $X$ acts on $\bigoplus_{x\in X}S_3$ by carrying
the copy of $S_3$ at $x$ to the copy at $g(x)$ through the identity of $S_3$.

Let $X=\mathbb Z[\frac12]$, and let $t$ and $u$ be the permutations of $X$ given by $t(x)=x+1$ for $x\in\mathbb Z$, $t(x)=x$ for
$x\notin\mathbb Z$, and $u(x)=2x$. Let $\mathcal B=\langle t,u\rangle$ and $\mathcal G=\bigl(\bigoplus_{x\in X}S_3\bigr)\rtimes\mathcal B$, and let $a,b$ be
generating involutions of the copy of $S_3$ at $0$. Put $s=utu^{-1}$ and $v=tst^{-1}$. Then $s$ adds $2$ to the even integers and $v$ adds $2$
to the odd integers, each fixing every other point, so $t^2=sv$ and $[s,v]=1$. Moreover $u$ and $v$ fix $0$, and $s$ fixes $1=t(0)$. Since
$\mathcal B$ acts transitively on $X$, the group $\mathcal G$ is generated by $a$, $b$, $t$ and $u$. By these identities the following seventeen words
in $a,b,t,u,s,v$ represent $1$ in $\mathcal G$; they form the set $R_{\mathcal G}$:
\[
a^2,\ b^2,\ (ab)^3,\quad s^{-1}utu^{-1},\ v^{-1}tst^{-1},\ t^2v^{-1}s^{-1},\ [s,v],\quad [u,x],\ [v,x],\ [s,txt^{-1}],\ [x,tyt^{-1}],
\]
where $x,y\in\{a,b\}$. Theorem~\ref{thm:dyadic} also records that $\mathcal G$ is not solvable, so the exponent $1/3$ is attained outside the
solvable class.

\begin{theorem}[The dyadic group]\label{thm:dyadic}
The group $\mathcal G$ is elementary amenable and not solvable, and every commutator $[x,t^kyt^{-k}]$ with $x,y\in\{a,b\}$ and
$k\ge1$ has area at most $5k^2$ with respect to $R_{\mathcal G}$. Consequently the chunk $E_{\mathcal G}$ that Theorem~\ref{thm:criterion}
associates with $R_{\mathcal G}$ has finite sofic profile with
\[
\log\sigma_{E_{\mathcal G}}(r)\ \ge\ \frac1{16}\Big\lfloor\Big(\frac r{75\cdot10^8}\Big)^{1/3}\Big\rfloor\qquad(r>1),
\]
every unitary assignment $\phi$ as in Theorem~\ref{thm:criterion} has $\log D\ge\frac{\Delta^2}{12}\bigl((5K_\Delta e)^{-2/5}-1\bigr)$, and
$\mathcal G$ is not LEF.
\end{theorem}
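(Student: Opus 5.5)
The proof has four parts: the structure of $\mathcal G$, the quadratic area bound, the application of Corollary~\ref{cor:cyclic}, and non-LEF.

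\emph{Structure.} The lighter $\mathcal B=\langle t,u\rangle$ maps onto $\Z$ by the exponent of $u$. Its kernel is generated by the conjugates $u^ktu^{-k}$. For $k\ge0$, the element $u^ktu^{-k}$ adds $2^k$ on $2^k\Z$ and fixes the rest of $X$. For $k<0$, it adds $2^k$ on $2^k\Z$. I would show that this kernel is an increasing union of the groups $\mathcal B_k=u^{-k}\langle t,s,v,\dots\rangle u^k$. Each $\mathcal B_k$ acts on finitely many residue classes by translations and finitary permutations; concretely, conjugating by $u^{-k}$ reduces to the subgroup of $\mathcal B$ preserving $\Z$, acting on $\Z$ by permutations that are eventually translations. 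I would check that each such piece is (locally finite)-by-abelian, hence elementary amenable. Directed unions and extensions then give that $\mathcal B$, and so $\mathcal G=(\bigoplus S_3)\rtimes\mathcal B$, is elementary amenable.

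For non-solvability, I would use that the lamp group $\bigoplus_X S_3$ contains $S_3$, whose derived series stops at $A_3$. So $\mathcal G^{(n)}\supseteq\bigoplus A_3$ at every stage, provided $\mathcal B$ is transitive. That alone is not enough, since $S_3$ is solvable. I would instead exhibit $A_5$ or an unbounded derived length inside $\mathcal B$: the element $t$ acting on $\Z$ and its conjugates by powers of $u$ generate finitary permutations of large finite sets, for instance commutators of $t$ with $s$ give finite-support moves. The alternating groups $A_n\le\mathcal B$ would give non-solvability.

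\emph{Area bound.} This is the main step. The key identity is $t^2=sv$ with $[s,v]=1$, where $v$ fixes $0$ (and so commutes with $x$ via $[v,x]$), and $s$ fixes $1$ (via $[s,txt^{-1}]$). I would write $k=2m$ or $k=2m+1$ and replace $t^{2m}$ by $(sv)^m$, then by $s^mv^m$ at a cost of about $m^2$ uses of $[s,v]$. Consider $[x,t^kyt^{-k}]$. For even $k$, the factor $v^m$ commutes with $y$ since $v$ fixes $0$, costing $2m$ relations. The remaining $s^m$ is $u t^m u^{-1}$, which turns the commutator into $u[x,t^myt^{-m}]u^{-1}$ using $[u,x]$. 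This would give a recursion $A(2m)\le A(m)+O(m^2)$, which is even better than quadratic. Odd $k=2m+1$ is handled similarly through $t\cdot t^{2m}$, using $[s,txt^{-1}]$ to pass $s$ through the lamp at $1$; this reduces to the lamp at $1$ against $0$, i.e.\ a shifted instance. If the recursion fails for opposite parities, the fallback is the cruder direct count: $k^2$ commutations $[s,v]$ plus linear terms, which gives $\le5k^2$ with base case $[x,tyt^{-1}]\in R_{\mathcal G}$. The main obstacle is the bookkeeping of the odd case, where the two lamps sit at points of opposite parity.

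\emph{Consequences and non-LEF.} With $C=5$, $p=2$ and $l$ the longest relator length (giving $C(2l-1)\le75$), Corollary~\ref{cor:cyclic} gives both stated bounds. The profile is finite because $\mathcal G$ is amenable, hence sofic. The lower bound is unbounded, so $E_{\mathcal G}$ embeds in no finite group, and $\mathcal G$ is not LEF by Cornulier's fact recalled in Section~\ref{sec:criterion}.
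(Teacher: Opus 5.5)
The area bound, the application of Corollary~\ref{cor:cyclic} with $C=5$, $p=2$, $l=8$, and the non-LEF step follow the paper. In the odd case you need neither recursion nor a fallback. Write $t^{2m}=v^ms^m$, in the opposite order to the even case. Then $s^{\pm m}$ sits next to $z=tyt^{-1}$, and the relators $[s,tyt^{-1}]$ remove it. The relators $[v,x]$ move $x^{\pm1}$ across $v^{\pm m}$. What remains is $v^m[x,tyt^{-1}]v^{-m}$, a conjugate of a relator. So $A(2m+1)\le2m^2+10m+1$, and together with $A(2m)\le A(m)+2m^2+8m+4$ and $A(2)\le15$ this closes the induction at $5k^2$.

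The genuine gap is non-solvability: the mechanism you propose does not exist.
\begin{itemize}
\item The commutator $[t,s]=vs^{-1}$ moves every integer: it sends evens down by $2$ and odds up by $2$.
\item $\mathcal B$ contains no nontrivial finitary permutation at all. Each $t_j=u^jtu^{-j}$ with $j\le n$ commutes with $x\mapsto x+2^n$, so every element of $\mathcal K=\langle t_j:j\in\Z\rangle$ has support invariant under such a translation. An element of $u$-exponent $e\ne0$ acts as $x\mapsto2^ex$ on all points of sufficiently negative $2$-adic valuation.
\item There is no $A_5$ even abstractly. Since $\mathcal B/\mathcal K\cong\Z$, a finite subgroup lies in some $\langle t_j:|j|\le n\rangle$. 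That group acts faithfully on $2^{-n}\Z$ commuting with $x\mapsto x+2^n$, so it embeds in $\Z^{4^n}\rtimes\Sym_{4^n}$. The finite subgroup injects into the image in $\Sym_{4^n}$, which preserves the binary tree of residue classes and is therefore a $2$-group.
\item The lamps cannot help. $\bigoplus_XS_3$ is solvable, so $\mathcal G$ is solvable exactly when $\mathcal B$ is. Your claim that $\mathcal G^{(n)}\supseteq\bigoplus A_3$ for all $n$ when $\mathcal B$ is transitive is false; it fails for $S_3\wr\Z$.
\end{itemize}

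The missing idea is unbounded derived length, which you name but do not supply. The elements $t_0,\dots,t_{n-1}$ preserve $\Z$ and commute with $x\mapsto x+2^n$. So they act on $\Z/2^n$, permuting the classes modulo $2^i$ for every $i\le n$, and generate a subgroup $\Gamma_n$ of the iterated wreath product $\mathsf W_n=\Z/2\wr\cdots\wr\Z/2$.
\begin{itemize}
\item On the even classes, identified with $\Z/2^{n-1}$ by halving, $t_1,\dots,t_{n-1}$ act as $t_0,\dots,t_{n-2}$ act on $\Z/2^{n-1}$, and they fix the odd classes.
\item Conjugation by $t_0$ carries this action to the odd classes, and $t_0$ exchanges the two halves. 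Hence $|\Gamma_n|\ge2|\Gamma_{n-1}|^2$, and by induction $\Gamma_n=\mathsf W_n$.
\item For the exchange $\sigma$, $[\sigma,(g,1)]=(g^{-1},g)$. So the derived subgroup of $\mathsf W_n$ lies in $\mathsf W_{n-1}\times\mathsf W_{n-1}$ and projects onto each factor, and by induction $\mathsf W_n^{(n-1)}\ne1$.
\item Since $\Gamma_n$ is a quotient of a subgroup of $\mathcal B$ for every $n$, $\mathcal B$ has unbounded derived length and is not solvable.
\end{itemize}

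Your amenability paragraph also needs repair, although its conclusion is true.
\begin{itemize}
\item Elements preserving $\Z$ are not eventually translations; $s$ is a counterexample.
\item The pieces are not (locally finite)-by-abelian: $\langle t,s\rangle\cong\Z\wr\Z/2$ is nonabelian and has no nontrivial locally finite normal subgroup.
\end{itemize}
What does hold is that $\langle t_j:|j|\le n\rangle$ lies in the virtually abelian group $\Z^{4^n}\rtimes\Sym_{4^n}$, so $\mathcal K$ is a directed union of virtually abelian groups. For this you need $j$ bounded above as well as below. Your $\mathcal B_k=\langle t_j:j\ge-k\rangle$ contains all the $\Gamma_n$ as quotients of subgroups, so it is not virtually abelian.
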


\begin{proof}
\emph{Amenability.} For $j\in\mathbb Z$ let $t_j=u^jtu^{-j}$; it adds $2^j$ to the points of $2^j\mathbb Z$ and fixes the others. The group $\mathcal K$
generated by all $t_j$ is normalized by $u$ and contains $t$, so it is normal in $\mathcal B$ and $\mathcal B/\mathcal K$ is cyclic. For $n\ge1$ the elements $t_j$ with
$|j|\le n$ move only points of $2^{-n}\mathbb Z$, and they commute with the translation $x\mapsto x+2^n$ of $2^{-n}\mathbb Z$, which has $4^n$
orbits. A permutation of $2^{-n}\mathbb Z$ that commutes with this translation maps each orbit onto an orbit by a translation, so these $t_j$ generate a subgroup of
$\mathbb Z^{4^n}\rtimes\Sym_{4^n}$, which is virtually abelian. Hence $\mathcal K$ is a directed union of virtually abelian groups, $\mathcal B$ is elementary
amenable, and so is $\mathcal G$, because $\bigoplus_XS_3$ is locally finite.

\emph{Not solvable.} For $n\ge1$ the elements $t_0,\dots,t_{n-1}$ preserve $\mathbb Z$ and commute with $x\mapsto x+2^n$, so they act on
$\mathbb Z/2^n$, where they permute the classes modulo $2^i$ for every $i\le n$. So they generate a subgroup $\Gamma_n$ of the iterated wreath
product $\mathsf W_n$ of $n$ copies of $\mathbb Z/2$, which has order $2^{2^n-1}$. On the even classes, identified with $\mathbb Z/2^{n-1}$ by
halving, $t_1,\dots,t_{n-1}$ act as $t_0,\dots,t_{n-2}$ act on $\mathbb Z/2^{n-1}$, and they fix the odd classes. Conjugation by $t_0$ carries
this action to the odd classes, and $t_0$ exchanges the two halves. So $|\Gamma_n|\ge2|\Gamma_{n-1}|^2$, and $\Gamma_n=\mathsf W_n$ by induction from $\Gamma_1=\mathsf W_1=\mathbb Z/2$. The derived
subgroup of $\mathsf W_n$ lies in $\mathsf W_{n-1}\times\mathsf W_{n-1}$ and projects onto each factor, since the commutator of $(g,1)$ with the
exchange is $(g^{-1},g)$. By induction the $(n-1)$st derived subgroup of $\mathsf W_n$ is nontrivial. Since $\Gamma_n$ is a quotient of the subgroup
$\langle t_0,\dots,t_{n-1}\rangle$ of $\mathcal B$, the derived length of $\mathcal B$ is at least $n$ for every $n$, so neither $\mathcal B$ nor
$\mathcal G$ is solvable.

\emph{Area.} Let $A_{\mathcal G}(k)$ be the largest area of $[x,t^kyt^{-k}]$ over $x,y\in\{a,b\}$, so that $A_{\mathcal G}(1)=1$. Five kinds of move
cost one cell each.
\begin{itemize}[nosep,leftmargin=*]
\item Replacing $t^2$ by $sv$.
\item Exchanging adjacent letters $s^{\pm1}$ and $v^{\pm1}$.
\item Replacing $s$ by $utu^{-1}$.
\item Moving a letter $x^{\pm1}$, $x\in\{a,b\}$, across $u^{\pm1}$ or $v^{\pm1}$.
\item Moving $txt^{-1}$ or its inverse across $s^{\pm1}$.
\end{itemize}
The commutator contains four powers $t^{\pm k}$.

For $k=2m$, write each $t^{\pm2m}$ as $(s^mv^m)^{\pm1}$, with $m$ replacements and $m(m-1)/2$ exchanges each, $2m^2+2m$ cells in all. Moving
$y^{\pm1}$ across $v^{\pm m}$ ($2m$ cells) leaves $[x,s^mys^{-m}]$. Replacing its $4m$ letters $s^{\pm1}$ gives
$[x,ut^mu^{-1}yut^{-m}u^{-1}]$, and moving $y^{\pm1}$ and $x^{\pm1}$ across $u$ (four cells) gives $u[x,t^myt^{-m}]u^{-1}$. Hence
$A_{\mathcal G}(2m)\le A_{\mathcal G}(m)+2m^2+8m+4$.

For $k=2m+1$, put $z=tyt^{-1}$, so that the commutator is freely $[x,t^{2m}zt^{-2m}]$. Write each $t^{\pm2m}$ as $(v^ms^m)^{\pm1}$, now with
$m(m+1)/2$ exchanges each, $2m^2+6m$ cells in all. Moving $z^{\pm1}$ across $s^{\pm m}$ ($2m$ cells) leaves $[x,v^mzv^{-m}]$, moving
$x^{\pm1}$ across $v^{\pm m}$ ($2m$ cells) leaves $v^m[x,z]v^{-m}$, and $[x,z]$ is a relation. Hence $A_{\mathcal G}(2m+1)\le2m^2+10m+1$.

By induction $A_{\mathcal G}(k)\le5k^2$, since $A_{\mathcal G}(2)\le15$, $A_{\mathcal G}(m)+2m^2+8m+4\le7m^2+8m+4\le20m^2$ for $m\ge1$, and
$2m^2+10m+1\le5(2m+1)^2$.

\emph{Profile.} The words of $R_{\mathcal G}$ have length at most $8$, so Corollary~\ref{cor:cyclic} with $C=5$, $p=2$ and $l=8$ gives the
bounds. Since $\mathcal G$ is amenable, it is sofic and every chunk has finite profile. The profile of $E_{\mathcal G}$ is unbounded, so
$E_{\mathcal G}$ embeds in no finite group, and $\mathcal G$ is not LEF (Section~\ref{sec:criterion}).
\end{proof}

The contrast with $H_3$ is instructive. There the doubling endomorphism of~\cite[Section~3]{DouglasMghirbiA}, which differs by a conjugation
from the doubling map of Burillo, Cleary, Martino and R\"over~\cite[Section~7]{BCMR2016}, multiplies a reweighted area by $107$ at each step;
hence the exponent $\log107$. In $\mathcal G$ the dilation $u$ relates distance $2m$ to distance $m$ exactly, and the two commuting shifts cost
$O(m^2)$ cells. The orbit of $0$ under $\mathcal B$ grows exponentially, as the orbits of Thompson's group $F$ do. Whether $\mathcal G$ also
supports the transport of many copies that we use for $F$ in Section~\ref{sec:thompson} is open (Section~\ref{sec:open}).

\subsection{Thompson's group \texorpdfstring{$V$}{V}}\label{sec:thompson}

R\"over proved that every Houghton group embeds in Thompson's group $V$~\cite[Proposition~2.6]{Rover1999}, and Genevois extended the
embedding to generalised Houghton groups~\cite[Corollary~5.11]{Genevois2019}. Since $H_3$ is amenable, Thompson's group $V$ therefore has a
chunk whose profile is finite at every $r$ and superpolynomial by~\cite[Theorem~1.1]{DouglasMghirbiA}. A different subgroup of $V$ gives a much
larger bound.

Let $\mathbb D$ be the set of dyadic rationals in $(0,1)$, and let Thompson's group $F$ act on $[0,1]$ by its piecewise-linear
homeomorphisms. Let $\mathcal L=\bigl(\bigoplus_{x\in\mathbb D}S_3\bigr)\rtimes F$. To embed $\mathcal L$ in $V$ we double every $1$ of a binary
word, so that the pattern $10$ never occurs inside a coded word. The first $10$ then marks a gap, the gaps are indexed by the dyadic rationals,
and $F$ acts on the gaps as it acts on $\mathbb D$.

\begin{lemma}[Gap coding]\label{lem:gap-coding}
$\mathcal L$ embeds in $V$.
\end{lemma}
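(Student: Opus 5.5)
We follow the pattern of the embedding of $\Gamma$ in $3V$ (Proposition~\ref{prop:3V}): code the sites by a family of disjoint cylinders whose union is dense, check that the action of the lighter on the sites is realised by prefix replacements, put a faithful copy of $S_3$ on each cylinder, and conclude injectivity from disjointness and density.

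\emph{Step 1: coding gaps.} Let $\delta:\{0,1\}^*\to\{0,1\}^*$ replace each letter $1$ by $11$ and leave each $0$ alone. A word $\delta(w)$ never contains $10$ at an odd position after a complete doubled block; more simply, reading pairs, $\delta(w)$ parses uniquely. For a dyadic $x\in\mathbb D$ with finite binary expansion $x=0.w1$ (so $w1$ is the minimal expansion), define the gap word $g(x)=\delta(w)\,10$. We check that the set $\{g(x)\}$ is a prefix code. Reading from the left, doubled letters $0$ and $11$ continue the word, while $10$ ends it. A prefix that has not ended extends by $10$, so the cylinders $g(x)\mathfrak C$ are disjoint with dense union. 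Let $\iota_x$ denote the prefix map $\xi\mapsto g(x)\xi$.

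\emph{Step 2: $F$ acts by prefix replacement.} An element $f\in F$ is given by a pair of binary subdivisions $u_i\mapsto v_i$ of $[0,1]$, acting on expansions by $u_i\eta\mapsto v_i\eta$. We define $\hat f\in V$ by the table $\delta(u_i)\mapsto\delta(v_i)$ on the cylinders $\delta(u_i)\mathfrak C$, completed by the identity-like matching of the remaining prefixes. The key observation is that $\delta(u_i)\mathfrak C$ together with the leftover cylinders (prefixes $\delta(u')1 0$ for proper prefixes $u'$ of the $u_i$, which code gaps $x$ on the subdivision points) should match: the subdivision points of the domain correspond to those of the range in order, since $f$ maps breakpoints of the domain subdivision to breakpoints of the range. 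So $\hat f$ sends $\delta(u')10\xi$ to $\delta(v')10\xi$ for the matched internal nodes $u'\leftrightarrow v'$ of the two trees (the trees have the same shape up to the matched leaves). We then verify that $\hat f\iota_x=\iota_{f(x)}$. If $x$ lies in the interior of a leaf interval $u_i$, then $x=0.u_i\eta1$, $g(x)=\delta(u_i)\delta(\eta)10$, and $f(x)=0.v_i\eta1$. If $x$ is an internal node, we use the node matching. This is the main obstacle: choosing the node matching consistently. I would order internal nodes of each tree in left-to-right (in-order) position, which equals the order of the breakpoints $x$, and $f$ preserves that order. Multiplicativity of $f\mapsto\hat f$ then follows from equivariance and density, exactly as for $\rho$ in Proposition~\ref{prop:3V}, and the map is injective since $F$ acts faithfully on $\mathbb D$.

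\emph{Step 3: lamps.} Split the tail cylinder into $0,10,11$ and let $S_3$ permute these three prefixes. Conjugating by $\iota_x$ and extending by the identity off $g(x)\mathfrak C$ gives the lamp at $x$ in $V$. Lamps at different sites commute, and conjugation by $\hat f$ carries the lamp at $x$ to the lamp at $f(x)$, so we obtain a homomorphism $\mathcal L\to V$. For injectivity, the lamps preserve each cylinder, so the $F$-part must fix every $x$ and is trivial; then the restriction to each cylinder recovers each lamp coordinate.
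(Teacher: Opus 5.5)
Your construction is the paper's: doubling each $1$ so that the first $10$ marks a gap, indexing the gap cylinder $\delta(w)10\,\mathfrak C$ by $0.w1$, replacing leaf prefixes $\delta(u_i)\mapsto\delta(v_i)$ and matching the internal nodes $\delta(u')10\mapsto\delta(v')10$ in left-to-right order, and letting $S_3$ permute the subcylinders $0,10,11$. You get independence of the diagram and multiplicativity of $f\mapsto\hat f$ from equivariance on the dense union of gap cylinders, whereas the paper uses leaf expansions and the action on block sequences; yours is a fine variant. Two small repairs are needed: a prefix that stops right after an unpaired $1$ is completed by appending $0$, not $10$; and the completeness of $\{\delta(u_i)\}\cup\{\delta(u')10\}$ as a prefix code follows because splitting a leaf $w$ replaces $\delta(w)$ by $\delta(w)0$, $\delta(w)10$, $\delta(w)11$.
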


\begin{proof}
Let $\mathsf c$ be the substitution $0\mapsto0$, $1\mapsto11$ on finite binary words. An infinite binary word either is a concatenation of
blocks $0$ and $11$, or it has a first occurrence of $10$ at a block boundary. So Cantor space is the disjoint union of the set $\mathcal C$ of
block sequences and the cylinders $Z_w=\mathsf c(w)10\{0,1\}^{\mathbb N}$, one for each finite word $w$. Index $Z_w$ by the dyadic
$x_w=0.w1\in\mathbb D$ in binary; this is a bijection. Let $f\in F$ have a tree-pair diagram with leaves $w_1,\dots,w_n$ and $w'_1,\dots,w'_n$,
so that $f(0.w_kz)=0.w'_kz$. Define $\hat f$ by the prefix replacements $\mathsf c(w_k)\mapsto\mathsf c(w'_k)$ and
$\mathsf c(v)10\mapsto\mathsf c(v')10$, where the interior vertices $v$ of the first tree are matched in left-to-right order, the $k$-th with the $k$-th, with the interior vertices $v'$ of
the second. Both families of prefixes are complete prefix codes, because splitting a leaf $w$ replaces $\mathsf c(w)$ by $\mathsf c(w0)$,
$\mathsf c(w)10$ and $\mathsf c(w1)$. Expanding a matched pair of leaves adds a matched pair of gaps and changes nothing else, so $\hat f$ depends
only on $f$. It maps $Z_w$ onto $Z_{w'}$ with $x_{w'}=f(x_w)$: below a leaf this is the formula for $f$, and interior vertices correspond to the
breakpoints between leaves. On $\mathcal C$ it acts as $f$ acts on binary sequences, so $f\mapsto\hat f$ is an injective homomorphism $F\to V$.
Inside each $Z_w$ let $S_3$ permute the three cylinders $\mathsf c(w)10y\{0,1\}^{\mathbb N}$, $y\in\{0,10,11\}$. Copies with different $w$ have
disjoint supports, and $\hat f$ carries the copy at $x$ to the copy at $f(x)$. An element of the group they generate that acts trivially acts
trivially on $\mathcal C$, so its $F$-part is trivial, and then each of its $S_3$ coordinates is trivial.
\end{proof}

Let $a,b$ generate the copy of $S_3$ at $\frac12$, let $t\in F$ be
\[
t(x)=\begin{cases}2x,&0\le x\le\frac14,\\ x+\frac14,&\frac14\le x\le\frac12,\\ \frac{x+1}2,&\frac12\le x\le1,\end{cases}
\]
so that $t(\frac12)=\frac34$, and for $m\ge1$ let $\mathbb D_m$ be the set of the $2^m-1$ points of $\mathbb D$ with denominator at most
$2^m$.

The group $\mathcal L$ is finitely presented. This is Cornulier's criterion for permutational wreath products~\cite[Theorem~1.1]{Cornulier2006},
applied to the action of $F$ on $\mathbb D$; he checks its hypotheses in~\cite[Example~3.4]{Cornulier2006}. Thompson's group $F$ is finitely
presented, and it acts transitively on increasing pairs of points of $\mathbb D$~\cite[Lemma~4.2]{CFP1996}, so it has finitely many orbits on
$\mathbb D\times\mathbb D$. The stabilizer $F_{1/2}$ of $\frac12$ is the product of the elements supported in $[0,\frac12]$ and those supported
in $[\frac12,1]$, two copies of $F$, so it is finitely generated. Nothing in Proposition~\ref{prop:thompson} uses finite presentation.

\begin{proposition}[Thompson transport]\label{prop:thompson}
The group $\mathcal L$ contains $F$ and contains no Houghton group $H_m$, so the bounds below do not come from the
Houghton subgroups of $V$. There are a finite set $R$ of relations of $\mathcal L$, containing $a^2,b^2,(ab)^3$, and a constant $C$ with the
following property: for every $m\ge1$ there are words $w_p$, $p\in\mathbb D_m$, with $w_p(\frac12)=p$ and
\[
\mathrm{Area}_R\bigl([w_pxw_p^{-1},w_qyw_q^{-1}]\bigr)\le Cm^2\qquad(p\ne q\in\mathbb D_m,\ x,y\in\{a,b\}).
\]
Consequently a chunk $E$ of $\mathcal L$, and hence of $V$, has $\log\sigma_E(r)\ge c\,r/\log^2r$ for all large $r$, and for $0<\Delta\le1$
every unitary assignment as in Theorem~\ref{thm:criterion} has $\log D\ge c_\Delta\,e^{-2}/\log^4(1/e)$ for all small $e$, where $c>0$ is a
constant and $c_\Delta>0$ depends only on $\Delta$.
\end{proposition}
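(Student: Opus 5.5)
The proof has three parts: the structural claims, the construction of the words $w_p$ with the area bound, and the deduction of the profile and unitary bounds from Theorem~\ref{thm:criterion}.

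\emph{Structural claims.} $F$ is the quotient of $\mathcal L$ by the locally finite normal subgroup $\bigoplus_{\mathbb D}S_3$, and it is also a complement in the semidirect product. Suppose some Houghton group $H_m$, $m\ge2$, embedded in $\mathcal L$. Its finitary alternating subgroup is infinite, simple and locally finite. Its image in $F$ is a locally finite subgroup, and $F$ is torsion-free, so the image is trivial. So this simple subgroup sits in $\bigoplus S_3$, which is locally solvable, while an infinite finitary alternating group contains $A_5$. This is a contradiction.

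\emph{Words and areas.} Take $R$ to consist of the following:
\begin{itemize}
\item a finite presentation of $F$ in generators $x_0,x_1$ (with $t$ expressed in them), whose Dehn function is quadratic (Guba);
\item $a^2,b^2,(ab)^3$;
\item $[g,x]$ for $x\in\{a,b\}$ and $g$ in a finite generating set of the stabilizer $F_{1/2}$;
\item one relation $[x,hyh^{-1}]$ for a fixed $h\in F$ with $h(\frac12)\ne\frac12$, using the transitivity of $F$ on increasing pairs.
\end{itemize}
For $p\in\mathbb D_m$, let $w_p$ be a word of length $O(m)$ in the generators of $F$ with $w_p(\frac12)=p$. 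Such words exist because points of denominator $2^m$ are reached by descending a tree of depth $m$; concretely, I would use products of $t$ and a second map that halves toward $0$, with one letter per binary digit.

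For $p\ne q$, I use that $F$ is transitive on increasing pairs. I choose a word $v_{pq}$ of length $O(m)$ carrying $(\frac12,h(\frac12))$, suitably ordered, to $(p,q)$. Then $w_p^{-1}v_{pq}$ fixes $\frac12$, and so does $w_q^{-1}v_{pq}h$. Each of these is written as a word of length $O(m)$ in the stabilizer generators, which costs $O(m)$ relations of the third kind. The discrepancy between the two expressions is a relation of $F$ of length $O(m)$, which the quadratic Dehn function fills with area $O(m^2)$. Finally, the relation $[x,hyh^{-1}]$ is conjugated by $v_{pq}$. Together this gives area $Cm^2$, exactly as in Lemma~\ref{lem:exchange} and Lemma~\ref{lem:local-filling}(d).

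\emph{Main obstacle.} The difficult step is controlling the stabilizer words: $F_{1/2}\cong F\times F$ is undistorted, and an element of word length $O(m)$ in $F$ that fixes $\frac12$ must have length $O(m)$ in the stabilizer generators. I would prove this with tree-pair diagrams: the caret count is linear in word length, and the diagram splits at $\frac12$ into two diagrams of comparable size.

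\emph{Profile and unitary bounds.} Apply Theorem~\ref{thm:criterion} with $N=2^m-1$ copies and $\Lambda=Cm^2$. This gives $\log\sigma_E(r)\ge N/16$ for $r\gtrsim 2^m m^2$, so $\log\sigma_E(r)\ge cr/\log^2 r$. The unitary condition $K_\Delta\sqrt N\,Cm^2e\le1$ gives $2^m\approx e^{-2}/\log^4(1/e)$, and hence the unitary bound. Since $\mathcal L$ embeds in $V$ by Lemma~\ref{lem:gap-coding}, the same chunk lies in $V$.
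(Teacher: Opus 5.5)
Your proposal is correct and follows essentially the paper's argument. The shared steps are:
\begin{itemize}
\item a map $v_{pq}$ carrying $(\frac12,h(\frac12))$ to $(p,q)$; the paper takes $h=t$;
\item stabilizer words of length $O(m)$ from the undistortedness of $F_{1/2}\cong F\times F$, proved by splitting tree-pair diagrams;
\item Guba's quadratic Dehn function to pay $O(m^2)$ for the word replacements;
\item the relations $[x,h_j]$ together with one seed relation $[x,hyh^{-1}]$;
\item Theorem~\ref{thm:criterion} with $N=2^m-1$ and $\Lambda=Cm^2$;
\item for the Houghton exclusion, torsion-freeness of $F$ forcing all torsion into $\bigoplus S_3$.
\end{itemize}
One point needs tightening: the $O(m)$ length of $v_{pq}$ and $w_p$ does not follow from transitivity alone, and your ``one letter per binary digit'' sketch is unverified. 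As the paper does, build these elements as affine maps between partitions into $O(m)$ standard dyadic intervals, and use that word length and caret count are comparable in both directions (Burillo--Cleary--Stein).
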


\begin{proof}
\emph{Transport.} A standard dyadic interval is one of the form $[k2^{-j},(k+1)2^{-j}]$. Let two partitions of $[0,1]$ into standard dyadic
intervals have the same number $n$ of intervals. The map sending the $i$th interval of the first affinely onto the $i$th interval of the second
lies in $F$ and has a tree-pair diagram with $n-1$ carets. Every standard dyadic interval can be split into any given number of standard
dyadic intervals. For $p<q$ in $\mathbb D_m$, the binary digits of $p$ split $[0,p]$ and $[p,1]$ into at most $m$ standard dyadic
intervals each. The interval $[p,q]$, cut at its point $c$ with the shortest binary expansion, splits into at most $m$ standard dyadic intervals
on each side of $c$. Hence there are $w_p\in F$ with $w_p(\frac12)=p$ and fewer than $2m$ carets, and $g_{pq}\in F$ with $g_{pq}(\frac12)=p$, $g_{pq}(\frac34)=q$ and fewer than $4m$
carets. The word length in $F$ is comparable to the number of carets of the reduced tree-pair diagram~\cite{BCS2001}. Hence these elements
are represented by words of length $O(m)$ in the generating set of the finite presentation of $F$ that the set $R$ of relations uses. The Dehn
functions of two finite presentations of $F$ are equivalent, so Guba's quadratic bound holds for it. We write $w_p$ also for such a word. An
element fixing $\frac12$ maps $[0,\frac12]$ to itself, so the left and right subtrees of its reduced diagram are diagrams for its two
restrictions, which are elements of $F$ after affine rescaling, with fewer carets in total. Hence $F_{1/2}\cong F\times F$ is undistorted in
$F$: its elements of length $n$ in $F$ are words of length $O(n)$ in a fixed finite set of generators $h_j$ of $F_{1/2}$.

\emph{Relations.} Let $R$ consist of a finite presentation of $F$ on a generating set that includes $t$ and the $h_j$, the words
$a^2,b^2,(ab)^3$, the commutators $[x,h_j]$, and the commutators $[x,tyt^{-1}]$, for $x,y\in\{a,b\}$. They hold in $\mathcal L$: each $h_j$
fixes $\frac12$, and the copies at $\frac12$ and at $t(\frac12)=\frac34$ commute.

\emph{Area.} Let $p<q$ in $\mathbb D_m$ and $g=g_{pq}$. The elements $g^{-1}w_p$ and $t^{-1}g^{-1}w_q$ fix $\frac12$ and have length
$O(m)$, so they are represented by words $\omega,\omega'$ of length $O(m)$ in the $h_j$. The words $w_p^{-1}g\omega$ and $w_q^{-1}gt\omega'$ are null
in $F$
and have length $O(m)$, so Guba's quadratic Dehn function~\cite{Guba2006} fills each with $O(m^2)$ cells. Replacing the four
occurrences of $w_p^{\pm1}$ in $[w_pxw_p^{-1},w_qyw_q^{-1}]$ by $(g\omega)^{\pm1}$, and the four of $w_q^{\pm1}$ by $(gt\omega')^{\pm1}$, therefore
costs
$O(m^2)$ cells. Moving $x^{\pm1}$ across $\omega$ and $y^{\pm1}$ across $\omega'$ with the relations $[x,h_j]$ costs $O(m)$ cells and leaves
$g[x,tyt^{-1}]g^{-1}$, a conjugate of a relation. For $p>q$ the commutator is the inverse of the one with the two copies exchanged. So every
area is $O(m^2)$.

\emph{Profile.} Apply Theorem~\ref{thm:criterion} with $N=2^m-1$ and $\Lambda=Cm^2$, where $C\ge1$, and let $l$ be the largest
length of a word of $R$. Given $r$, let $m$ be the largest
integer with $2^mm^2\le r/(10^8C(2l-1))$; then $2^m$ is at least of order $r/\log^2r$, and $\log\sigma_E(r)\ge(2^m-1)/16$. Given
$e$, let $m$ be the largest integer with $2^{m/2}m^2\le1/(CK_\Delta e)$; then $2^m$ is at least of order
$(CK_\Delta)^{-2}e^{-2}/\log^4(1/e)$, and $\log D\ge(2^m-1)\Delta^2/12$.

\emph{No Houghton subgroup.} A torsion element of $\mathcal L$ maps to a torsion element of the torsion-free group $F$, so it lies in
$\bigoplus_{\mathbb D}S_3$, which has exponent six. Every $H_m$ contains a five-cycle. Finally, $\mathcal L$ embeds in $V$ by
Lemma~\ref{lem:gap-coding}.
\end{proof}

Since $\mathcal L$ contains $F$, whose soficity is open, this chunk may have no permutation models at all; if it has any, they are very large.
In particular, either $V$ is not sofic, or its models of this chunk need $2^{c\,r/\log^2r}$ points, close to the largest bound that
Theorem~\ref{thm:criterion} can give. If $F$ is amenable, so is $\mathcal L$, and then this chunk has finite profile. If $F$ is hyperlinear, so
is $\mathcal L$ (Proposition~\ref{prop:thompson-hyperlinear}), and then unitary models of the chunk exist at every defect.

Slofstra's bounds are of the same kind: they bound models that may not exist, since his group is not known to be hyperlinear. Any unitary
models of it need $\log D\gtrsim e^{-c}$ for every $c<\frac12$~\cite[Theorem~1.2 and the remark after it]{Slofstra2018}, and the proof
of~\cite[Corollary~1.3]{Slofstra2018} shows that any permutation models of a suitable chunk need $\log\sigma\gtrsim r^c$ for every $c<\frac14$.
These profiles may be infinite. The group $\mathcal L$ gives $e^{-2}$ and $r$ up to logarithms.

Since $\mathbb D$ is the coset space $F/F_{1/2}$, the group $\mathcal L=\bigl(\bigoplus_{F/F_{1/2}}S_3\bigr)\rtimes F$ has the shape
$\bigl(\bigoplus_{G/H}A\bigr)\rtimes G$ of the generalised wreath products of Kun and Thom~\cite[Theorem~A]{KunThom2026}. They prove such a
product non-sofic for $A=\Z/2$ when $H<G$ is infranormal and not normal, that is, the elements $g$ with $gHg^{-1}\subseteq H$ generate $G$, and
both $H$ and $G$ have property (T). Our input is different: cheap transport of non-abelian copies of $S_3$. Soficity and hyperlinearity pass to
wreath products whose acting group is sofic~\cite{HayesSale2018}. They pass to permutational wreath products when the acting group is sofic
(respectively hyperlinear) and the action is sofic~\cite[Theorems~3.6 and~3.8]{GKEP2025}, \cite{AlekseevBradford2026}. Since $F$ acts on
$\mathbb D$ transitively and faithfully, a sofic action of $F$ on $\mathbb D$ would already make $F$ sofic~\cite[Theorem~2.12]{GKEP2025}, so
these results give nothing here. Proposition~\ref{prop:thompson-hyperlinear} assumes only that $F$ is hyperlinear.

\begin{proposition}[Hyperlinearity transfer]\label{prop:thompson-hyperlinear}
The group $\mathcal L$ is hyperlinear if and only if $F$ is.
\end{proposition}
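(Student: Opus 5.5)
One direction is immediate: $F$ is a subgroup of $\mathcal L$, and hyperlinearity passes to subgroups. The plan is therefore to prove that if $F$ is hyperlinear then so is $\mathcal L=\bigl(\bigoplus_{\mathbb D}S_3\bigr)\rtimes F$. The difficulty is the one the preceding paragraph points out: the results of~\cite{GKEP2025,AlekseevBradford2026} need a hyperlinear (or sofic) action of $F$ on $\mathbb D$, and we only have approximate unitary representations of $F$ itself. Hyperlinearity of a countable group is equivalent to an embedding into an ultraproduct $\prod_\omega M_{n}(\mathbb C)$ preserving the trace, with $\tau(g)=0$ for $g\ne1$. So we aim to build, from such an embedding $\pi$ of $F$, an embedding of $\mathcal L$ into a (possibly different) tracial ultraproduct of matrix algebras. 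Equivalently, it is enough to embed $L(\mathcal L)$ trace-preservingly in $R^\omega$.

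The first idea is to induce from the stabilizer $F_{1/2}\cong F\times F$. Since $\mathbb D=F/F_{1/2}$, the lamp group is the induced object, and $\mathcal L$ sits inside $S_3\wr_{F/F_{1/2}}F$. I would realize $L(\mathcal L)$ as a crossed product $\bigl(\bigotimes_{\mathbb D}L(S_3)\bigr)\rtimes F$. Here $\bigotimes_{\mathbb D}L(S_3)$ is a hyperfinite algebra on which $F$ acts by permuting tensor factors. The key step is to approximate this Bernoulli-type action by finite data using only the approximate representation of $F$.

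Concretely, I would take approximate unitaries $U_g\in U(n)$ for the finitely many generators of $F$ and for the $h_j$. The point $\frac12$ is to be modelled by a spectral projection: choose a projection $P\in M_n$ that approximately commutes with $U_{h_j}$, and such that the conjugates $U_gPU_g^*$ for $g$ in a large finite set with distinct $g(\frac12)$ are almost orthogonal. This is a hyperlinear version of the Rokhlin-type structure for the coset space. Then we place lamps as $\bigotimes$ over a finite quotient of $S_3$-valued functions.

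The main obstacle is producing that projection $P$ from hyperlinearity of $F$ alone. I would obtain it from the structure $F_{1/2}=F_{[0,1/2]}\times F_{[1/2,1]}$ together with the fact that the commutant of $F_{1/2}$ in $L(F)$ is trivial. I do not expect this to work directly. The fallback is the standard trick for generalized wreath products: use amenability of the lamp group and of the action on lamps, and prove the permutational wreath product embeds in $L(F)\bar\otimes R$-type ultraproducts via a Connes–Kirchberg style ``$\omega$-approximate action''. Specifically, $F$ acts on the abelian algebra $\ell^\infty(\mathbb D)$, and the amenability of the action groupoid $\mathbb D\rtimes F$ reduces to amenability of the stabilizer's action. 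Since $\mathbb D$ is a single orbit with stabilizer $F\times F$, I expect to need, in the end, that $F\times F$ is hyperlinear and that the pair $(F,F_{1/2})$ admits an approximate coset model. Such a model should be extracted by transporting the approximate representation via $t$ and the $h_j$, in the same way the transport words $w_p$ were used in Proposition~\ref{prop:thompson}.
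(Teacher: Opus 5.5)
The easy direction is right. For the converse, though, the proposal stops where the work begins. You name an ``approximate coset model'' of $(F,F_{1/2})$ but never construct one, and both routes you sketch toward it fail as stated.

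\emph{The projection route.} A projection $P\in M_n$ that nearly commutes with the $U_{h_j}$ and has nearly orthogonal conjugates $U_gPU_g^*$ over $N$ cosets has normalized trace at most about $1/N$. So it is a finite model of the point $\frac12$, that is, of the permutation action of $F$ on $\mathbb D$, and nothing in hyperlinearity of $F$ is known to supply one. Worse, matrix models $U_g$ of $F$ retain no information about the supports of elements of $F$, so nothing in $M_n$ can locate ``near $\frac12$''.

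\emph{The groupoid fallback.} This invokes the wrong amenability. The groupoid of a transitive action is equivalent to its stabilizer $F_{1/2}\cong F\times F$, which is amenable exactly when $F$ is; that is open and is not your hypothesis. The Kun--Thom examples also show that soficity of $G$ and $H$ need not pass to a generalized wreath product, so some specific feature of $F$ has to enter.

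Your remark that $\lambda(F_{1/2})$ has trivial relative commutant in $L(F)$ points the wrong way. The copy of $S_3$ you need lives in the relative commutant of $\lambda(F_{1/2})$ in the ultrapower $L(F)^\omega$. It must be built from sequences that are asymptotically invariant under conjugation by $\lambda(F_{1/2})$.

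The missing idea is to model the lamp at $\frac12$ rather than the point, and to do so inside $L(F)^\omega$, where supports are visible. Hyperlinearity is then used only at the end, through $L(F)\hookrightarrow\mathcal R^\omega$ and a diagonal argument. Three facts about $L(F)$ make this work:
\begin{itemize}
\item for a dyadic interval $J$, $L(F_J)$ is a $\mathrm{II}_1$ factor;
\item algebras over disjoint intervals commute and have factorized trace;
\item conjugation by $\lambda(f)$ on $L(F_J)$ depends only on $f|_J$.
\end{itemize}
Just to the right of $\frac12$, an element of $F_{1/2}$ acts only through its slope $2^m$ there. So the symmetry to be averaged is a copy of $\mathbb Z$, which is amenable whether or not $F$ is.

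Concretely, take $\sigma\in F$ with slope $\frac12$ just right of $\frac12$, and pairwise disjoint intervals $J_k=\sigma^k(J_0)$ accumulating at $\frac12$. Transporting from $J_0$ by $\lambda(\sigma)^k$, take the regular representation $v_k$ of $S_3$ in a copy of $M_6$ inside $L(F_{J_k})$, and a uniform $Z_k$ in its relative commutant, independent over $k$. Select $k$ in the F{\o}lner window $\{n,\dots,2n-1\}$ by the spectral event $p_k$ that $Z_k$ is largest. This gives an honest representation $V_n(g)=\sum_kp_kv_k(g)$ with $\tau(V_n(g))=0$ for $g\ne1$. An $h\in F_{1/2}$ of slope $2^m$ shifts the window by $m$, which moves $V_n(g)$ by $O(\sqrt{|m|/n})$ in $2$-norm.

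Hence $V=(V_n)_\omega$ commutes with $\lambda(F_{1/2})$. The conjugates $V_x=\lambda(f_x)V\lambda(f_x)^{-1}$ are well defined, and they commute for distinct $x$ because their supports become disjoint. The trace vanishes on every nontrivial element of $\mathcal L$; for a nontrivial $F$-part, use a point it moves outside the shrinking supports. This gives the trace-preserving embedding $L(\mathcal L)\hookrightarrow L(F)^\omega$ that the argument needs.
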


The proof, in Appendix~\ref{app:hyperlinear}, needs a copy of $S_3$ in the ultrapower $L(F)^\omega$ at $\frac12$ that commutes with the
stabilizer $F_{1/2}$. Near $\frac12$ the stabilizer acts by dilations. The proof therefore places a copy of $S_3$ in each of a sequence of
intervals shrinking to $\frac12$, and selects among them by a random index whose law is invariant under dilation in the limit.

\part{The channel (Theorem B)}
\section{Causal services and gadget channels}\label{sec:channel}\label{sec:channel-setting}

Part~II can be read on its own by a reader who comes from quantum information. From Part~I it needs three definitions, all in
Section~\ref{sec:criterion}. A \emph{chunk} of a group is a finite subset containing $1$, with the partial multiplication it inherits. A
\emph{sofic model} of the chunk at accuracy $1/r$ assigns a permutation of a finite set to each element, with products correct and distinct
elements separated on all but a fraction $1/r$ of the points. The \emph{profile} $\sigma_E(r)$ is the least number of points of such a model.
The configuration lamps enter mainly through two facts, quoted where they are used: their pair areas (Theorem~\ref{thm:pair-area}) and their
profile, Theorem~A in the precise form of Theorem~\ref{thm:main-group}. Concretely, the lower bounds use one black box, the transport of
Section~\ref{sec:lamp-channel}: for each $N$ there are $2^N$ words of length at most $63N^2$ whose copies of $S_3$ commute pairwise at area at
most $C_1N^\kappa$, with $\kappa=6+\log107$ (the introduction says where $107$ comes from). The upper bounds use another, the hash models
behind Theorem~A(b), which give sofic models with $\exp(O(r\log r))$ points.

A finite set of relations of a group becomes a quantum channel of fixed dimension. Each relation contributes a few small unitary blocks; the
channel evaluates the blocks in the regular representation of the group and traces out the group von Neumann algebra. Every channel on
$\mathbb C^d$ has a Stinespring dilation, a unitary on the input and a pure environment of dimension at most $d^2$ that is discarded after
use. Applied $n$ times with a fresh environment each time, the channel needs almost no memory, but the discarded environments carry away
entropy that fresh pure qubits paid for. We charge for that purity, and the memory is never reset. The question is how much memory a device
needs when its purity is far below what the Stinespring device spends.

In~\cite{DouglasMghirbiB} we showed that the memory needed to apply such a channel repeatedly, releasing each output before the next input
arrives, is bounded below by the size of approximate representations of its relations. Part~II sharpens that link, so that it loses a single
square root, and applies it to the configuration lamps. The outcome is Theorem~B, stated precisely as Theorem~\ref{thm:main-channel}: an
explicit channel in the closure of finite tracial channels whose repeated use with logarithmic purity needs memory $n^{1-o(1)}$, while memory
$O(n\log n)$ suffices. The lower bounds come from the sharpened link, the weighted one-round theorem of Section~\ref{sec:weighted-statement}.
The upper bounds come from the sofic models of $\Gamma$, which give devices directly (Proposition~\ref{prop:sofic-upper}) and, at the rank
rate, through Theorem~\ref{thm:streaming}.

We recall the framework of~\cite{DouglasMghirbiB} and keep its notation.

\paragraph{Channels.} Channels act on $M_d=M_d(\mathbb C)$. The normalized Choi state of a channel $\Psi$ is
$\mathcal J(\Psi)=(\Psi\otimes\mathrm{id})(|\Omega\rangle\langle\Omega|)$, with $|\Omega\rangle=d^{-1/2}\sum_i|ii\rangle$, and
$\ddia(\Psi,\Psi')=\frac12\|\Psi-\Psi'\|_\diamond$ is the diamond distance. We write $\tau_M$ for the normalized trace on $M_M$. A channel $\Psi$ has a
\emph{finite tracial factorization} with bath dimension $M$ if $\Psi(X)=(\mathrm{id}\otimes\tau_M)[U(X\otimes I_M)U^*]$ for a unitary $U$ on
$\mathbb C^d\otimes\mathbb C^M$, and $\cK_d$ denotes the closure of the set of channels on $M_d$ with a finite tracial factorization of some bath
dimension. It is also the closure of the convex hull of that set~\cite[Proposition~3.8(ii)]{Musat2025}, and it consists of unital channels. Channels
with a finite tracial factorization are exactly the noisy operations of~\cite{HHO2003} from $M_d$ to $M_d$, and $\cK_d$ is their closure.

\paragraph{Causal services.} A \emph{causal service} of a channel $\Phi$ on $M_d$ for $n$ rounds specifies a \emph{device}, its physical
realization, which serves one use of the channel in each round. The device has a memory register of $Q$ qubits, started in a fixed state
$\beta_0$. In each round $t$ it applies a unitary $U_t$, fixed in advance, to the input and the memory and then releases the $d$-dimensional output. After each round
but the last, a prescribed set of memory qubits is exported for good and replaced by the same number of imported qubits in a fixed state.
That state may be correlated within one step but is independent of everything else. An observer chooses the inputs adaptively, possibly
entangled with references of its own. The \emph{error} of the service is the largest half trace distance between the observer's final state in
this experiment and in the ideal experiment with $n$ independent uses of $\Phi$. Figure~\ref{fig:device} shows the device and its rounds.

The resources are the memory $Q$, the exchange $J$ and the purity $P+C$. The \emph{exchange} is the total number of exported qubits;
exporting is the only way the device can discard a qubit without handing it to the observer. The \emph{purity} is an entropy deficit in bits,
not $\Tr\rho^2$. Here $P=Q-S(\beta_0)$ is the deficit of the initial memory, and $C$ is the sum over the imports of their number of qubits
minus their entropy. A pure qubit is a resource, since it costs work to prepare, while a maximally mixed one is free; the Stinespring device
of a channel with Choi rank $r_*$ spends about $\log r_*$ bits of purity in every round. Every retained qubit counts; there is no free reset,
measurement record or postselection.

\begin{figure}[t]
\centering
\begin{tikzpicture}[x=1cm,y=1cm,font=\small,>=Stealth,
  box/.style={draw,fill=white,minimum width=1.1cm,minimum height=1.5cm,inner sep=2pt},
  wire/.style={->,thick},
  xw/.style={->,thick,accent},
  lbl/.style={font=\footnotesize,inner sep=1pt}]
\colorlet{accent}{blue!50!black}
\fill[accent!12] (0,-0.5) rectangle (13.7,0.5);
\draw[accent!60] (0,-0.5) -- (13.7,-0.5) (0,0.5) -- (13.7,0.5);
\node[anchor=west] at (0.2,0) {$\beta_0$};
\node[lbl,anchor=north west,align=left] at (0,-0.65) {memory: $Q$ qubits\\ purity $P=Q-S(\beta_0)$};
\node[draw,rounded corners,align=center,inner ysep=4pt,anchor=south west] (obs) at (2.8,2.62)
  {observer: keeps a quantum reference, chooses each input adaptively\\
   \footnotesize error: half trace distance of its final state from $n$ independent uses of $\Phi$};
\foreach \x/\t in {3.5/1,6.7/2,11.2/n}{
  \node[box] at (\x,0.15) {$U_{\t}$};
  \draw[wire] (\x-0.3,2.62) -- (\x-0.3,0.9);
  \draw[wire] (\x+0.3,0.9) -- (\x+0.3,2.62);
  \node[lbl,anchor=east] at (\x-0.42,2.15) {input $\t$};
  \node[lbl,anchor=west] at (\x+0.42,1.35) {output $\t$};
}
\node at (9.05,0.15) {$\cdots$};
\node at (9.05,1.75) {$\cdots$};
\foreach \x in {4.7,7.9,9.9}{
  \draw[xw] (\x,-0.5) -- (\x,-1.5);
  \draw[xw] (\x+0.5,-1.5) -- (\x+0.5,-0.5);
}
\node[lbl,anchor=east] at (4.6,-1.0) {export};
\node[lbl,anchor=west] at (5.3,-1.0) {import};
\node[lbl,anchor=north,align=center] at (4.95,-1.65) {exchange $J$: the number of qubits exported, for good\\ purity $C$: qubits minus entropy, summed over the imports};
\end{tikzpicture}
\caption{A causal service of $\Phi$ for $n$ rounds. In each round the input enters, a fixed unitary $U_t$ acts on it and the memory, and the output
is released before the next input arrives; after each round but the last, a prescribed set of memory qubits is exported for good and replaced by
the same number of fresh qubits in a fixed state. The resources are the memory $Q$, the exchange $J$, the purity $P+C$, and the error, the
largest half trace distance between the observer's final state and its final state after $n$ independent uses of $\Phi$.}
\label{fig:device}
\end{figure}

\paragraph{Gadget channels.} Let $G$ be a group generated by a finite set $S$, and let $R$ be a finite set of words in $S^{\pm1}$, each of length at
least two and each representing $1$ in $G$. Write $r_i=\sigma_{i,1}\cdots\sigma_{i,\ell_i}$ ($1\le i\le r_0$) for the words of $R$, and put
$s=|S|$ and $\Lambda=\sum_i\ell_i$. The \emph{symbols} are the $2s$ letters $x^{\pm1}$, $x\in S$, and the proper prefixes
$p_{i,l}=\sigma_{i,1}\cdots\sigma_{i,l}$, $2\le l\le\ell_i-1$, one for each pair $(i,l)$; we also put $p_{i,1}=\sigma_{i,1}$ and let $p_{i,\ell_i}$
be the empty word. The \emph{triangles} are the triples $(p_{i,l-1},\sigma_{i,l},p_{i,l})$, $2\le l\le\ell_i$, and $(x,x^{-1},\varnothing)$,
$x\in S$. For unitaries $\lambda(w)$ assigned to the symbols, with $\lambda(\varnothing)=I$, let
\[
U_\lambda=\mathrm{diag}\bigl(I,(\lambda(w))_{w}\bigr)\ \oplus\ \bigoplus_{(x,y,z)}\frac1{\sqrt2}\begin{pmatrix}I&\lambda(y)\\ \lambda(x)&-\lambda(x)\lambda(y)\end{pmatrix},
\]
where $w$ runs over the symbols and $(x,y,z)$ over the triangles. Each triangle block equals $\mathrm{diag}(I,\lambda(x))\,(H\otimes I)\,\mathrm{diag}(I,\lambda(y))$
with $H$ the Hadamard matrix, so $U_\lambda$ is unitary for all unitary $\lambda$. A triangle is a Hadamard test of one step $xy=z$ of one
relation. Its block is unitary whatever is put in it, and its lower right entry $-\lambda(x)\lambda(y)$ is compared, through the Choi support
defined next, with the entry that carries the label of $z$. A device that reproduces the channel must pass every test with the operators it
holds in memory. There are $2s+\Lambda-2r_0$ symbols and $\Lambda-r_0+s$ triangles, so the system dimension is
\begin{equation}\label{eq:gadget-dimension}
d=1+(2s+\Lambda-2r_0)+2(\Lambda-r_0+s)=1+4s+3\Lambda-4r_0 .
\end{equation}
The \emph{gadget channel} $\Phi_G$ on $M_d$ evaluates each symbol $w$ in the left regular representation, $\lambda(w)=\lambda([w])$ with $[w]\in G$ the
element that $w$ represents, and traces out the group von Neumann algebra $L(G)$ with its canonical trace $\tau$:
$\Phi_G(X)=(\mathrm{id}\otimes\tau)[U_\lambda(X\otimes1)U_\lambda^*]$. In particular $\Phi_G$ is factorizable, through $(L(G),\tau)$. Since
the relations hold, the lower right entry of the block of a triangle
$(x,y,z)$ is $-\lambda([z])$.

We write the gadget as $U_\lambda=\sum_{a\in\mathcal T}c_aE_a\otimes\lambda([w_a])$, where $\mathcal T$ is the set of \emph{slots}, the nonzero
entries, $E_a$ is the matrix unit of slot $a$, and $c_a\in\{1,\pm2^{-1/2}\}$. The word $w_a$ is empty, a symbol, or the concatenation $xy$ in
the lower right corner of a triangle. The elements $[w_a]$ are the \emph{labels}. With $A_g=\sum_{[w_a]=g}c_aE_a$, the matrices $A_g$ have disjoint
supports, $\Phi_G(X)=\sum_gA_gXA_g^*$ and $\mathcal J(\Phi_G)=\frac1d\sum_g|A_g\rangle\!\rangle\langle\!\langle A_g|$, because
$\tau(\lambda(g)\lambda(h)^*)=\delta_{gh}$. The Choi rank $r_*$ of $\Phi_G$ is therefore the number of distinct labels. Every label is carried by an
\emph{anchor}, a diagonal slot with coefficient $1$ whose row and column contain no other slot. The anchor of the label $1$ is the first
diagonal entry, and every other label has a symbol entry as anchor, since the lower right entry of a triangle carries the label of a prefix
symbol or $1$. Feeding one half of a maximally entangled state on the $r_*$ anchor positions gives a pure probe whose output is flat of rank
$r_*$, and $\log r_*$ is the largest entropy of a complementary output of $\Phi_G$~\cite[Proposition~9.1]{DouglasMghirbiB}.

\paragraph{The rank rate.} The probe fixes the least exchange that bounded memory allows. Over $n$ rounds the ideal experiment hands the
observer $n\log r_*$ bits of entropy. The observer's final state is a marginal of a pure state whose other half consists of the final memory,
the exported qubits and the purifiers of the initial memory and of the imports. Those are at most $2(Q+J)$ qubits in all, so a device hands
over at most $2(Q+J)$ bits. With memory sublinear in $n$ it must export about $\frac12\log r_*$ qubits per round at least. This rate is the optimal
exchange rate of a gadget channel in $\cK_d$~\cite[Propositions~3.1 and~9.1]{DouglasMghirbiB}, and we call it the \emph{rank rate}: a service is
at the rank rate if $J\le\frac n2\log r_*$. For the lamp channel, Theorem~\ref{thm:streaming} attains it with memory $n^{1/2+o(1)}$.

\paragraph{One round.} Let $U$ be a unitary on $\mathbb C^d\otimes\mathbb C^M$ and $\rho$ a state on $\mathbb C^M$, and put
\[
\Psi(X)=\mathrm{Tr}_M\,U(X\otimes\rho)U^*,\qquad T(\rho)=\mathrm{Tr}_d\,U(\tau_d\otimes\rho)U^* .
\]
Relative to a channel $\Phi$ with Choi support projection $\Pi_\Phi$, the \emph{support leakage} of $\Psi$ is
$\ell=\Tr[(I-\Pi_\Phi)\mathcal J(\Psi)]$. The \emph{entropy production} of the round is $\Delta S=S(T(\rho))-S(\rho)$, which is nonnegative because
$T$ is unital. A service produces such rounds, and the next result, from~\cite[Theorem~5.2]{DouglasMghirbiB}, finds one in which both quantities are
small. Appendix~\ref{app:imported} reproduces its proof.

\begin{theorem}[Sequential extraction]\label{thm:extraction}
Let a causal service of a channel $\Phi$ on $M_d$ for $n$ rounds have memory $Q$, error $\epsilon<1/2$ and $P+C\le B$, and put
\[
t_n=\frac{B+\epsilon+h_2(\epsilon)+\epsilon\log n}{n(1-\epsilon)} .
\]
Then some round $t$, with its unitary $U_t$ and the memory state $\rho_t$ entering it, defines a channel $\Psi_t(X)=\mathrm{Tr}_M\,U_t(X\otimes\rho_t)U_t^*$
with bath dimension $M=2^Q$, $\ddia(\Psi_t,\Phi)\le\epsilon/(1-\epsilon)$, and support leakage $\ell_t$ and entropy production $\Delta S_t$ satisfying
$\ell_t+\Delta S_t\le t_n$.
\end{theorem}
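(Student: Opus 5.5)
The plan is an entropy-accounting argument over the whole experiment, followed by averaging over rounds. Let the observer feed, in each round, one half of a maximally entangled state $|\Omega\rangle$ on $\mathbb C^d\otimes\mathbb C^d$, keeping the other half as a reference. In the ideal experiment the $n$ reference--output pairs are in the product state $\mathcal J(\Phi)^{\otimes n}$. In the real experiment, round $t$ acts on the reference--output pair through the channel $\Psi_t$, where $\Psi_t$ is built from $U_t$ and the memory state $\rho_t$ entering that round. Since the inputs are fresh maximally entangled halves independent of the memory, $\rho_t$ is well defined and the pair's marginal is exactly $\mathcal J(\Psi_t)$. The error hypothesis against adaptive observers with references gives $\ddia(\Psi_t,\Phi)\le\epsilon/(1-\epsilon)$ for every $t$. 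The $(1-\epsilon)$ is slack that I expect to come from a conditioning step: restrict to the event that earlier rounds behaved, or use a one-round test that replays the experiment up to round $t$.

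The core is a global entropy balance. Each round's unitary acts on input, memory and imported qubits. The round map $T_t$ is unital on the memory side, since the input is maximally mixed on average. So the entropy production $\Delta S_t=S(T_t(\rho_t))-S(\rho_t)$ is the increase in memory entropy that is not supplied from outside. Summing over rounds, I will show that
\[
\sum_t\Delta S_t\le P+C+(\text{entropy of the exported and output systems carried away beyond the flat budget}),
\]
up to the terms caused by error. Concretely, I will track the joint entropy of memory plus all previously emitted systems. Unitarity keeps the total fixed. The initial deficit $P$ and the import deficits $C$ are the only sources of purity. Each $\Delta S_t$ consumes purity, so $\sum_t\Delta S_t\le B$ plus a correction.

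The leakage enters as follows. Output--reference pairs whose Choi state has weight $\ell_t$ outside $\Pi_\Phi$ can be distinguished from the ideal ones. Cumulatively, the outputs' deviation from the support costs entropy in the comparison with $\mathcal J(\Phi)^{\otimes n}$. I plan to bound $\sum_t\ell_t$ by a relative-entropy or monotonicity argument: the real final state is $\epsilon$-close to a state supported on $\Pi_\Phi^{\otimes n}$. Fannes--Audenaert continuity then converts the distance $\epsilon$ into an entropy defect of at most $\epsilon n\log(\cdot)$-type terms. The precise bookkeeping is what yields $\epsilon+h_2(\epsilon)+\epsilon\log n$.

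The crucial point is to keep $\log d$ out of the correction. I would measure membership in the support round by round. This treats the indicator sequence of failures as a classical variable, whose entropy is at most $h_2(\epsilon)+\epsilon\log n$, because with probability $1-\epsilon$ there are no failures and otherwise the first failure index takes $n$ values. Combining the two parts gives
\[
(1-\epsilon)\sum_t(\ell_t+\Delta S_t)\le B+\epsilon+h_2(\epsilon)+\epsilon\log n.
\]
Averaging over $t$ then produces a round with $\ell_t+\Delta S_t\le t_n$.

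I expect the main obstacle to be this combined inequality, specifically charging leakage as entropy. Leakage is a probability, not an entropy, so I must show that weight $\ell_t$ outside the Choi support forces a matching entropy cost against the purity budget. The first thing I would try is a gentle projective test onto $\Pi_\Phi$ after each round. Passing the test keeps the state within $\sqrt{\text{error}}$ of the ideal one, while each failure with probability $\ell_t$ adds at least $\ell_t$ bits, by the $h_2$ concavity bound, to an entropy ledger that only the purity $B$ can pay for.
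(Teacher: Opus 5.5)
\textbf{Gap: how leakage enters the bound.} The gap is in the step you flag yourself, charging leakage as entropy against the purity budget, and the mechanism you propose runs the wrong way. Recording the outcome of a test on the output and the reference does not add entropy to the memory that the purity must pay for. Conditioning on that outcome can only lower the average conditional entropy of the memory, by at most $h_2(\ell_t)$. That is a correction to the ledger, not a charge.

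The paper never converts leakage into entropy. The missing idea is to define $\rho_t$ and $\ell_t$ conditionally on the tests of rounds $1,\dots,t-1$ having passed, with failed branches frozen. Then $\ell_t$ is the hazard rate of the first failure time $F$, and $p_t=\Pr(F\ge t)$. The test has perfect completeness, since the ideal experiment never fails, so $\sum_tp_t\ell_t=\Pr(F\le n)\le\epsilon$ holds directly. This is the lone $\epsilon$ in the numerator of $t_n$.

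Separately, the ledger on the surviving branch gives $\sum_tp_t\Delta S_t\le P+C+\sum_tp_th_2(\ell_t)$. The chain rule identifies $\sum_tp_th_2(\ell_t)=H(F)\le h_2(\epsilon)+\epsilon\log n$. Adding the two bounds and using $\sum_tp_t\ge n(1-\epsilon)$ finishes the proof. So your first-failure entropy estimate is the right quantity, but it enters as the price of conditioning, not as a bound on leakage.

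\textbf{Conditioning is essential, and no gentle test is needed.} With the unconditioned memory state of your first paragraph, consider a device that with probability $\epsilon$ switches to a bad mode and then leaks in every later round. Its leakage is of order $\epsilon$ in every round, so no averaging can give $O(B/n)$. Conditioning on earlier passes removes that branch after its first failure.

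A gentle test is also unnecessary, and it would cost you a square root. Because the ideal final state lies entirely in the all-pass block, the error bound gives $\frac12\|p_t\sigma-\omega\|_1+\frac{1-p_t}2\le\epsilon$. Hence $p_t\cdot\frac12\|\sigma-\omega\|_1\le\epsilon$, and $\ddia(\Psi_t,\Phi)\le\epsilon/(1-\epsilon)$ with no loss. The Fannes--Audenaert route would bring in dimension-dependent terms, as you suspected, and is not needed either.
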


The bath of this round is the memory itself, of dimension $2^Q$, so $S(\rho_t)\le Q$. Every lower bound on memory in Part~II is a lower bound
on the entropy of a memory state that realizes one good round.

\paragraph{A worked example: the gadget of $S_3$.} Take $G=S_3$ with $S=\{a,b\}$ and $R=\{a^2,\,b^2,\,(ab)^3\}$, so that $s=2$, $r_0=3$ and
$\Lambda=2+2+6=10$. The symbols are the four letters $a^{\pm1},b^{\pm1}$ and the four proper prefixes $ab$, $aba$, $abab$, $ababa$ of
$(ab)^3$; the words $a^2$ and $b^2$ have no proper prefix of length two or more. The triangles are the nine triples $(a,a,\varnothing)$,
$(b,b,\varnothing)$, $(a,b,ab)$, $(ab,a,aba)$, $(aba,b,abab)$, $(abab,a,ababa)$, $(ababa,b,\varnothing)$, $(a,a^{-1},\varnothing)$ and
$(b,b^{-1},\varnothing)$. Formula~\eqref{eq:gadget-dimension} gives $d=1+8+2\cdot9=27$. The labels are the elements of $S_3$ that the slots
carry, namely $1$, $a$, $b$, $ab$, $aba$, $abab=ba$ and $ababa=b$; these are all six elements, so $r_*=6$. Since $S_3$ is finite, its group
von Neumann algebra is the regular representation on $\mathbb C^6$ with the normalized trace. The channel $\Phi_{S_3}$ therefore has a finite
tracial factorization with bath dimension $6$ by definition, and one round with a flat bath on six points reproduces it exactly. Serving $n$
rounds costs little more. The regular action of $S_3$ on itself is a model with no bad points, and Theorem~\ref{thm:streaming} with $k=k'=6$ gives a
service at the rank rate with purity at most $2$ and memory $O(\log n)$. The lamp channel of Section~\ref{sec:lamp-channel} adds five letters
and twenty-six words to these three. What those words cost a device is what Theorem~B measures.
\section{The weighted one-round theorem}\label{sec:weighted-statement}

The engine of Part~II reads an approximate representation of the relations directly off the memory state of one round, weighted by that
state, and turns many cheaply commuting copies of $S_3$ into entropy of the memory. We state it here with its amplified form; the proof is in
Appendix~\ref{sec:weighted-proof}. Transport words $u_1,\dots,u_N$ over $S^{\pm1}$ have \emph{length at most $L$} and \emph{pairwise areas at
most $A$} if $|u_i|\le L$ for all $i$ and
\[
\Area_R\bigl([u_ixu_i^{-1},\,u_kyu_k^{-1}]\bigr)\le A\qquad(i\ne k,\ x,y\in\{a,b\}),
\]
where $\Area_R$ is the area in the free group on $S$, as in Section~\ref{sec:criterion}.

\begin{theorem}[Weighted one-round theorem]\label{thm:weighted}
Let $G$, $S$ and $R$ be as in Section~\ref{sec:channel-setting}, and suppose that $S$ contains $a$, $b$ and $j$, that $a^2$, $b^2$, $(ab)^3$ and
$j^{-1}ab$ lie in $R$, and that $j\ne1$ in $G$. Put $m=\max\{6,\ell_1,\dots,\ell_{r_0}\}$. Let $u_1,\dots,u_N$ be transport words of length at
most $L$ and pairwise areas at most $A$. Let $\Psi(X)=\mathrm{Tr}_M\,U(X\otimes\rho)U^*$ be one round with $\ddia(\Psi,\Phi_G)\le1/8$, support
leakage $\ell$ relative to $\Phi_G$ and entropy production $\Delta S$, and put $\nu=\ell+\Delta S$. If
\begin{equation}\label{eq:weighted-condition}
2^{25}\,m^2\sqrt d\,\sqrt N\,(A+L+1)\sqrt\nu\ \le\ 1,
\end{equation}
then $S(\rho)\ge N/16$.
\end{theorem}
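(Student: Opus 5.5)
The plan is to read an approximate representation of $G$ off the memory state $\rho$ of the single round, measured in the $\rho$-weighted norm $\|X\|_\rho=\Tr(X^*X\rho)^{1/2}$. The entropy bound should then follow from the weighted form of the last step of the counting criterion (Theorem~\ref{thm:criterion}), with $\rho$ in place of the normalized trace. The proof has three stages: extraction, propagation of defects through areas, and entropy counting.

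\emph{Extraction.} Write $U=\sum_{\alpha,\beta}E_{\alpha\beta}\otimes U_{\alpha\beta}$ with $U_{\alpha\beta}$ acting on $\mathbb C^M$. Small support leakage $\ell$ should force the Kraus operators of $\Psi$, namely $(I\otimes\langle k|)U(I\otimes\rho^{1/2}|k'\rangle)$, to lie close to the span of the $A_g$. Equivalently, $U_{\alpha\beta}\rho^{1/2}$ should be close, in Hilbert--Schmidt norm, to $c_a V_{[w_a]}\rho^{1/2}$ on the slots $a$ and to $0$ off them. The leakage controls the squared error, summed over entries, by $d\,\ell$. Here $V_g$ is defined as the bath operator in the anchor slot of the label $g$, so it is the same operator for all slots with that label. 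Since $T$ is unital, $\Delta S=S(T(\rho))-S(\rho)$ dominates a relative entropy. Through Pinsker and the standard inequality $\|[V,\rho^{1/2}]\|_2^2\lesssim$ (relative entropy defect), small $\Delta S$ should make each $V_g$ nearly unitary in $\|\cdot\|_\rho$ and nearly commuting with $\rho^{1/2}$. Unitarity of the rows and columns of $U$ through an anchor, which contain no other slot, then gives $V_g^*V_g\approx I$ in $\rho$-weight.

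\emph{Triangles and areas.} Apply unitarity of $U$ to the Hadamard block of a triangle $(x,y,z)$. The lower right entry is $-V_xV_y$ by the block factorization, and by the labelling it must equal $-V_{[z]}$ up to leakage. This yields $\|V_xV_y-V_z\|_\rho\le C\sqrt{d\nu}$. Chaining along a relator of length at most $m$, each relator of $R$ has defect $O(m\sqrt{d\nu})$; in particular $V_j\approx V_aV_b$ through $j^{-1}ab$. Near-commutation with $\rho^{1/2}$ is what lets conjugates and products behave in the weighted norm as in the tracial one, each multiplication costing a factor of order $m$. Consequently a word of area $A$ and conjugating length $L$ has defect $O(m^2(A+L+1)\sqrt{d\nu})$.

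\emph{Entropy counting.} This gives $N$ copies of $S_3$, namely $(V_{u_i}V_aV_{u_i}^{*},\,V_{u_i}V_bV_{u_i}^{*})$, that commute pairwise up to $\delta=O(m^2\sqrt d(A+L+1)\sqrt\nu)$. Their three-cycle is displaced from $I$ by a constant: since $j\ne1$ and $\ddia(\Psi,\Phi_G)\le1/8$, the slot labelled $j$ cannot carry the identity in $\rho$-weight. I would then round each copy to an exact $S_3$ action, using finite-group averaging in the weighted norm, and run the entropy lemma of~\cite[Section~4]{DouglasMghirbiA} with respect to $\rho$. Nearly commuting displaced $S_3$ copies force roughly a constant amount of entropy each, provided $\sqrt N\,\delta$ is small, which is exactly condition~\eqref{eq:weighted-condition}, with $2^{25}$ absorbing the constants. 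The result is $S(\rho)\ge N/16$.

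The main obstacle is the weighted setting itself. Products, conjugations and the rounding step are controlled only if the extracted operators nearly commute with $\rho^{1/2}$, and this must be derived from $\Delta S$ with only a square-root loss. I would first prove a lemma that $\|[V_g,\rho^{1/2}]\|_2\lesssim\sqrt{\Delta S+\ell}$ via the unitality of $T$ and a Pinsker-type bound. After that, I would redo the tracial arguments for $\|\cdot\|_\rho$ step by step.
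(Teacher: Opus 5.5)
Your outline matches the paper's: extraction in the $\rho$-weighted norm, then relator defects pushed through areas, then entropy counting. Your leakage and anchor-unitarity steps are also right. But three steps fail as written, and repairing them is most of the proof.

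\emph{Extraction.} A round is determined only up to $U\mapsto(I\otimes W)U$ with $W$ a unitary of the bath. This changes neither $\Psi$ nor $\ell$ nor $\Delta S$, since $S(WT(\rho)W^*)=S(T(\rho))$, but it replaces every anchor block $V_g$ by $WV_g$. So none of your three extraction claims can follow from the hypotheses: $\|[V_g,\rho^{1/2}]\|_2\lesssim\sqrt{\nu}$, $V_xV_y\approx V_z$, and ``the slot of $j$ does not carry $I$''. None of them is invariant under a transformation that preserves all the hypotheses. Likewise, the lower right entry of a triangle block of an arbitrary $U$ is not $-V_xV_y$; that factorization belongs to the ideal gadget only.

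The paper does three things here. It first rotates the eigenbasis of $T(\rho)$ onto that of $\rho$, with eigenvalues in the same decreasing order. It then takes polar parts $V_w$ of the anchor blocks and works with the gauge-invariant $W_w=V_\varnothing^*V_w$. The triangle relation comes from the orthogonality of the two columns of a triangle block, which gives $V_\varnothing^*V_y\approx V_x^*V_z$, that is $W_xW_y\approx W_z$. The displacement is that of $W_j$, from the coherence between the anchors of $\varnothing$ and $j$, which $\ddia(\Psi,\Phi_G)\le1/8$ keeps small.

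Going through Pinsker, as you suggest, loses exactly the square root you need. Pinsker bounds $\sum_{j,i}t_{ji}|\lambda_i-\mu_j|$ by $O(\sqrt{\Delta S})$, and via $(\sqrt x-\sqrt y)^2\le|x-y|$ this gives commutators of order $(\Delta S)^{1/4}$, which \eqref{eq:weighted-condition} does not control. The paper uses instead the doubly stochastic $t_{ji}=\frac1d\|U_{ji}\|_F^2$ and the identity $D(p\|q)=\Delta S$ for $p_{ji}=t_{ji}\lambda_i$, $q_{ji}=t_{ji}\mu_j$. It then applies the Hellinger bound $\sum_{j,i}(\sqrt{p_{ji}}-\sqrt{q_{ji}})^2\le\Delta S\ln2$, followed by Birkhoff's theorem and the rearrangement inequality. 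This yields $\|[U,I\otimes\rho^{1/2}]\|_F^2\le4d\,\Delta S\ln2$.

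\emph{Propagation through areas.} In the weighted norm conjugation is not free: $E(uWu^*)$ can exceed $E(W)$ by $2\chi(u)$, and $\chi(u)$ grows linearly with $|u|$. Writing the commutator as a product of $A$ conjugates of relators charges each conjugator once per relator, and a conjugator can have length of order $mA+L$. With only an area bound this gives a defect of order $A(mA+L)$ times the letter defect, which is quadratic in $A$. Near-commutation alone therefore does not give your $O(m^2(A+L+1))$. The paper's charged-area lemma peels a van Kampen diagram one boundary face at a time, so each face costs $(2m-1)e$ and each vertex costs $2e$ only once. This gives $E(\phi(w))\le(3m\Area_R(w)+|w|)e$.

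\emph{Entropy counting.} The entropy lemma of \cite[Section~4]{DouglasMghirbiA} does not carry over to $\rho$ by transcription. Its estimate bounds the entropy of the twirled state inside the range of the embedding by the logarithm of that range's rank, and this equals the entropy only for a flat state. Rounding should also keep all $N$ copies on $\mathbb C^M$ with the same $\rho$. The paper rounds in place by functional calculus, taking the sign of the real part and then the nearest cube root. It then proves a new weighted sector bound by measuring the rounded three-cycles one after another on the system half of the purification $\xi$:
\begin{itemize}
\item each outcome is nearly a fair coin, because $x_i$ flips it;
\item the reference nearly knows each outcome, because $c_i$ nearly commutes with $\xi$;
\item so $S(\rho)\ge I(A_1\cdots A_N:\mathrm{Ref})$ gains about a constant per copy, by Fano's inequality.
\end{itemize}
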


The theorem reads the round at relator defect of order $\sqrt\nu$, a single square root of the quantity that Theorem~\ref{thm:extraction}
controls; a service with purity $O(\log n)$ is thus read at defect about $n^{-1/2}$. It does so without tracialization, working with the
actual memory state $\rho$, which need not be flat. The general correspondence of~\cite{DouglasMghirbiB} instead replaces the round by a nearby
finite tracial mixture, which costs a second square root and reads the profile at defect about $n^{-1/4}$. That paper carried the weighted
route out for a single gadget, built from Houghton's group. Theorem~\ref{thm:weighted} holds for every relation set that contains the
relations of $S_3$ and a name for $ab$.

One round reveals a representation because the Choi support of $\Phi_G$ records which slots carry the same label. A round with small leakage
must therefore put nearly the same bath operator at every slot of a given label; the anchors make these operators unitaries, and the
triangle tests make them satisfy the relations. Small entropy production makes them nearly commute with $\sqrt\rho$, and that is what lets
every error be measured in the norm weighted by the memory state rather than in the flat one. Nearly commuting copies of $S_3$ then cost
entropy. Measure the $N$ three-cycles one after another on the memory. Each outcome is nearly a fair coin, because the involution of its copy
flips it, yet a purifying reference nearly knows it, because the copy nearly commutes with $\sqrt\rho$. Each copy thus contributes about one
bit of mutual information with the reference, and that mutual information is at most $S(\rho)$.

Applied to a family of transport words that grows with a parameter, Theorem~\ref{thm:weighted} gives a lower bound on the entropy of one round
in terms of $\nu$ alone.

\begin{corollary}[Amplification]\label{cor:amplification}
Keep the assumptions of Theorem~\ref{thm:weighted} on $G$, $S$, $R$ and the round, and let $K\ge1$ and $\kappa\ge0$.
\begin{enumerate}[label=(\alph*),itemsep=1pt]
\item Suppose that for every integer $k\ge1$ there are $2^k$ transport words of length at most $L_k$ and pairwise areas at most $A_k$, with
$A_k+L_k+1\le Kk^\kappa$. Put $C_R=2^{50}m^4dK^2$. There is $y_0$ depending only on $\kappa$ such that
\[
S(\rho)\ \ge\ \frac{y}{32\,(\log y)^{2\kappa}}\qquad\text{for every }y\ge y_0\text{ with }C_R\,\nu\,y\le1 .
\]
\item Suppose instead that $A_k+L_k+1\le2^{K\sqrt k}$ for every $k\ge1$, and put $C'_R=2^{50}m^4d$. There is $y_1$ depending only on $K$ such that
\[
S(\rho)\ \ge\ \frac{y}{32}\,2^{-2K\sqrt{\log y}}\qquad\text{for every }y\ge y_1\text{ with }C'_R\,\nu\,y\le1 .
\]
\end{enumerate}
\end{corollary}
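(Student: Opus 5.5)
The plan is to apply Theorem~\ref{thm:weighted} with $N=2^k$ transport words, choosing $k$ as large as condition~\eqref{eq:weighted-condition} allows given $\nu$. Then we turn $N=2^k$ into the stated function of $y$. The conclusion of Theorem~\ref{thm:weighted} is $S(\rho)\ge 2^k/16$. So we need the largest $k$ with
\[
2^{25}m^2\sqrt d\,\sqrt{2^k}\,(A_k+L_k+1)\sqrt\nu\le 1 .
\]
Squaring, this is $2^{50}m^4d\,2^k(A_k+L_k+1)^2\nu\le1$.

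\emph{Part (a).} Under $A_k+L_k+1\le Kk^\kappa$ it suffices that
\[
C_R\,\nu\,2^kk^{2\kappa}\le1,\qquad C_R=2^{50}m^4dK^2 .
\]
Given $y$ with $C_R\nu y\le1$, it is therefore enough to pick $k\ge1$ with $2^kk^{2\kappa}\le y$. Take
\[
k=\lfloor \log y-2\kappa\log\log y\rfloor .
\]
For $y\ge y_0(\kappa)$ this gives $k\ge1$ and $k\le\log y$. Then $2^kk^{2\kappa}\le y(\log y)^{-2\kappa}(\log y)^{2\kappa}=y$, so the condition holds. The floor costs at most a factor two, so
\[
2^k\ge \frac{y}{2(\log y)^{2\kappa}},
\]
and hence $S(\rho)\ge 2^k/16\ge y/(32(\log y)^{2\kappa})$.

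\emph{Part (b).} Here we need $2^{50}m^4d\,\nu\,2^k2^{2K\sqrt k}\le1$, so it suffices that $2^{k+2K\sqrt k}\le y$ with $C'_R\nu y\le1$. Take
\[
k=\lfloor \log y-2K\sqrt{\log y}\rfloor .
\]
For $y\ge y_1(K)$ we have $k\ge1$ and $k\le\log y$, hence $\sqrt k\le\sqrt{\log y}$ and $k+2K\sqrt k\le\log y$. Again the floor loses at most a factor two, so $2^k\ge \tfrac y2\,2^{-2K\sqrt{\log y}}$, giving $S(\rho)\ge\tfrac{y}{32}2^{-2K\sqrt{\log y}}$.

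In both parts the remaining hypotheses of Theorem~\ref{thm:weighted} are exactly those assumed on $G$, $S$, $R$ and the round, and these do not depend on $k$. The only point needing care, and it is mild, is choosing $y_0$ and $y_1$ so that $k\ge1$ and the monotonicity estimate $k\le\log y$ holds. We also need the logarithmic correction terms to stay below $\log y$, which requires $\log y>2\kappa\log\log y+1$ in part (a) and $\log y>2K\sqrt{\log y}+1$ in part (b). Both hold for large $y$, depending only on $\kappa$ or $K$ respectively.
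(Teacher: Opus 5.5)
Your proof is correct and is essentially the paper's argument. You make the same choices $k=\lfloor\log y-2\kappa\log\log y\rfloor$ and $k=\lfloor\log y-2K\sqrt{\log y}\rfloor$, you make the same check that $2^k(A_k+L_k+1)^2\le y$ turns $C_R\nu y\le1$ (resp.\ $C'_R\nu y\le1$) into condition~\eqref{eq:weighted-condition} with $N=2^k$, and you take the same factor-two loss from the floor.
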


\begin{proof}
(a) Let $y_0\ge4$ be such that $k=\lfloor\log y-2\kappa\log\log y\rfloor\ge1$ for $y\ge y_0$. Then $k\le\log y$ and $2^k\le y(\log y)^{-2\kappa}$, so
$2^kk^{2\kappa}\le y$. With $N=2^k$, the left side of~\eqref{eq:weighted-condition} is at most
$2^{25}m^2\sqrt d\,K\,(2^kk^{2\kappa})^{1/2}\sqrt\nu\le(C_R\nu y)^{1/2}\le1$. So $S(\rho)\ge2^k/16$, and $2^k>y(\log y)^{-2\kappa}/2$.

(b) Let $y_1$ be such that $k=\lfloor\log y-2K\sqrt{\log y}\rfloor\ge1$ for $y\ge y_1$. Then $K\sqrt k\le K\sqrt{\log y}$, so
$2^k(A_k+L_k+1)^2\le2^{k+2K\sqrt{\log y}}\le y$, and~\eqref{eq:weighted-condition} holds with $N=2^k$ as in part~(a). Hence
$S(\rho)\ge2^k/16>y\,2^{-2K\sqrt{\log y}}/32$.
\end{proof}
\section{The channel of the configuration lamps, and Theorem B}\label{sec:lamp-channel}

This section builds the channel of Theorem~B from the certificate of Theorem~A and states the theorem in full.
Theorem~\ref{thm:weighted} needs $a$, $b$ and a name for $ab$ among the generators. Let $R$ be the certificate of Table~\ref{tab:certificate}, in
the five generators $p,q,t,c,a$, with $b$ standing for the word $ac^4$. Let $S_\Phi=\{p,q,t,c,a,b,j\}$, where the new letters represent
$b=ac^4$ and $j=ab$ in $\Gamma$, and let $R_\Phi$ consist of the $25$ words of $R$ and the four words
\[
b^{-1}ac^4,\qquad b^2,\qquad (ab)^3,\qquad j^{-1}ab
\]
in the letters of $S_\Phi$. They represent $1$ in $\Gamma$, $a^2\in R$, and $j\ne1$ because $ab$ has order three (Lemma~\ref{lem:structure}). The
channel of the configuration lamps is the gadget channel $\Phi=\Phi_\Gamma$ of $R_\Phi$.

\paragraph{Dimension and rank.} The channel by the numbers:
\begin{center}\small
\begin{tabular}{cccccccc}
\toprule
letters & words & total length & symbols & triangles & longest word & dimension & Choi rank\\
\midrule
$s=7$ & $r_0=29$ & $\Lambda=320$ & $276$ & $298$ & $m=24$ & $d=873$ & $r_*=130$\\
\bottomrule
\end{tabular}
\end{center}
The total length is $\Lambda=303+6+2+6+3$, and~\eqref{eq:gadget-dimension} gives $d=1+4\cdot7+3\cdot320-4\cdot29=873$. The Choi rank $r_*$ is
the number of distinct labels. Of the $277$ candidates, the identity and the $276$ symbols, exactly $130$ are distinct elements of
$\Gamma$, so $r_*=130$. Most coincidences are shared prefixes: $pq$ and $pqp^{-1}$, for instance, begin several words. The others are relations
such as $a^2=1$ and $j=ab=c^4$. The script \texttt{anc/choi\_rank\_check.py} computes this partition with two independent engines,
the faithful action of $\Gamma$ on configurations and the embedding in $3V$ (Appendix~\ref{app:machine-checks}). The rank rate is therefore
$\frac12\log130\approx3.51$ qubits per use.

\paragraph{Transport.} For $N\ge1$, the proof of Theorem~\ref{thm:main-group} provides $2^N$ words $T_u$, one for each configuration $u$
supported in the first $N$ points of ray~$1$, with $|T_u|\le21N\max(3,N)\le63N^2$ (Theorem~\ref{thm:pair-area}) and
$\Area_{R_\Phi}([T_uxT_u^{-1},T_wyT_w^{-1}])\le\Lambda_N+4\le C_1N^{\kappa}$ for $u\ne w$ and $x,y\in\{a,b\}$, where
\[
p_H=\log107+4,\qquad \kappa=2+p_H<12.75,
\]
and $C_1$ is absolute; the letter $b$ is converted to $ac^4$ by the word $b^{-1}ac^4$, at most four times, and areas with respect to
$R_\Phi\supseteq R$ are no larger than with respect to $R$. So the hypothesis of Corollary~\ref{cor:amplification}(a) holds with this $\kappa$ and
$K=C_1+64$.

\begin{theorem}[Theorem B]\label{thm:main-channel}
The channel $\Phi$ on $M_{873}$ lies in $\cK_{873}$, and its Choi rank is $r_*=130$. Let $\kappa=2+p_H=6+\log107$. There are
constants $c,c_0,K>0$ and $n_0$ such that the following hold; all services have error at most $1/16$.
\begin{enumerate}[label=(\alph*),itemsep=1pt]
\item \emph{Memory against purity.} For $n\ge n_0$ and $\log n\le B\le c_0n$, every causal service of $\Phi$ for $n$ rounds with purity
$P+C\le B$ has, at any exchange,
\[
Q\ \ge\ c\,\frac{n}{B\,(\log n)^{2\kappa}} .
\]
For every $n\ge1$ and $0<B\le n$ there is an exact causal service, with error $0$, purity $P+C<B$, exchange $J=(n-1)Q$ and memory
$Q\le K(n/B)\log(2n/B)$. In particular, for every $A_0\ge0$ there is $c'>0$ such that services with $P+C\le A_0\log n$ have
$Q\ge c'n/(\log n)^{2\kappa+1}$ for all large $n$, where $2\kappa+1=2p_H+5<26.49$, and there are services with $P=C=0$ and $Q\le Kn\log n$.
\item \emph{The rank rate.} For $n\ge n_0$, every causal service of $\Phi$ for $n$ rounds at the rank rate, $J\le\frac n2\log r_*$, has, whatever
its purity,
\[
Q\ \ge\ c\,\frac{\sqrt n}{(\log n)^{\kappa}} .
\]
For every $n\ge2$ there is a causal service at the rank rate with $C=0$ and $P,Q\le K\sqrt{n\log n}$.
\end{enumerate}
So memory and purity trade at a fixed rate: for $n\ge n_0$ and $\log n\le B\le c_0n$ the least memory of a service with purity at most $B$ lies
between $c\,n/(B(\log n)^{2\kappa})$ and $K(n/B)\log(2n/B)$, and the product of memory and purity is $n^{1+o(1)}$. At the rank rate the least
memory is $n^{1/2+o(1)}$, and it is attained with purity of the same order: the rank rate sits where memory and purity are equal.
\end{theorem}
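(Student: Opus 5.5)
The plan is to split the theorem into its structural claims, the two lower bounds, and the two constructions.

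\emph{Structural claims.} Factorizability holds by construction, through $(L(\Gamma),\tau)$. Membership in $\cK_{873}$ comes from the sofic models: since $\Gamma$ is amenable, hence sofic, the hash models of Proposition~\ref{prop:upper} give permutation models of the chunk of symbols. Replacing $\lambda$ by these permutations in $U_\lambda$ gives channels with finite tracial factorizations whose Choi matrices converge to $\mathcal J(\Phi)$. The Choi rank $r_*=130$ is the label count already stated, since $\mathcal J(\Phi_G)=\frac1d\sum_g|A_g\rangle\!\rangle\langle\!\langle A_g|$.

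\emph{Lower bound in (a).} Take a service with purity $P+C\le B$ and error $\epsilon\le1/16$. Theorem~\ref{thm:extraction} gives a round with $\ddia(\Psi_t,\Phi)\le1/15\le1/8$, bath of dimension $2^Q$, and
\[
\nu\le t_n=O\bigl((B+\log n)/n\bigr)=O(B/n),
\]
using $B\ge\log n$. The transport paragraph shows that $R_\Phi$ satisfies the hypothesis of Corollary~\ref{cor:amplification}(a) with $\kappa=2+p_H$. Choose $y=1/(C_R\nu)\gtrsim n/B$; the condition $B\le c_0n$ makes $y\ge y_0$. The corollary then gives
\[
Q\ge S(\rho_t)\ge\frac{y}{32(\log y)^{2\kappa}}\ge\frac{c\,n}{B(\log n)^{2\kappa}}.
\]
The $A_0\log n$ case is the special case $B=\max(A_0,1)\log n$.

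\emph{Lower bound in (b).} We use the entropy budget of the rank-rate paragraph, together with Proposition~\ref{prop:purity-budget}. The observer receives about $n\log r_*$ bits, while the other side holds at most $2(Q+J)$ qubits. Hence at $J\le\frac n2\log r_*$ the effective purity is at most $O(Q+\log n)$. I would make this precise by bounding $P+C\le 2Q+O(\epsilon n)$-type slack, which is the delicate point, because error $1/16$ could spoil the linear bookkeeping; I would sharpen it with a Fannes-type continuity bound applied round by round. Feeding $B\approx Q$ into (a) gives $Q\gtrsim n/(Q(\log n)^{2\kappa})$, that is, $Q\gtrsim\sqrt n/(\log n)^{\kappa}$.

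\emph{Upper bounds.} For (a) I would apply Proposition~\ref{prop:sofic-upper}. Take a hash model at accuracy $1/r$, which has $\log|X|=O(r\log r)$ by Proposition~\ref{prop:upper}. Store a maximally mixed state on the good points, where the pure bath reproduces $\Phi$ exactly, and spend purity about $\log(1/\text{good fraction})$ per round to handle the remainder. Refresh the memory every block of rounds by exchange. Choosing $r\approx n/B$ balances purity $n/r$ against memory $r\log r$, giving $Q\le K(n/B)\log(2n/B)$ with error $0$. For (b) I would use Theorem~\ref{thm:streaming} with the same models at $r\approx\sqrt{n/\log n}$, which gives $P,Q\le K\sqrt{n\log n}$ at exchange $\frac n2\log r_*$.

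The main obstacle is the rank-rate purity budget: converting the exchange constraint into a purity bound linear in $Q$ despite error $1/16$. If that costs a $\log n$ factor, the bound in (b) still holds up to the polylogarithmic factors stated.
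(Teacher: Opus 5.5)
Your decomposition is the paper's. The sofic models of Proposition~\ref{prop:sofic-upper} give $\Phi\in\cK_{873}$ and the exact services. Theorem~\ref{thm:extraction} with Corollary~\ref{cor:amplification}(a) gives the lower bound in (a). The purity budget plus the argument of (a) gives (b). Theorem~\ref{thm:streaming} at scale $\sqrt{n/\log n}$ gives the rank-rate service (Corollary~\ref{cor:streaming-lamps}).

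The one real problem is your plan for the ``delicate point'' in (b): the repair you propose would not work. Any continuity bound, global or round by round, lets the observer's entropy fall short of $n\log r_*$ by about $\epsilon$ times the log-dimension of its system, which is of order $n$ at $\epsilon=1/16$. This is not an artifact of the method. The state $(1-\epsilon)\omega^{\otimes n}+\epsilon\sigma$ with $\sigma$ pure is within $\epsilon$ of the ideal output, yet its entropy is at most $(1-\epsilon)n\log r_*+1$. So the bookkeeping $S_{\rm obs}\le2(Q+J)-(P+C)$ yields only $P+C\le2Q+O(n)$, and with $B$ of order $n$ part (a) says nothing. The error touches only a set of probability at most $\epsilon$; mean entropy cannot see this, but concentration can.

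That is how Proposition~\ref{prop:purity-budget} proves $P+C<3Q+6$ with no dependence on $n$:
\begin{itemize}
\item After preloading the imports, the surprisal $Z$ of the register's initial product state has mean $P+C$.
\item Every factor has at most $Q$ qubits, so by Lemma~\ref{lem:surprisal} the variance of $Z$ is at most $(Q+2/\ln2)(P+C)$.
\item Comparing ranks with the flat ideal output gives $\Pr(Z\ge k)\le\epsilon+2^{2Q-k}$.
\item The one-sided Chebyshev inequality at $k=2Q+3$ then contradicts $P+C\ge3Q+6$.
\end{itemize}
You already cite that proposition, so use it as stated; nothing needs sharpening.

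Do not then plug $B\approx Q$ into (a) literally, because its hypothesis $\log n\le B\le c_0n$ can fail. Rerun the argument with $\nu_t\le\frac{16}{15}(3Q+7+\log n)/n$:
\begin{itemize}
\item If $15n/(16C_R(3Q+7+\log n))<y_0$, then $Q\gtrsim n$.
\item Otherwise $Q(3Q+7+\log n)\gtrsim n/(\log n)^{2\kappa}$.
\item The case $Q\le\log n+7$ is then impossible for large $n$.
\end{itemize}

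Three smaller points.
\begin{itemize}
\item In the exact service of (a), the whole memory must be exported and replaced by a fresh register flat on the good points after \emph{every} round, not every block of rounds. A memory that has been used once has left the good points and is correlated with earlier outputs, so exactness fails. This is why $J=(n-1)Q$.
\item The zero-purity service with $Q\le Kn\log n$ does not come from your construction, which needs $B>0$. It is Proposition~\ref{prop:sofic-upper}(a) with $\epsilon=1/16$: a memory flat on all of $\Omega$, with small error rather than exactly.
\item In (a), take $y=5n/(16C_RB)\le n$ rather than $y=1/(C_R\nu)$. The corollary accepts any $y\ge y_0$ with $C_R\nu y\le1$, whereas your choice does not ensure $\log y\le\log n$ when $\nu$ is tiny.
\end{itemize}
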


\begin{figure}[t]
\centering
\begin{tikzpicture}[x=1cm,y=1cm,font=\small,>=Stealth,lbl/.style={font=\footnotesize,inner sep=1pt}]
\colorlet{accent}{blue!50!black}
\fill[black!8] (1.0,5.65) -- (6.0,0.65) -- (6.0,0) -- (1.0,0) -- cycle;
\fill[accent!22] (1.0,6.35) -- (6.0,1.35) -- (6.0,0.65) -- (1.0,5.65) -- cycle;
\draw[accent] (1.0,6.35) -- (6.0,1.35);
\draw[accent] (1.0,5.65) -- (6.0,0.65);
\node[lbl,anchor=west] at (6.1,1.35) {$K(n/B)\log(2n/B)$: exact services};
\node[lbl,anchor=west] at (6.1,0.65) {$c\,n/(B(\log n)^{2\kappa})$};
\node[lbl,rotate=-45] at (2.55,4.45) {$Q\cdot B=n^{1+o(1)}$};
\node[lbl] at (3.3,1.1) {no service (Theorem~B(a))};
\draw[dashed] (2.6,2.2) -- (6.9,6.5);
\node[lbl,anchor=west,align=left] at (6.95,6.5) {$P+C=3Q+6$: rank-rate services\\ lie to the left (purity budget)};
\draw[->] (0,0) -- (8.0,0);
\node[lbl,anchor=north east] at (8.0,-0.12) {purity $B$};
\draw[->] (0,0) -- (0,7.4);
\node[lbl,anchor=south west] at (0.1,7.3) {memory $Q$};
\draw (0.42,-0.12) -- (0.58,0.12) (0.52,-0.12) -- (0.68,0.12);
\foreach \x/\l in {0/{$0$},1.0/{$\log n$},3.5/{$\sqrt n$},6.0/{$c_0n$}}{
  \draw (\x,0.06) -- (\x,-0.06); \node[lbl,anchor=north] at (\x,-0.12) {\l};
}
\foreach \y/\l in {3.5/{$\sqrt n$},6.0/{$n$},6.6/{$n\log n$}}{
  \draw (0.06,\y) -- (-0.06,\y); \node[lbl,anchor=east] at (-0.12,\y) {\l};
}
\fill (0,6.6) circle (2pt);
\node[lbl,anchor=west] at (0.2,6.75) {zero purity: $Q\le Kn\log n$};
\fill (3.5,3.5) circle (2pt);
\draw (3.5,3.5) -- (4.55,3.55);
\node[lbl,anchor=west,align=left] at (4.6,3.55) {rank rate: $Q$ and $P+C$\\ both about $\sqrt n$};
\end{tikzpicture}
\caption{Memory against purity for the lamp channel, on logarithmic axes. For $\log n\le B\le c_0n$ the least memory of a service with purity at
most $B$ lies in the band $Q\cdot B=n^{1+o(1)}$, between $c\,n/(B(\log n)^{2\kappa})$ and $K(n/B)\log(2n/B)$ (Theorem~\ref{thm:main-channel}(a)),
and at zero purity memory $Kn\log n$ suffices. At the rank rate the purity is less than $3Q+6$ (Proposition~\ref{prop:purity-budget}), so the
band forces memory about $\sqrt n$, and streaming services attain this with purity of the same order (Theorem~\ref{thm:main-channel}(b)).}
\label{fig:tradeoff}
\end{figure}

The exponent $2\kappa$ in (a) is the unitary exponent of Theorem~\ref{thm:main-group}(c): a service with purity $B$ is read at relator defect about
$(B/n)^{1/2}$. The lower bound in (b) needs no assumption on purity, because at the rank rate purity is paid for with memory
(Proposition~\ref{prop:purity-budget}): there $P+C<3Q+6$, and the argument of (a) with $B$ of order $Q$ gives $Q^2\gtrsim n$ up to logarithms. The
services in (a) and (b) are the two devices of Section~\ref{sec:devices}. We prove Theorem~\ref{thm:main-channel} in
Section~\ref{sec:main-channel-proof}, after the purity budget and the upper bound, and read it as a statement about finite tracial
approximations in Corollary~\ref{cor:profile}.
\section{Sofic models give devices, and streaming at the rank rate}\label{sec:devices}

\subsection{An upper bound from sofic models}\label{sec:sofic-upper}

Sofic models of the labels give devices, for two reasons. The triangle blocks are exactly unitary for arbitrary permutations. And at most
points of a model the pure bath vector reproduces the target channel exactly, not approximately, so a memory that is maximally mixed on the
good points of a model serves the channel exactly. This gives every upper bound in Theorem~B. Recall from Section~\ref{sec:criterion}
that $\sigma_E(r)$ is the least $M$ for which some $f:E\to\Sym(\Omega)$, $|\Omega|=M$, with $f(1)=\mathrm{id}$, is a
\emph{$(1-r^{-1})$-expansive $r^{-1}$-morphism}. This means that $f(x)f(y)$ and $f(xy)$ differ on at most $M/r$ points whenever $x,y,xy\in E$,
and that $f(x)$ and $f(y)$ agree on at most $M/r$ points whenever $x\ne y$ in $E$.

\begin{proposition}[Sofic models give devices]\label{prop:sofic-upper}
Let $G$, $S$ and $R$ be as in Section~\ref{sec:channel-setting}, with $t_\triangle$ triangles and $r_*$ labels, let $E$ be a chunk of $G$ containing
all labels, and put $K_G=t_\triangle+\binom{r_*}2$. Every $(1-r^{-1})$-expansive $r^{-1}$-morphism $f:E\to\Sym(\Omega)$, $|\Omega|=M$, gives a unitary $U_f$ on
$\mathbb C^d\otimes\mathbb C^\Omega$ whose tracial channel $\Phi_f(X)=(\mathrm{id}\otimes\tau_M)[U_f(X\otimes I)U_f^*]$ satisfies
\[
\Phi_f=(1-\vartheta)\Phi_G+\vartheta\,\Xi
\]
for some channel $\Xi$ and some $0\le\vartheta\le K_G/r$. Consequently, for all $n\ge1$ and $0<\epsilon\le1$ with $\sigma_E(2K_Gn/\epsilon)<\infty$,
$\Phi_G$ has
\begin{enumerate}[label=(\alph*),itemsep=1pt]
\item a causal service with error at most $\epsilon$, $P=C=0$, $J=(n-1)Q$ and $Q=\bigl\lceil\log\sigma_E(2K_Gn/\epsilon)+\log(2n/\epsilon)\bigr\rceil$;
\item a causal service with error at most $\epsilon$, $J=P=C=0$ and $Q=\bigl\lceil n\log\sigma_E(2K_Gn/\epsilon)+\log(2/\epsilon)\bigr\rceil$;
\end{enumerate}
and, for all $n\ge1$ and $0<B\le n$ with $\sigma_E(4K_Gn/B)<\infty$,
\begin{enumerate}[label=(\alph*),itemsep=1pt,resume]
\item an exact causal service, with error $0$, purity $P+C<B$, $J=(n-1)Q$ and $Q=\bigl\lceil\log\sigma_E(4K_Gn/B)+\log(4n/B)\bigr\rceil$.
\end{enumerate}
\end{proposition}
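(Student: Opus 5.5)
Build $U_f$ by putting the permutation unitaries $P_{f(x)}$ into the gadget in place of $\lambda(x)$: every symbol $w$ and every letter gets $P_{f([w])}$, and each triangle block uses these in the Hadamard form. Since each triangle block is $\mathrm{diag}(I,\lambda(x))(H\otimes I)\mathrm{diag}(I,\lambda(y))$, $U_f$ is exactly unitary. For a basis point $\omega\in\Omega$, put $\Phi_\omega(X)=\langle\omega|\cdots|\omega\rangle$-type channel, namely the channel obtained from $U_f$ with the pure bath $|\omega\rangle$ followed by tracing out. Then $\Phi_f=\frac1M\sum_\omega\Phi_\omega$. Call $\omega$ \emph{good} if two conditions hold. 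First, for every triangle $(x,y,z)$ we have $f(x)f(y)\omega=f(z)\omega$, using $f(1)=\mathrm{id}$ for $z=\varnothing$; for the inverse triangles $f(x)f(x^{-1})\omega=\omega$. Second, the points $f(g)\omega$ are pairwise distinct over the $r_*$ labels $g$. By the morphism and expansivity conditions, each triangle fails at most $M/r$ points and each pair of labels collides at most $M/r$ points. So the bad fraction is $\vartheta\le K_G/r$. At a good $\omega$, write $U_f(|i\rangle\otimes|\omega\rangle)$: each slot $a$ contributes $c_aE_a\otimes|f(w_a)\omega\rangle$. Here the lower right corner of a triangle gives $-f(x)f(y)\omega=-f(z)\omega$, which carries the label of $z$. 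Hence $\Phi_\omega(X)=\sum_g A_g X A_g^*$ with the vectors $f(g)\omega$ orthonormal, which equals $\Phi_G$ exactly. The decomposition follows with $\Xi$ the average of $\Phi_\omega$ over bad points, and any $\Xi$ if there are none.

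For the devices I will use a memory register holding $\Omega$ plus padding. For (a), take $r=2K_Gn/\epsilon$, so that $\vartheta\le\epsilon/(2n)$. The device holds a maximally mixed $\Omega$ register, which is free. In each round it applies $U_f$ with the current point, then exports the whole memory and imports a fresh maximally mixed copy. This makes the rounds independent, each equal to $\Phi_f$, so the error is at most $n\,\ddia(\Phi_f,\Phi_G)\le n\vartheta\le\epsilon/2$ by telescoping. Since $M$ need not be a power of two, embed $\Omega$ in $Q$ qubits. The maximally mixed state on $2^Q$ points decomposes as: with probability $M\lfloor2^Q/M\rfloor/2^Q\ge1-M/2^Q$, a uniform point of $\Omega$ (folding the copies), and the remaining leftover points are treated as bad, so they go to $\Xi$. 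The extra $\log(2n/\epsilon)$ qubits make this leftover probability at most $\epsilon/(2n)$ per round, which gives total error $\epsilon$ with zero purity. For (b), use no exchange. Instead keep $n$ independent $\Omega$ registers, one consumed per round, under a single padding of $n\log M+\log(2/\epsilon)$ qubits, with a one-time leftover error of $\epsilon/2$.

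For (c), I aim for exactness. The memory must start maximally mixed on the good points only, which costs purity $\log(M/|\mathrm{good}|)\approx\vartheta$ per refresh. With $r=4K_Gn/B$ we get $\vartheta\le B/(4n)$. Each imported register is uniform on the good subset of $\Omega$, embedded in $Q$ qubits. Its purity deficit is $Q-\log|\mathrm{good}|=Q-\log M-\log(1-\vartheta)$. I must choose the padding so that the whole deficit is small: the register should be maximally mixed on a good set of size exactly right. So I will take the import state uniform on good points times $\lfloor2^Q/M\rfloor$ copies, and send leftover qubit-states to good points by a fixed relabeling inside $U_t$. The deficit is then $-\log(1-\vartheta)-\log(1-M/2^Q)\le 2(\vartheta+B/(4n))<B/n$ per round, which is below $B$ over $n$ imports. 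The main obstacle is this bookkeeping: making the leftover mass land on good points exactly, so that the error is $0$ rather than small. The deficit arithmetic, with $\log(1-x)^{-1}\le2x$ for small $x$, is what fixes the constants $4$ and $\log(4n/B)$.
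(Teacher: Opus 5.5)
Your proposal is correct and follows the paper's proof: the same permutation gadget, the same notion of good point with the count $\vartheta\le K_G/r$, and the same padded memories $(\mathbb C^\Omega\otimes\mathbb C^{k_0})\oplus\mathbb C^{g_0}$ for (a)--(c). The ``relabeling'' you flag as the main obstacle in (c) is not needed. Your import state is maximally mixed on the good points times $k_0=\lfloor 2^Q/M\rfloor$ copies, so the $g_0=2^Q-k_0M$ leftover basis states carry no weight; they enter only through the deficit term $\log\frac{2^Q}{2^Q-g_0}$, which you already counted. Moreover, a nontrivial unitary relabeling onto good points would necessarily push some weight off them and spoil exactness.
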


\begin{proof}
\emph{The permutation gadget is unitary.} For a symbol $w$ let $P_w$ be the permutation matrix of $f([w])$ on $\mathbb C^\Omega$, so that symbols
with the same label get the same matrix, and let $U_f$ be the gadget $U_\lambda$ with $\lambda(w)=P_w$. Each triangle block is
$\frac1{\sqrt2}\bigl(\begin{smallmatrix}I&P_y\\P_x&-P_xP_y\end{smallmatrix}\bigr)=\mathrm{diag}(I,P_x)(H\otimes I)\mathrm{diag}(I,P_y)$, so $U_f$ is unitary
although no relation holds among the $P_w$. Writing $\pi_a$ for the permutation of $\Omega$ in slot $a$, namely $f([w_a])$ for every slot except the
lower right corner of a triangle $(x,y,z)$, where it is $f([x])f([y])$, we have $U_f=\sum_ac_aE_a\otimes P_{\pi_a}$.

\emph{Good points reproduce $\Phi_G$.} Since $\tau_M$ is the average of the vector states of the points of $\Omega$, $\Phi_f=\frac1M\sum_{\omega\in\Omega}\Phi_\omega$ with
$\Phi_\omega(X)=\mathrm{Tr}_\Omega\,U_f(X\otimes|\omega\rangle\langle\omega|)U_f^*$. The isometry $|\psi\rangle\mapsto U_f(|\psi\rangle\otimes|\omega\rangle)$ is
$\sum_ac_aE_a\otimes|\pi_a(\omega)\rangle$, so
\begin{gather*}
\Phi_\omega(X)=\sum_{a,b\in\mathcal T}c_a\overline{c_b}\;\mathbf 1[\pi_a(\omega)=\pi_b(\omega)]\;E_aXE_b^*,\\
\Phi_G(X)=\sum_{a,b\in\mathcal T}c_a\overline{c_b}\;\mathbf 1\bigl[[w_a]=[w_b]\bigr]\;E_aXE_b^*,
\end{gather*}
the second because $\tau(\lambda(g)\lambda(h)^*)=\delta_{gh}$. Call $\omega$ \emph{good} if $f([x])f([y])(\omega)=f([z])(\omega)$ for every triangle
$(x,y,z)$ and $f(g)(\omega)\ne f(h)(\omega)$ for all distinct labels $g,h$. At a good point $\pi_a(\omega)=f([w_a])(\omega)$ for every slot, since
$[w_a]=[z]$ at the lower right corner of a triangle $(x,y,z)$; so $\pi_a(\omega)=\pi_b(\omega)$ exactly when $[w_a]=[w_b]$, and $\Phi_\omega=\Phi_G$.

\emph{Bad points are few.} The labels of a triangle satisfy $[x][y]=[z]$ with all three in $E$, so $f([x])f([y])$ and $f([z])$ differ on at most
$M/r$ points; and distinct labels $g,h\in E$ have $f(g)(\omega)=f(h)(\omega)$ for at most $M/r$ points. So the fraction $\vartheta$ of bad points is
at most $K_G/r$, and $\Phi_f=(1-\vartheta)\Phi_G+\vartheta\,\Xi$, where $\Xi$ is the average of $\Phi_\omega$ over the bad points. If there are no
bad points, $\Xi$ may be any channel.

\emph{Refreshing the memory gives part~(a).} Let $r=2K_Gn/\epsilon$, let $f$ have degree $M=\sigma_E(r)$, and let $Q$ be as in part~(a), so that
$M\le2^Q\epsilon/(2n)$. Split the $Q$-qubit memory as $(\mathbb C^\Omega\otimes\mathbb C^{k_0})\oplus\mathbb C^{g_0}$ with $k_0=\lfloor2^Q/M\rfloor$ and
$g_0=2^Q-k_0M<M$. In every round apply $U_f\otimes I_{k_0}$ on the first summand and the identity on the second. After every round but the
last, export all $Q$ memory qubits and import $Q$ fresh qubits in the maximally mixed state; the initial state is maximally mixed as well, so
$P=C=0$. Every round then meets a fresh flat memory, and the observer faces $n$ independent uses of the channel
$(1-g_0/2^Q)\Phi_f+(g_0/2^Q)\mathrm{id}$, which is within $\vartheta+g_0/2^Q\le\epsilon/(2n)+\epsilon/(2n)$ of $\Phi_G$ in diamond distance, by
convexity. A hybrid argument over the rounds bounds the error by $\epsilon$.

\emph{A memory flat on the good points gives part~(c).} Let $0<B\le n$, put $\eta=B/(4n)\le\frac14$ and $r=K_G/\eta$, let $f$ have degree
$M=\sigma_E(r)$, and let $Q$ be as in part~(c), so that $M\le\eta2^Q$. Let $\Omega_0\subseteq\Omega$ be the set of good points, so that
$|\Omega_0|\ge(1-\eta)M$. Split the memory as $(\mathbb C^\Omega\otimes\mathbb C^{k_0})\oplus\mathbb C^{g_0}$ as in part~(a), with
$k_0=\lfloor2^Q/M\rfloor$ and $g_0<M\le\eta2^Q$, and let $\beta$ be the maximally mixed state on
$\mathrm{span}\{|\omega\rangle:\omega\in\Omega_0\}\otimes\mathbb C^{k_0}$. Since $k_0M=2^Q-g_0$, its deficit is
\[
Q-\log(|\Omega_0|k_0)=\log\frac{2^Q}{2^Q-g_0}+\log\frac M{|\Omega_0|}\le2\log\frac1{1-\eta}\le\frac{8\eta}{3\ln2}<\frac Bn .
\]
Start the memory in $\beta$ and apply $U_f\otimes I_{k_0}$ on the first summand in every round. After every round but the last, export all $Q$
memory qubits and import $Q$ fresh qubits in the state $\beta$. The $n$ registers used have total deficit $P+C<B$. In every round the memory is a
fresh copy of $\beta$, a uniform mixture of vector states $|\omega\rangle\otimes|j\rangle$ with $\omega$ good, and $\Phi_\omega=\Phi_G$ at good points.
So every round applies exactly $\Phi_G$, independently of the other rounds, and the error is $0$.

\emph{One factor per round gives part~(b).} Let $f$ be the model of part~(a) and $Q$ as in part~(b), so that $M^n\le2^Q\epsilon/2$. Split the
memory as $(\mathbb C^{\Omega^n}\otimes\mathbb C^{k_0})\oplus\mathbb C^{g_0}$ with $k_0=\lfloor2^Q/M^n\rfloor$ and $g_0<M^n$, start it maximally
mixed, and in round $s$ apply $U_f$ to the input and the $s$-th factor $\mathbb C^\Omega$ of the first summand, and the identity on the second.
With probability $1-g_0/2^Q\ge1-\epsilon/2$ the memory starts in the first summand, uniformly distributed, and then the $n$ factors are
independent flat baths, each used once. The observer then faces $n$ independent uses of $\Phi_f$, which is within $n\vartheta\le\epsilon/2$ of
the ideal experiment. The remaining branch has weight at most $\epsilon/2$, so the error is at most $\epsilon$.
\end{proof}

In particular, if $\sigma_E(r)$ is finite for every $r$, as it is when $G$ is sofic, then $\Phi_G\in\cK_d$, since
$\ddia(\Phi_f,\Phi_G)\le K_G/r$ for models at every accuracy. This is a quantitative form, for sofic groups, of the fact that gadget channels of
hyperlinear groups lie in $\cK_d$~\cite{DouglasMghirbiB}.

\subsection{Streaming at the rank rate}\label{sec:streaming}

The services of Proposition~\ref{prop:sofic-upper} refresh the whole memory in every round, so their exchange is far above the rank rate. At
the rank rate the memory has to be kept, and a sofic model cannot simply be reused, since its bad points would then accumulate error.
Theorem~\ref{thm:streaming} keeps one sofic model for all rounds and pays for its bad points with purity. It uses the good points of the model
as an exact \emph{rectangular} dilation, whose input bath is smaller than its output bath, and lets the part of the memory in use grow slightly
in every round. The mixing and export steps are those of the rolling reservoir of~\cite[Appendix~A]{DouglasMghirbiB}, and the proof is in
Appendix~\ref{app:streaming}.

\begin{theorem}[Streaming at the rank rate]\label{thm:streaming}
Let $\Phi$ be a channel on $M_d$ with linearly independent Kraus operators $A_1,\dots,A_r$, $r\ge2$, and let $1\le k\le k'$. Suppose that
$W=\sum_iA_i\otimes B_i$, with $B_i:\mathbb C^k\to\mathbb C^{k'}$, is an isometry from $\mathbb C^d\otimes\mathbb C^k$ to $\mathbb C^d\otimes\mathbb C^{k'}$ and that
\begin{equation}\label{eq:rectangular}
\Tr(B_i^*B_j)=k\,\delta_{ij},\qquad \sum_iB_i^*B_i=r\,I_k .
\end{equation}
Then for all $n\ge2$ and $0<\epsilon<1$ there is a causal service of $\Phi$ for $n$ rounds with error at most $\epsilon$, $C=0$,
$J=\lfloor\frac{n-1}2\log r\rfloor$ and
\[
P\le n\log\frac{k'}k+2,\qquad Q\le\log k+n\log\frac{k'}k+O_r\Bigl(\log\frac n\epsilon+\log(2+\log k)\Bigr).
\]
\end{theorem}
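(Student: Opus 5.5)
The plan is to build the service from the rectangular dilation $W$ by keeping one flat bath register alive across all rounds. The fresh randomness that each round needs comes from a slowly growing reservoir, and the correlations each round creates with the observer are removed by decoupling, following the rolling reservoir of~\cite[Appendix~A]{DouglasMghirbiB}.

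\emph{Step 1: one exact round.} The first identity in~\eqref{eq:rectangular} makes $W$ an exact dilation with a flat input bath. For $X\in M_d$,
\[
\mathrm{Tr}_{k'}\,W(X\otimes I_k/k)W^*=\sum_{i,j}A_iXA_j^*\,\Tr(B_j^*B_i)/k=\Phi(X).
\]
This continues to hold when the $C^k$ register is the maximally mixed marginal of a larger register that the observer does not hold. The second identity, $\sum_iB_i^*B_i=rI_k$, describes the output bath. For a flat bath and any input, the joint state of the output bath and the observer's purifying reference is flat on an $r$-fold enlargement of $\mathbb C^k$ inside $\mathbb C^{k'}$. So each round raises the rank of what the memory shares with the observer by exactly a factor $r$, and the output bath in $\mathbb C^{k'}$ has entropy $\log k+\log r$. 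Of this, at most $\log r$ bits are correlation with the observer, and the rest comes from the reservoir.

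\emph{Step 2: the device.} The memory consists of a working register $\mathbb C^k$ and a reservoir, with $O(\log(n/\epsilon))$ slack qubits. The reservoir starts as a flat block of growing size padded by pure qubits. Its pure part is $n\log(k'/k)$ qubits, which is where $P\le n\log(k'/k)+2$ comes from, and $C=0$ because every import is maximally mixed. In round $t$ the device applies $W$ to the input and the working register, embedding $\mathbb C^{k'}$ into the working register together with $\log(k'/k)$ formerly pure reservoir qubits. It then applies a fixed mixing unitary to the working register and the active reservoir; a unitary $2$-design, or a fixed unitary that is good for all rounds, suffices. Finally it exports $\lfloor\frac12\log r\rfloor$ qubits, alternating rounding so that the total is $J=\lfloor\frac{n-1}2\log r\rfloor$, and imports maximally mixed ones. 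The invariant to prove by induction is that before each round the working register is, jointly with the observer's reference and all earlier outputs, within $\epsilon/n$ in trace distance of $\tau_k\otimes(\text{everything else})$. Given the invariant, Step~1 makes each round an exact use of $\Phi$ up to $\epsilon/n$, and a hybrid argument over the adaptive rounds bounds the total error by $\epsilon$. The memory count $\log k+n\log(k'/k)+O_r(\log(n/\epsilon))$ follows. The $O_r(\log(2+\log k))$ term is the extra slack needed for the decoupling estimate to absorb the dimension of the working register.

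\emph{Main obstacle: the decoupling step.} The hard part is the decoupling bound that keeps the invariant alive with an adaptive observer. By Step~1, each round adds $\log r$ bits of correlation between the memory and the observer. Exporting half that many qubits after a random mixing removes it, by the standard decoupling inequality: the remaining correlation is bounded by $2^{-(H_{\min}\text{ excess})/2}$. This requires a conditional min-entropy lower bound on the active memory, and that bound must be tracked across rounds, since the observer's inputs depend on earlier outputs. I would first try to prove directly that, conditioned on the reference, the active memory state is exactly flat after each round. The second identity in~\eqref{eq:rectangular} is what makes this flatness exact rather than approximate. With flatness, smoothing is unnecessary and the per-round decoupling error is $2^{-\Omega(\text{slack})}$. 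Choosing the slack to be $O_r(\log(n/\epsilon))$ then makes the per-round error $\epsilon/n$.
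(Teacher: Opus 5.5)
Your architecture matches the paper's: an active space that grows by $k'/k$ per round, so that $P=Q-\log D_0\le n\log(k'/k)+2$, maximally mixed imports, exports on the schedule $\lfloor t\log r/2\rfloor$, and random mixing. Your reduction is also sound. If, before every round and for every adaptive observer, the $\mathbb C^k$ factor on which $W$ acts were within $\epsilon/n$ of $\tau_k\otimes\rho_O$, then Step~1 and a hybrid argument would give error $\epsilon$.

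The gap is that this invariant is never established, and the mechanism you propose for it does not exist. The second identity in \eqref{eq:rectangular} does not make any memory state flat. Take the qubit dephasing channel, $A_1=I/\sqrt2$ and $A_2=Z/\sqrt2$, with $k=2$ and $k'=4$. Let $B_1$ map $\mathbb C^2$ into the first two coordinates by $\mathrm{diag}(\sqrt{3/2},\sqrt{1/2})$, and $B_2$ into the last two by $\mathrm{diag}(\sqrt{1/2},\sqrt{3/2})$. Then $W$ is an isometry satisfying \eqref{eq:rectangular} with $r=2$. After one round from a flat memory, the memory state attached to Kraus outcome $i$ is proportional to $B_iB_i^*\otimes I$, and the memory marginal for the flat probe is proportional to $\sum_iB_iB_i^*\otimes I$. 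In both, the nonzero eigenvalues are in the ratio $3:1$.

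What the second identity gives exactly is weaker: the call multiplies the exterior marginal $\mathrm{Tr}_DFF^*$ of the history frame by exactly $r$. The export then lowers that marginal only approximately. Your entropy count in Step~1 is also wrong. The output bath lives in $\mathbb C^{k'}$, so it cannot carry $\log k+\log r$ bits when $k'<rk$, and that is the regime of Corollary~\ref{cor:streaming-lamps}, where $k'/k\to1$.

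There is also a uniformity problem. A decoupling inequality, for a Haar unitary or a $2$-design, bounds the distance for one fixed joint state, on average over the unitary. Your device must be fixed in advance and must work for every adaptive strategy and every reference. The conditional min-entropy you propose to track depends on the strategy.

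The paper removes the observer from the analysis:
\begin{itemize}
\item It tracks the frame $F_t$, whose $r^t$ columns are the memory-and-exterior vectors of the Kraus histories. The final vector is $\sum_y\psi_y\otimes F_n|y\rangle$, with the same $\psi_y$ as in the ideal experiment. So $\|F_n^*F_n-I\|\le\epsilon$ bounds the error for all observers at once, through the polar part of $F_n$.
\item It keeps this Gram matrix near $I$ using two unitaries per round with different roles. $V_t$, applied before $W$, is chosen from a Haar concentration bound for the $2r^2$ Hermitian and anti-Hermitian parts of the $B_i^*B_j$, union-bounded over at most $r^n$ columns. Its exponent is $D_{t-1}s^2/(4\alpha(F)g(F)\|X\|^2)$, so it works only while $\alpha(F)=\|\mathrm{Tr}_DFF^*\|$ is small.
\item $V'_t$, applied before the port is exported, is chosen by the export lemma. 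It keeps $\alpha(F_t)\le A_*/D_0$, with the approximate recursion unrolled through $c_t=2^{2J_t}r^{-t}$.
\end{itemize}
These two lemmas are the missing content of your ``main obstacle''. The $\log(2+\log k)$ term comes from the factor $\ln(8Se^2)$ in the export lemma, with $S\le D_0r^n$, not from absorbing the working register.
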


The first identity in~\eqref{eq:rectangular} says that $W$ with a flat input bath realizes $\Phi$ exactly:
$\Phi(X)=\mathrm{Tr}_{k'}[W(X\otimes I_k/k)W^*]=\sum_iA_iXA_i^*$. The second says that every Kraus history costs the same. For a channel with a
flat probe of rank $r$ it follows from $W^*W=I$ alone: apply $\Tr(\rho_*\,\cdot\,)$ to the first factor of $\sum_{i,j}A_i^*A_j\otimes B_i^*B_j=I$,
where $\rho_*$ is the input marginal of the probe, for which $\Tr(\rho_*A_i^*A_j)=\delta_{ij}/r$.

\begin{corollary}[The lamp channel at the rank rate]\label{cor:streaming-lamps}
There is $K$ such that for every $n\ge2$ the channel $\Phi$ of Theorem~\ref{thm:main-channel} has a causal service for $n$ rounds with error at most
$1/16$, $C=0$, $J=\lfloor\frac{n-1}2\log r_*\rfloor$ and $P,Q\le K\sqrt{n\log n}$.
\end{corollary}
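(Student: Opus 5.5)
The corollary follows from Theorem~\ref{thm:streaming}, applied to an exact rectangular dilation $W$ built from the good points of a sofic model of $\Gamma$, with the accuracy $r$ of the model tuned to $n$. Let $f:E\to\Sym(\Omega)$, $|\Omega|=M=\sigma_E(r)$, be a $(1-r^{-1})$-expansive $r^{-1}$-morphism of a chunk $E$ containing all $130$ labels. Such a chunk exists because every label is represented by a word of length at most $24$. Let $\Omega_0$ be its set of good points, as in the proof of Proposition~\ref{prop:sofic-upper}, so that $|\Omega_0|\ge(1-K_G/r)M$. Take $k=|\Omega_0|$ and $k'=M$, and let $W$ be the restriction of $U_f$ to $\mathbb C^d\otimes\mathrm{span}\{|\omega\rangle:\omega\in\Omega_0\}$, which is an isometry into $\mathbb C^d\otimes\mathbb C^\Omega$. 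Group the slots by label to write $W=\sum_gA_g\otimes B_g$, where $B_g:\mathbb C^{\Omega_0}\to\mathbb C^\Omega$ sends $|\omega\rangle$ to $|f(g)\omega\rangle$. At a good point every slot of label $g$ carries exactly the permutation $f(g)$, including the lower right corners of triangles, so this grouping is legitimate. The $A_g$ have disjoint supports, so they are linearly independent, and there are $r_*=130$ of them.

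Next I would check~\eqref{eq:rectangular}. We have $B_g^*B_h=\sum_{\omega\in\Omega_0}|\omega\rangle\langle f(g)^{-1}f(h)\omega|$ restricted to $\Omega_0$. Its trace counts the good points $\omega$ with $f(g)\omega=f(h)\omega$. By the separation condition in the definition of good points, this is $k\,\delta_{gh}$. For the same reason $B_g^*B_g=I_k$, so $\sum_gB_g^*B_g=r_*I_k$. So the hypotheses of Theorem~\ref{thm:streaming} hold with $r=r_*$. The remark after that theorem gives a second derivation from $W^*W=I$ and the flat probe, which can serve as a check.

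Theorem~\ref{thm:streaming} with $\epsilon=1/16$ then gives $C=0$, $J=\lfloor\frac{n-1}2\log r_*\rfloor$, and
\[
P\le n\log\frac{k'}k+2,\qquad Q\le\log k+n\log\frac{k'}k+O(\log n+\log(2+\log k)).
\]
Here $\log(k'/k)\le-\log(1-K_G/r)\le 2K_G/r$ once $r\ge2K_G$, and $\log k\le\log\sigma_E(r)\le C\,r\log r$ by Theorem~\ref{thm:main-group}(b). This gives
\[
P,Q\lesssim \frac nr+r\log r+\log n.
\]
Choosing $r=\sqrt{n/\log n}$, or $r=2K_G$ if that is larger, balances the two terms. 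Both become $O(\sqrt{n\log n})$, and the additive $O(\log n)$ and $O(\log\log k)$ terms are absorbed for $n\ge2$ after enlarging $K$.

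The only delicate step is ensuring the grouping by label is exact rather than approximate. A good point must simultaneously satisfy every triangle identity and separate all distinct labels. This is exactly the notion of good point from Proposition~\ref{prop:sofic-upper}, so the bad fraction is at most $K_G/r$ and the rest is bookkeeping. Small $n$, where $r$ would fall below $2K_G$, is handled by taking $K$ large enough.
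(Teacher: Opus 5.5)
Your proposal is correct and follows the paper's proof essentially step for step. You restrict $U_f$ to the good points to obtain the rectangular isometry $W=\sum_gA_g\otimes B_g$, check~\eqref{eq:rectangular} from the separation of labels at good points, apply Theorem~\ref{thm:streaming} with $\epsilon=1/16$, and balance $n/s$ against $s\log s$ with $s=\max(2K_G,\lceil\sqrt{n/\log n}\,\rceil)$. The differences are cosmetic: you use $r$ both for the model accuracy and for the number of Kraus operators (the paper writes $s$ for the former), and your formula for $B_g^*B_h$ has bra and ket transposed, although the trace count you draw from it is right.
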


\begin{proof}
Let $E$ be the chunk of labels of Proposition~\ref{prop:sofic-upper}, let $s\ge2K_G$, and let
$f:E\to\Sym(\Omega)$ be a $(1-s^{-1})$-expansive $s^{-1}$-morphism of degree $M=\sigma_E(s)$. Let $\Omega_0$ be its set of good points, so that
$k=|\Omega_0|\ge(1-K_G/s)M$, and for each label $g$ let $B_g:\mathbb C^{\Omega_0}\to\mathbb C^\Omega$ send $|\omega\rangle$ to $|f(g)\omega\rangle$. At a good point
$\omega$ every slot of $U_f$ acts through the permutation of its label, so $U_f(\xi\otimes|\omega\rangle)=\sum_gA_g\xi\otimes|f(g)\omega\rangle$, and
$W=\sum_gA_g\otimes B_g$ is the restriction of the unitary $U_f$ to $\mathbb C^d\otimes\mathbb C^{\Omega_0}$, an isometry. Each $B_g^*B_g$ is $I_k$, and
$\Tr(B_g^*B_h)=0$ for $g\ne h$ because distinct labels do not collide at good points; the $A_g$ are linearly independent, having disjoint
supports. So Theorem~\ref{thm:streaming} applies with $r=r_*$ and $k'=M$. Here $\log(M/k)\le-\log(1-K_G/s)\le2K_G/(s\ln2)$, and
$\log k\le\log\sigma_E(s)\le C_Es\log s$ by Theorem~\ref{thm:main-group}(b). With $\epsilon=1/16$ the theorem gives
$Q\le C_Es\log s+2K_Gn/(s\ln2)+O(\log n)$ and $P\le2K_Gn/(s\ln2)+2$; take $s=\max(2K_G,\lceil\sqrt{n/\log n}\,\rceil)$.
\end{proof}
\section{A purity budget at the rank rate}\label{sec:purity-budget}

At the rank rate a device cannot buy memory with purity: the purity it can spend is linear in its memory. This is the last tool for the lower
bound at the rank rate in Theorem~B. We sharpen the budget $14Q+3$ of~\cite{DouglasMghirbiB} by combining its rank count with a one-sided
second-moment bound.

\begin{proposition}[Purity budget]\label{prop:purity-budget}
Let $\Phi$ be a channel on $M_d$, and suppose that some pure state $\psi_*$ of the input and a reference has output
$\omega=(\Phi\otimes\mathrm{id})(\psi_*)$ flat of rank $r$. Every causal service of $\Phi$ for $n$ rounds with error $\epsilon\le1/16$ and exchange
$J\le\frac n2\log r$ satisfies
\[
P+C\ <\ 3Q+6 .
\]
\end{proposition}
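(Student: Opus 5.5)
We will compare entropies in the global pure state of the experiment in which the observer feeds, in every round, one half of a fresh copy of the probe $\psi_*$. Purify the initial memory $\beta_0$ by a register $P_0$ and each import by a register $P_t$. After $n$ rounds the global state is pure on four parts: the observer's outputs with their references $O$, the final memory $M$ ($Q$ qubits), the exported qubits $X$ ($J$ qubits), and the purifiers $P_0,\dots,P_{n-1}$. In the ideal experiment $O$ is $\omega^{\otimes n}$, flat of rank $r^n$, so $S(O)=n\log r$. With error $\epsilon\le1/16$ the actual state of $O$ is $\epsilon$-close to this.

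We then derive two inequalities and add them. The first is the rank count of~\cite{DouglasMghirbiB}. By purity, $S(O)=S(MXP)$. We bound $S(MXP)\le Q+J+S(P)$, and $S(P)$ equals the entropy of the initial memory plus the entropies of the imports, that is, the number of imported and initial qubits minus $P+C$, which is $Q+J-(P+C)$. So $S(O)\le 2(Q+J)-(P+C)$. With $J\le\frac n2\log r$ this gives $P+C\le 2Q+n\log r-S(O)$.

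A Fannes-type estimate alone would lose $\epsilon n\log r$, which is linear in $n$ and useless here. This is the main obstacle, and the one-sided second-moment bound is meant to handle it. Because $\omega^{\otimes n}$ is flat, every eigenvalue of the actual state $\sigma_O$ is controlled in its weight on the support: $\Tr(\sigma_O\Pi)\ge1-\epsilon$, where $\Pi$ is the flat support projection of dimension $r^n$. We do not estimate $S(\sigma_O)$ directly. Instead we use the complementary rank of $MXP$ as seen through $\Pi$: the support of $\Pi\sigma_O\Pi$ restricted to the pure global state has Schmidt rank at most $2^{Q+J}\cdot\rank(\rho_P)$.

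So we plan to work with the smooth max/min structure instead of von Neumann entropy. The maximally entangled component with weight $\ge1-\epsilon$ on $r^n$ dimensions requires Schmidt rank at least about $(1-2\epsilon)r^n$ across the cut $O\,|\,MXP$. This turns the flatness into a rank lower bound that loses only an additive constant. The second moment $\Tr(\Pi\sigma_O\Pi)^2\le r^{-n}/(1-\epsilon)^{\dots}$ is one-sided: it uses only an upper bound on the collision entropy side. Its role is to convert that rank bound into the inequality
\[
n\log r-O(1)\le Q+J+H_0^{\delta}(P).
\]

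It then remains to bound $H_0^\delta(P)$, the smoothed max-entropy of the purifiers, by the deficit bookkeeping. The purifiers are products over imports, independent across steps, so a typicality or Chebyshev argument would lose $\sqrt n$. We avoid this by charging against $Q$ instead. From the first inequality applied to the final memory alone (subadditivity with $S(M)\le Q$), the imports' excess entropy over their smooth rank is controlled by $Q$ up to a factor. Combining the two gives $P+C\le 3Q+O(1)$, and tracking constants with $\epsilon=1/16$ yields the additive $6$.

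I expect the smoothing step to be the delicate one: getting the flatness to force a rank bound with only $O(1)$ loss while the purity term stays a von Neumann deficit. If that fails, I would fall back to the looser budget $14Q+3$ route of~\cite{DouglasMghirbiB} and only tighten its constants.
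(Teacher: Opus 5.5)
\textbf{Gap: the step that bounds the smoothed rank of the purifiers.}

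Your rank inequality $n\log r-O(1)\le Q+J+H_0^\delta(P)$ is correct. It is the paper's tail bound in another form. It needs no second moment of the observer's state, and the von Neumann inequality of your second paragraph plays no role. The clean argument is a projection. Project the purifiers onto the span of the $2^{H_0^\delta(P)}$ largest eigenvectors of $\rho_P$. This leaves a subnormalized marginal on $O$ that is dominated by $\sigma_O$, has trace at least $1-\delta$, and has rank at most $2^{Q+J+H_0^\delta(P)}$. Its support projection $E$ then gives $1-\delta-\epsilon\le\Tr(E\,\omega^{\otimes n})\le2^{Q+J+H_0^\delta(P)-n\log r}$.

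Now use three facts: $\rho_P$ has the spectrum of $\sigma_0=\beta_0\otimes\bigotimes_f\omega_f$; $S(\sigma_0)=L-b$ with $L=Q+J$ and $b=P+C$; and $2J\le n\log r$. Together they give $b\le2Q+\log\frac1{1-\delta-\epsilon}+\bigl(H_0^\delta(\sigma_0)-S(\sigma_0)\bigr)$. So everything hinges on how far the smoothed max-entropy of $\sigma_0$ exceeds its von Neumann entropy.

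That is where the proposal has a genuine gap. This excess depends only on the spectrum of the initial product state. No entropy inequality about the final memory, such as $S(M)\le Q$, can control it, so the ``charging against $Q$'' step has no mechanism behind it.

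The route you dismiss is the one that works, because Chebyshev does not lose $\sqrt n$ here. The missing idea is that a nearly flat state has nearly constant surprisal. A $q$-qubit state with deficit $D$ has surprisal variance at most $(q+2/\ln2)D$ (Lemma~\ref{lem:surprisal}, from $x(\ln x)^2\le(2+\ln_+x)(x\ln x-x+1)$). Every factor of $\sigma_0$ has at most $Q$ qubits, since an import replaces exported memory qubits. The factor deficits add up to $b$. Hence the surprisal of $\sigma_0$ has variance $V\le(Q+2/\ln2)\,b$, independent of $n$ and of the number of imports.

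One-sided Chebyshev then gives $H_0^\delta(\sigma_0)\le S(\sigma_0)+\sqrt{(1-\delta)V/\delta}$. Take $\delta=3/4$ and $\epsilon\le1/16$. Your inequality becomes $b\le2Q+\log\frac{16}3+\sqrt{(Q+2/\ln2)\,b/3}$, which is incompatible with $b\ge3Q+6$. This is the paper's proof in your notation. The paper phrases the same tail bound as $\Pr(Z\ge2Q+3)\le\frac3{16}$ for $Z=L+\log\sigma_i$, and applies the Cantelli inequality to $Z$, whose mean is $b$.
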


Every gadget channel has such a probe with $r=r_*$ (Section~\ref{sec:channel-setting}). The proof uses the variance of the surprisal of a
state with small deficit, which is~\cite[Lemma~5.9]{DouglasMghirbiB}; we include the short argument.

\begin{lemma}[Surprisal variance]\label{lem:surprisal}
Let $\beta$ be a state on $q\ge1$ qubits with spectrum $(\beta_i)$ and deficit $D=q-S(\beta)$. The surprisal $-\log\beta_i$, drawn with probability
$\beta_i$, has variance at most $(q+2/\ln2)\,D$.
\end{lemma}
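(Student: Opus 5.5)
The plan is to write everything relative to the maximally mixed state and reduce the claim to two pointwise inequalities for the convex function $g(u)=u\ln u-u+1\ge0$. Put $f_i=2^q\beta_i$. The surprisal is $-\log\beta_i=q-\log f_i$, so its variance under $\beta$ equals the variance of $\log f_i$. The variance is at most the second moment about $0$, so
\[
\mathrm{Var}\le\sum_i\beta_i(\log f_i)^2=\frac{1}{(\ln2)^2}\sum_i2^{-q}f_i(\ln f_i)^2 .
\]
On the other side, $\sum_i2^{-q}(f_i-1)=0$, and therefore
\[
\ln 2\cdot D=\sum_i\beta_i\ln f_i=\sum_i2^{-q}g(f_i),
\]
where every term $2^{-q}g(f_i)$ is nonnegative. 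It then suffices to bound $u(\ln u)^2$ termwise by a multiple of $g(u)$, splitting according to whether $f_i<1$ or $f_i\ge1$.

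\emph{Small values, $0<u\le1$.} Here I will show $u(\ln u)^2\le2g(u)$. Put $h=2g-u\ln^2u$. Then $h(1)=0$ and $h'(u)=2\ln u-\ln^2u-2\ln u=-\ln^2u\le0$. So $h$ is nonincreasing on $(0,1]$, and $h\ge h(1)=0$ there. (Indices with $\beta_i=0$ contribute nothing to either side.)

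\emph{Large values, $u\ge1$.} Here I will show $u(\ln u)^2\le(\ln u+2)\,g(u)$. Expanding, this is equivalent to $k(u):=u\ln u+\ln u-2u+2\ge0$. We have $k(1)=0$ and $k'(u)=\ln u+1/u-1$, so $k'(1)=0$. Moreover $k''(u)=1/u-1/u^2\ge0$ for $u\ge1$, hence $k'\ge0$ and $k\ge0$ on $[1,\infty)$. The one place where $q$ enters is now: since $\beta_i\le1$, we have $f_i\le2^q$, so $\ln f_i\le q\ln2$. This gives $u(\ln u)^2\le(q\ln2+2)\,g(u)$ for every $u=f_i\ge1$.

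\emph{Combining.} Apply the two bounds termwise and use $2\le q\ln2+2$ on the small terms:
\[
\mathrm{Var}\le\frac{q\ln2+2}{(\ln2)^2}\sum_i2^{-q}g(f_i)=\frac{q\ln2+2}{\ln2}\,D=\Bigl(q+\frac2{\ln2}\Bigr)D .
\]

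I expect the main obstacle to be the step for large values. The naive estimate $(\log f)^2\le q\log f$ produces $\sum_{f_i\ge1}\beta_i\log f_i$, which is only bounded by $D$ plus the negative part, and that goes the wrong way. Inserting $g$, with the extra ``$+2$'', is exactly what absorbs this and yields the constant $2/\ln2$ in the statement.
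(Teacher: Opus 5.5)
Your proof is correct and follows the paper's argument essentially step for step. The paper uses the same function $\varphi(x)=x\ln x-x+1$ and the same pointwise bound $x(\ln x)^2\le(2+\ln_+ x)\varphi(x)$, checked via $x=e^t$, with your $h$ and $k$ as its two cases. It also uses the same normalization $x_i=2^q\beta_i\le2^q$ and the same second-moment bound on the variance.

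One small caveat on your parenthetical remark: the identity $\ln2\cdot D=\sum_i2^{-q}g(f_i)$ requires summing over all $2^q$ eigenvalues, including the zero ones, as the paper states explicitly. Those zero eigenvalues do contribute to the right side, since $g(0)=1$. Dropping them would only make the right-hand sum smaller, so the argument is unaffected.
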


\begin{proof}
Put $\varphi(x)=x\ln x-x+1\ge0$. We claim $x(\ln x)^2\le(2+\ln_+x)\varphi(x)$ for $x\ge0$. For $x=e^t$ with $t\ge0$ the difference of the two sides
is $f(t)=e^t(t-2)+t+2$, with $f(0)=f'(0)=0$ and $f''(t)=te^t\ge0$. For $t\le0$ it is $g(t)=2(te^t-e^t+1)-t^2e^t$, with $g(0)=0$ and
$g'(t)=-t^2e^t\le0$; and $x=0$ is clear. Now let $Q_0=2^q$ and $x_i=Q_0\beta_i\le Q_0$, the index running over all $Q_0$ eigenvalues. Then
$\frac1{Q_0}\sum_ix_i=1$ and $D\ln2=\sum_i\beta_i\ln(Q_0\beta_i)=\frac1{Q_0}\sum_i\varphi(x_i)$. The variance is at most
\[
\sum_i\beta_i\bigl(\log(Q_0\beta_i)\bigr)^2=\frac1{Q_0(\ln2)^2}\sum_ix_i(\ln x_i)^2\le\frac{2+q\ln2}{(\ln2)^2}\,D\ln2=\Bigl(q+\frac2{\ln2}\Bigr)D .
\]
\end{proof}

\begin{proof}[Proof of Proposition~\ref{prop:purity-budget}]
Put $b=P+C$ and suppose, for a contradiction, that $b\ge3Q+6$.

\emph{Preloading.} The imports are fixed states, independent of everything else, and the exported qubits are never used again. So the observer's
final state does not change if every import is prepared at the start in a register of its own, and each replacement swaps it with the exported
qubits, which then stay in the device. The device becomes a register of $L=Q+J$ qubits in the product state
$\sigma_0=\beta_0\otimes\bigotimes_f\omega_f$ of the initial memory state and the imported states, and it applies a fixed unitary in each round.

\emph{The surprisal.} Let $(\sigma_i)$ be the spectrum of $\sigma_0$ and let $Z=L+\log\sigma_i$, with $i$ drawn with probability $\sigma_i$. Then
$\mathbb EZ=L-S(\sigma_0)=b$. Also $Z$ is the sum of the independent variables $Z_f=q_f+\log\beta^{(f)}_i$ attached to the factors of $\sigma_0$,
where $q_f$ is the number of qubits of the factor and $\beta^{(f)}$ its spectrum. Every factor has at most $Q$ qubits, since an import replaces
exported memory qubits, and the deficits $b_f=\mathbb EZ_f$ of the factors add up to $b$. By Lemma~\ref{lem:surprisal},
\[
\mathrm{Var}\,Z=\sum_f\mathrm{Var}\,Z_f\le\Bigl(Q+\frac2{\ln2}\Bigr)b .
\]

\emph{A tail bound from the rank.} Let the observer feed the input halves of $n$ copies of $\psi_*$ and keep the references. Its final state is
$\tau_O=\sum_i\sigma_i\tau_i$, where $\tau_i$ is its final state when the register starts in the eigenvector of $\sigma_i$. Each $\tau_i$ is a
marginal of a pure state of the observer's system and the $L$-qubit register, so $\rank\tau_i\le2^L$. Fix a real number $k$ and let $I_k$ be the
set of indices with $\sigma_i\ge2^{k-L}$, so that $|I_k|\le2^{L-k}$. Let $E$ be the support projection of $\sum_{i\in I_k}\tau_i$. Then
$\rank E\le2^{2L-k}$ and $\Tr(E\tau_O)\ge\sum_{i\in I_k}\sigma_i=\Pr(Z\ge k)$. The ideal final state $\omega^{\otimes n}$ is flat of rank $r^n$, so
$\Tr(E\omega^{\otimes n})\le2^{2L-k}r^{-n}\le2^{2Q-k}$, using $2J\le n\log r$. Since the error is at most $\epsilon$,
\begin{equation}\label{eq:rank-tail}
\Pr(Z\ge k)\ \le\ \epsilon+2^{2Q-k}\qquad(k\in\mathbb R).
\end{equation}

\emph{The contradiction.} Put $u=b-2Q-3$, so that $u\ge b/3+1>0$. By the one-sided Chebyshev inequality,
$\Pr(Z\ge b-u)\ge u^2/(\mathrm{Var}\,Z+u^2)$, while~\eqref{eq:rank-tail} with $k=b-u=2Q+3$ gives $\Pr(Z\ge b-u)\le\frac1{16}+\frac18=\frac3{16}$. Hence
$13u^2\le3\,\mathrm{Var}\,Z$, and with $u>b/3$,
\[
\frac{13}9\,b^2<3\Bigl(Q+\frac2{\ln2}\Bigr)b,\qquad\text{that is}\qquad b<\frac{27}{13}\Bigl(Q+\frac2{\ln2}\Bigr)<2.08\,Q+5.993 ,
\]
which contradicts $b\ge3Q+6$.
\end{proof}
\section{Proof of Theorem~B}\label{sec:main-channel-proof}

Theorem~B, in the form of Theorem~\ref{thm:main-channel}, is assembled from sequential extraction, the weighted one-round theorem, the purity
budget and the sofic-model devices. Two remarks after the proof say what it shows.

\emph{Sofic models give the closure and the services.} By Theorem~\ref{thm:main-group}(b) the chunk $E$ of $\Gamma$ that consists of $1$ and
the labels of the gadget of $R_\Phi$ has $\log\sigma_E(r)\le C_Er\log r$. Proposition~\ref{prop:sofic-upper} therefore gives finite tracial
channels within $K_G/r$ of $\Phi$ for every $r$, so $\Phi\in\cK_{873}$. This also follows from~\cite{DouglasMghirbiB}, since $\Gamma$ is amenable
(Lemma~\ref{lem:structure}) and hence hyperlinear. The exact services of part~(a) are those of Proposition~\ref{prop:sofic-upper}(c), and the
zero-purity services are those of Proposition~\ref{prop:sofic-upper}(a) with $\epsilon=1/16$. Their memory bounds follow from
$\log\sigma_E(R)\le C_ER\log R$ for $R\ge2$ and from $\log(4K_Gn/B)\le(1+\log2K_G)\log(2n/B)$, which holds because $\log(2n/B)\ge1$.
Corollary~\ref{cor:streaming-lamps} gives the services of part~(b). The bounds on $r_*$ were proved in Section~\ref{sec:lamp-channel}.

\emph{Part (a).} Let a service have error $\epsilon\le1/16$ and $P+C\le B$, where $\log n\le B$. Theorem~\ref{thm:extraction} gives a round with
bath dimension $2^Q$, $\ddia(\Psi_t,\Phi)\le\epsilon/(1-\epsilon)\le1/15$ and $\nu_t=\ell_t+\Delta S_t\le t_n$. Since $\epsilon+h_2(\epsilon)<1$,
$\epsilon\log n\le\log n$ and $1\le\log n\le B$,
\[
t_n\le\frac{16}{15}\cdot\frac{B+1+\log n}{n}\le\frac{16B}{5n}\qquad(n\ge2).
\]
Let $C_R\ge1$ be the constant of Corollary~\ref{cor:amplification}(a) for $R_\Phi$ and the transport of Section~\ref{sec:lamp-channel}, and put
$y_n=5n/(16C_RB)$, so that $C_R\nu_ty_n\le1$ and $y_n\le n$. Put $c_0=5/(16C_Ry_0)$. If $B\le c_0n$, then $y_n\ge y_0$, and the corollary gives
\[
Q\ \ge\ S(\rho_t)\ \ge\ \frac{y_n}{32(\log y_n)^{2\kappa}}\ \ge\ \frac{5}{512\,C_R}\cdot\frac{n}{B(\log n)^{2\kappa}} .
\]
For the second statement take $B=\max(1,A_0)\log n$, which lies between $\log n$ and $c_0n$ for large $n$.

\emph{Part (b).} Proposition~\ref{prop:purity-budget}, with the flat probe of rank $r_*$, gives $P+C\le B:=3Q+6$. Theorem~\ref{thm:extraction} now
gives a round with $\nu_t\le t_n\le\frac{16}{15}(3Q+7+\log n)/n$. Put $y_n=15n/(16C_R(3Q+7+\log n))\le n$. If $y_n<y_0$, then
$3Q+7+\log n>15n/(16C_Ry_0)$, so $Q\ge c'n$ for large $n$. Otherwise Corollary~\ref{cor:amplification}(a) gives $Q\ge y_n/(32(\log n)^{2\kappa})$,
that is,
\[
Q\,(3Q+7+\log n)\ \ge\ \frac{15}{512\,C_R}\cdot\frac{n}{(\log n)^{2\kappa}} .
\]
If $Q\le\log n+7$ the left side is at most $4(\log n+7)^2$, which is smaller than the right side for large $n$. So $Q>\log n+7$, the left side is
at most $4Q^2$, and $Q\ge c\sqrt n/(\log n)^\kappa$. \qed

\paragraph{Finite tracial approximations.} The same argument has a static reading. A finite tracial factorization of bath dimension $M$ is a
round with the flat bath state $\tau_M$: its entropy production vanishes, and its support leakage is at most its Choi distance to $\Phi$. So
Theorem~\ref{thm:weighted} also reads the static approximation profile of $\Phi$, and the hash models bound it from above. For channels
$\Psi,\Psi'$ on $M_d$ let $d_J(\Psi,\Psi')=\frac12\|\mathcal J(\Psi)-\mathcal J(\Psi')\|_1$ be the distance of their normalized Choi states, so
that $d_J\le\ddia\le d\,d_J$~\cite[Section~2]{DouglasMghirbiB}. For $u\ge0$ let $D_\Phi(u)$ be the least $M$ for which some channel $\Psi$ with a
finite tracial factorization of bath dimension $M$ has $d_J(\Psi,\Phi)\le u$; this is the Choi profile
of~\cite[Definition~2.1]{DouglasMghirbiB}. Let $D^\diamond_\Phi(u)$ be the same with $\ddia$ in place of $d_J$, and $D^{\mathrm{mix}}_\Phi(u)$
the same with $\Psi$ a finite convex combination of channels with finite tracial factorizations of bath dimensions at most $M$. Each is
$+\infty$ if no such $\Psi$ exists, each is nonincreasing in $u$, and $D^{\mathrm{mix}}_\Phi\le D_\Phi\le D^\diamond_\Phi\le D_\Phi(\,\cdot\,/d)$.

\begin{corollary}[Finite tracial approximations of the lamp channel]\label{cor:profile}
Let $\Phi$ be the channel of Theorem~\ref{thm:main-channel} on $M_{873}$, let $\kappa=6+\log107$, let $E$ be the chunk of $\Gamma$
consisting of $1$ and the labels of $\Phi$, and let $K_G=298+\binom{r_*}2=8683$ be its constant in Proposition~\ref{prop:sofic-upper}.
There are constants $c,C>0$ and $u_0>0$ such that for $0<u\le u_0$
\[
c\,\frac{u^{-1}}{(\log(1/u))^{2\kappa}}\le\log D^{\mathrm{mix}}_\Phi(u)\le\log D_\Phi(u)\le\log D^\diamond_\Phi(u)\le\log\sigma_E(K_G/u)\le C\,u^{-1}\log(1/u),
\]
where $2\kappa<25.49$. So the least bath dimension of a finite tracial approximation of $\Phi$ to accuracy $u$, in Choi or in diamond
distance, is $\exp(u^{-1+o(1)})$, and $D^{\mathrm{mix}}_\Phi(0)=\infty$: $\Phi$ is a limit of channels with finite tracial factorizations
but is not a finite convex combination of such channels, although it has a Stinespring dilation with a pure environment of dimension
$r_*=130$.
\end{corollary}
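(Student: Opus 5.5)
The plan has three parts: the chain of comparisons among the profiles, the upper end of the chain from sofic models, and the lower end from the weighted one-round theorem applied to a mixture.

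\emph{The middle and upper inequalities.} The inequalities $D^{\mathrm{mix}}_\Phi\le D_\Phi\le D^\diamond_\Phi$ hold by definition: a single channel is a trivial mixture, and $d_J\le\ddia$. For $\log D^\diamond_\Phi(u)\le\log\sigma_E(K_G/u)$, take a model $f$ of $E$ at accuracy $r=K_G/u$ and apply Proposition~\ref{prop:sofic-upper}. It gives $\Phi_f=(1-\vartheta)\Phi+\vartheta\Xi$ with $\vartheta\le K_G/r=u$. Hence $\ddia(\Phi_f,\Phi)\le\vartheta\,\ddia(\Xi,\Phi)\le u$, and $\Phi_f$ has a finite tracial factorization with bath dimension $\sigma_E(K_G/u)$. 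The last inequality is Theorem~\ref{thm:main-group}(b) applied to $E$, whose elements are products of boundedly many generators; it gives $\log\sigma_E(K_G/u)\le C_E(K_G/u)\log(K_G/u)\le Cu^{-1}\log(1/u)$ for small $u$.

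\emph{The lower bound.} Let $\Psi=\sum_k p_k\Psi_k$ with $d_J(\Psi,\Phi)\le u$, where each $\Psi_k$ has a tracial factorization $(U_k,M_k)$ with $M_k\le M$. Realize $\Psi$ as one round: the bath is $\bigoplus_k\mathbb C^{M_k}$, or equivalently $\mathbb C^K\otimes\mathbb C^M$ with a classical index register. The unitary is $U=\sum_k|k\rangle\langle k|\otimes U_k$, padded by the identity where $M_k<M$, and the bath state is $\rho=\bigoplus_k p_k I_{M_k}/M_k$. Three estimates then follow.
\begin{itemize}
\item The entropy production vanishes: $T$ preserves each block and is unital there, so $T(\rho)=\rho$.
\item The support leakage satisfies $\ell=\Tr[(I-\Pi_\Phi)\mathcal J(\Psi)]\le d_J(\Psi,\Phi)\le u$, since $\Tr[(I-\Pi_\Phi)\mathcal J(\Phi)]=0$.
\item The distance $\ddia(\Psi,\Phi)\le d\,u\le1/8$ holds for small $u$.
\end{itemize}
So $\nu\le u$, and Corollary~\ref{cor:amplification}(a), with the transport of Section~\ref{sec:lamp-channel} and $y=1/(C_Ru)$, gives $S(\rho)\ge c\,u^{-1}/(\log(1/u))^{2\kappa}$.

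\emph{The main obstacle.} The entropy $S(\rho)=H(p)+\sum_k p_k\log M_k$ contains the mixing entropy $H(p)$, which the index register adds and the bath dimension does not pay for. To remove it, I would not bound the whole mixture. Instead I would apply the argument to a single component, chosen by an averaging argument. Decompose the leakage linearly as $\ell=\sum_kp_k\ell_k$, where $\ell_k$ is the leakage of $\Psi_k$, so some $k$ has $\ell_k\le\ell$. The difficulty is that this component must also satisfy $\ddia(\Psi_k,\Phi)\le1/8$, which the theorem assumes and which need not hold for every component. I would handle this in one of two ways. The first is to check, as I expect from the proof, that Theorem~\ref{thm:weighted} uses the diamond hypothesis only to control the anchors and their weight, and that small leakage alone suffices. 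The second is a Markov argument: components with $\ell_k\le2\ell$ carry weight at least $1/2$, and among them leakage controls the Choi distance on the relevant support. Choosing a good $k$ and applying the corollary to $(U_k,\tau_{M_k})$ gives $\log M\ge\log M_k=S(\tau_{M_k})\ge c\,u^{-1}/(\log(1/u))^{2\kappa}$.

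\emph{The final claims.} $D^{\mathrm{mix}}_\Phi(0)=\infty$ follows because the lower bound tends to infinity as $u\to0$ and the profile is nonincreasing in $u$. Membership in $\cK_{873}$ is Theorem~\ref{thm:main-channel}. The Stinespring dilation with environment of dimension $r_*=130$ comes from the Kraus form $\Phi(X)=\sum_gA_gXA_g^*$ with $r_*$ Kraus operators.
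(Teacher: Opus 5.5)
You reproduce the paper's argument for the upper bound, for the comparisons $D^{\mathrm{mix}}_\Phi\le D_\Phi\le D^\diamond_\Phi$, and for the flat round with $\Delta S=0$, $\ell\le u$ and $\ddia(\Psi,\Phi)\le du$; the bound on $S(\rho)$ from Corollary~\ref{cor:amplification}(a) also matches. The gap is the last step, from $S(\rho)$ to $\log M$. Both of your routes to a single good component assume that small leakage controls the distance to $\Phi$, and it does not.

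A counterexample: set every label of the gadget equal to $1$. The round with bath $\mathbb C^1$ and $U=\sum_ac_aE_a=\sum_gA_g$ is unitary. Its Choi vector $d^{-1/2}\sum_g|A_g\rangle\!\rangle$ lies in the Choi support of $\Phi$, and its bath is trivial. So $\ell=\Delta S=\nu=0$ and \eqref{eq:weighted-condition} holds for every $N$, yet $S(\rho)=0$. Its anchor coherence $\langle z|\Psi(|z\rangle\langle0|)|0\rangle$ is $1$ instead of $0$, so $\ddia(\Psi,\Phi)\ge\frac12$. The diamond hypothesis in Lemma~\ref{lem:weighted-extraction} exists precisely to exclude such rounds, by forcing $\phi(ab)$ to be displaced in part~(iii). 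It cannot be dropped, so a component chosen by leakage alone, whether $\ell_k\le\ell$ or via Markov, may be useless. Such components do occur in mixtures that reproduce a gadget channel: splitting the regular representation of $S_3$ writes $\Phi_{S_3}=\frac16\Psi_{\mathrm{triv}}+\frac16\Psi_{\mathrm{sgn}}+\frac23\Psi_{\mathrm{std}}$. All three components have zero leakage, and in the first two $ab$ acts trivially.

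The paper never selects a component; it bounds the mixing entropy instead. Channels on $M_d$ lie in a real affine space of dimension less than $d^4$, so by Carath\'eodory's theorem the mixture can be taken with at most $d^4$ terms. Then $S(\rho)=H(p)+\sum_kp_k\log M_k\le4\log d+\log M$. Choosing $u_0$ so that the lower bound on $S(\rho)$ is at least $8\log d$ gives $\log M\ge c\,u^{-1}(\log(1/u))^{-2\kappa}$ with $c=1/(64C_R)$.

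Your route can also be repaired, but only by selecting on the coherence as well as the leakage. The coherence $\gamma_k=\langle z|\Psi_k(|z\rangle\langle0|)|0\rangle$ is linear in the channel. Also $\frac1d\sum_kp_k\gamma_k$ is an entry of the traceless Hermitian matrix $\mathcal J(\Psi)-\mathcal J(\Phi)$, whose operator norm is at most $d_J(\Psi,\Phi)$. Hence $|\sum_kp_k\gamma_k|\le du\le\frac18$. Components with $\ell_k>10\ell$ have total weight at most $\frac1{10}$. If all the others had $\mathrm{Re}\,\gamma_k>\frac34$, the real part of the sum would exceed $\frac34\cdot\frac9{10}-\frac1{10}>\frac18$. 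So some component has $\ell_k\le10u$, $\Delta S_k=0$ and $\mathrm{Re}\,\gamma_k\le\frac34$. The proof of Lemma~\ref{lem:commutator-form} goes through with $\mathrm{Re}\,\gamma\le\frac34$ in place of \eqref{eq:coherence}, since only $\mathrm{Re}\Tr(\rho V_\varnothing^*V_j)$ enters. This version even avoids the $4\log d$ loss, but your proposal does not contain this argument.
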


\begin{proof}
\emph{Upper bound.} Let $0<u\le\frac12$ and $r=K_G/u\ge2$. Since $\sigma_E(r)$ is finite by Theorem~\ref{thm:main-group}(b), there is a
$(1-r^{-1})$-expansive $r^{-1}$-morphism $f:E\to\Sym(\Omega)$ with $|\Omega|=\sigma_E(r)$. Proposition~\ref{prop:sofic-upper} gives a channel
$\Phi_f=(1-\vartheta)\Phi+\vartheta\,\Xi$ with a finite tracial factorization of bath dimension $\sigma_E(r)$ and $0\le\vartheta\le K_G/r=u$, so
$\ddia(\Phi_f,\Phi)=\vartheta\,\ddia(\Xi,\Phi)\le u$ and $D^\diamond_\Phi(u)\le\sigma_E(K_G/u)$. By Theorem~\ref{thm:main-group}(b),
$\log\sigma_E(r)\le C_Er\log r$, and $\log(K_G/u)\le(1+\log K_G)\log(1/u)$ since $\log(1/u)\ge1$; so $C=C_EK_G(1+\log K_G)$ serves. The
inequalities between the three profiles hold because $d_J\le\ddia$ and a single factorization is a convex combination with one term.

\emph{A mixture is a flat round.} Let $\Psi=\sum_{i=1}^kp_i\Psi_i$ with $d_J(\Psi,\Phi)\le u$, where
$\Psi_i(X)=(\mathrm{id}\otimes\tau_{m_i})[U_i(X\otimes I)U_i^*]$ and $m_i\le M$. By Carath\'eodory's theorem we may take $k\le d^4$, since
the channels on $M_d$ lie in a real affine space of dimension less than $d^4$. On $\mathbb C^d\otimes\bigoplus_i\mathbb C^{m_i}$ put
$U=\bigoplus_iU_i$ and $\rho=\bigoplus_ip_i\tau_{m_i}$. Then the partial trace of $U(X\otimes\rho)U^*$ over $\bigoplus_i\mathbb C^{m_i}$ is
$\Psi(X)$, so $\Psi$ is a round in the sense of Section~\ref{sec:channel-setting}, with $S(\rho)=H(p)+\sum_ip_i\log m_i\le4\log d+\log M$. Its
entropy production vanishes: $T(\rho)=\bigoplus_ip_i\,\mathrm{Tr}_d\,U_i(\tau_d\otimes\tau_{m_i})U_i^*=\rho$, because
$\tau_d\otimes\tau_{m_i}$ is the normalized identity and $U_i$ is unitary. Its support leakage is
$\ell=\Tr[(I-\Pi_\Phi)(\mathcal J(\Psi)-\mathcal J(\Phi))]\le\frac12\|\mathcal J(\Psi)-\mathcal J(\Phi)\|_1\le u$, because $\mathcal J(\Phi)$
is supported in $\Pi_\Phi$ and $\Tr[PX]\le\frac12\|X\|_1$ for a projection $P$ and a traceless Hermitian $X$. So $\nu=\ell+\Delta S\le u$, and
$\ddia(\Psi,\Phi)\le d\,u\le\frac18$ when $u\le1/(8d)$: the round satisfies the hypotheses of Theorem~\ref{thm:weighted}.

\emph{Lower bound.} Section~\ref{sec:lamp-channel} verifies the hypothesis of Corollary~\ref{cor:amplification}(a) for $R_\Phi$ with the
present $\kappa$ and $K=C_1+64$. Let $C_R=2^{50}m^4dK^2\ge1$ and $y_0$ be the constants of that corollary, and put $y=1/(C_Ru)$. Then
$C_R\nu y\le1$, and $y\ge y_0$ when $u\le1/(C_Ry_0)$. For $u\le\min\{1/(8d),1/(C_Ry_0)\}$ the corollary gives, since $1<y\le1/u$,
\[
4\log d+\log M\ \ge\ S(\rho)\ \ge\ \frac{y}{32(\log y)^{2\kappa}}\ \ge\ \frac1{32C_R}\cdot\frac{u^{-1}}{(\log(1/u))^{2\kappa}} .
\]
Let $u_0$ be so small that the right side is at least $8\log d$ for $u\le u_0$. Then $\log M\ge c\,u^{-1}(\log(1/u))^{-2\kappa}$ with
$c=1/(64C_R)$, and with $c=1/(32C_R)$ for a single factorization, where $H(p)=0$. Finally $D^{\mathrm{mix}}_\Phi(0)\ge D^{\mathrm{mix}}_\Phi(u)$
for every $u>0$, and the lower bound is unbounded as $u\to0$. The pure environment is the Stinespring dilation of Remark~\ref{rem:plus-one}.
\end{proof}

By a theorem of Haagerup and Musat~\cite[Theorem~3.6]{HaagerupMusat2015}, a channel on $M_d$ lies in $\cK_d$ exactly when it factors through an
ultrapower of the hyperfinite $\mathrm{II}_1$ factor, and Musat and R{\o}rdam showed that such a channel may have no finite tracial
factorization at all~\cite{MusatRordam2020}. The Houghton channel of~\cite[Theorem~D.1]{DouglasMghirbiB} is an explicit channel of this kind,
with profile exponent between $0.069$ and $1/3$. For $\Phi$ the exponent is $1-o(1)$, nearly the most that the one-round method can certify
(Remark~\ref{rem:sqrt-limit}), while channels in the relative interior of $\cK_d$ have profiles at most linear in
$1/u$~\cite[Section~10]{DouglasMghirbiB}.

\begin{remark}[Memory between two readings of the profile]\label{rem:sandwich}
Take a gadget channel of a sofic group that satisfies the hypotheses of Theorem~\ref{thm:weighted}. The least memory of its services with error
$1/16$ and purity $O(\log n)$ lies between two readings of the group at scale $n$. The lower reading is the entropy forced by as many cheaply
commuting copies of $S_3$ as Corollary~\ref{cor:amplification} fits at relator defect about $\sqrt{\log n/n}$. The upper reading is
$\log\sigma_E(Cn)+O(\log n)$, by Proposition~\ref{prop:sofic-upper}; it is attained already at zero purity, with exchange $(n-1)Q$. For the
configuration lamps both readings are $n^{1+o(1)}$. The upper reading is the sofic profile, not the F{\o}lner function. The latter is doubly
exponential for $\Gamma$ (Proposition~\ref{prop:folner-lower}), so models built from F{\o}lner sets~\cite[Proposition~3.14]{Cornulier2013} give
memory exponential in $n$, while the hash models behind Theorem~\ref{thm:main-group}(b) give $O(n\log n)$. The grid host gives a second pair
of readings, conditionally on the sketches of Appendix~\ref{sec:grid} (Remark~\ref{rem:grid-channel}).
\end{remark}

\begin{remark}[The rank rate is the diagonal of the trade-off]\label{rem:sqrt-limit}
At the rank rate a device with error at most $1/16$ has purity less than $3Q+6$ (Proposition~\ref{prop:purity-budget}).
Theorem~\ref{thm:main-channel}(a) with $B$ of order $Q$ therefore forces $Q\gtrsim\sqrt n$, and the streaming services attain this with purity
of the same order. In general, for every channel with a flat probe, padding a rank-rate service with error at most $1/16$ by $2Q+6$ maximally mixed
qubits turns it into a service whose purity is at most its memory. The least memory at purity at most memory is therefore at most three times
the least memory at the rank rate, plus six. For $\Phi$ both are $n^{1/2+o(1)}$. The one-round method cannot give more than $\sqrt n$ at the
rank rate, since condition~\eqref{eq:weighted-condition} certifies at most about $1/\nu$ copies; Theorem~\ref{thm:streaming} shows that for
$\Phi$ nothing more is true.
\end{remark}

\part{Physics and open problems}
\section{Local physics}\label{sec:physics}

Theorem~B concerns an abstract channel. This section asks which physical systems could produce the lamp channel $\Phi$, and what a
system left to itself can produce in the long run. The answers are given at the scale of Theorem~B. A channel is \emph{cheap} if for all
$n$ and $\epsilon$ it has causal services for $n$ rounds with error at most $\epsilon$, $P=C=0$ and memory polynomial in $\log(n/\epsilon)$.
It is \emph{expensive} if for some $A,c>0$ every service for $n$ rounds with error at most $1/16$ and $P+C\le A\log(n+1)$ needs memory at
least $n^c$ for all large $n$. Neither notion constrains the exchange.

Local dynamics with a maximally mixed bath are cheap at every finite time, so they need time about $\delta^{-1/\nu}$ to come within
$\delta$ of $\Phi$ (Section~\ref{sec:physics-local}). An autonomous dynamics with a tracial bath has only self-adjoint, positive channels as
long-time limits (Section~\ref{sec:physics-autonomous}). Those two properties do not force cheapness: a lazy doubling of $\Phi$ has both, is
the exact value of an autonomous dynamics at every time $t\ge1$, and is still expensive (Section~\ref{sec:physics-lazy}). Finer constraints
on autonomous limits, which exclude some exact finite factors, are in Appendix~\ref{sec:physics-finer}.

\subsection{Local reservoirs are cheap at every finite time}\label{sec:physics-local}

\paragraph{Local reservoirs.} A \emph{local reservoir} is a finite box of the lattice $\Z^\nu$ with a system of dimension at most $q$ at each
site, the probe $\mathbb C^d$ attached to the origin, and a Hamiltonian $H(t)=\sum_Xh_X(t)$, possibly time dependent. Its terms are
supported in sets $X$ of diameter at most $R_0$ and satisfy $\|h_X(t)\|\le J_0$. The reservoir starts in its tracial state, and the probe is
coupled to it for a time $T$. The resulting channel on the probe is $\Phi_T(X)=(\mathrm{id}\otimes\tau)[W_T(X\otimes I)W_T^*]$, where $W_T$ is
the time-ordered evolution and $\tau$ the normalized trace of the reservoir. We call $\nu,q,R_0,J_0$ and $d$ the \emph{parameters} of the
reservoir; the size of the box is arbitrary.

\begin{proposition}[Local reservoirs are cheap at every finite time]\label{prop:light-cone}
There is $K$, depending only on the parameters, such that for every local reservoir, every $T\ge0$, every $n\ge1$ and every $0<\epsilon\le1$, the
channel $\Phi_T$ has a causal service for $n$ rounds with error at most $\epsilon$, $P=C=0$, $J=(n-1)Q$ and
\[
Q\ \le\ K\,\bigl[1+T+\log(n/\epsilon)\bigr]^\nu .
\]
\end{proposition}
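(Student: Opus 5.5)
We will use a Lieb--Robinson light-cone argument to replace the reservoir by a small one, and then serve the small one by refreshing its memory.

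\emph{Step 1: truncation.} Fix $\delta=\epsilon/n$. By the Lieb--Robinson bound, with velocity $v$ and decay length depending only on $(\nu,q,R_0,J_0)$, the evolution $W_T$ restricted to the probe is approximated by the evolution $W'_T$ generated by the terms $h_X$ supported in the ball $B_\ell$ of radius $\ell=vT+K_1\log(1/\delta)+R_0$ around the origin. Concretely, we compare the Heisenberg-evolved probe observables. The channel $\Phi_T$ depends only on $W_T^*(Y\otimes I)W_T$ for probe operators $Y$, paired with the tracial state, and in the Heisenberg picture these operators are approximated in operator norm by their truncated versions. The tracial state of the box factorizes as the tracial state on $B_\ell$ times the tracial state on its complement. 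Hence $\Phi'_T(X)=(\mathrm{id}\otimes\tau_{B_\ell})[W'_T(X\otimes I)W_T'^*]$ satisfies $\ddia(\Phi_T,\Phi'_T)\le\delta$. The diamond bound (rather than the operator-norm bound) should follow, at the cost of a factor $d^2$, by applying the estimate to the matrix units on the probe. That factor is absorbed into $K_1$.

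\emph{Step 2: a finite tracial factorization.} The channel $\Phi'_T$ has a finite tracial factorization with bath $\bigotimes_{x\in B_\ell}\mathbb C^{q}$. Its dimension $M$ satisfies $\log M\le|B_\ell|\log q\le K_2(1+\ell)^\nu$, using the unitary $W'_T$ on $\mathbb C^d\otimes\mathbb C^M$.

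\emph{Step 3: the device.} We follow the refreshing scheme of Proposition~\ref{prop:sofic-upper}(a). The memory holds $Q=\lceil\log M\rceil$ qubits, and we embed $\mathbb C^M$ as a summand; when $M$ is not a power of two we pad as there, or simply take $q$ a power of two by enlarging the local dimension. The memory starts maximally mixed, and each round applies $W'_T$. After every round but the last, all $Q$ qubits are exported and replaced by maximally mixed ones, so $P=C=0$ and $J=(n-1)Q$. Each round then applies $\Phi'_T$ exactly, up to the padding summand, whose weight is at most $\delta$. A hybrid argument over the $n$ rounds, using that the diamond distance controls adaptive observers round by round, bounds the error by $2n\delta\le\epsilon$ after adjusting constants. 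We then have $Q\le K_2(1+vT+K_1\log(n/\epsilon)+R_0)^\nu+1\le K[1+T+\log(n/\epsilon)]^\nu$.

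The main obstacle is Step 1. A standard Lieb--Robinson bound gives only an operator-norm estimate for local observables. Here the initial state of the bath is the trace rather than a product pure state, and time dependence is allowed. We would handle the time dependence by using the interaction-picture form of the Lieb--Robinson bound, which holds uniformly for time-dependent bounded-range interactions. We would also check that the tracial state, being a product across the boundary of $B_\ell$, introduces no extra error.
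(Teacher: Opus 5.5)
Your argument follows the paper's proof. It has the same three parts: a Lieb--Robinson truncation to a ball about the origin, the finite tracial factorization that the truncated dynamics has on that ball, and the refresh-everything device of Proposition~\ref{prop:sofic-upper}(a) with a hybrid argument. To get a diamond bound you use matrix units, $\|\Delta\|_\diamond\le d^2\sup_{\|Y\|\le1}\|\Delta^*(Y)\|$. This is a valid substitute for the paper's observation that the Lieb--Robinson constants do not depend on the dimension at the origin, so the estimate holds with a reference attached and bounds the completely bounded norm directly. The factor $d^2$ only enlarges the radius by $O(\log d)$.

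One step fails as written: the memory count.
\begin{itemize}
\item Since $P=C=0$, the memory and every import must be maximally mixed on all $2^Q$ dimensions. With $Q=\lceil\log M\rceil$ you have $M\le2^Q<2M$, so the padding summand has weight $1-M/2^Q$. This can be close to $\frac12$, not at most $\delta$.
\item The fix is to take $Q=\lceil\log M+\log(2n/\epsilon)\rceil$, as the paper does, and split the memory as $(\mathbb C^M\otimes\mathbb C^{k_0})\oplus\mathbb C^{g_0}$ with $k_0=\lfloor2^Q/M\rfloor$ and $g_0<M\le2^Q\epsilon/(2n)$. The extra $\log(2n/\epsilon)+1$ qubits fit inside $K[1+T+\log(n/\epsilon)]^\nu$.
\item Your alternative, enlarging each site to a power-of-two dimension, does not repair this. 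The maximally mixed state of the enlarged ball puts weight on configurations that use the new levels, nearly all of it for a large ball, and nothing makes the extended dynamics reproduce $\Phi'_T$ there.
\end{itemize}
Finally, make the truncation error and the padding weight each at most $\epsilon/(2n)$, so that the hybrid argument gives $\epsilon$ rather than $2\epsilon$.
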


\begin{proof}
For $r\ge0$ let $\Phi_T^{(r)}$ be the channel produced by the terms $h_X$ with $X$ inside the ball $B_r$ of radius $r$ about the origin. The
Lieb--Robinson bound~\cite{LiebRobinson1972,BHV2006,NSY2019} gives constants $C_1,v,\mu>0$, depending only on the parameters, such that the Heisenberg
evolutions of an operator $A$ at the origin under the full and the truncated dynamics differ in norm by at most $C_1\|A\|e^{vT-\mu r}$. The bound
does not depend on the dimension of the site at the origin, so it holds for operators on the probe tensored with a reference system that the
dynamics does not touch. Hence the Heisenberg-picture maps of $\Phi_T$ and $\Phi_T^{(r)}$ differ in completely bounded norm by at most
$C_1e^{vT-\mu r}$, and by duality $\ddia(\Phi_T,\Phi_T^{(r)})\le C_1e^{vT-\mu r}$. The truncated evolution acts only on the probe and the sites
of $B_r$, and the other sites start in their tracial state, so $\Phi_T^{(r)}$ has a finite tracial factorization with bath dimension
$m\le q^{(2r+1)^\nu}$.

Choose $r=\lceil(vT+\ln(2C_1n/\epsilon))/\mu\rceil$, so that $\ddia(\Phi_T,\Phi_T^{(r)})\le\epsilon/(2n)$, and let $Q=\lceil\log m+\log(2n/\epsilon)\rceil$.
The exchange construction in the proof of Proposition~\ref{prop:sofic-upper}, with $U_f$ replaced by the unitary of this factorization and
$\vartheta=0$, gives a service of $\Phi_T^{(r)}$ in which every round is within $\epsilon/(2n)$ of $\Phi_T^{(r)}$, hence within $\epsilon/n$ of
$\Phi_T$. A hybrid argument bounds the error by $\epsilon$. The memory is $Q\le(2r+1)^\nu\log q+\log(2n/\epsilon)+1\le K[1+T+\log(n/\epsilon)]^\nu$.
\end{proof}

With Theorem~B this says that local physics cannot produce the lamp channel quickly.

\begin{corollary}[Preparation time]\label{cor:preparation}
Let $\Phi$ be the channel of Theorem~\ref{thm:main-channel}, and fix the parameters of a local reservoir with $d=873$. There are $c,\delta_0>0$ such
that, whenever such a reservoir produces at time $T$ a channel within $\delta\le\delta_0$ of $\Phi$ in diamond distance,
\[
T\ \ge\ c\,\delta^{-1/\nu}\bigl(\log(1/\delta)\bigr)^{-(2\kappa+1)/\nu},\qquad 2\kappa+1<26.49 .
\]
\end{corollary}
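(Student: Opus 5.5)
The plan is to combine Proposition~\ref{prop:light-cone} with the lower bound of Theorem~\ref{thm:main-channel}(a). Suppose a reservoir with the fixed parameters produces at time $T$ a channel $\Phi_T$ with $\ddia(\Phi_T,\Phi)\le\delta$. The strategy is to turn a cheap service of $\Phi_T$ into a service of $\Phi$ for a number of rounds $n$ of order $1/\delta$. The lower bound then forces memory $n^{1-o(1)}$, while the light-cone bound caps the memory at $\mathrm{poly}(T,\log n)$. Comparing the two gives the bound on $T$.

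\emph{Step 1: from $\Phi_T$ to $\Phi$.} Take the service of $\Phi_T$ from Proposition~\ref{prop:light-cone} with error $\epsilon=1/32$, $P=C=0$ and memory $Q\le K[1+T+\log(32n)]^\nu$. The same device, viewed as a service of $\Phi$, has error at most $1/32+n\delta$. This follows from a hybrid argument over the rounds: replacing $\Phi_T$ by $\Phi$ in one ideal use changes the observer's final state by at most $\ddia(\Phi_T,\Phi)\le\delta$, even against adaptive observers with references. Choose $n=\lfloor 1/(32\delta)\rfloor$. The error is then at most $1/16$, and $n\ge n_0$ once $\delta\le\delta_0$.

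\emph{Step 2: invoke Theorem~B(a).} The purity is zero, so the service satisfies $P+C\le B$ with $B=\log n$. The range condition $\log n\le B\le c_0n$ holds for large $n$. Theorem~\ref{thm:main-channel}(a) then gives
\[
Q\ \ge\ c\,\frac{n}{(\log n)^{2\kappa+1}} .
\]

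\emph{Step 3: compare.} The two bounds combine to $K[1+T+\log(32n)]^\nu\ge c\,n/(\log n)^{2\kappa+1}$. Write $\lambda=\log(1/\delta)$, so that $n\asymp1/\delta$ and $\log n\asymp\lambda$. Then either $T\gtrsim\log n$, or the left side is $O(\lambda^\nu)$. The second case contradicts the right side, which is of order $\delta^{-1}\lambda^{-(2\kappa+1)}$, once $\delta$ is small. So $1+T+\log(32n)\le 3T$ for small $\delta$, and
\[
T\ \gtrsim\ \bigl(\delta^{-1}\lambda^{-(2\kappa+1)}\bigr)^{1/\nu},
\]
which is the claim after adjusting $c$ and $\delta_0$.

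\emph{Main obstacle.} The main point to get right is the error transfer in Step 1. The causal-service error is measured against adaptive observers with quantum references, so the hybrid bound must use the diamond distance round by round; the paper's definition allows exactly that. Everything else is bookkeeping with constants depending only on the parameters.
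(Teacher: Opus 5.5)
Your proposal is correct and follows the paper's proof step for step. Both use the same choice $n=\lfloor1/(32\delta)\rfloor$, the light-cone service of Proposition~\ref{prop:light-cone} with error $1/32$ and zero purity, the round-by-round hybrid transfer at cost $n\delta\le1/32$, Theorem~\ref{thm:main-channel}(a) at purity $\log n$, and the comparison of $K[1+T+\log(32n)]^\nu$ with $c\,n/(\log n)^{2\kappa+1}$. Your dichotomy in Step~3, which shows that the $\log(32n)$ term is absorbed into $T$ for small $\delta$, only makes explicit what the paper covers with ``for $\delta_0$ small''.
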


\begin{proof}
Let $n=\lfloor1/(32\delta)\rfloor$. Proposition~\ref{prop:light-cone} gives a service of $\Phi_T$ with error at most $1/32$, zero purity and
$Q\le K[1+T+\log(32n)]^\nu$. Replacing the $n$ uses of $\Phi_T$ by uses of $\Phi$ one at a time changes the observer's final state by at most
$n\delta\le1/32$, so the same device serves $\Phi$ with error at most $1/16$ and zero purity. For $\delta_0$ small, $n$ is large, and
Theorem~\ref{thm:main-channel}(a) with $A_0=0$ gives $Q\ge c'n/(\log n)^{2\kappa+1}$. Hence
$1+T+\log(32n)\ge\bigl(c'n/(K(\log n)^{2\kappa+1})\bigr)^{1/\nu}$, and since $n\ge1/(64\delta)$ this gives the claim for $\delta_0$ small.
\end{proof}

The maximally mixed start of the reservoir is essential: $\Phi$ has a Stinespring dilation with a pure environment of dimension $r_*$, so
from a pure state a bounded number of sites produce $\Phi$ exactly in bounded time. The same argument applies to a circuit of depth $T$ of
gates of bounded range, whose light cone is exact, and so bounds the depth of any local circuit that prepares $\Phi$. The corollary does not
exclude that local dynamics produce $\Phi$ in the limit of long times; it says how slowly. What the limit of long times can be, for a system
left to itself, is the question of the rest of the section.

\subsection{Autonomous limits are self-adjoint}\label{sec:physics-autonomous}

\paragraph{Autonomous dynamics.} Let $\mathcal B$ be a von Neumann algebra with a faithful normal tracial state $\tau_{\mathcal B}$. Put
$\mathcal M=M_d\otimes\mathcal B$ with the trace $\tau=\mathrm{tr}_d\otimes\tau_{\mathcal B}$, where $\mathrm{tr}_d$ is the normalized trace, and let
$\iota(a)=a\otimes1$ and $E=\mathrm{id}\otimes\tau_{\mathcal B}$. An \emph{autonomous dynamics} is a trace-preserving automorphism $\alpha$ of
$\mathcal M$, and the channel after $t\in\N$ steps is $\Phi_t=E\circ\alpha^t\circ\iota$. For $\alpha=\mathrm{Ad}\,U$ with $\mathcal B=M_m$ this is
the tracial factorization $\Phi_t(X)=(\mathrm{id}\otimes\tau_m)[U^t(X\otimes I)U^{-t}]$. Let $\mathcal N$ be the fixed-point algebra of $\alpha$
and $E_{\mathcal N}$ the $\tau$-preserving conditional expectation onto it.

The next proposition is standard. It combines the mean ergodic theorem with an observation of Junge, Le Merdy and
Xu~\cite[Remark~10.3]{JungeLeMerdyXu2006}, and of Haagerup and Musat~\cite[Remark~5.2]{HaagerupMusat2011}: compressions
$E\circ E_{\mathcal N}\circ\iota$ of trace-preserving conditional expectations are self-adjoint and positive.

\begin{proposition}[Autonomous limits are self-adjoint]\label{prop:autonomous}
The averages $\frac1T\sum_{t=0}^{T-1}\Phi_t$ converge in diamond norm to $\overline\Phi=E\circ E_{\mathcal N}\circ\iota$, and if $\Phi_t$ converges, its
limit is $\overline\Phi$. The channel $\overline\Phi$ is self-adjoint and positive semidefinite for the Hilbert--Schmidt inner product on $M_d$. In
particular, a channel $\Psi$ with $\Psi^*\ne\Psi$ is not the long-time limit of any autonomous dynamics with a tracial bath, local or not.
\end{proposition}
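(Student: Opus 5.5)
The plan is to work in the Hilbert space $L^2(\mathcal M,\tau)$ and use the mean ergodic theorem for the unitary $u_\alpha$ implementing $\alpha$.

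\emph{Convergence of the averages.} Since $\alpha$ preserves $\tau$, it extends to a unitary $u_\alpha$ on $L^2(\mathcal M,\tau)$, with $u_\alpha\hat x=\widehat{\alpha(x)}$. By von Neumann's mean ergodic theorem, $\frac1T\sum_{t<T}u_\alpha^t$ converges strongly to the orthogonal projection onto the fixed vectors of $u_\alpha$. The first step is to identify this projection on $\mathcal M\subset L^2$ with $E_{\mathcal N}$. A fixed vector $\xi$ satisfies $u_\alpha\xi=\xi$, and the projection maps $\mathcal M$ into $\mathcal M$, because the averages are contractions in operator norm and the unit ball of $\mathcal M$ is closed in $L^2$. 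So fixed vectors coming from $\mathcal M$ lie in $\mathcal N$. The limit is then a $\tau$-preserving idempotent with range $\mathcal N$ that is orthogonal in $L^2$, hence equal to $E_{\mathcal N}$.

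Next, apply $E\circ(\cdot)\circ\iota$. The domain of $\iota$ is $M_d$, which is finite dimensional, and $E$ is $L^2$-continuous into $M_d$, where all norms are equivalent. Strong convergence therefore gives convergence of $\frac1T\sum_t\Phi_t$ to $\overline\Phi$ as maps on $M_d$. To upgrade this to the diamond norm, apply the same argument to $\mathrm{id}_{M_d}\otimes\alpha$ on $M_d\otimes\mathcal M$: its fixed-point algebra is $M_d\otimes\mathcal N$ and its conditional expectation is $\mathrm{id}\otimes E_{\mathcal N}$. This gives convergence of $\mathrm{id}\otimes\frac1T\sum_t\Phi_t$ on $M_d\otimes M_d$. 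On finite-dimensional spaces, pointwise convergence is convergence in any norm, and the diamond norm is the completely bounded norm of the dual, computed at level $d$. If $\Phi_t$ converges, its Cesàro means converge to the same limit, which is therefore $\overline\Phi$.

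\emph{Self-adjointness and positivity.} For $X,Y\in M_d$, using $\mathrm{tr}_d\circ E=\tau$ on $\mathcal M$ and the $\mathcal N$-bimodularity of $E_{\mathcal N}$,
\[
\langle Y,\overline\Phi(X)\rangle_{\mathrm{tr}_d}=\tau\bigl(\iota(Y)^*E_{\mathcal N}(\iota(X))\bigr)=\tau\bigl(E_{\mathcal N}(\iota(Y))^*E_{\mathcal N}(\iota(X))\bigr).
\]
The right side is symmetric under exchanging $X$ and $Y$ up to conjugation, so $\overline\Phi$ is self-adjoint. At $Y=X$ it equals $\|E_{\mathcal N}(\iota X)\|_2^2\ge0$, so $\overline\Phi$ is positive semidefinite. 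The normalized and unnormalized Hilbert--Schmidt products differ only by the scalar $d$. The last sentence of the statement follows at once.

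\emph{Main obstacle.} The only delicate point is identifying the $L^2$ ergodic projection with the trace-preserving conditional expectation onto $\mathcal N$, including that it preserves $\mathcal M$. I would cite the standard fact that $E_{\mathcal N}$ is the restriction of the orthogonal projection $L^2(\mathcal M)\to L^2(\mathcal N)$. It then remains to check that $L^2$-fixed vectors are exactly $L^2(\mathcal N)$. I would do this by noting that the average projection commutes with left and right multiplication by elements of $\mathcal N$ and is bounded in operator norm on $\mathcal M$.
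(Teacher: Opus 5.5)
Your proposal is correct and follows the paper's proof: the mean ergodic theorem on $L^2(\mathcal M,\tau)$, the closed-unit-ball argument giving $P(\mathcal M)\subseteq\mathcal N$ and hence $P|_{\mathcal M}=E_{\mathcal N}$, and the identity $\mathrm{tr}_d(Y^*\overline\Phi(X))=\langle P\iota(Y),P\iota(X)\rangle$ for self-adjointness and positivity. The detour through $\mathrm{id}_{M_d}\otimes\alpha$ is unnecessary: as you note yourself, pointwise convergence of maps on the finite-dimensional space $M_d$ already gives convergence in diamond norm, and that is all the paper uses.
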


\begin{proof}
The automorphism extends to a unitary $V$ on $L^2(\mathcal M,\tau)$. By the mean ergodic theorem~\cite{Krengel1985}, $\frac1T\sum_{t<T}V^t$ converges
strongly to the orthogonal projection $P$ onto the invariant vectors. For $x\in\mathcal M$ the averages of $x$ lie in the ball of radius $\|x\|$ of
$\mathcal M$, which is closed in $L^2$, so $Px\in\mathcal M$, and $Px$ is invariant; hence $P(\mathcal M)\subseteq\mathcal N$. Since
$\mathcal N$ consists of invariant vectors and $\mathcal M$ is dense in $L^2(\mathcal M)$, $P$ is the orthogonal projection onto
$L^2(\mathcal N)$, whose restriction to $\mathcal M$ is $E_{\mathcal N}$. The map $\iota$ is an isometry from $(M_d,\mathrm{tr}_d)$ into
$L^2(\mathcal M,\tau)$ with adjoint $E$, so $\Phi_t=\iota^*V^t\iota$, and the averages converge to $\iota^*P\iota=\overline\Phi$ on every $a\in M_d$,
hence in diamond norm, since $M_d$ is finite dimensional. An ordinary limit equals the limit of the averages. And
$\mathrm{tr}_d(b^*\overline\Phi(a))=\langle P\iota(b),P\iota(a)\rangle$ is Hermitian in $(a,b)$ and nonnegative for $a=b$.
\end{proof}

Conversely, every channel of the form $\Psi\Psi^*$ with $\Psi$ in the tracial closure is a compression of a conditional expectation:
apply~\cite[Theorem~5.9]{HaagerupMusat2011} to $\Psi^*$, which is again factorizable. It is in fact an autonomous long-time limit. The dynamics
that produces it adapts K\"ummerer's tensor dilations with a Bernoulli shift~\cite{Kummerer1985}, and nothing in the construction makes it local.

\begin{proposition}[Autonomy without locality]\label{prop:nonlocal}
Let $\mathcal N_0$ be a von Neumann algebra with a faithful normal tracial state $\tau_0$, let $U$ be a unitary in $M_d\otimes\mathcal N_0$, and put
$\Psi(X)=(\mathrm{id}\otimes\tau_0)[U(X\otimes1)U^*]$. Let $\mathcal B=\overline\bigotimes_{k\in\Z}(\mathcal N_0,\tau_0)$, with $U$ acting on the factor
$k=0$, and let $\beta$ be the Bernoulli shift of $\mathcal B$. Then the autonomous dynamics $\alpha=\mathrm{Ad}\,U\circ(\mathrm{id}\otimes\beta)\circ\mathrm{Ad}\,U^*$
has $\Phi_t\to\Psi\Psi^*$, where $\Psi^*(a)=(\mathrm{id}\otimes\tau_0)[U^*(a\otimes1)U]$ is the Hilbert--Schmidt adjoint of $\Psi$.
\end{proposition}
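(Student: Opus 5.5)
The plan is to compute $\Phi_t$ exactly rather than estimate it. I expect $\Phi_t=\Psi\Psi^*$ to hold for every $t\ge1$, so the convergence is trivial. The point is that after one step of the shift, the factor on which $U$ acts again is independent of the factor carrying the transported observable.

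First I check the setup. The map $\alpha=\mathrm{Ad}\,U\circ(\mathrm{id}\otimes\beta)\circ\mathrm{Ad}\,U^*$ is a composite of automorphisms of $\mathcal M=M_d\otimes\mathcal B$. Each of them preserves $\tau=\mathrm{tr}_d\otimes\tau_{\mathcal B}$: $\mathrm{Ad}\,U$ by the trace property, and $\beta$ because it permutes the tensor factors of a product trace. So $\alpha$ is an autonomous dynamics in the sense of Section~\ref{sec:physics-autonomous}. I also verify that the stated $\Psi^*$ is the Hilbert--Schmidt adjoint. The identity $\mathrm{tr}_d(b^*\Psi(a))=\tau\bigl((b^*\otimes1)U(a\otimes1)U^*\bigr)=\tau\bigl(U^*(b\otimes1)^*U\,(a\otimes1)\bigr)=\mathrm{tr}_d\bigl(\Psi^*(b)^*a\bigr)$ uses only the trace property of $\tau$ and the fact that $E=\mathrm{id}\otimes\tau_0$ is the $\tau$-adjoint of $\iota$.

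The main step comes next. Telescoping gives $\alpha^t=\mathrm{Ad}\,U\circ(\mathrm{id}\otimes\beta^t)\circ\mathrm{Ad}\,U^*$. Fix $X\in M_d$ and put $Y=U^*(X\otimes1)U\in M_d\otimes\mathcal N_0^{(0)}$. Write $Y=\sum_{i,j}e_{ij}\otimes y_{ij}$ with $y_{ij}\in\mathcal N_0$, placed in the factor $0$. Then $(\mathrm{id}\otimes\beta^t)(Y)=\sum_{i,j}e_{ij}\otimes y_{ij}^{(t)}$, with $y^{(t)}_{ij}$ now in the factor $t$. Hence
\[
\Phi_t(X)=\sum_{i,j}E\bigl[U(e_{ij}\otimes1)U^*\,(1\otimes y_{ij}^{(t)})\bigr],
\]
since $y^{(t)}_{ij}$ commutes with $U$ when $t\ne0$. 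For $t\ge1$ the operator $U(e_{ij}\otimes1)U^*$ lies in $M_d\otimes\mathcal N_0^{(0)}$ and $y_{ij}^{(t)}$ lies in another tensor factor. Because $\tau_{\mathcal B}$ is a product state, the partial trace factorizes as $E[Z(1\otimes y^{(t)})]=\tau_0(y)\,E[Z]$ for such $Z$. I expect this factorization to be the only delicate point; I would check it on elementary tensors and extend by linearity and normality.

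It then gives $\Phi_t(X)=\sum_{i,j}\tau_0(y_{ij})\,\Psi(e_{ij})=\Psi\bigl((\mathrm{id}\otimes\tau_0)(Y)\bigr)=\Psi(\Psi^*(X))$ for all $t\ge1$. So $\Phi_t=\Psi\Psi^*$ exactly from the first step on, and the convergence in diamond norm is immediate. It is also consistent with Proposition~\ref{prop:autonomous}, since $\Psi\Psi^*$ is self-adjoint and positive, and $\Phi_0=\mathrm{id}$ is irrelevant to the limit.
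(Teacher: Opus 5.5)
Your argument is correct, and it reaches the conclusion by a different and sharper route than the paper's. The paper first proves that the Bernoulli shift is mixing, $\tau_{\mathcal B}(w\,\beta^t(z))\to\tau_{\mathcal B}(w)\tau_{\mathcal B}(z)$ for all $w,z\in\mathcal B$, by approximating $w$ and $z$ in $2$-norm with elementary tensors via Kaplansky's density theorem. It then tests the matrix entries of $\Phi_t(a)$ against $x=U^*(e_{ji}\otimes1)U$ and obtains convergence to $\Psi\Psi^*$.

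You instead use that $U$, $U^*(X\otimes1)U$ and $U(e_{ij}\otimes1)U^*$ all lie in $M_d\otimes\mathcal N_0^{(0)}$. For $t\ge1$ the shifted coefficients $y_{ij}^{(t)}$ therefore commute with $U$ and are independent of the factor $0$ under the product trace. This gives $\Phi_t=\Psi\Psi^*$ exactly for every $t\ge1$, with no limit and no density argument. The step you flag as delicate is in fact immediate: $z\,y^{(t)}$, with $z$ in the factor $0$ and $y^{(t)}$ in the factor $t$, is already an elementary tensor, and on such tensors $\tau_{\mathcal B}$ is the product state by construction. The same observation applies inside the paper's own proof: its $x_{kl}$ and $y_{lk}$ lie in the factor $0$, so the paper's exact-equality case already covers every $t\ge1$.

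What the mixing route buys is robustness. It never uses that $U$ sits in a single factor, so it works verbatim for an arbitrary unitary $U\in M_d\otimes\mathcal B$, with $\Psi$ defined through $\tau_{\mathcal B}$. It also works for any mixing trace-preserving automorphism in place of $\beta$. The price is that it gives only convergence. Your computation extends to $U$ supported on a finite window of factors, with exactness once $t$ exceeds the diameter of the window, but not to genuinely delocalized $U$. For the proposition as stated, your version is both simpler and more informative.
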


\begin{proof}
We have $\alpha^t=\mathrm{Ad}\,U\circ(\mathrm{id}\otimes\beta^t)\circ\mathrm{Ad}\,U^*$. The shift is mixing: $\tau_{\mathcal B}(w\,\beta^t(z))\to\tau_{\mathcal B}(w)\tau_{\mathcal B}(z)$ for
$w,z\in\mathcal B$. Equality holds as soon as $w$ and $\beta^t(z)$ are elementary tensors with disjoint supports, and by Kaplansky's density
theorem such tensors of norm at most $\|w\|$ and $\|z\|$ approximate $w$ and $z$ in $2$-norm. Let $a\in M_d$ and $y=U^*(a\otimes1)U$, and write
$y=\sum_{i,j}e_{ij}\otimes y_{ij}$. For every $x=\sum_{k,l}e_{kl}\otimes x_{kl}$ in $M_d\otimes\mathcal B$,
\[
\tau\bigl(x\,(\mathrm{id}\otimes\beta^t)(y)\bigr)=\frac1d\sum_{k,l}\tau_{\mathcal B}\bigl(x_{kl}\beta^t(y_{lk})\bigr)\ \longrightarrow\ \tau\bigl(x\,(E(y)\otimes1)\bigr).
\]
Taking $x=U^*(e_{ji}\otimes1)U$ shows that every matrix entry of $\Phi_t(a)=E[U(\mathrm{id}\otimes\beta^t)(y)U^*]$ converges to the corresponding entry
of $E[U(E(y)\otimes1)U^*]=\Psi(\Psi^*(a))$. Convergence on a basis of $M_d$ gives convergence in diamond norm.
\end{proof}

By a theorem of Haagerup and Musat~\cite[Theorem~3.6]{HaagerupMusat2015}, every channel in $\cK_d$ has the form $\Psi$ of
Proposition~\ref{prop:nonlocal} for some $\mathcal N_0$, and the lamp channel has it with $\mathcal N_0=L(\Gamma)$. Nothing in
Proposition~\ref{prop:nonlocal} makes the dynamics local. The unitary $U$ is arbitrary in $M_d\otimes\mathcal N_0$, and whether a dynamics of
this kind can be given by a local rule is open (Section~\ref{sec:open}).

Applied to the lamp channel, Proposition~\ref{prop:nonlocal} makes $\Phi\Phi^*$ an autonomous limit. That limit is cheap, since $\Phi\Phi^*$
has an exact finite factorization with a bounded bath (Proposition~\ref{prop:compositions}). An expensive limit needs a construction that keeps
more of the group, and the next subsection gives one.

\subsection{An expensive autonomous limit: the lazy doubling}\label{sec:physics-lazy}

The limit we build is a self-adjoint, positive and expensive channel on a doubled system, reached in two steps. The first step doubles $\Phi$
to a self-adjoint channel $\Theta$ that keeps the lower bounds of Theorem~B. The second averages $\Theta$ with the identity, which makes it
positive and leaves it expensive.

\begin{remark}[Self-adjoint expensive channels]\label{rem:doubling}
A self-adjoint channel can still be expensive. Let $z$ be the unitary of order three in $L(\Z_3)$, so that $\tau(z)=\tau(z^2)=0$, let $U$ be the gadget
unitary of the channel $\Phi$ of Theorem~B, and let
$V=\bigl(\begin{smallmatrix}0&U\otimes z\\U^*\otimes z^*&0\end{smallmatrix}\bigr)$, a self-adjoint unitary over $\mathbb C^2\otimes\mathbb C^{873}$ and
$L(\Gamma\times\Z_3)$. The channel it defines on $M_{1746}$ is
$\Theta\bigl(\begin{smallmatrix}x&y\\u&w\end{smallmatrix}\bigr)=\mathrm{diag}\bigl(\Phi(w),\Phi^*(x)\bigr)$, which is self-adjoint. It keeps the lower bounds of
Theorem~B. The memory-against-purity bound transfers: a service of $\Theta$ becomes a service of $\Phi$ with one more memory qubit and one more
bit of purity, at the same exchange. Feed each input together with a retained tag qubit that starts in $|1\rangle$ and is flipped after every
round; this works because $\Theta$ maps $|1\rangle\langle1|\otimes\rho$ to $|0\rangle\langle0|\otimes\Phi(\rho)$. The bound at the rank rate transfers
as well. $\Theta$ has a flat probe of rank $2r_*$, with anchors in both blocks, so Proposition~\ref{prop:purity-budget} gives $P+C<3Q+6$ at its
own rank rate $J\le\frac n2\log(2r_*)$; the tag transfer and Theorem~\ref{thm:main-channel}(a) then give $Q\ge c\sqrt n/(\log n)^\kappa$. But
$\Theta$ maps $\mathrm{diag}(I,-I)$ to $-\mathrm{diag}(I,-I)$, so it is not positive and not an autonomous limit. Its square
$\Theta^2=\mathrm{diag}(\Phi\Phi^*,\Phi^*\Phi)$ is an autonomous limit, by Proposition~\ref{prop:nonlocal}, and it is cheap
(Proposition~\ref{prop:compositions}).
\end{remark}

\paragraph{A lazy doubling.} The doubled channel $\Theta$ is expensive but not positive, and its square is cheap. Averaging $\Theta$ with the
identity restores positivity and keeps it expensive. Put
\[
\Theta_+=\tfrac12(\mathrm{id}+\Theta),\qquad
\Theta_+\begin{pmatrix}x&y\\u&w\end{pmatrix}=\frac12\begin{pmatrix}x+\Phi(w)&y\\u&w+\Phi^*(x)\end{pmatrix}\qquad\text{on }M_{1746}.
\]
Remark~\ref{rem:doubling} uses only $zz^*=1$ and $\tau(z^2)=0$, so $z$ may also be the unitary $\mathrm{diag}(1,i)$ of a qubit with its
normalized trace. Since $\Theta$ is a unital channel that is self-adjoint for the Hilbert--Schmidt inner product, its spectrum lies in $[-1,1]$;
unital channels are contractions for the Hilbert--Schmidt norm. Hence $\Theta_+$ is self-adjoint and positive semidefinite, as
Proposition~\ref{prop:autonomous} requires of an autonomous limit.

\begin{proposition}[An expensive autonomous limit]\label{prop:lazy}
Let $\Phi$, $V$ and $\Theta$ be as in Remark~\ref{rem:doubling}, and $\kappa=6+\log107$.
\begin{enumerate}[label=(\alph*),itemsep=1pt]
\item \emph{Autonomy.} Let $Z$ be the coordinate of $L^\infty[-1,1]$ with the normalized Lebesgue measure, put
$\mathcal B=L(\Gamma\times\Z_3)\,\overline\otimes\,L^\infty[-1,1]$ with the product trace, and
$W=\exp\bigl(i\tfrac\pi2V\otimes Z\bigr)=\cos(\tfrac\pi2Z)+i\sin(\tfrac\pi2Z)\,V\in M_{1746}\otimes\mathcal B$. The autonomous dynamics
$\alpha=\mathrm{Ad}\,W$ has $\Phi_t=\Theta_+$ for every $t\ge1$. More generally, $\exp(isV\otimes Z)$ gives the channel
$\Xi_s=\frac{1+\mathrm{sinc}\,2s}2\,\mathrm{id}+\frac{1-\mathrm{sinc}\,2s}2\,\Theta$, with $\mathrm{sinc}\,x=\sin x/x$, and
$\ddia(\Xi_s,\Theta_+)=|\sin2s|/(4s)$.
\item \emph{Memory against purity, lower bound.} There are $c,c_0>0$ and $n_0$ such that for $n\ge n_0$ and $\log n\le B\le c_0n$ every causal
service of $\Theta_+$ for $n$ rounds with error at most $1/16$ and purity $P+C\le B$ has, at any exchange,
\[
Q\ \ge\ c\,\frac n{B\,(\log n)^{2\kappa}} .
\]
In particular, for every $A_0\ge0$ there is $c'>0$ such that services with $P+C\le A_0\log n$ have $Q\ge c'n/(\log n)^{2\kappa+1}$ for all large $n$.
\item \emph{Upper bound.} There is $K$ such that for every $n\ge1$ and $0<B\le n$, $\Theta_+$ has an exact causal service with purity $P+C<B$,
$J=(n-1)Q$ and $Q\le K(n/B)\log(2n/B)$. Moreover $\Theta_+\in\cK_{1746}$.
\end{enumerate}
\end{proposition}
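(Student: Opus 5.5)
The proof has three parts: a direct computation for (a), a reduction of (b) to Theorem~\ref{thm:main-channel}(a) through a tag qubit plus a coin, and sofic-model devices for (c).

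\emph{Part (a).} Since $V$ is a self-adjoint unitary, $V^2=I$. Hence $\exp(isV\otimes Z)=\cos(sZ)+i\sin(sZ)V$, and $W^t=\exp(i\tfrac{\pi}{2}tV\otimes Z)$ has the same form with $s=\tfrac{\pi}{2}t$. So it suffices to compute $\Xi_s$ and then set $s=\pi t/2$. Expand
\[
(\cos(sZ)+i\sin(sZ)V)(X\otimes1)(\cos(sZ)-i\sin(sZ)V).
\]
Take the trace over $L^\infty[-1,1]$ first and then over $L(\Gamma\times\Z_3)$.
\begin{itemize}
\item The diagonal terms give $\tau(\cos^2 sZ)\,X+\tau(\sin^2 sZ)\,(\mathrm{id}\otimes\tau)[V(X\otimes1)V]$. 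The second of these is $\Theta(X)$.
\item The cross terms carry $\tau(\cos sZ\sin sZ)=\tfrac12\tau(\sin 2sZ)=0$, because $\sin 2sZ$ is odd and the measure on $[-1,1]$ is symmetric.
\end{itemize}
Now $\tau(\cos^2 sZ)=\tfrac12(1+\mathrm{sinc}\,2s)$ and $\tau(\sin^2 sZ)=\tfrac12(1-\mathrm{sinc}\,2s)$. This gives $\Xi_s$.

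At $s=\pi t/2$ with $t\ge1$ an integer, $\sin 2s=\sin\pi t=0$, so $\mathrm{sinc}\,2s=0$ and $\Phi_t=\Theta_+$. For the distance, $\Xi_s-\Theta_+=\tfrac12\mathrm{sinc}(2s)(\mathrm{id}-\Theta)$. We have $\ddia(\mathrm{id},\Theta)=1$, since $\Theta$ maps $\mathrm{diag}(I,0)$-type inputs to the opposite block. This gives $|\sin 2s|/(4s)$.

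\emph{Part (b).} The plan is to reduce to Theorem~\ref{thm:main-channel}(a) by a tag-and-coin transfer, as in Remark~\ref{rem:doubling}. Given a service of $\Theta_+$, build a service of $\Phi$ as follows.
\begin{itemize}
\item Keep a tag qubit that marks the active block, starting in $|1\rangle$.
\item In each round, feed the input as $|{\rm tag}\rangle\langle{\rm tag}|\otimes\rho$. The output is $\tfrac12$ (input unchanged) $+\tfrac12$ ($\Phi(\rho)$ in the other block). A measurement of the block reveals which branch occurred.
\item The difficulty is that the output must leave before the next input, and with probability $\tfrac12$ nothing happened.
\end{itemize}
The cleaner route is to avoid serving $\Phi$ directly and instead run the lower-bound machinery on $\Theta_+$ itself. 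Theorem~\ref{thm:extraction} applies to any channel. It yields a round $\Psi_t$ near $\Theta_+$ with small leakage $\nu$ relative to the Choi support of $\Theta_+$. The Choi support of $\Theta_+$ contains the Kraus support of $\Theta$ and the identity Kraus operator, and these are linearly independent. I would then show that small leakage relative to $\Theta_+$ forces the bath operators on the $\Theta$-slots to form, after restriction by the coin, a round close to the $V$-gadget. The identity component is a separate orthogonal Kraus direction, because the identity does not lie in the span of the $\Theta$ Kraus operators. Mixing with the identity therefore only halves the weight.

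Concretely, I would prove a variant of Theorem~\ref{thm:weighted} in which the gadget unitary is $\tfrac{1}{\sqrt2}(I\otimes e_0+V\otimes e_1)$ with a qubit ancilla label. This is a Stinespring-type realization of $\Theta_+$. The triangle tests of $U$ sit inside $V$'s off-diagonal blocks, and the anchors remain anchors up to the factor $2^{-1/2}$. The condition~\eqref{eq:weighted-condition} then holds with $d$ replaced by $1746$ and constants changed. Corollary~\ref{cor:amplification}(a) follows verbatim, and the arithmetic of the proof of Theorem~\ref{thm:main-channel}(a) gives the bound.

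\emph{Part (c).} Take the exact services of Proposition~\ref{prop:sofic-upper}(c) for $\Phi$ and lift them to $\Theta_+$. Use a sofic model $f$ of the labels of $\Phi$, the regular model of $\Z_3$, and a fair coin bit.
\begin{itemize}
\item The unitary acts as $I$ when the coin is $0$ and as $V_f$ when the coin is $1$, where $V_f$ is the off-diagonal block built from $U_f$ and the $\Z_3$ shift.
\item At good points this bath reproduces $\Theta_+$ exactly. The cross terms between the coin values vanish after tracing the coin, provided the coin is part of the flat bath.
\item Hence the proof of Proposition~\ref{prop:sofic-upper}(c), with bath $\Omega\times\Z_3\times\{0,1\}$, gives $Q\le K(n/B)\log(2n/B)$.
\end{itemize}
Membership $\Theta_+\in\cK_{1746}$ follows since these flat-bath channels approximate $\Theta_+$.

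The main obstacle is part (b): making the one-round theorem read the $\Phi$-gadget through the identity-mixed channel. Specifically, one must check that the leakage bound controls the $V$ slots even though the identity slot overlaps the diagonal anchor positions of $V$'s blocks. This is where I would work out the Choi support explicitly before anything else.
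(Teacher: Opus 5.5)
Your part (a) is the paper's computation. Part (c) is the paper's construction up to one detail, noted at the end. Part (b) has a genuine gap, at the step you flag yourself.

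\textbf{The gap in (b).} The claim that ``the anchors remain anchors up to the factor $2^{-1/2}$'' is false for a general round of $\Theta_+$. Choosing the reference dilation $\frac1{\sqrt2}(I\otimes e_0+V\otimes e_1)$ cannot repair it, because Lemma~\ref{lem:weighted-extraction} reads an arbitrary round of the service only through the Choi support of the target.

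By \eqref{eq:lazy-support} that support contains $\mathbb C\,1_{2d}$, which occupies every diagonal position of $M_{2d}$. An anchor of the $\Phi$-gadget now sits at the off-diagonal position $((0,z),(1,z))$. It shares its column with the diagonal slot $((1,z),(1,z))$ and its row with $((0,z),(0,z))$. So the anchor step breaks: unitarity of the round gives only $B^*B\approx1\otimes R$ for the upper right block $B$, with $R=1-\alpha^*\alpha$ and $\alpha$ the common bath operator on the diagonal.

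Nothing forces $R\approx\frac12$. With the bath $L(\Gamma\times\Z_3)$, the unitary $\exp(i\frac\pi4V)$ gives $\Theta_+$ exactly with $R=\frac12$. The coin-controlled unitary $|0\rangle\langle0|\otimes1+|1\rangle\langle1|\otimes V$ with a flat coin gives it exactly with $R$ a projection. A direct sum of the two is again exact. So the anchor operators are only close to $V_wR^{1/2}$ with $V_w$ unitary, degenerate wherever $R$ is small. The relations of $\Gamma$ can be read only where $R$ is bounded away from $0$.

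\textbf{The missing idea.} You need to localize to that part of the bath and reweight the state, at a cost linear in the leakage and the commutator. The paper does this in Lemma~\ref{lem:block-extraction}:
\begin{itemize}
\item it applies a smooth cutoff $g(R)$;
\item it uses the weighted intertwining $BR\approx R'B$ of Lemma~\ref{lem:twisted} to show that $Bg(R)\sqrt\rho$ nearly equals both $V_1f(R)\sqrt\rho$ and $\sqrt\rho f(R')V_1'$;
\item a polar decomposition then yields a unitary round $(U',\rho')$ of $\Phi$ with $\rho'\propto\sqrt\rho f(R)^2\sqrt\rho$.
\end{itemize}

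Two further changes to the one-round theorem are then needed, and ``constants changed'' does not cover them (Lemma~\ref{lem:commutator-form}):
\begin{itemize}
\item The extracted round is not a round of the service, so its entropy production is not controlled. Theorem~\ref{thm:weighted} must accept $\|[U',1\otimes\sqrt{\rho'}\,]\|_F^2$ in its place.
\item Its diamond distance to $\Phi$ is bounded only by $\mathcal E/(\frac12-\mathcal E)$, which can exceed $\frac18$. The displacement step must therefore run on the single anchor coherence \eqref{eq:coherence}.
\end{itemize}
A direct one-round theorem for $\Theta_+$, as you propose, would have to handle the same non-constant $R$ in the same way.

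\textbf{A smaller point in (c).} The lower left block $U_f^*$ evaluates the model at the points $\pi_a^{-1}(\omega)$, so points that are good for $\Phi$ need not reproduce $\Phi^*$. You must also require the triangle identities and the separation of labels for the inverse permutations. Then keep the memory flat on the intersection of the two good sets, which replaces $K_G$ by $2K_G$.
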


Thus $\Theta_+$ is the long-time limit, indeed the exact value at every time $t\ge1$, of an autonomous dynamics with a tracial bath. It is
expensive in the sense of this section, and it trades memory and purity at the same rate as $\Phi$ (Theorem~\ref{thm:main-channel}(a)). Like
the dynamics of Proposition~\ref{prop:nonlocal}, the construction is not local. Part (a) is a computation, and we give it here.
Appendix~\ref{app:lazy} proves parts (b) and (c).

Part (b) loses nothing against Theorem~\ref{thm:main-channel}(a). A service of $\Theta_+$ yields, by Theorem~\ref{thm:extraction}, one round
$(U,\rho)$ on the bath $\mathbb C^M$, $M=2^Q$, with small leakage and entropy production. Write $U$ in $2\times2$ blocks with respect to the
qubit factor of the doubled system. The upper right block $U_{01}$ is where $\Theta_+$ applies $\Phi$. It is nearly a round of $\Phi$ on the same
bath, except that it is not unitary: in the norm weighted by $\rho$, $U_{01}^*U_{01}$ is close to $1\otimes R$ for an operator $0\le R\le1$ on the bath. Where $R$ is bounded away from $0$, the operator $U_{01}R^{-1/2}$ is nearly an isometry. A smooth cutoff restricts to that part of the bath,
and a polar decomposition produces a unitary $U'$ on $\mathbb C^{873}\otimes\mathbb C^M$ and a state $\rho'$, the state $\rho$ reweighted by the
cutoff (Lemma~\ref{lem:block-extraction}). The leakage of the new round and its commutator $\|[U',1\otimes\sqrt{\rho'}\,]\|_F^2$ are at most a
constant times the leakage and the commutator $\|[U,1\otimes\sqrt\rho\,]\|_F^2$ of the old round, a linear loss. The one-round theorem accepts
the commutator in place of entropy production, because its proof uses entropy production only to bound that commutator. It also accepts, in
place of a small diamond distance to $\Phi$, a bound on one coherence between two anchors, which the block extraction gives as well
(Lemma~\ref{lem:commutator-form}). Corollary~\ref{cor:amplification} then yields the bound of Theorem~\ref{thm:main-channel}(a) for $\Theta_+$,
with the same exponent.

For part (c), two maximally mixed qubits join the bath of a sofic model of the labels: a coin that chooses between the identity and $\Theta$,
and a phase qubit carrying $\mathrm{diag}(1,i)$ in place of $z$. The model then serves $\Theta_+$ exactly at its good points. The exact service of
Proposition~\ref{prop:sofic-upper}(c) then goes through unchanged, and the same unitary with a flat bath puts $\Theta_+$ in $\cK_{1746}$.

\begin{proof}[Proof of (a)]
Since $V^2=1$ and $Z$ commutes with $V$, $\exp(isV\otimes Z)=\cos(sZ)+i\sin(sZ)V$, so that $W^t=\exp(it\frac\pi2V\otimes Z)$ and
\[
W^t(X\otimes1)W^{-t}=\cos^2(\tfrac{t\pi}2Z)\,X+\sin^2(\tfrac{t\pi}2Z)\,V(X\otimes1)V+\tfrac i2\sin(t\pi Z)\bigl(V(X\otimes1)-(X\otimes1)V\bigr).
\]
The trace of $\mathcal B$ is the product of the trace of $L(\Gamma\times\Z_3)$ and the integral over $Z$. The last term integrates to $0$ because
$\sin(t\pi Z)$ is odd, $\int\cos^2(\frac{t\pi}2Z)=\frac12(1+\mathrm{sinc}\,t\pi)=\frac12$ for $t\ge1$, and $(\mathrm{id}\otimes\tau)[V(X\otimes1)V]=\Theta(X)$.
So $\Phi_t=\frac12(\mathrm{id}+\Theta)$. The same computation with $s$ in place of $\frac{t\pi}2$ gives $\Xi_s$. Finally
$\Xi_s-\Theta_+=\frac12\mathrm{sinc}(2s)(\mathrm{id}-\Theta)$, and $\|\mathrm{id}-\Theta\|_\diamond=2$: the input $|0\rangle\langle0|\otimes\sigma$ is mapped
by $\Theta$ into the block $|1\rangle\langle1|$. The automorphism $\mathrm{Ad}\,W$ is inner, hence trace preserving.
\end{proof}

The same channel is reached, for every $t\ge1$, by a Bernoulli-shift construction like that of Proposition~\ref{prop:nonlocal}. Take
$\mathcal B=L(\Gamma\times\Z_3)\,\overline\otimes\,\overline\bigotimes_{k\in\Z}(M_2,\tau_2)$, let $\beta$ be the shift of the chain, and let $p_0$ be
the projection $|1\rangle\langle1|$ of the qubit at site $0$. The automorphism $\mathrm{Ad}\bigl((1-p_0)+p_0V\bigr)\circ(\mathrm{id}\otimes\beta)$ raises
$V$ to the parity of $t$ fair bits.

Autonomous dynamics with a tracial bath can therefore produce expensive channels. The reduced dynamics of the probe matters. If the channels
$\Phi_t$ themselves form a semigroup of unital channels, they are mixed unitary for all large $t$~\cite{BhatDevendra2025}, hence cheap; here
$\Theta_+^2\ne\Theta_+$. Locality constrains how $\Theta_+$ can be approached. Proposition~\ref{prop:light-cone} and
Proposition~\ref{prop:lazy}(b) combine as in the proof of Corollary~\ref{cor:preparation}. Together they show that a local reservoir whose
probe is the doubled system, and which produces a channel within $\delta\le\delta_0$ of $\Theta_+$ at time $T$, has
$T\ge c\,\delta^{-1/\nu}(\log(1/\delta))^{-(2\kappa+1)/\nu}$, with constants $c,\delta_0>0$ depending only on the parameters of the reservoir. No
local reservoir, then, produces $\Theta_+$ exactly at a finite time or approaches it exponentially fast. Whether a local dynamics can produce an
expensive channel at all is open (Section~\ref{sec:open}). Self-adjointness and positivity, which $\Theta_+$ has, are not all that an
autonomous limit must satisfy: Appendix~\ref{sec:physics-finer} gives a sharper necessary condition and excludes an exact finite
factor by it.
\section{Open problems}\label{sec:open}

\paragraph{Groups.}

\begin{enumerate}[label=(\arabic*)]
\item \emph{The exact profile.} Theorem~A gives $\exp(r^{1+o(1)})$, with polylogarithmic factors in the exponent between
$c\,r/(\log r)^{\kappa}$, $\kappa=6+\log107$, and $C\,r\log r$. The exponent $\kappa$ comes from the Dehn function of $H_3$, known to be polynomial of
degree at most $\log107+4$ and at least $2$~\cite{DouglasMghirbiA}; a quadratic Dehn function would give $\kappa=4$. What is the correct
polylogarithmic factor; is $\log\sigma_E(r)=\Theta(r\log r)$?
\item \emph{Beyond the counting ceiling.} Does some amenable group have a chunk with profile $\exp(c\,r)$? Does any sofic group have a chunk with
profile $\exp(r^{1+c})$ for some $c>0$? The counting criterion cannot give either; a different lower-bound mechanism would be needed. For the
second question, grids of fixed dimension should not help: the argument sketched in Remark~\ref{rem:grid} would bound the profile of their
chunks by $\exp(O(r\log r))$. By
Remark~\ref{rem:sandwich}, a gadget channel of a sofic group can need superlinear memory at zero purity only if its chunk profile exceeds
$\exp(O(r))$. Every fixed chunk of a sofic group has a computable profile~\cite[Lemma~1]{Ivanov2021} (search the degrees $1,2,\dots$ and the finitely
many maps at each), so no chunk outgrows every computable function. How fast the profile of a chunk of a finitely presented amenable group can grow is
open.
\item \emph{The dyadic group.} Its profile is finite and at least $\exp(c\,r^{1/3})$. In $\mathcal G$ the words $w_0=1$,
$w_{2n}=uw_n$ and $w_{2n+1}=tuw_n$ reach the integer $n$ with length $O(\log n)$, and commutators of copies at points of equal parity reduce
freely to smaller indices. What is the area of the commutators at points of opposite parity?
\item \emph{Thompson's groups.} Is $\mathcal L$ sofic whenever $F$ is? The hyperlinearity transfer of Proposition~\ref{prop:thompson-hyperlinear}
does not obviously give permutations, and not every action of a sofic group is sofic: some coset actions of residually finite Kazhdan groups
are not~\cite{KunThom2026}. We do not know whether the action of $F$ on $\mathbb D$ is sofic when $F$ is. Does $\Gamma$ embed in $V$ or
in $2V$? The embedding of Proposition~\ref{prop:3V} uses one coordinate for each ray.
\end{enumerate}

\paragraph{Channels and physics.}

\begin{enumerate}[label=(\arabic*),resume]
\item \emph{The optimal exchange rate in general.} For $\Phi$ the least memory at the rank rate and the least memory with purity at most the
memory are both $n^{1/2+o(1)}$ (Remark~\ref{rem:sqrt-limit}). For every channel with a flat probe the second is at most three times the first, plus
six. Is the converse true up to polylogarithmic factors for every channel in $\cK_d$ with a flat probe? Theorem~\ref{thm:streaming} bounds the
rank-rate memory by $\log k+n\log(k'/k)+O(\log n)$, so the converse holds whenever some exact rectangular dilation makes this at most a
polylogarithmic factor times the least memory at purity at most memory.
\item \emph{Local autonomous physics.} Proposition~\ref{prop:lazy} gives an expensive long-time limit of an autonomous dynamics with a tracial
bath, but the dynamics is not local. Can an expensive channel be the long-time limit of a fixed translation-invariant local rule, such as a
quantum cellular automaton or a time-independent finite-range Hamiltonian, with the bath in its tracial state? By Proposition~\ref{prop:light-cone}
such a limit would be approached slowly; for $\Theta_+$ the distance at time $t$ is at least of order $t^{-\nu}$ up to logarithmic factors
(Section~\ref{sec:physics}). By~\cite{BhatDevendra2025} the reduced dynamics of the probe could not be a semigroup.
\end{enumerate}

\section*{Use of AI}
\addcontentsline{toc}{section}{Use of AI}

This work was carried out with extensive use of generative AI systems. They proposed constructions and proof strategies, wrote and
checked proofs, carried out computations, searched the literature, wrote the verification code, and drafted and edited the text. The
authors directed the research, assessed and selected the results, and take full responsibility for the content. The proofs are given in
full in the text; repeated adversarial checking and the accompanying programs support them but do not replace them. The verification
notes state which statements are checked by machine.

\appendix
\section{Tools from Houghton's group}\label{sec:houghton}

Section~\ref{sec:fillings} uses three facts about $H=H_3$, and this appendix proves them. Lemma~\ref{lem:transpositions} gives short words for
finite permutations, Lemma~\ref{lem:stabilizers} gives generators of point stabilizers with linear word length, and
Proposition~\ref{prop:dehn} bounds the Dehn function. A comparison of conventions with~\cite{DouglasMghirbiA} comes first, because permutations
act on the right there. All transport in Sections~\ref{sec:fillings} and~\ref{sec:pair-areas} takes place among the first points of ray~$1$, so
we need words only for permutations supported there. For $n\ge1$ let $\Sigma_n\le H$ be the group of permutations of $Y$ supported in
$\{(1,1),\dots,(1,n)\}$, identified with $\Sym(n)$; we write $(x\ y)$ for the transposition of the points $x$ and $y$ of the first ray.

\paragraph{Conventions.} Put $\alpha=[p,q]$; a direct check shows $\alpha=(1\ 2)$. In~\cite{DouglasMghirbiA} permutations act on the right.
Johnson's presentation of $H$~\cite{Johnson1999}, also in~\cite[Theorem~2.8]{Lee2012}, is written there in generators $a,b,\alpha$ that are the
inverses, as maps, of our $p,q,\alpha$. The map $g\mapsto g^{-1}$ is an isomorphism from the permutations of $Y$ with the product
of~\cite{DouglasMghirbiA} to the permutations of $Y$ under composition, and it sends $a,b,\alpha$ to $p,q,\alpha$. A word in $a,b,\alpha$
therefore represents $1$ in the conventions of~\cite{DouglasMghirbiA} if and only if the same word in $p,q,\alpha$ represents $1$ here. Hence
the relators of~\cite[Section~2]{DouglasMghirbiA}, read in our letters, give the presentation
\begin{equation}\label{eq:johnson}
P_J=\bigl\langle p,q,\alpha\bigm|\ \alpha^2,\ (\alpha p\alpha p^{-1})^3,\ [\alpha,p^2\alpha p^{-2}],\ [p,q]\alpha^{-1},\ p\alpha p^{-1}q\alpha^{-1}q^{-1}\bigr\rangle
\end{equation}
of $H$; in particular $H=\langle p,q\rangle$. Renaming the letters is an isomorphism of free groups that carries the relators
of~\cite{DouglasMghirbiA} to those of $P_J$, so every area computed in~\cite{DouglasMghirbiA} for Johnson's presentation holds verbatim for $P_J$.

Eliminating $\alpha$ by the relator $[p,q]\alpha^{-1}$, a Tietze transformation, gives the presentation
\begin{equation}\label{eq:PH}
P_H=\bigl\langle p,q\bigm|\ s^2,\ (sps\,p^{-1})^3,\ [s,p^2sp^{-2}],\ psp^{-1}qs^{-1}q^{-1}\bigr\rangle,\qquad s=[p,q],
\end{equation}
of $H$. Let $\delta_H$ be its Dehn function: $\delta_H(n)$ is the largest area, with respect to the four relators of $P_H$, of a word of length at most
$n$ in $p^{\pm1},q^{\pm1}$ that represents $1$ in $H$. It is nondecreasing.

\begin{lemma}[Finite permutations]\label{lem:transpositions}
Put $\sigma_k=p^k\alpha p^{-k}$ for $k\ge0$.
\begin{enumerate}[label=(\alph*),itemsep=1pt]
\item $\sigma_k=(k+1\ \,k+2)$, and for $1\le x<y$ the word
$\omega_{x,y}=p^{x-1}(\alpha p)^{y-x-1}(\alpha p^{-1})^{y-x-1}\alpha\,p^{1-x}$ represents $(x\ y)$. With $\alpha=pqp^{-1}q^{-1}$ it has length at most
$10y-8x-8<10y$ in $p,q$.
\item Every element of $\Sigma_n$ that moves $m\ge2$ points is represented by a word in $p,q$ of length less than $10(m-1)n$.
\item Let $x_1,\dots,x_l\in\{1,\dots,n\}$ be distinct. Put $\pi_0=1$ and, for $m=1,\dots,l$, $w_m=\pi_{m-1}^{-1}(x_m)$ and
$\pi_m=\pi_{m-1}(m\ w_m)$, where $(m\ m)=1$. Then $\pi[x_1,\dots,x_l]:=\pi_l$ sends $m$ to $x_m$ for $m\le l$, it is a product of at most $l$
transpositions of points in $\{1,\dots,n\}$, and the product of the corresponding words $\omega$ represents it with length less than $10ln$.
\end{enumerate}
\end{lemma}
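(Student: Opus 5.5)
I will prove the three parts in order, working entirely with explicit permutations of the first ray and the definitions $p(1,j)=(1,j+1)$, $p(2,1)=(1,1)$, $p(2,j)=(2,j-1)$, together with $\alpha=[p,q]=(1\ 2)$. That identity is taken as the direct check stated before~\eqref{eq:johnson}; if needed it can be verified point by point near the junction.

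\emph{Part (a).} Since $p$ acts on ray~1 as the shift $x\mapsto x+1$, conjugation gives $p^k(1\ 2)p^{-k}=(p^k(1)\ p^k(2))=(k{+}1\ \,k{+}2)$. For $\omega_{x,y}$ I first conjugate by $p^{x-1}$, which reduces to $x=1$. It then suffices to show that $(\alpha p)^{j}(\alpha p^{-1})^{j}\alpha=(1\ \,j{+}2)$ with $j=y-x-1$, together with the matching shift $p^{x-1}(1\ j{+}2)p^{1-x}=(x\ y)$.

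The cleanest route is to rewrite $(\alpha p)^j=\sigma_0\sigma_1\cdots\sigma_{j-1}p^j$ and $(\alpha p^{-1})^j=p^{-j}\cdots$. This needs care, since $\alpha p^{-1}$ is not a conjugate. Instead I will compute $(\alpha p)^j(\alpha p^{-1})^j=\sigma_0\sigma_1\cdots\sigma_{j-1}\,\cdot\,p^j(\alpha p^{-1})^j$, and write $p^j(\alpha p^{-1})^j=\sigma_{j-1}'\cdots$ by inserting $p^{-i}p^{i}$. Here $\alpha p^{-1}$ telescopes: $p^{j}\alpha p^{-1}\alpha p^{-1}\cdots=\sigma_j\sigma_{j-1}\cdots\sigma_1$ up to a reindexing. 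The product $\sigma_0\cdots\sigma_{j-1}\sigma_j\cdots\sigma_1\sigma_0$ is then the standard expression $(1\ \,j{+}2)$ as a conjugate of $\sigma_j$ by the cycle $\sigma_0\cdots\sigma_{j-1}$. I expect the index bookkeeping here, matching the final $\alpha$ with $\sigma_0$ and fixing the exact range of indices, to be the only delicate point. I would confirm it on $j=0,1$ before writing the general telescoping.

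The length bound is a count. $\alpha$ has length $4$, so each of $\alpha p$ and $\alpha p^{-1}$ has length at most $5$, and $\omega_{x,y}$ has length at most $2(x-1)+10(y-x-1)+4=10y-8x-8$. No free cancellation is needed for an upper bound.

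\emph{Part (b).} An element of $\Sigma_n$ moving $m$ points is a product of cycles on those points, hence of at most $m-1$ transpositions of points in $\{1,\dots,n\}$; a $c$-cycle needs $c-1$, and summing over the cycles gives at most $m-1$. Each transposition $(x\ y)$ with $y\le n$ has a word of length less than $10n$ by (a), which gives the bound $10(m-1)n$.

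\emph{Part (c).} I argue by induction on $m$. Assume $\pi_{m-1}$ sends $i\mapsto x_i$ for $i<m$. The point $w_m=\pi_{m-1}^{-1}(x_m)$ is not among $1,\dots,m-1$, because $x_m\ne x_i$ for those $i$. Hence $(m\ w_m)$ fixes $1,\dots,m-1$, so $\pi_m=\pi_{m-1}(m\ w_m)$ still sends $i\mapsto x_i$ for $i<m$, and it sends $m\mapsto\pi_{m-1}(w_m)=x_m$. All points stay in $\{1,\dots,n\}$ because $\pi_{m-1}\in\Sigma_n$ and $m\le l\le n$. Composing at most $l$ words from (a), each of length less than $10n$, gives total length less than $10ln$.
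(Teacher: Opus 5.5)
Your proposal is correct and follows the paper's proof. In (a) you write $\omega_{x,y}$ as $p^{x-1}\sigma_0\cdots\sigma_{j-1}\sigma_j\sigma_{j-1}\cdots\sigma_0\,p^{1-x}$ (with $j=y-x-1$) by telescoping the powers of $p$, exactly as the paper does. The identities you hedge on, $(\alpha p)^j=\sigma_0\cdots\sigma_{j-1}p^j$ and $p^j(\alpha p^{-1})^j=\sigma_j\sigma_{j-1}\cdots\sigma_1$, hold exactly with no reindexing, and the cycle $\sigma_0\cdots\sigma_{j-1}$ sends $j+1$ to $1$ and fixes $j+2$, so that step is complete. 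Parts (b) and (c) are the same transposition count and the same induction as in the paper.
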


\begin{proof}
(a) Since $p^k$ maps $(1,1),(1,2)$ to $(1,k+1),(1,k+2)$, $\sigma_k$ is the stated transposition. Hence
$(x\ y)=\sigma_{x-1}\sigma_x\cdots\sigma_{y-3}\,\sigma_{y-2}\,\sigma_{y-3}\cdots\sigma_{x-1}$, by induction on $y-x$ from
$(x\ y)=\sigma_{x-1}(x{+}1\ \,y)\sigma_{x-1}$. In the product the powers of $p$ telescope: $\sigma_k\sigma_{k+1}=p^k\alpha p\alpha p^{-k-1}$ and
$\sigma_{k+1}\sigma_k=p^{k+1}\alpha p^{-1}\alpha p^{-k}$, which gives $\omega_{x,y}$ after free reduction. Its length is
$2(x-1)+10(y-x-1)+4$.

(b) A permutation moving $m$ points is a product of at most $m-1$ transpositions of those points, each of length less than $10n$ by (a).

(c) Since $\pi_{m-1}(j)=x_j\ne x_m$ for $j<m$, we have $w_m\ge m$, so $(m\ w_m)$ fixes $1,\dots,m-1$ and $\pi_m(j)=x_j$ for $j\le m$. Each factor
has length less than $10n$ by~(a).
\end{proof}

That a point stabilizer of a Houghton group is isomorphic to the group is known~\cite[Example~3.6]{Cornulier2006}; part~(a) of
Lemma~\ref{lem:stabilizers} gives such isomorphisms explicitly on generators.

\begin{lemma}[Point stabilizers]\label{lem:stabilizers}
For $k\ge1$ let $H^{(k)}$ be the pointwise stabilizer in $H$ of $(1,1),\dots,(1,k)$, put $\zeta_k=\sigma_{k-1}\cdots\sigma_1\sigma_0$ and
$K_k=(\zeta_kp,\ \zeta_kq)$.
\begin{enumerate}[label=(\alph*),itemsep=1pt]
\item There is an isomorphism $H\to H^{(k)}$ that sends $p,q,\alpha$ to $\zeta_kp,\zeta_kq,\sigma_k$ and the transposition $(x\ y)$ to
$(x{+}k\ \ y{+}k)$. In particular $\zeta_kp$ and $\zeta_kq$ generate $H^{(k)}$.
\item Every element of $\Sigma_n\cap H^{(k)}$ that moves $m\ge2$ points is represented by a word of length less than $10(m-1)n$ in $\zeta_kp$ and
$\zeta_kq$.
\end{enumerate}
In $p,q$ the words $\zeta_1p=[p,q]\,p$ and $\zeta_2p=p[p,q]p^{-1}[p,q]\,p$ have length at most $5$ and $11$, and likewise for $q$.
\end{lemma}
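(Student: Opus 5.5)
For part~(a) I will construct the isomorphism concretely as conjugation by a bijection. Renumber the points of ray~$1$ above height $k$: let $\theta:Y\to Y\setminus\{(1,1),\dots,(1,k)\}$ be the bijection with $\theta(1,j)=(1,j+k)$ that fixes rays $2$ and $3$ pointwise. Every $g\in H$ then gives $\theta g\theta^{-1}$ on $Y\setminus\{(1,1),\dots,(1,k)\}$, and I extend it by the identity on the first $k$ points. This permutation is eventually a translation on each ray with the same $m_i$ as $g$, so it lies in $H^{(k)}$. The map $g\mapsto\theta g\theta^{-1}$ is an injective homomorphism $H\to H^{(k)}$. It is onto: any $h\in H^{(k)}$ pulls back to $\theta^{-1}h\theta$, a permutation of $Y$ that is eventually a translation on each ray, hence in $H$. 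This conjugation visibly sends the transposition $(x\ y)$ of ray~$1$ to $(x{+}k\ \ y{+}k)$, and in particular sends $\alpha=(1\ 2)$ to $(k{+}1\ \,k{+}2)=\sigma_k$ by Lemma~\ref{lem:transpositions}(a).

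The main step is to identify the images of $p$ and $q$ with $\zeta_kp$ and $\zeta_kq$. I will check these pointwise. The conjugate $\theta p\theta^{-1}$ shifts ray~$1$ above height $k$ up by one, sends $(2,1)$ to $(1,k{+}1)$, shifts ray~$2$ down, and fixes the first $k$ points and ray~$3$. Compare this with $p$ itself, which sends $(1,j)\mapsto(1,j+1)$ for all $j$ and $(2,1)\mapsto(1,1)$. The two maps agree except on the points $(1,1),\dots,(1,k)$ and $(2,1)$, which $p$ sends to $(1,2),\dots,(1,k{+}1),(1,1)$ and the conjugate sends to $(1,1),\dots,(1,k),(1,k{+}1)$. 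So the conjugate equals $c\,p$, where $c$ is the cycle $(1,2)\mapsto(1,1)$, $(1,3)\mapsto(1,2)$, \dots, $(1,k{+}1)\mapsto(1,k)$, $(1,1)\mapsto(1,k{+}1)$ on the first $k{+}1$ points. Next I will compute $\zeta_k=\sigma_{k-1}\cdots\sigma_0$, composed right to left: the point $1$ goes to $2$ under $\sigma_0$, then to $3,\dots,k{+}1$; and $j{+}1$ goes to $j$. So $\zeta_k$ is exactly this cycle. The same computation works for $q$ with ray~$3$ in place of ray~$2$, because $q$ also feeds its ray into $(1,1)$. This coincidence, that the correcting cycle is the same product of adjacent transpositions in both cases, is the only point that needs care, and I expect it to be the main (small) obstacle, especially keeping the right-to-left convention straight. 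Because the conjugate of $\langle p,q\rangle=H$ is all of $H^{(k)}$, it follows that $\zeta_kp$ and $\zeta_kq$ generate $H^{(k)}$.

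For part~(b), take $h\in\Sigma_n\cap H^{(k)}$ moving $m$ points. Its preimage $g=\theta^{-1}h\theta$ lies in $\Sigma_{n-k}\subseteq\Sigma_n$ and moves $m$ points. Lemma~\ref{lem:transpositions}(b) gives a word in $p,q$ of length less than $10(m-1)n$ for $g$. Applying the isomorphism letter by letter turns it into a word of the same length in $\zeta_kp,\zeta_kq$ representing $h$. The closing length claims are direct computations. Since $\zeta_1=\sigma_0=\alpha=[p,q]$, the word $\zeta_1p=[p,q]p$ has length $5$. For $k=2$, $\zeta_2=\sigma_1\sigma_0=p[p,q]p^{-1}[p,q]$, so $\zeta_2p$ has length $6+4+1=11$. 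The words for $q$ are the same with $q$ as the last letter.
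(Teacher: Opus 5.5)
Your proof is correct and follows the paper's own argument. It conjugates by the renumbering $\theta$ of ray~$1$ and identifies $\zeta_k$ as the cycle $(1,1)\mapsto(1,k{+}1)\mapsto(1,k)\mapsto\dots\mapsto(1,2)\mapsto(1,1)$, so that $\zeta_kp$ and $\zeta_kq$ are the conjugates of $p$ and $q$; for part~(b) it pulls back to $\Sigma_{n-k}$ and substitutes letters into the word from Lemma~\ref{lem:transpositions}(b).
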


\begin{proof}
(a) Let $\theta_k:Y\to Y\setminus\{(1,1),\dots,(1,k)\}$ send $(1,j)$ to $(1,j+k)$ and fix rays $2$ and $3$. For $g\in H$ let $g^{[k]}$ act as
$\theta_kg\theta_k^{-1}$ on the image of $\theta_k$ and fix $(1,1),\dots,(1,k)$. Then $g^{[k]}$ is eventually a translation on each ray, and
$g\mapsto g^{[k]}$ is an isomorphism $H\to H^{(k)}$, since an element of $H^{(k)}$ restricts to a permutation of the image of $\theta_k$. Clearly
$(x\ y)^{[k]}=(x{+}k\ \ y{+}k)$, so $\alpha^{[k]}=\sigma_k$. The element $\zeta_k$ is the cycle $(1,1)\mapsto(1,k+1)\mapsto(1,k)\mapsto\dots\mapsto(1,2)\mapsto(1,1)$.
Hence $\zeta_kp$ fixes $(1,j)$ for $j\le k$, maps $(1,j)$ to $(1,j+1)$ for $j>k$, maps $(2,1)$ to $(1,k+1)$, and agrees with $p$ elsewhere; that is,
$\zeta_kp=p^{[k]}$. Likewise $\zeta_kq=q^{[k]}$.

(b) An element $g\in\Sigma_n\cap H^{(k)}$ moving $m$ points is $g'^{[k]}$ for some $g'\in\Sigma_{n-k}$ moving $m$ points. Write $g'$ as in
Lemma~\ref{lem:transpositions}(b) and substitute $\zeta_kp,\zeta_kq$ for $p,q$.
\end{proof}

Part~(a) of the next proposition is Lee's theorem~\cite[Theorem~D]{Lee2012}, and part~(b) is~\cite[Theorems~3.2 and~3.3]{DouglasMghirbiA}. Both
pass to $P_H$ by a Tietze transformation.

\begin{proposition}[Dehn function]\label{prop:dehn}
Let $\delta_J$ be the Dehn function of $P_J$. Then $\delta_H\le\delta_J$, and:
\begin{enumerate}[label=(\alph*),itemsep=1pt]
\item there is an absolute constant $K$ with $\delta_H(n)\le e^{Kn}$ for $n\ge1$;
\item there is an absolute constant $C$ with $\delta_H(n)\le Cn^{\,p_H}$ for $n\ge1$, where $p_H=\log107+4\approx10.74$.
\end{enumerate}
\end{proposition}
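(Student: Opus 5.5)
The plan is to compare the two presentations by a Tietze transformation and then import the two known bounds for $P_J$. First show $\delta_H\le\delta_J$. Let $w$ be a word of length at most $n$ in $p^{\pm1},q^{\pm1}$ that represents $1$ in $H$. It is also a word in the letters of $P_J$, so it is a product of at most $\delta_J(n)$ conjugates of relators of $P_J$ in the free group on $p,q,\alpha$. Apply the homomorphism $\theta$ from the free group on $p,q,\alpha$ to the free group on $p,q$ that fixes $p,q$ and sends $\alpha\mapsto s=[p,q]$. Since $w$ involves no $\alpha$, $\theta(w)=w$ freely, and $\theta$ carries the product of conjugates to a product of the same number of conjugates of the images of the relators. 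The relator $[p,q]\alpha^{-1}$ is sent to the trivial word, so its conjugates can be deleted. The relators $\alpha^2$, $(\alpha p\alpha p^{-1})^3$, $[\alpha,p^2\alpha p^{-2}]$ and $p\alpha p^{-1}q\alpha^{-1}q^{-1}$ are sent exactly to the four relators of $P_H$ in~\eqref{eq:PH}. Hence $\Area_{P_H}(w)\le\Area_{P_J}(w)\le\delta_J(n)$, and since words in $p,q$ are among the words counted by $\delta_J$, this gives $\delta_H(n)\le\delta_J(n)$.

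Next, quote the bounds for $\delta_J$. By the discussion of conventions, areas for Johnson's presentation in~\cite{DouglasMghirbiA} and in~\cite[Theorem~2.8]{Lee2012} transfer verbatim to $P_J$. The inversion $g\mapsto g^{-1}$ and the renaming of letters are an isomorphism of free groups that carries relators to relators, so Dehn functions are unchanged. Lee's Theorem~D gives an exponential upper bound $\delta_J(n)\le e^{Kn}$, which is (a). \cite[Theorems~3.2 and~3.3]{DouglasMghirbiA} give $\delta_J(n)\le Cn^{\log107+4}$, which is (b). One point needs care here: Lee's result is stated up to the usual $\simeq$ equivalence of Dehn functions. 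If it is phrased only as $\delta\preceq e^n$, we obtain $\delta_J(n)\le Ae^{Bn+B}+Bn+B$, and we absorb the constants into a single $K$ using $n\ge1$. The same absorption applies to (b) if the companion theorems are stated for a different but Tietze-equivalent presentation. In that case we note that changing between finite presentations of the same group alters the Dehn function only up to $\simeq$, and that for a polynomial bound this changes only the constant $C$.

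The main obstacle is bookkeeping rather than mathematics. We must verify that the relators of $P_J$ as written really are the images, under the convention change, of those used in~\cite{DouglasMghirbiA}, and that the cited polynomial bound is stated for exactly this presentation, or for one equivalent to it up to constants. The Tietze step itself is exact and loses nothing.
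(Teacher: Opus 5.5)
Your proposal is correct and follows the paper's proof essentially verbatim. The steps are the same: the Tietze homomorphism $\alpha\mapsto[p,q]$ gives $\delta_H\le\delta_J$, Lee's exponential bound is transported for (a) by the standard equivalence $\delta(n)\le C'\delta'(C'n)+C'n+C'$, and the companion paper's bounds give (b). The paper makes (b) explicit by combining $\delta_J(n)\le 400n^4(1+A^*(4n+2))$ with $A^*(M)\le135M^{\log107}$; these are stated for Johnson's presentation itself, so your hedge about changing presentations is not needed there.
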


\begin{proof}
Let $w$ be a word in $p,q$ of length at most $n$ representing $1$. In the free group on $p,q,\alpha$ it is a product of at most $\delta_J(n)$
conjugates of relators of $P_J$ and their inverses. Apply the homomorphism to the free group on $p,q$ that fixes $p,q$ and sends $\alpha$ to
$[p,q]$. It fixes $w$, sends $[p,q]\alpha^{-1}$ to the trivial word, and sends each of the other four relators to a relator of $P_H$. Hence
$\Area_{P_H}(w)\le\delta_J(n)$.

(a) Lee proved that $H_3$ satisfies an exponential isoperimetric inequality~\cite[Theorem~D]{Lee2012}. The Dehn functions $\delta,\delta'$ of two finite
presentations of the same group are equivalent: $\delta(n)\le C'\delta'(C'n)+C'n+C'$ for a constant $C'$~\cite{Bridson2002}. This bound
preserves the class of functions bounded by $e^{Kn}$.

(b) By~\cite[Theorem~3.3]{DouglasMghirbiA}, $\delta_J(n)\le400n^4(1+A^*(4n+2))$, where the far-commutator area satisfies
$A^*(M)\le135M^{\log107}$~\cite[Theorem~3.2]{DouglasMghirbiA}; so $\delta_J(n)=O(n^{\log107+4})$. By the comparison of conventions at the start
of this appendix, this holds for $P_J$ in our letters, and $\delta_H\le\delta_J$.
\end{proof}

Part~(a) rests on Lee's published exponential bound, part~(b) on our polynomial one. Both reach the lower bounds of Part~I through a single
quantity, $\delta_H$ at scale $N$, where $N$ is the number of points that carry the configurations.
\section{The counting criterion and sequential extraction}\label{app:imported}

The lower bounds of Theorems~A and~B stand on two results of the companion papers: the counting criterion
(Theorem~\ref{thm:criterion}, which is~\cite[Theorem~8.1]{DouglasMghirbiA}) and sequential extraction (Theorem~\ref{thm:extraction},
which is~\cite[Theorem~5.2]{DouglasMghirbiB}). This appendix reproduces both proofs in the notation of this paper, so that a reader can check
the main theorems without the companions. The remaining imports, chiefly the Dehn bound for $H_3$ (Proposition~\ref{prop:dehn}), are listed in
Table~\ref{tab:dependencies}. Within this appendix, $\|uv-vu\|_2$ measures how far two unitaries are from commuting, and $\ln$ is the natural
logarithm.

\subsection{Three facts about copies of \texorpdfstring{$S_3$}{S3}}

The first fact rounds an approximate representation of $S_3=\langle s,t\mid s^2,t^2,(st)^3\rangle$ to an exact one.

\begin{lemma}[Exact $S_3$ rounding]\label{lem:imp-s3round}
If $A,B\in U(D)$ satisfy $\|A^2-I\|_2,\|B^2-I\|_2,\|(AB)^3-I\|_2\le e$, there is a unitary representation $u$ of $S_3$ on $\mathbb C^D$
with $\|u(s)-A\|_2\le e$, $\|u(t)-B\|_2\le8e$ and $\|u(st)-AB\|_2\le9e$, and each normal form $1,s,t,st,ts,sts$ evaluated in $u$ is within $10e$
of the same word in $A,B$.
\end{lemma}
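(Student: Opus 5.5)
The plan is to round in three stages, matching the three constants $e$, $8e$, $9e$ in the statement: first $A$ to an exact involution, then $B$ to an exact involution that makes the product an exact three-cycle, and finally read off the normal forms. Throughout, the normalized Hilbert--Schmidt norm is unitarily invariant and $\|XY\|_2\le\|X\|\,\|Y\|_2$, so errors of unitaries add along words.

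\emph{Rounding $A$.} Since $A$ is unitary, it is normal. Let $u(s)=\operatorname{sgn}(\operatorname{Re}A)$ be the self-adjoint unitary obtained by sending each eigenvalue $\lambda$ of $A$ to the nearer of $\pm1$, breaking ties arbitrarily. For $|\lambda|=1$ we have $|\lambda-\operatorname{sgn}\operatorname{Re}\lambda|\le|\lambda^2-1|$, because $|\lambda^2-1|=|\lambda-1||\lambda+1|$ and the larger factor is at least $\sqrt2\ge1$. Functional calculus then gives $\|u(s)-A\|_2\le\|A^2-I\|_2\le e$.

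\emph{Rounding $B$ and $AB$.} Put $A'=u(s)$ and $C=A'B$. Then $\|C^3-I\|_2\le\|(AB)^3-I\|_2+3e\le4e$. Round $C$ in the same way to $C'$ with $C'^3=I$, sending each eigenvalue to the nearest cube root of unity. The scalar inequality is $|\lambda-\omega|\le|\lambda^3-1|$, which holds because the other two factors $|\lambda-\omega'|$ are each at least about $1$ near the nearest root. This gives $\|C'-C\|_2\le4e$.

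The difficulty is that $A'C'$ need not be an involution. We need an exact relation $A'C'A'=C'^{-1}$, which is what makes $u(t)=A'C'$ an involution. What we do have is that $A'CA'C=A'BA'B$ is close to $I$, because $\|B^2-I\|_2\le e$ and $A'$ is close to $A$. So $C'$ is only approximately inverted by conjugation by $A'$.

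To fix this, average over the dihedral action. The map $X\mapsto A'X^{-1}A'$ is an involutive automorphism of the cyclic group $\langle C'\rangle$ up to a small error. The main obstacle is this symmetrization step with linear error.

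The first approach to try is to decompose $\mathbb C^D$ into the eigenspaces $E_1,E_\omega,E_{\bar\omega}$ of $C'$ and look at the blocks of $A'$. Approximate inversion says that $A'$ nearly maps $E_\omega\leftrightarrow E_{\bar\omega}$ and preserves $E_1$. Compress $A'$ to its block pattern, namely the $E_1$ block together with the off-diagonal blocks between $E_\omega$ and $E_{\bar\omega}$. Polar-decompose the off-diagonal part, and if the dimensions differ, move a small subspace from $E_\omega$ to $E_{\bar\omega}$ or into $E_1$ to balance them; this adjusts $C'$ as well. The result is an exact representation. The price is the size of the dropped blocks, which is controlled linearly by $\|A'C'A'C'-I\|_2$. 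A Hilbert--Schmidt estimate gives roughly $\|A'CA'C-I\|_2\le 3e$ plus the rounding error of $C$.

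If this does not give the constants directly, set $u(t)=A'C''$ with $C''$ the corrected three-cycle and $u(s)$ possibly adjusted slightly. The bookkeeping should then give $\|u(t)-B\|_2\le8e$ and $\|u(st)-AB\|_2\le9e$. The dimension-mismatch step is where the constant is most fragile, and tracking the losses carefully is the main work.

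\emph{Normal forms.} Each normal form $1,s,t,st,ts,sts$ has at most three letters. By the triangle inequality, using the bounds $e$, $8e$ and $9e$ on the letters and on $st$, every normal form is within $10e$ of the same word in $A,B$. For example, $sts=(st)s$ is within $9e+e$.
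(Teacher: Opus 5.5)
Your rounding of $A$ and your scalar inequalities are fine, but there is a genuine gap exactly where you locate it. The symmetrization step that must restore $u(s)\,u(st)\,u(s)=u(st)^{-1}$ is only sketched: compress $A'$ to a block pattern, polar-decompose, move subspaces to balance dimensions. Neither the existence of such a correction with linear error nor the constants $8e$ and $9e$ are derived; you say yourself that the bookkeeping ``should'' give them. As written, nothing shows that $u(t)$ is an involution satisfying the stated bounds. The trouble comes from leaving $B$ unrounded. Then $C=A'B$ is not a product of two exact involutions, so $A'CA'$ and $C^{-1}$ differ by $(B-B^*)A'$, which is small but not zero.

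The missing idea is that the symmetrization step is unnecessary.
\begin{itemize}
\item Round $B$ as well as $A$ to the nearer of $\pm1$. This gives self-adjoint unitaries $S,T$ with $\|S-A\|_2,\|T-B\|_2\le e$.
\item Put $C=ST$. Since $C$ is a product of two exact involutions, $SCS=TS=C^*$ holds \emph{exactly}. Moreover $\|C^3-I\|_2\le\|(AB)^3-I\|_2+3\|ST-AB\|_2\le7e$.
\item Round the spectrum of $C$ with a map $f$ to the cube roots of unity that satisfies $f(\bar z)=\overline{f(z)}$ and $|z-f(z)|\le|z^3-1|$. Take the nearest root, send the ties $e^{\pm i\pi/3}$ to $1$, and send $-1$ to $1$ as well. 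At $-1$ the displacement $2=|(-1)^3-1|$ is still allowed, whereas choosing $\omega$ or $\bar\omega$ there would break conjugation symmetry. Your tie-breaking ``arbitrarily'' would fail at this point even with $B$ rounded.
\item Functional calculus gives $C_0=f(C)$ with $C_0^3=I$ and $\|C_0-C\|_2\le7e$. Also $SC_0S=f(SCS)=f(C^*)=C_0^*$, because $S=S^{-1}$.
\end{itemize}
Now set $u(s)=S$ and $u(t)=SC_0$. Then $u(t)$ is an involution with $u(s)u(t)=C_0$. Using $SC=T$, the triangle inequality gives $\|u(t)-B\|_2\le\|C_0-C\|_2+\|T-B\|_2\le8e$ and $\|u(st)-AB\|_2\le7e+2e=9e$. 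Your normal-form step then goes through unchanged.
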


\begin{proof}
Round the spectra of $A$ and $B$ to the nearer of $\pm1$. For $|z|=1$ this moves $z$ by at most $|z^2-1|$, so the resulting Hermitian unitaries
$S,T$ satisfy $\|S-A\|_2,\|T-B\|_2\le e$. Then $C=ST$ satisfies $SCS=C^*$ and $\|C^3-I\|_2\le7e$. Round the spectrum of $C$ to the nearest cube root
of unity, choosing $1$ at ties, with one exception: send $-1$ to $1$, although it is not the nearest root there. At that point the
displacement is $2=|(-1)^3-1|$. This rule commutes with complex conjugation and moves $z$ by at most $|z^3-1|$, so $C_0$ satisfies $C_0^3=I$,
$SC_0S=C_0^*$ and $\|C_0-C\|_2\le7e$. Then $u(s)=S$ and $u(t)=SC_0$ are involutions with $u(s)u(t)=C_0$, and the bounds follow from the triangle
inequality.
\end{proof}

The second fact is the entropy that nearly commuting exact copies force. Write $h_2$ for the binary entropy.

\begin{lemma}[Entropic sector bound]\label{lem:imp-sector}
Let $u_1,\dots,u_N:S_3\to U(D)$ be unitary representations, not assumed to commute, such that for a three-cycle $c$
\[
\|u_i(c)-I\|_2\ge\Delta\quad(1\le i\le N),\qquad\|u_i(g)u_j(h)-u_j(h)u_i(g)\|_2\le\eta\quad(i\ne j,\ g,h\in S_3).
\]
Put $\theta=\sqrt{2N}\,\eta$. If $\theta\le\Delta$ and $\theta^2\le5/6$, then
\[
\log D\ \ge\ N\Big[\frac{(\Delta-\theta)^2}3-h_2(\theta^2)-\theta^2\log5\Big].
\]
\end{lemma}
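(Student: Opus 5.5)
The plan is to decompose $\mathbb C^D$ according to the three-cycles and build a probability distribution whose entropy is at most $\log D$ but is forced to be about $N(\Delta-\theta)^2/3$ by the involutions.

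\emph{Setup.} Each $u_i(c)$ has spectrum in $\{1,\omega,\bar\omega\}$, with spectral projections $P_i^0,P_i^+,P_i^-$. For a non-identity eigenvalue, $|z-1|^2=3$. So $\|u_i(c)-I\|_2^2=3\,\tau(P_i^++P_i^-)$, where $\tau=\Tr/D$. The weight $q_i=\tau(P_i^++P_i^-)$ is therefore at least $\Delta^2/3$.

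\emph{Entropy chain.} Start from the flat state $I/D$. Apply the projective measurements $\{P_i^0,P_i^+,P_i^-\}$ one after another, for $i=1,\dots,N$. The sequential measurement gives a classical random string $X=(X_1,\dots,X_N)$, where $X_i$ is the sector outcome. The joint distribution has at most $D$-bounded entropy in the following sense. Dephase the post-measurement state in each successive basis. The total Shannon entropy of outcomes plus the final conditional von Neumann entropy equals the entropy of the final state, which is at most $\log D$. Hence $H(X)\le\log D$.

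The lower bound on $H(X)$ should come from the involutions. Take $s\in S_3$, so that $u_i(s)P_i^+u_i(s)=P_i^-$. If the $u_j$ commuted exactly with $u_i$, conjugating the whole measurement history by $u_i(s)$ would preserve every other outcome and swap $+\leftrightarrow-$ at step $i$. Given the past, the outcome $X_i$ would then be a fair coin on the event $\{X_i\ne0\}$. That yields $H(X_i\mid X_{<i})\ge q_i'$, where $q_i'$ is the conditional probability mass of that event, about $q_i\ge\Delta^2/3$.

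\emph{Approximate commutation.} This is the main obstacle. The error from approximate commutation must be controlled uniformly, without losing a factor $N$ per step. I would bound the error through $\theta=\sqrt{2N}\eta$ as a Pythagorean, square-summed quantity rather than a linear one. The projections $P_j^\pm$ are polynomials in $u_j(c)$, so they commute with $u_i(s)$ up to $O(\eta)$ in $\|\cdot\|_2$. The history operator $K_x=P_{i-1}^{x_{i-1}}\cdots P_1^{x_1}$ then moves by at most the root-sum of these errors once orthogonality across outcomes is summed. Summing $\|[u_i(s),K_x]\|_2^2$ over $x$, the cross terms vanish by orthogonality of sectors, giving something like $2(i-1)\eta^2\le\theta^2$.

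This shows that the swapped distribution and the true conditional distribution at step $i$ are within total variation $\theta^2$ in a weighted sense. Likewise the probability of a non-identity outcome drops from $q_i$ by at most a $\theta$-shift in the $2$-norm, giving mass at least $(\Delta-\theta)^2/3$. A continuity estimate of Fannes type, $|H(p)-H(p')|\le h_2(\delta)+\delta\log5$ for distance $\delta=\theta^2$ and 6 atoms (hence $\log5$), then gives
\[
H(X_i\mid X_{<i})\ \ge\ \frac{(\Delta-\theta)^2}{3}-h_2(\theta^2)-\theta^2\log5 .
\]
The hypotheses $\theta\le\Delta$ and $\theta^2\le5/6$ are exactly what is needed for the continuity bound to apply.

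Summing over $i$ by the chain rule and comparing with $\log D$ gives the claim. The delicate point I expect to wrestle with is obtaining the variation bound $\theta^2$ per step, rather than accumulating errors additively in $i$.
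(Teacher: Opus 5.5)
\paragraph{The gap: the entropy ceiling $H(X)\le\log D$ is false.} Sequential projective measurements that do not commute can produce far more than $\log D$ bits. Take $D=2$, let the odd-indexed $u_i$ be the two-dimensional irreducible representation of $S_3$, and let the even-indexed ones be a unitary conjugate of it whose three-cycle has an eigenbasis mutually unbiased with the first. Starting from $I/2$, every measurement then gives an independent fair choice between $\omega$ and $\bar\omega$, so $H(X)=N$ while $\log D=1$.

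Your justification breaks at exactly this point. The classical--quantum state $\sum_xp_x|x\rangle\langle x|\otimes\rho_x$ lives on $\mathbb C^{3^N}\otimes\mathbb C^D$. Its entropy $H(X)+\sum_xp_xS(\rho_x)$ is at least $\log D$, not at most, since dephasing the pointers of a unitary implementation only raises entropy. The identity $S(\sum_xp_x\rho_x)=H(X)+\sum_xp_xS(\rho_x)$, which would give $H(X)\le\log D$, needs the post-measurement states of different histories to have orthogonal supports, and that is what commuting projections provide. So any ceiling on the outcome entropy must pay for the approximate commutation of the three-cycles $u_i(c),u_k(c)$ with one another. Your error analysis never does this: it only moves $u_i(s)$ across the history operators $K_x$. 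In the paper it is this ceiling, not the coin, that costs $h_2(\theta^2)+\theta^2\log5$ per copy.

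Your coin step is also mis-aimed. Flipping $X_i$ is a bijection of outcomes, so $H(p)=H(p^{(i)})$, and a Fannes estimate between $p$ and its flip gives nothing. What you need is a conditional Hellinger bound such as $H(X_i\mid X_{<i})\ge\Pr(X_i\ne0)-\frac1{\ln2}\sum_x\bigl(\sqrt{p(x,+)}-\sqrt{p(x,-)}\bigr)^2$, as in~\eqref{eq:binary-hellinger}. You also lose nothing on the mass: $\sum_xK_xK_x^*=I$ gives $\Pr(X_i\ne0)=\tau(P_i^++P_i^-)\ge\Delta^2/3$ exactly.

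\paragraph{How the paper argues, and how to repair your route.} The paper's proof does not measure at all. It embeds $\mathbb C^D$ isometrically into $\ell^2(S_3^N)\otimes\mathbb C^D$ by $Wv=|\Sigma|^{-1/2}\sum_g|g\rangle\otimes u_1(g_1)\cdots u_N(g_N)v$. Each $u_i(h)$ is then intertwined, up to $\sqrt D\,\theta$ in Hilbert--Schmidt norm, with the exactly commuting right-regular copies $\mathsf R(h_i)$; the square-summed error $2(N-i)\eta^2\le\theta^2$ comes from Lemma~\ref{lem:projection-product}.
\begin{itemize}
\item The state $WW^*/D$ has entropy $\log D$.
\item Each twirl $\mathcal E_i$ moves weight at most $\theta^2$ off the range of $WW^*$, into a block of rank at most $5D$. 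It therefore raises entropy by at most $h_2(\theta^2)+\theta^2\log5$; this is where $\theta^2\le5/6$ is used.
\item The twirls commute, so the increments chain by monotonicity of relative entropy.
\item The fully twirled state lies in the commutant of $\mathsf R(S_3^N)$, so its entropy is at least the expected number of two-dimensional tensor factors. The central elements $2-\mathsf R(c_i)-\mathsf R(c_i)^*$ bound this below by $N(\Delta-\theta)^2/3$; this is where $\theta\le\Delta$ is used.
\end{itemize}
The bit per copy is thus $\log\dim$ of the two-dimensional irreducible representation, not a coin.

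If you want to keep your sequential scheme, the repair is the argument of Theorem~\ref{thm:sector} with $\rho=I/D$:
\begin{itemize}
\item Purify $I/D$ by a reference and use $\log D\ge I(X:\mathrm{Ref})=\sum_iI(X_i:\mathrm{Ref}\mid X_{<i})$.
\item Let the reference measure $u_i(c)$ from the other side.
\item By Fano's inequality, the cost is controlled by the probability that the reference disagrees with $X_i$. That probability is $\frac13\sum_x\|[u_i(c),K_x]\|_2^2\le\frac49\sum_{k<i}\|[u_i(c),u_k(c)]\|_2^2$ by Lemma~\ref{lem:transfer}.
\end{itemize}
Combined with the Hellinger coin bound, this gives per copy $\frac{\Delta^2}3-\frac{\theta^2}{3\ln2}-h_2(\frac{2\theta^2}9)-\frac{2\theta^2}9$. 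One checks that this dominates the stated bound under $\theta\le\Delta$ and $\theta^2\le5/6$.
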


\begin{proof}
Let $\Sigma=S_3^N$ and $f(g)=u_1(g_1)\cdots u_N(g_N)$. For $h$ in factor $i$, $\|f(gh)-f(g)u_i(h)\|_2=\|Y^*XY-X\|_2$ with $X=u_i(h)$ and
$Y=u_{i+1}(g_{i+1})\cdots u_N(g_N)$, and $\|Y^*XY-X\|_2^2=2-2\,\mathrm{Re}\,\tau(X^*Y^*XY)$. The averages
$T_j(Z)=\frac16\sum_au_j(a)^*Zu_j(a)$ are orthogonal projections on $M_D$ with the normalized Hilbert--Schmidt inner product, the average of
$Y^*XY$ over $g$ is $T_N\cdots T_{i+1}(X)$, and $\|(I-T_j)X\|_2^2=\frac1{12}\sum_a\|[X,u_j(a)]\|_2^2\le\eta^2/2$.
Lemma~\ref{lem:projection-product} therefore gives
\[
\mathbb E_g\|f(gh)-f(g)u_i(h)\|_2^2\le2(N-i)\eta^2\le\theta^2 .
\]
Let $\mathsf R$ be the right regular representation of $\Sigma$ and $W:\mathbb C^D\to\ell^2(\Sigma)\otimes\mathbb C^D$ the isometry
$Wv=|\Sigma|^{-1/2}\sum_g|g\rangle\otimes f(g)v$. Then $\|\mathsf R(h)W-Wu_i(h)\|_{HS}\le\sqrt D\,\theta$ for $h$ in factor $i$. Put $\Pi=WW^*$ and
$\rho=\Pi/D$, a state of entropy $\log D$. For the three-cycle $c_i$ of factor $i$ and $Z_i=2I-\mathsf R(c_i)-\mathsf R(c_i)^*$,
\[
\Tr(\rho Z_i)=D^{-1}\|(\mathsf R(c_i)-I)W\|_{HS}^2\ge(\Delta-\theta)^2 .
\]
Let $\mathcal E_i(X)=\frac16\sum_h\mathsf R(h_i)X\mathsf R(h_i)^*$, with $h_i$ the element $h$ placed in factor $i$. These conditional expectations
commute. Since $(I-\Pi)W=0$, $\Tr[(I-\Pi)\mathcal E_i(\rho)]\le\theta^2$. The state $\mathcal E_i(\rho)$ is an average of six states of rank $D$, one
of them $\rho$, so after pinching by $\Pi$ and $I-\Pi$ its two blocks have rank at most $D$ and $5D$. Pinching does not decrease entropy, and
$x\mapsto h_2(x)+x\log5$ increases on $[0,5/6]$, so $S(\mathcal E_i(\rho))\le\log D+h_2(\theta^2)+\theta^2\log5$. For a conditional expectation
$S(\mathcal E(\sigma))-S(\sigma)=D(\sigma\|\mathcal E(\sigma))$. Since the $\mathcal E_j$ commute, monotonicity of relative entropy under
$\mathcal E_{i-1}\cdots\mathcal E_1$ bounds the entropy increment of $\mathcal E_i$, applied after $\mathcal E_{i-1}\cdots\mathcal E_1$, by
$D(\rho\|\mathcal E_i\rho)$. Hence $\bar\rho=\mathcal E_N\cdots\mathcal E_1(\rho)$ satisfies $S(\bar\rho)\le\log D+N[h_2(\theta^2)+\theta^2\log5]$.

The state $\bar\rho$ commutes with $\mathsf R(\Sigma)$, so $\bar\rho=\bigoplus_\lambda p_\lambda(I_{d_\lambda}/d_\lambda\otimes\sigma_\lambda)$ over the
irreducible representations $\lambda$ of $S_3^N$, and $S(\bar\rho)\ge\sum_\lambda p_\lambda\log d_\lambda=\sum_\lambda p_\lambda k_\lambda$, where $k_\lambda$
is the number of factors of $\lambda$ that carry the two-dimensional representation of $S_3$. The operator $Z_i$ is central in the group
algebra, so $\Tr(\bar\rho Z_i)=\Tr(\rho Z_i)$, and it acts as $3I$ on the sectors whose factor $i$ is two-dimensional and as $0$ on the others.
Hence $N(\Delta-\theta)^2\le\sum_i\Tr(\bar\rho Z_i)=3\sum_\lambda p_\lambda k_\lambda\le3S(\bar\rho)$.
\end{proof}

The third fact runs the same count on the three-cycles alone. It needs no relation between the transpositions of different copies. Put
$\chi(x)=h_2(\min\{\frac23x,\frac23\})+\min\{\frac23x,\frac23\}$, a concave function.

\begin{theorem}[Phase-robust sector bound]\label{thm:imp-phase}
Let $c_i,x_i\in U(D)$, $1\le i\le N$, satisfy $c_i^3=x_i^2=I$ and $x_ic_ix_i=c_i^{-1}$, with no condition on how $x_i$ and $x_j$ commute. Put
$\Delta_i=\|c_i-I\|_2$ and $f(g)=c_1^{g_1}\cdots c_N^{g_N}$ for $g\in\Z_3^N$, let $g^{(i)}$ be $g$ with its $i$-th coordinate negated, and, for
uniform $g$, let
\[
\hat a_i^2=\mathbb E_g\|f(g+e_i)-f(g)c_i\|_2^2,\qquad \hat b_i^2=\mathbb E_g\|x_if(g^{(i)})-f(g)x_i\|_2^2 .
\]
Then $\hat a_i^2\le\frac43\sum_{j>i}\|c_ic_j-c_jc_i\|_2^2$, $\hat b_i^2\le\frac43\sum_{j\ne i}\|x_ic_j-c_jx_i\|_2^2$, and
\[
\log D\ \ge\ \sum_{i=1}^N\Big[\frac{(\Delta_i^2-\hat a_i^2)_+}3-\frac{\hat b_i^2}{2\ln2}-\chi(\hat a_i^2)\Big].
\]
\end{theorem}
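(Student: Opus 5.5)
The plan is to imitate the proof of Lemma~\ref{lem:imp-sector}, but with the abelian group $\Z_3^N$ in place of $S_3^N$, and to treat the involutions $x_i$ one copy at a time, never jointly.

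\emph{The bounds on $\hat a_i^2$ and $\hat b_i^2$.} Write $f(g+e_i)-f(g)c_i=f(g)\bigl(Y^*c_iY-c_i\bigr)$ up to a unitary factor, where $Y=c_{i+1}^{g_{i+1}}\cdots c_N^{g_N}$. The averaging maps $T_j(Z)=\frac13\sum_a c_j^{-a}Zc_j^a$ are commuting-free orthogonal projections, and $\|(I-T_j)Z\|_2^2=\frac1{9}\sum_{a,b}\|\cdots\|^2$ is comparable to $\|[Z,c_j]\|_2^2$. For order three this constant is $\frac23$, and doubling it to account for $\|Y^*XY-X\|^2=2-2\mathrm{Re}\,\tau(\cdot)$ gives the factor $\frac43$. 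Lemma~\ref{lem:projection-product} then yields $\hat a_i^2\le\frac43\sum_{j>i}\|[c_i,c_j]\|_2^2$. The bound on $\hat b_i^2$ is the same computation with $X=x_i$: we have $x_if(g^{(i)})x_i=c_1^{g_1}\cdots c_i^{g_i}\cdots$ only if $x_i$ commutes with the $c_j$, $j\ne i$, so the defect telescopes over all $j\ne i$ with the same projection estimate.

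\emph{The entropy count.} Embed $\mathbb C^D$ isometrically by $Wv=3^{-N/2}\sum_g|g\rangle\otimes f(g)v$ and set $\rho=WW^*/D$. The regular representation $\mathsf R$ of $\Z_3^N$ approximately intertwines: $\|(\mathsf R(e_i)\otimes I)W-Wc_i\|_{HS}^2=D\hat a_i^2$. The key step is to measure the $N$ characters of $\Z_3^N$ successively, via the Fourier transform on the $\ell^2(\Z_3^N)$ register. The outcome $\chi_i\in\Z_3$ is nontrivial with probability at least roughly $(\Delta_i^2-\hat a_i^2)_+/3$, since $\|c_i-I\|^2=\frac{4}{3}\cdot$ the weight on nontrivial eigenvalues up to constants. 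The involution $x_i$, transported through $W$ up to defect $\hat b_i$, maps character $\chi$ to $-\chi$; hence, conditionally on being nontrivial, the outcome is a nearly fair bit, which costs at most $\hat b_i^2/(2\ln2)$ by Pinsker. The pinching error from leaving the range of $W$ has weight $\hat a_i^2$; as in Lemma~\ref{lem:imp-sector}, it is controlled by $\chi(\hat a_i^2)$, using block ranks and concavity. The per-copy entropy increments are then chained by monotonicity of relative entropy under the commuting pinchings, and the chain is bounded above by $\log D$ plus the error terms.

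\emph{Main obstacle.} The main obstacle is getting the bit of entropy from the sign symmetry without joint commutation of the $x_i$. My plan is to use, for each $i$ separately, the relative entropy $D(\bar\rho\|\mathcal F_i\bar\rho)$, where $\mathcal F_i$ averages over $x_i$ acting only on coordinate $i$. This expresses the entropy gain of copy $i$ in terms of the single-copy quantity $\hat b_i$, so that no relation between different $x_i$ is ever needed.
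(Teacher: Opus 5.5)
Your skeleton matches the paper's proof: projection products for $\hat a_i,\hat b_i$, the isometry $W$, twirling by the shifts $\mathsf C_i$ (equivalently, dephasing in the character basis), the pinching estimate $S(\bar\rho)\le\log D+\sum_i\chi(\hat a_i^2)$, and $S(\bar\rho)\ge H(p)\ge\sum_iH(Z_i\mid Z_{-i})$ for the joint label distribution $p$.

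The gap is in the step that should produce $-\hat b_i^2/(2\ln2)$. Pinsker's inequality bounds relative entropy from \emph{below} by trace distance. Despite the matching constant, it cannot bound the entropy deficit of a nearly fair bit from above. Worse, any argument through the trace distance between $p$ and its flip $p^{(i)}$ loses a square. That distance is only $O(\hat b_i)$, and a pair $z,z^{(i)}$ with masses $(m,0)$ costs a deficit $m$, which is linear in the $\ell^1$ discrepancy.

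Your fallback $D(\bar\rho\|\mathcal F_i\bar\rho)$ fails the same way if you bound it through the fidelity of $\bar\rho$ and $\mathsf X_i\bar\rho\mathsf X_i^*$. For two pure states of overlap $c$, the relative entropy to their midpoint is $h_2(\frac{1-c}2)$, which is a logarithm worse than $(1-c)/\ln2$.

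What is missing is a Hellinger (Bhattacharyya) estimate on the classical labels. Take $\mathsf X_i=x_i\otimes\mathsf F_i$, which inverts $\mathsf C_i$ and commutes with the other $\mathsf C_j$. Measuring the joint spectral projections of the $\mathsf C_j$ on the normalized vectorizations of $W$ and of $\mathsf X_iWx_i^*$ gives $p$ and $p^{(i)}$. These two vectors have real overlap $1-\hat b_i^2/2$, so Cauchy--Schwarz outcome by outcome gives $\sum_z\sqrt{p(z)p(z^{(i)})}\ge1-\hat b_i^2/2$. The inequality $1-h_2(x)\le(1-2\sqrt{x(1-x)})/\ln2$ then shows the following: for fixed other labels, masses $m,m'$ on the two nontrivial values of $Z_i$ contribute at least $m+m'-(\sqrt m-\sqrt{m'})^2/\ln2$. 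Summing gives $H(Z_i\mid Z_{-i})\ge s_i-\hat b_i^2/(2\ln2)$.

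Your relative-entropy idea can be rescued only classically. With $q=\tfrac12(p+p^{(i)})$ one has $H_p(Z_i\mid Z_{-i})=H_q(Z_i\mid Z_{-i})-D(p\|q)\ge s_i-D(p\|q)$, but bounding $D(p\|q)$ needs exactly this inequality.

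The first term has a smaller gap of the same kind. $\|c_i-I\|_2^2$ is $3$ times the nontrivial spectral weight, not $\frac43$ times it. A triangle inequality gives only $\Pr(Z_i\ne1)\ge(\Delta_i-\hat a_i)_+^2/3$, which is weaker than the claimed $(\Delta_i^2-\hat a_i^2)_+/3$. The paper instead couples the two spectral distributions. Let $Q_\alpha,P_\beta$ be the spectral projections of $c_i$ and $\mathsf C_i$. The numbers $r_{\alpha\beta}=D^{-1}\|P_\beta WQ_\alpha\|_{HS}^2$ form a joint distribution whose marginals are the spectral distribution of $c_i$ in the normalized trace and that of $\mathsf C_i$ in $\rho$. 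Then $\hat a_i^2=3\Pr(\alpha\ne\beta)$ and $\Delta_i^2=3\Pr(\alpha\ne1)$.

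A few smaller corrections:
\begin{itemize}
\item In the pinching step the leaked weight is $\frac23\hat a_i^2$, not $\hat a_i^2$: two nontrivial shifts of weight $\frac13$ each leak into a block of rank $2D$. This is what yields exactly $\chi(\hat a_i^2)$.
\item The projection constant is $\|(I-T_j)Z\|_2^2=\frac13\|[Z,c_j]\|_2^2$. The factor $2$ of Lemma~\ref{lem:projection-product} and the $2$ in $2-2\,\mathrm{Re}$ together give $\frac43$.
\item In the $\hat b_i$ bound, the averaging at coordinate $i$ is the twisted projection $Z\mapsto\frac13\sum_kc_i^{-k}Zc_i^{-k}$, which fixes $x_i$ by the local relation. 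That is why only $j\ne i$ appears.
\end{itemize}

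Each $\mathsf X_i$ is used on its own, so the obstacle you single out, the joint behaviour of the $x_i$, never arises. The delicate point is getting an error quadratic rather than linear in $\hat b_i$.
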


\begin{proof}
\emph{The ordered-product errors.} For a unitary $h$, $\|(I-\mathcal A_j)h\|_2^2=\frac13\|hc_j-c_jh\|_2^2$ for the average $\mathcal A_j$ of conjugation by
the powers of $c_j$. Moving $c_i$ through the suffix $c_{i+1}^{g_{i+1}}\cdots c_N^{g_N}$ and averaging over its exponents gives
$\hat a_i^2=2-2\,\mathrm{Re}\langle c_i,\mathcal A_N\cdots\mathcal A_{i+1}c_i\rangle$, and Lemma~\ref{lem:projection-product} gives the first bound. For the
second, the maps $\mathcal B_j(Z)=\frac13\sum_kc_j^{-k}Zc_j^{\epsilon_jk}$, with $\epsilon_i=-1$ and $\epsilon_j=1$ for $j\ne i$, are orthogonal
projections. The local relation gives $\mathcal B_i(x_i)=x_i$, and $\|(I-\mathcal B_j)x_i\|_2^2=\frac13\|x_ic_j-c_jx_i\|_2^2$ for $j\ne i$. Moreover
$\hat b_i^2=2-2\,\mathrm{Re}\langle x_i,\mathcal B_N\cdots\mathcal B_1x_i\rangle$, and Lemma~\ref{lem:projection-product} gives the second bound.

\emph{Embedding.} Let $W\xi=3^{-N/2}\sum_gf(g)\xi\otimes|g\rangle$, let $\mathsf C_i$ shift the $i$-th coordinate of $\ell^2(\Z_3^N)$ by $-1$, let
$\mathsf F_i$ negate it, and put $\mathsf X_i=x_i\otimes\mathsf F_i$. The $\mathsf C_i$ commute, $\mathsf X_i\mathsf C_i\mathsf X_i=\mathsf C_i^{-1}$ and
$\mathsf X_i\mathsf C_j\mathsf X_i=\mathsf C_j$ for $j\ne i$. Moreover $D^{-1/2}\|\mathsf C_iW-Wc_i\|_{HS}=\hat a_i$ and
$D^{-1/2}\|\mathsf X_iW-Wx_i\|_{HS}=\hat b_i$.

\emph{Entropy.} Let $\rho=WW^*/D$, and let $\mathcal T_i$ average conjugation by the powers of $\mathsf C_i$. The twirl moves weight at most
$\min\{\frac23\hat a_i^2,\frac23\}$ outside the range of $WW^*$, into a block of rank at most $2D$, so $S(\mathcal T_i\rho)\le\log D+\chi(\hat a_i^2)$. As in
Lemma~\ref{lem:imp-sector} the twirls commute, and the increment of each twirl, the Holevo quantity of the conjugates of the current state, can
only decrease under the earlier twirls by data processing for the Holevo quantity. So $\bar\rho=\mathcal T_1\cdots\mathcal T_N\rho$ has
$S(\bar\rho)\le\log D+\sum_i\chi(\hat a_i^2)$.

\emph{Sectors.} Let $p$ be the joint distribution of the eigenvalue labels $Z_i\in\{1,\omega,\omega^2\}$ of the $\mathsf C_i$, the same for $\rho$
and $\bar\rho$. Then $S(\bar\rho)\ge H(p)\ge\sum_iH(Z_i\mid Z_{-i})$. Let $Q_\alpha$ and $P_\beta$ be the spectral projections of $c_i$ and
$\mathsf C_i$. The numbers $r_{\alpha\beta}=D^{-1}\|P_\beta WQ_\alpha\|_{HS}^2$ form a joint distribution whose marginals are the spectral distribution
of $c_i$ in the normalized trace and that of $\mathsf C_i$ in $\rho$. Distinct cube roots of unity are at squared distance $3$, so
$\hat a_i^2=3\Pr(\alpha\ne\beta)$ and $\Delta_i^2=3\Pr(\alpha\ne1)$, whence $s_i:=\Pr(Z_i\ne1)\ge(\Delta_i^2-\hat a_i^2)_+/3$. Next, the normalized
vectorizations of $W$ and of $\mathsf X_iWx_i^*$ are at squared distance $\hat b_i^2$, so their real overlap is $1-\hat b_i^2/2$. Measuring the joint
spectral projections of the $\mathsf C_j$ gives $p$ from the first and its flip $p^{(i)}$ in coordinate $i$ from the second, because $\mathsf X_i$
exchanges $\omega$ and $\omega^2$ in coordinate $i$ and fixes the other labels. By the Cauchy--Schwarz inequality, outcome by outcome,
$\sum_z\sqrt{p(z)p(z^{(i)})}\ge1-\hat b_i^2/2$. For $0\le x\le1$ one has $1-h_2(x)\le(1-2\sqrt{x(1-x)})/\ln2$: with $y=2x-1$, the difference of
$1-\sqrt{1-y^2}$ and $\ln2\,(1-h_2(x))$ vanishes with its derivative at $y=0$ and has second derivative $(1-y^2)^{-3/2}-(1-y^2)^{-1}\ge0$. So,
for fixed other labels with masses $m,m'$ on the two nontrivial values of $Z_i$, the conditional entropy receives at least
$m+m'-(m+m'-2\sqrt{mm'})/\ln2$. Summing gives $H(Z_i\mid Z_{-i})\ge s_i-\hat b_i^2/(2\ln2)$.
\end{proof}

\subsection{Proof of the counting criterion}

\begin{proof}[Proof of Theorem~\ref{thm:criterion}]
Both parts use the copies $A_i=\phi(w_i)\phi(a)\phi(w_i)^{-1}$ and $B_i=\phi(w_i)\phi(b)\phi(w_i)^{-1}$, $1\le i\le N$, of an assignment $\phi$ extended
to the free group. They have three properties. The $S_3$ relations of each pair are conjugates of $a^2,b^2,(ab)^3$, so they have the defect of a
relator. The product $A_iB_i=\phi(w_i)\phi(ab)\phi(w_i)^{-1}$ has the displacement of $\phi(ab)$. And a commutator of two letters from copies
$i\ne j$, or of their inverses, is a conjugate of $[w_ixw_i^{-1},w_jyw_j^{-1}]^{\pm1}$ with $x,y\in\{a,b\}$, so its area is at most $\Lambda$. A
word of area $A$ is freely a product of $A$ conjugates of relators, so under a unitary assignment it lies within $A$ times the relator defect of
$I$; for permutations the same holds for the normalized Hamming distance, which is bi-invariant.

\emph{Permutations.} Let $f:E\to\Sym_n$ have $f(1)=\mathrm{id}$ and multiply and separate correctly up to $1/r$, with $r\ge10^8N\Lambda(2l-1)$, and put
$\phi(s)=f(s)$ for $s\in S$. For a word $s_1\cdots s_m$ of $R$ with prefixes $p_k$, the chunk contains $p_ks_{k+1}=p_{k+1}$ and $p_m=1$, and
$s\cdot s^{-1}=1$, so $m-1$ merges and at most $m$ inverse replacements give $d(\phi(w),\mathrm{id})\le\delta:=(2l-1)/r$ for every $w\in R$. In the
same way $d(\phi(ab),f(ab))\le1/r$, and $d(f(ab),\mathrm{id})\ge1-1/r$ because $ab\ne1$; so $d(\phi(ab),\mathrm{id})\ge1-2/r\ge\frac12$. Regard
permutations as unitary matrices, with $\|P-I\|_2^2=2d(P,\mathrm{id})$. The $S_3$ relators of each pair then have defect at most $e=\sqrt{2\delta}$,
and Lemma~\ref{lem:imp-s3round} gives exact representations $u_i$ within $10e$ of the normal forms. Commuting two normal forms from different
pairs takes at most nine interchanges of letters, so their commutator has $d\le9\Lambda\delta$ and $\|\cdot\|_2\le3\sqrt{2\Lambda\delta}$, and rounding adds at
most $40e$; so $\eta\le43\sqrt{2\Lambda\delta}$. The element $u_i(st)$ lies within $9e$ of $A_iB_i$, whose displacement in $\|\cdot\|_2$ is at least $1$,
so $\Delta\ge1-9\sqrt{2\delta}\ge\frac12$. Then $\theta^2=2N\eta^2\le7396N\Lambda\delta<10^{-4}$, and Lemma~\ref{lem:imp-sector} gives
$\log n\ge N[(49/100)^2/3-h_2(10^{-4})-10^{-4}\log5]\ge N/16$. So $\log\sigma_E(r)\ge N/16$.

\emph{Unitaries.} Let $\phi$ map every word of $R$ within $e$ of $I$, with $\|\phi(ab)-I\|_2\ge\Delta$ and $K\sqrt N\,\Lambda e\le1$, where
$K=K_\Delta=\sqrt{80000\,(1+\log(1/\Delta))}/\Delta$. Lemma~\ref{lem:imp-s3round} gives exact pairs $x_i=u_i(s)$ and $c_i=u_i(st)$ with
$\|x_i-A_i\|_2\le e$ and $\|c_i-A_iB_i\|_2\le9e$, so $\Delta_i\ge\Delta-9e$. Four and two interchanges of letters give
$\|c_ic_j-c_jc_i\|_2\le4\Lambda e+36e\le40\Lambda e$ and $\|x_ic_j-c_jx_i\|_2\le2\Lambda e+20e\le22\Lambda e$ for $i\ne j$. By Theorem~\ref{thm:imp-phase} and the
concavity of $\chi$,
\[
\log D\ \ge\ N\Big[\frac{(\Delta-9/K)^2-\bar a}{3}-\frac{\bar b}{2\ln2}-\chi(\bar a)\Big],\qquad \bar a\le\frac{3200}{3K^2},\quad \bar b\le\frac{1936}{3K^2},
\]
where $\bar a$ and $\bar b$ are the averages of $\hat a_i^2$ and $\hat b_i^2$, since $e\le1/K$ and $N\Lambda^2e^2\le1/K^2$. It remains to show that the
bracket is at least $\Delta^2/12$. Since $K\ge36/\Delta$, $(\Delta-9/K)^2\ge9\Delta^2/16$, so it suffices that
$\bar a/3+\bar b/(2\ln2)+\chi(\bar a)\le5\Delta^2/48$. Here $\bar a/3+\bar b/(2\ln2)<821.2/K^2$. With $q=2\bar a/3<711.2/K^2$, the bound
$(1-q)\log\frac1{1-q}\le q\log e$ gives $\chi(\bar a)=h_2(q)+q\le q\log(2e/q)$, and $x\log(2e/x)$ increases for $x<2$. Put $L=\log(1/\Delta)$, so that
$K^2=80000(1+L)/\Delta^2$. Multiplying by $K^2$ and using $\log(1+L)\le L+\frac1{10}$ for $L\ge0$, the left side becomes at most
\[
821.2+711.2\Big[\log\frac{2e\cdot80000}{711.2}+\log(1+L)+2L\Big]\ \le\ 7476+2134L\ \le\ \frac5{48}\,80000\,(1+L),
\]
which is the required inequality. Hence $\log D\ge N\Delta^2/12$.
\end{proof}

\subsection{Proof of sequential extraction}

\begin{proof}[Proof of Theorem~\ref{thm:extraction}]
\emph{A test with perfect completeness.} In every round the observer submits half of a fresh maximally entangled pair, measures
$\{\Pi_\Phi,I-\Pi_\Phi\}$ on the output and the reference, keeps the outcome, discards the pair and continues. Let $p_t$ be the probability that
the tests of rounds $1,\dots,t-1$ all pass, $\ell_t$ the probability of failing round $t$ given that, and $F\in\{1,\dots,n,\infty\}$ the first
failure, with $\infty$ for none. The ideal experiment never fails, so $f=\Pr(F\le n)\le\epsilon$. Hence $p_t\ge1-\epsilon$ and
$\sum_tp_t\ell_t=f\le\epsilon$.

\emph{Entropy ledger.} Let $\rho_t$ be the memory state given that the first $t-1$ tests passed; for the bookkeeping, a branch that fails is
frozen. Since the input marginal is $\tau_d$, the round takes $\rho_t$ to $T_t(\rho_t)=\mathrm{Tr}_d\,U_t(\tau_d\otimes\rho_t)U_t^*$, raising the
entropy by $\Delta S_t$. Recording the outcome lowers the average conditional entropy by at most $h_2(\ell_t)$. Exporting $k$ qubits lowers the
entropy by at most $k$, and importing $k$ qubits in a state of entropy $k-c$ raises it by $k-c$, so an exchange lowers it by at most the imported
deficit $c$. The average entropy starts at $Q-P$ and ends at most at $Q$, so
\[
\sum_tp_t\Delta S_t\le P+C+\sum_tp_th_2(\ell_t)\le B+H(F)\le B+h_2(\epsilon)+\epsilon\log n,
\]
where $\sum_tp_th_2(\ell_t)=H(F)$ by the chain rule and $H(F)\le h_2(f)+f\log n$. Therefore
$\sum_tp_t(\ell_t+\Delta S_t)\le B+\epsilon+h_2(\epsilon)+\epsilon\log n$, and since $\sum_tp_t\ge n(1-\epsilon)$, some round has
$\ell_t+\Delta S_t\le t_n$.

\emph{The extracted channel.} Each round's unitary acts only on the input and the $Q$ memory qubits, and the memory state $\rho_t$ conditioned on
the earlier passes does not depend on the current input, since the earlier test registers are discarded. So the round defines a channel $\Psi_t$
with a bath of dimension $2^Q$, and its support leakage is exactly $\ell_t$. For an arbitrary current input with reference, the actual state is
$p_t$ times a normalized pass-block state $\sigma$ plus a failure block, while the ideal state $\omega$ lies in the pass block. Closeness of the
complete experiments gives $\frac12\|p_t\sigma-\omega\|_1+\frac{1-p_t}2\le\epsilon$, hence $p_t\frac12\|\sigma-\omega\|_1\le\epsilon$ and
$\ddia(\Psi_t,\Phi)\le\epsilon/(1-\epsilon)$.
\end{proof}

The device performs no test. The tests and the conditioning on success belong to one particular observer and to the proof.
\section{Proof of the weighted one-round theorem}\label{sec:weighted-proof}

This appendix proves Theorem~\ref{thm:weighted}, the weighted one-round theorem. Theorem~B uses it through
Corollary~\ref{cor:amplification} in Section~\ref{sec:main-channel-proof}, and Appendix~\ref{app:lazy} reuses its proof in a commutator form.

Four steps lead from the round to the entropy bound. A round with small leakage and entropy production yields unitaries on the bath under
which every relation holds up to $O(\sqrt\nu)$, with errors weighted by the memory state (Lemma~\ref{lem:weighted-extraction}). Weighted
errors add up over a van Kampen diagram, at a price paid once for each face and each vertex (Lemma~\ref{lem:charged-area}). Each transported
copy of $S_3$ is rounded to an exact representation on the same space, so the state and its entropy do not change
(Lemma~\ref{lem:rounding}). Measuring the rounded three-cycles one after another on one half of a purification of the memory state
produces about one bit per copy that the other half also knows (Theorem~\ref{thm:sector}).

The argument generalises~\cite[Appendix~B]{DouglasMghirbiB}, which proves the one-round theorem for the Houghton
gadget~\cite[Theorem~5.5]{DouglasMghirbiB}, to every gadget channel that satisfies the hypotheses of Theorem~\ref{thm:weighted}. The weighted
norms are those of~\cite[Appendix~B]{DouglasMghirbiB}. Lemmas~\ref{lem:weighted-extraction} and~\ref{lem:charged-area} extend Lemmas~B.4
and~B.2 there from the Houghton gadget to every such gadget. Lemma~\ref{lem:rounding} is Lemma~B.3 there. Theorem~\ref{thm:sector} is a form
of Theorem~B.1 there, now proved by sequential measurement.

\subsection{Weighted norms}

Throughout, $\|\cdot\|_F$ is the unnormalized Hilbert--Schmidt norm on $M_M$, $\rho$ is a state on $\mathbb C^M$ and $\xi=\sqrt\rho$, so
$\|\xi\|_F=1$. For operators $W$ on $\mathbb C^M$ put
\[
E_\rho(W)=\|(W-I)\xi\|_F,\qquad\chi_\rho(W)=\|W\xi-\xi W\|_F,
\]
the \emph{displacement} and the \emph{centrality defect} of $W$ at $\rho$. We drop the subscript $\rho$. For unitaries $u,v$ and any operator $F$:
\begin{enumerate}[label=(W\arabic*),itemsep=1pt]
\item $E(uv)\le E(u)+E(v)$ and $E(u^*)=E(u)$;
\item $\chi(uv)\le\chi(u)+\chi(v)$ and $\chi(u^*)=\chi(u)$;
\item $\|Fv\xi\|_F\le\|F\xi\|_F+\|F\|\,\chi(v)$;
\item $|E(uWu^*)-E(W)|\le2\chi(u)$ for every $W$ with $\|W\|\le1$;
\item $\|\xi F\|_F=\|F^*\xi\|_F$, and $\|\xi F\|_F=\|F\xi\|_F$ if $F$ is normal.
\end{enumerate}
For (W1) write $(uv-I)\xi=u(v-I)\xi+(u-I)\xi$ and $(u^*-I)\xi=-u^*(u-I)\xi$; for (W2), $uv\xi-\xi uv=u(v\xi-\xi v)+(u\xi-\xi u)v$ and
$u^*\xi-\xi u^*=-u^*(u\xi-\xi u)u^*$. For (W3) write $Fv\xi=F\xi v+F(v\xi-\xi v)$. For (W4), $(uWu^*-I)\xi=u(W-I)u^*\xi$, to which (W3) applies
with $\|W-I\|\le2$, in both directions. For (W5), $\|F^*\xi\|_F^2=\Tr(\xi FF^*\xi)$, which equals $\Tr(\xi F^*F\xi)$ when $F$ is normal. A word
$w$ in $S^{\pm1}$ is evaluated by an assignment $\phi$ of unitaries to $S$ through $\phi(x^{-1})=\phi(x)^*$; by (W2),
$\chi(\phi(w))\le|w|\max_{x\in S}\chi(\phi(x))$.

\subsection{Extraction}

\begin{lemma}[Weighted extraction]\label{lem:weighted-extraction}
Let $G$, $S$ and $R$ be as in Section~\ref{sec:channel-setting}, with $a,b,j\in S$, $j^{-1}ab\in R$ and $j\ne1$ in $G$, and let $m$ be as in
Theorem~\ref{thm:weighted}. Let $\Psi(X)=\mathrm{Tr}_M\,U(X\otimes\rho)U^*$
be a round with $u=\ddia(\Psi,\Phi_G)\le1/8$, leakage $\ell$ relative to $\Phi_G$ and entropy production $\Delta S$, and put $\nu=\ell+\Delta S$,
$\eta=\sqrt{d\nu}$ and $e=256\,m\,\eta$. There are unitaries $\phi(x)$, $x\in S$, on $\mathbb C^M$ such that
\begin{enumerate}[label=(\roman*),itemsep=1pt]
\item $E(\phi(r))\le e$ for every $r\in R$;
\item $\chi(\phi(x))\le e$ for every $x\in S$;
\item $E(\phi(a)\phi(b))\ge1-e$.
\end{enumerate}
\end{lemma}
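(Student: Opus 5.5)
We plan to read the bath operators directly off the Stinespring block structure of $U$. Write the isometry $|\psi\rangle\mapsto U(|\psi\rangle\otimes\xi)$, or rather work with the operator $U(I\otimes\xi)$, in matrix units of $\mathbb C^d$: for each pair of slot positions $(k,k')$ let $V_{kk'}\in M_M$ be the block of $U$, so that the Choi state of $\Psi$ is $\frac1d\sum |\,V_{kk'}\xi\,\rangle\!\rangle$-type vectors indexed by the output–input positions. The Choi support of $\Phi_G$ is spanned by the vectors $|A_g\rangle\!\rangle$, so small leakage $\ell$ says that, in the weighted norm $\|\cdot\xi\|_F$, the blocks of $U$ at non-slot positions are $O(\sqrt{d\ell})$, and that the blocks at slots of the same label $g$ are all proportional (with coefficients $c_a$) to one common operator $\Lambda_g\xi$, up to $O(\sqrt{d\ell})$. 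Concretely we project the vector $(V_{kk'}\xi)_{k,k'}$ onto the span of $\{|A_g\rangle\!\rangle\otimes Y_g\}$ and define $\Lambda_g$ as the optimal coefficient; this is the first step.

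Second, we make the $\Lambda_g$ nearly unitary and nearly central. At an anchor of $g$ the row and column of $U$ contain only that slot up to leakage, and unitarity of $U$ (rows and columns have norm one) gives $\|(\Lambda_g^*\Lambda_g-I)\xi\|_F=O(\eta)$ and similarly for $\Lambda_g\Lambda_g^*$; polar decomposition then yields unitaries $\phi_g$ with $\|(\phi_g-\Lambda_g)\xi\|_F=O(\eta)$. For centrality we use entropy production: $\Delta S=S(T(\rho))-S(\rho)=D(\,\cdot\,)$-type quantity, and Pinsker/Fannes-type reasoning for unital $T$ gives $\sum_{k,k'}\|V_{kk'}\xi-\xi V_{kk'}\|_F^2\lesssim d\,\Delta S$ (the channel $T$ has Kraus operators $d^{-1/2}V_{kk'}$, and a unital channel that nearly preserves entropy nearly commutes with $\sqrt\rho$ via the Petz recovery / $\|[K,\sqrt\rho]\|^2\le$ relative-entropy bound). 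Transferring through the anchor blocks gives $\chi(\phi_g)=O(\eta)$. We put $\phi(x)=\phi_{[x]}$ for $x\in S$; by the anchor for $x^{-1}$ and the triangle $(x,x^{-1},\varnothing)$, $\phi_{[x^{-1}]}$ is close to $\phi(x)^*$ in $E$.

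Third, the relations. Each triangle $(x,y,z)$ is a block $\mathrm{diag}(I,\lambda(x))(H\otimes I)\mathrm{diag}(I,\lambda(y))$ in the ideal gadget; in the actual $U$ the four corresponding blocks are $\approx 2^{-1/2}(\phi_1,\phi_y,\phi_x,-\phi_z)$ weighted by $\xi$, and unitarity of $U$ restricted to these two rows/columns (orthogonality of the two columns) forces $\|(\phi_x\phi_y-\phi_z)\xi\|_F=O(\eta)$ after using (W3) to move $\xi$ through using the centrality bounds. Chaining along the prefixes of a relator of length $\le m$ with (W1)–(W3) gives $E(\phi(r))\le O(m\eta)$, which is (i) with the constant $256m$ after bookkeeping. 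Finally (iii): the label $[ab]=[j]\ne1$ is distinct from the label $1$, so their Choi vectors are orthogonal; since $\Psi$ is within $1/8$ of $\Phi_G$ in diamond norm, the coherence between the anchor of $1$ and the anchor of $j$ in $\mathcal J(\Psi)$, namely $\Tr(\xi\,\phi_j\,\xi)$-type overlap $\mathrm{Re}\,\Tr(\phi_j\rho)$, is at most about $1/4$ in modulus, whence $E(\phi_j)^2=2-2\mathrm{Re}\Tr(\phi_j\rho)\ge1$, and $E(\phi(a)\phi(b))\ge E(\phi_j)-O(m\eta)$ via the relation $j^{-1}ab$. The main obstacle is the second step, the centrality estimate from entropy production: we would first try the inequality $\sum_K\|[K,\sqrt\rho]\|_F^2\le C\,(S(T\rho)-S(\rho))$ for Kraus operators of a unital channel, proved via the relative entropy $D(\rho\|T^*T\rho)$ and the Pinsker-type bound comparing $\sqrt\rho$ with its image, with the factor $d$ absorbed into $\eta$.
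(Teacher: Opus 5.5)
Your architecture matches the paper's proof: projection onto the Choi support, anchors with polar decomposition, column orthogonality in the triangle blocks, chaining along prefixes, and an anchor coherence for (iii). Two steps fail as stated, however.

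\emph{The centrality step.} The inequality you propose, $\sum_K\|[K,\sqrt\rho]\|_F^2\le C\,\Delta S$ for the Kraus operators of a unital $T$, is false: $T=\mathrm{Ad}\,W$ has $\Delta S=0$ but $[W,\sqrt\rho]\ne0$. In a round this is a gauge freedom. Replacing $U$ by $(I\otimes W)U$ changes neither $\Psi$, nor the leakage, nor $\Delta S$. It does replace the blocks $V_{kk'}$ by $WV_{kk'}$, whose commutators with $\xi$ can be of order one. So no argument (Pinsker, Petz recovery or otherwise) bounds the block commutators by $\nu$ alone; the output bath must first be re-identified with the input bath. That is the missing idea and the first step of the paper's proof: choose $W$ carrying an eigenbasis of $T(\rho)$, ordered by decreasing eigenvalues $\mu_j$, onto such an eigenbasis of $\rho$. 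Then:
\begin{itemize}
\item $t_{ji}=\frac1d\|U_{ji}\|_F^2$ is doubly stochastic with $\mu_j=\sum_it_{ji}\lambda_i$;
\item the distributions $t_{ji}\lambda_i$ and $t_{ji}\mu_j$ have relative entropy exactly $\Delta S$;
\item the Bhattacharyya bound gives $\sum t_{ji}(\sqrt{\lambda_i}-\sqrt{\mu_j})^2\le\Delta S\ln2$;
\item Birkhoff's theorem with the rearrangement inequality gives the same bound for $\sum_j(\sqrt{\lambda_j}-\sqrt{\mu_j})^2$. This is where the common ordering is used.
\end{itemize}
Hence $\|[U,I\otimes\xi]\|_F^2\le4d\,\Delta S\ln2$. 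You need this not only for $\chi$ but already for the row half of your anchor estimate. Leakage controls only the right-weighted residuals $\|R\xi\|_F$, while $\Tr\rho(I-B_wB_w^*)$ involves $\|\xi A_{zo'}\|_F$.

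\emph{The identity anchor.} Even after this normalization, the bath operator $\phi_1$ at the anchor of the empty word need not be close to $I$. For the gadget of a finite group with the regular representation on a flat bath, $(I\otimes W)U$ is an exact round with $\nu=0$ for every unitary $W$, and it has $\phi_g=W\lambda(g)$. Column orthogonality of a triangle block gives $\phi_1^*\phi_y\approx\phi_x^*\phi_z$, not $\phi_x\phi_y\approx\phi_z$; the latter fails in this example although $\eta=0$. Likewise:
\begin{itemize}
\item the triangle $(x,x^{-1},\varnothing)$ gives $\phi_{[x^{-1}]}\approx\phi_1\phi(x)^*\phi_1$, not $\phi(x)^*$;
\item the coherence that the diamond bound controls in (iii) is $\Tr(\rho\,\phi_1^*\phi_j)$, not $\Tr(\rho\,\phi_j)$.
\end{itemize}
The repair is to work with $\phi(x)=\phi_1^*\phi_{[x]}$, the paper's $W_w=V_\varnothing^*V_w$. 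Then $W_xW_y-W_z=V_\varnothing^*V_x(V_\varnothing^*V_y-V_x^*V_z)$, and $E(\phi(j))^2=2-2\,\mathrm{Re}\Tr(\rho V_\varnothing^*V_j)\ge\frac32-4\eta$. With these two changes the rest of your plan is the paper's proof.
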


\begin{proof}
\emph{Normalize the output basis.} Let $(\zeta_i)$ and $(\zeta'_i)$ be orthonormal eigenbases of $\rho$ and $T(\rho)$ with decreasing eigenvalues
$\lambda_i$ and $\mu_i$. Replacing $U$ by $(I\otimes W)U$, where the unitary $W$ maps $\zeta'_i$ to $\zeta_i$, changes neither $\Psi$ nor its
leakage and conjugates $T(\rho)$ by $W$. So we may assume $T(\rho)=\sum_i\mu_i|\zeta_i\rangle\langle\zeta_i|$. Input and output bath are now the
same space, and $\xi=\sum_i\sqrt{\lambda_i}|\zeta_i\rangle\langle\zeta_i|$ acts on both; this does not assert $T(\rho)=\rho$.

\emph{Entropy production controls commutators.} Write $U_{ji}=(I\otimes\langle\zeta_j|)U(I\otimes|\zeta_i\rangle)\in M_d$ and $t_{ji}=\frac1d\|U_{ji}\|_F^2$.
Since the columns and the rows of $U$ are orthonormal, $t$ is doubly stochastic, and the diagonal of $T(\rho)$ gives $\mu_j=\sum_it_{ji}\lambda_i$. The
distributions $p_{ji}=t_{ji}\lambda_i$ and $q_{ji}=t_{ji}\mu_j$ satisfy
\[
D(p\|q)=\sum_{j,i}t_{ji}\lambda_i\log\frac{\lambda_i}{\mu_j}=-S(\rho)+S(T(\rho))=\Delta S .
\]
By Jensen's inequality $-2\ln\sum\sqrt{p_{ji}q_{ji}}\le D(p\|q)\ln2$, and $-2\ln x\ge2(1-x)$, so
$\sum_{j,i}t_{ji}(\sqrt{\lambda_i}-\sqrt{\mu_j})^2\le\Delta S\ln2$. The matrix $t$ is a convex combination of permutation matrices, by Birkhoff's
theorem, and both sequences decrease, so the rearrangement inequality gives
$\sum_{j,i}t_{ji}\sqrt{\lambda_i\mu_j}\le\sum_j\sqrt{\lambda_j\mu_j}$. Expanding the squares and using double stochasticity,
$\sum_j(\sqrt{\lambda_j}-\sqrt{\mu_j})^2\le\sum_{j,i}t_{ji}(\sqrt{\lambda_i}-\sqrt{\mu_j})^2$. The bath block $(j,i)$ of $U(I\otimes\xi)-(I\otimes\xi)U$
is $(\sqrt{\lambda_i}-\sqrt{\lambda_j})U_{ji}$, and $(\sqrt{\lambda_i}-\sqrt{\lambda_j})^2\le2(\sqrt{\lambda_i}-\sqrt{\mu_j})^2+2(\sqrt{\mu_j}-\sqrt{\lambda_j})^2$.
Therefore, with $U=\sum_{o,o'}|o\rangle\langle o'|\otimes A_{oo'}$ in system blocks $A_{oo'}\in M_M$,
\begin{equation}\label{eq:block-commutators}
\sum_{o,o'}\|[A_{oo'},\xi]\|_F^2=\|[U,I\otimes\xi]\|_F^2\le4d\,\Delta S\ln2=:\epsilon_S^2 .
\end{equation}

\emph{Leakage controls residuals.} The Choi state $\mathcal J(\Psi)$ is the marginal on $\mathbb C^d\otimes\mathbb C^d$ of the unit vector
$d^{-1/2}\sum_{o,o'}|oo'\rangle\otimes A_{oo'}\xi$ in $\mathbb C^d\otimes\mathbb C^d\otimes\mathcal H$, where $\mathcal H$ is $M_M$ with the Hilbert--Schmidt
inner product. Indeed $\langle A_{o_2o_2'}\xi,A_{o_1o_1'}\xi\rangle=\Tr(A_{o_1o_1'}\rho A_{o_2o_2'}^*)=\langle o_1|\Psi(|o_1'\rangle\langle o_2'|)|o_2\rangle$. The support
of $\mathcal J(\Phi_G)$ is spanned by the orthogonal vectors $|A_g\rangle\!\rangle$, of squared norms $\mu_g=\|A_g\|_F^2$. Projecting onto it, put
\[
X_g=\mu_g^{-1}\sum_{[w_a]=g}\overline{c_a}A_a,\qquad R_{oo'}=A_{oo'}-\sum_g(A_g)_{oo'}X_g,
\]
where $A_a$ is the block at slot $a$. Then $\sum_{o,o'}\|R_{oo'}\xi\|_F^2=d\ell=:\epsilon_\ell^2$. At a slot $A_a=c_aX_{[w_a]}+R_a$, and at every other position
$A_{oo'}=R_{oo'}$. We have $\epsilon_\ell\le\eta$, $\epsilon_S\le2\sqrt{\ln2}\,\eta$, and, by Cauchy--Schwarz, $\epsilon_\ell+\epsilon_S\le\sqrt{1+4\ln2}\,\eta<2\eta$ and
$2\epsilon_\ell^2+2\epsilon_S^2\le8\ln2\,\eta^2$.

\emph{Anchors are nearly unitary.} Every symbol $w$, and the empty word, occupies a diagonal position $z=z(w)$ whose row and column contain
no other slot; there the coefficient is $1$, so $B_w:=A_{zz}=X_{[w]}+R_{zz}$. Column unitarity gives
$\Tr\rho(I-B_w^*B_w)=\sum_{o\ne z}\|R_{oz}\xi\|_F^2\le\epsilon_\ell^2$. Row unitarity gives
$\Tr\rho(I-B_wB_w^*)=\sum_{o'\ne z}\|\xi A_{zo'}\|_F^2\le\sum_{o'\ne z}(\|R_{zo'}\xi\|_F+\|[A_{zo'},\xi]\|_F)^2\le2\epsilon_\ell^2+2\epsilon_S^2$.
Let $B_w=V_w|B_w|=|B_w^*|V_w$ with $V_w$ unitary. Since $(1-x)^2\le1-x^2$ on $[0,1]$,
\[
\|(V_w-B_w)\xi\|_F\le\epsilon_\ell,\qquad\|\xi(V_w-B_w)\|_F=\|\xi(I-|B_w^*|)\|_F\le\sqrt{8\ln2}\,\eta,
\]
and so $\chi(V_w)\le\|(V_w-B_w)\xi\|_F+\|[B_w,\xi]\|_F+\|\xi(B_w-V_w)\|_F\le\epsilon_\ell+\epsilon_S+\sqrt{8\ln2}\,\eta<4.3\eta$, since
$\epsilon_\ell+\epsilon_S\le\sqrt{1+4\ln2}\,\eta$.

\emph{Triangles enforce the products.} In the block of a triangle $(x,y,z)$, compare each slot $a$ with the anchor of the symbol $w'(a)$ that
carries the same label. These symbols are the empty word at the upper left, $y$ at the upper right, $x$ at the lower left and $z$ at the lower
right. Put
$F_a=A_a-c_aV_{w'(a)}$. Then $F_a=R_a-c_aR_{z'z'}+c_a(B_{w'(a)}-V_{w'(a)})$ with $z'=z(w'(a))$, and since $a$ and $(z',z')$ are different
positions, Cauchy--Schwarz gives
\begin{gather*}
\|F_a\xi\|_F\le\sqrt{3/2}\,\epsilon_\ell+2^{-1/2}\epsilon_\ell<2\eta,\qquad\|F_a\|<2,\\
\|\xi F_a\|_F\le\|F_a\xi\|_F+\|[A_a,\xi]\|_F+2^{-1/2}\chi(V_{w'(a)})<7\eta .
\end{gather*}
For two slots $a_1,a_2$ in the same row of the block, with $A_{a_i}=c_iV_i+F_i$,
\[
A_{a_1}^*A_{a_2}-\overline{c_1}c_2V_1^*V_2=A_{a_1}^*F_2+c_2F_1^*V_2,
\]
whose right-weighted norm is at most $\|F_2\xi\|_F+2^{-1/2}(\|\xi F_1\|_F+\|F_1\|\chi(V_2))<2\eta+2^{-1/2}\cdot17\eta<15\eta$ by (W3) and (W5). The two
columns of the block are orthogonal columns of $U$, so the products for the two rows of the block add up to $-C_1^*C_2$, where $C_1,C_2$ are the
parts of the two columns in the other rows. These are residuals, so $\|C_1^*C_2\xi\|_F\le\|C_2\xi\|_F\le\epsilon_\ell$. The coefficients are
$\overline{c_1}c_2=\frac12$ for the pair (upper left, upper right) and $-\frac12$ for the pair (lower left, lower right), so
\[
\tfrac12\|(V_\varnothing^*V_y-V_x^*V_z)\xi\|_F<15\eta+15\eta+\eta .
\]
Put $W_w=V_\varnothing^*V_w$ for every symbol $w$, and $W_\varnothing=I$. Since $W_xW_y-W_z=V_\varnothing^*V_x(V_\varnothing^*V_y-V_x^*V_z)$,
\begin{equation}\label{eq:triangle}
\|(W_xW_y-W_z)\xi\|_F\le64\eta\quad\text{for every triangle }(x,y,z),\qquad\chi(W_w)<10\eta .
\end{equation}

\emph{The relations nearly hold.} Let $\phi(x)=W_x$ for $x\in S$. For a letter $x\in S$, the triangle $(x,x^{-1},\varnothing)$ gives
$\|(W_{x^{-1}}-\phi(x^{-1}))\xi\|_F=\|W_x^*(W_xW_{x^{-1}}-I)\xi\|_F\le64\eta$. Fix a word $r=\sigma_1\cdots\sigma_\ell$ of $R$ and let
$D_l=\|(\phi(\sigma_1\cdots\sigma_l)-W_{p_l})\xi\|_F$, with $p_l$ its prefix symbols, $p_1=\sigma_1$ and $W_{p_\ell}=I$. Then $D_1\le64\eta$, and
\[
\phi(\sigma_1\cdots\sigma_l)-W_{p_l}=\bigl(\phi(\sigma_1\cdots\sigma_{l-1})-W_{p_{l-1}}\bigr)\phi(\sigma_l)+W_{p_{l-1}}\bigl(\phi(\sigma_l)-W_{\sigma_l}\bigr)+\bigl(W_{p_{l-1}}W_{\sigma_l}-W_{p_l}\bigr).
\]
By (W3), \eqref{eq:triangle} and the triangle $(p_{l-1},\sigma_l,p_l)$, $D_l\le D_{l-1}+2\cdot10\eta+64\eta+64\eta$. So
$E(\phi(r))=D_\ell\le148\,\ell\,\eta\le e$, which is (i); and (ii) holds because $\chi(\phi(x))<10\eta\le e$.

\emph{The copy of $ab$ is displaced.} Let $z=z(j)$ and let $0$ be the position of the empty word. The labels there are $[j]\ne1$ and $1$, so
$\langle z|\Phi_G(|z\rangle\langle0|)|0\rangle=\tau(\lambda([j]))=0$, while $\langle z|\Psi(|z\rangle\langle0|)|0\rangle=\Tr(B_j\rho B_\varnothing^*)$. Feed
$(|zz\rangle+|00\rangle)/\sqrt2$ into $\Psi\otimes\mathrm{id}$ and $\Phi_G\otimes\mathrm{id}$. The difference $\Delta$ of the outputs is Hermitian and
traceless with $\|\Delta\|_1\le2u$, so each of its entries has modulus at most $\|\Delta\|\le\|\Delta\|_1/2\le u$. The entry at $(zz,00)$ is half the
difference of $\langle z|\Psi(|z\rangle\langle0|)|0\rangle$ and $\langle z|\Phi_G(|z\rangle\langle0|)|0\rangle$, so $|\Tr(\rho B_\varnothing^*B_j)|\le2u$. By the
anchor bounds, $|\Tr(\rho V_\varnothing^*V_j)-\Tr(\rho B_\varnothing^*B_j)|\le2\epsilon_\ell$, and so
\[
E(\phi(j))^2=\|(V_j-V_\varnothing)\xi\|_F^2=2-2\,\mathrm{Re}\,\Tr(\rho V_\varnothing^*V_j)\ge2-4u-4\epsilon_\ell\ge\tfrac32-4\eta .
\]
By (i) for the word $j^{-1}ab$, $\|(\phi(a)\phi(b)-\phi(j))\xi\|_F=E(\phi(j)^*\phi(a)\phi(b))\le e$, so $E(\phi(a)\phi(b))\ge\sqrt{3/2-4\eta}-e\ge1-e$ if
$\eta\le1/8$. If $\eta>1/8$ then $e>1$ and (iii) is trivial.
\end{proof}

\subsection{Charged area}

Relator errors add up over a van Kampen diagram, but in the weighted norms every conjugation costs twice the centrality defect of the
conjugator. Writing a null word naively as a product of conjugates of relations would charge each conjugating word once for every relation it
encloses. Peeling the diagram one boundary face at a time charges instead each face a bounded amount and each vertex once.

\begin{lemma}[Charged area]\label{lem:charged-area}
Let $S$ and $R$ be finite, with every word of $R$ of length at most $m$, and let $\phi$ assign unitaries to $S$ with $E(\phi(r))\le e$ for $r\in R$
and $\chi(\phi(x))\le e$ for $x\in S$. Every word $w$ of finite area satisfies
\[
E(\phi(w))\ \le\ \bigl(3m\Area_R(w)+|w|\bigr)\,e\ \le\ \bigl((4m+1)\Area_R(w)+2|w|\bigr)\,e .
\]
\end{lemma}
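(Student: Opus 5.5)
We will prove the first inequality by induction on the area along a van Kampen diagram, peeling one boundary face at a time, with the invariant measured in the weighted norms (W1)--(W4). The second inequality is then a triviality. Since $|w|$ is bounded by the perimeter and the relevant count satisfies $3m\,\Area_R(w)\le(4m+1)\Area_R(w)$ and $|w|\le2|w|$, the whole content lies in the first bound.

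Fix a reduced van Kampen diagram $\mathcal D$ for $w$ with $A=\Area_R(w)$ faces. First handle the free reduction. Inserting or deleting a cancelling pair $xx^{-1}$ does not change $\phi(w)$, because $\phi(x^{-1})=\phi(x)^*$. So we may replace $w$ by any boundary word of a diagram, and the base case $A=0$ (a tree, so $\phi(w)=I$) gives $E=0$.

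For the inductive step, choose a face $\Pi$ that meets the boundary in an arc, labelled by a relator $r$ (up to cyclic permutation and inversion). Write $w=\alpha\beta\gamma$ freely, where $\beta$ is the boundary arc of $\Pi$, and let $\beta'$ be the complementary arc of $\partial\Pi$. Then $r$ is conjugate to $\beta\beta'^{-1}$, and $w'=\alpha\beta'\gamma$ bounds a diagram with $A-1$ faces. Compare the two words:
\[
\phi(w)=\phi(\alpha)\,\phi(\beta\beta'^{-1})\,\phi(\alpha)^*\cdot\phi(w').
\]
By (W1),
\[
E(\phi(w))\le E(\phi(w'))+E\bigl(\phi(\alpha)\phi(\beta\beta'^{-1})\phi(\alpha)^*\bigr).
\]
The naive use of (W4) here charges $2\chi(\phi(\alpha))\le2|\alpha|e$. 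That is the conjugator cost the lemma warns against, and avoiding it is the main obstacle.

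To avoid it, we do not conjugate by $\alpha$. Instead we write $\phi(w)-\phi(w')$ as $\phi(\alpha)\bigl(\phi(\beta)-\phi(\beta')\bigr)\phi(\gamma)$, and measure
\[
\bigl\|\phi(\alpha)(\phi(\beta)-\phi(\beta'))\phi(\gamma)\xi\bigr\|_F=\bigl\|(\phi(\beta)-\phi(\beta'))\phi(\gamma)\xi\bigr\|_F.
\]
By (W3), moving $\phi(\gamma)$ past $\xi$ costs $\|\phi(\beta)-\phi(\beta')\|\,\chi(\phi(\gamma))$, which is still a conjugator cost. So we peel in the order of the boundary. We take $\gamma$ to be the already-processed suffix, arranged so that the face sits at the right end, $w=\alpha\beta$ cyclically, and we use that $E$ is invariant under cyclic permutation of the word up to the charge of moving one letter across $\xi$, namely $2\chi(\phi(x))\le2e$ per letter moved.

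Concretely, we choose a spanning-tree peeling (shelling) of $\mathcal D$ so that each face is removed when it is adjacent to the current basepoint. At that moment,
\[
\|(\phi(\beta)-\phi(\beta'))\xi\|_F=E(\phi(\beta'^{-1}\beta))\le E(\phi(r))+2|\beta'|e\le e+2me,
\]
using (W4) only with conjugators of length at most $|r|\le m$, namely the rotation bringing $r$ to $\beta'^{-1}\beta$. The basepoint moves along the boundary between consecutive peelings, and each boundary vertex is crossed a bounded number of times. Each crossing costs at most $e$ by (W3) with a single letter. The vertices of $\mathcal D$ number at most $|w|+mA$ by Euler's formula, and this gives the total
\[
E(\phi(w))\le A(1+2m)e+(|w|+mA)e-\text{(overlap)}\le(3m\,A+|w|)e.
\]

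The step we expect to be hardest is organizing the traversal so that basepoint moves are charged once per vertex rather than once per face-times-distance. We would first try induction on $A+|w|$, with the stronger invariant that the bound holds for every cyclic rotation of $w$ simultaneously. Rotating by one letter changes $E$ by at most $2\chi(\phi(x))\le2e$ via (W4). This suggests the accounting closes with constant $3m$, because a face of perimeter at most $m$ is reached by rotating at most $m$ letters, and those letters are then consumed.
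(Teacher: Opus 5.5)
Your face step is sound. It is the paper's step in mirror image: with the face at the end, $w=w'\cdot\beta'^{-1}\beta$ freely, and (W1) gives $E(\phi(w))\le E(\phi(w'))+E(\phi(\beta'^{-1}\beta))$ with no conjugator cost.

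The gap is the step you flag yourself, moving the basepoint, and your accounting for it does not close.

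\emph{The numbers are off.} A rotation by one letter costs $2\chi(\phi(x))\le2e$ by (W4), not $e$; you use both values. The sharp vertex count is $V-1\le n_E\le(m\Area_R(w)+|w|)/2$, where $n_E$ is the number of edges, which is half of the $|w|+m\Area_R(w)$ you quote. These two factors of two cancel in your display. The remaining excess $\Area_R(w)\,e$ comes from using conjugators of length $m$ instead of $m-1$, and it is hidden in the unexplained ``overlap''.

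\emph{Rotation cannot charge each vertex once.} After the basepoint has crossed a bridge $x$ and the faces beyond it are peeled, the word read from the far side is freely $x^{-1}w_0x$, where $w_0$ is the boundary word on the near side. Getting back costs a second $2e$ per letter. Take a vertex with $k\ge2$ stalks of length $L$, each ending in a face of length $m$. Any rotation walk must make at least $(2k-1)L$ rotations, charged $(4k-2)L\,e$. This exceeds $(3m\Area_R(w)+|w|)e=(4mk+2kL)e$ as soon as $L>4m$. Done carefully, your walk gives at best the second, weaker bound.

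\emph{The fallback does not repair this.} $\Area_R(w)$ and $|w|$ are invariant under rotation, so the ``stronger invariant'' is no stronger, and a rotation does not decrease $\Area_R(w)+|w|$. Moreover, the nearest face can be arbitrarily far along the boundary, and letters rotated past are not consumed.

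The missing idea is to split at bridges with (W1) instead of rotating. For a diagram $D$ with $V$ vertices and $A$ faces, induct on the number of edges with the invariant $E(\phi(w_D))\le\bigl((2m-1)A+2(V-1)\bigr)e$. Look only at the first edge $\varepsilon$ of the boundary, from the base vertex $v_0$ to $v_1$.
\begin{itemize}
\item If $\varepsilon$ lies on no face, removing it leaves diagrams $D_0\ni v_0$ and $D_1\ni v_1$ with $w_D=x^{\pm1}w_{D_1}x^{\mp1}w_{D_0}$. (W1) separates the closed loop $x^{\pm1}w_{D_1}x^{\mp1}$ from $w_{D_0}$, and (W4) charges a single $2e$ for the one-letter conjugation. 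Since $V=V_0+V_1$, the vertex term pays for it, and the way back costs nothing.
\item If $\varepsilon$ lies on a face $f$, peel $f$ without moving the basepoint: $w_D=c\,w_{D'}$ freely, where $c=y^{-1}r^{\pm1}y$ with $|y|\le m-1$. The face costs $(2m-1)e$, and $D'$ has the same vertices.
\end{itemize}
Finally, $V-1\le n_E$ and $2n_E\le m\Area_R(w)+|w|$ give $E(\phi(w))\le\bigl((3m-1)\Area_R(w)+|w|\bigr)e$, which implies the first inequality.
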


\begin{proof}
Since $\phi$ is a homomorphism on the free group, $E(\phi(w))$ depends only on the free reduction of $w$, which is not longer and has the same
area; so let $w$ be freely reduced. By van Kampen's lemma~\cite[Chapter~V]{LyndonSchupp1977}, $w$ has a van Kampen diagram with at most
$\Area_R(w)$ faces. Such a diagram is a finite, connected and simply connected planar $2$-complex with edges labelled by letters of $S$. Its
boundary cycle, read from a base vertex, is $w$, and the boundary cycle of every face, read from a suitable vertex in a suitable direction, is
a word of $R$. We show, by induction on the number of edges, that every such diagram $D$ with $V$ vertices and $A$ faces satisfies
\begin{equation}\label{eq:charge}
E(\phi(w_D))\le\bigl((2m-1)A+2(V-1)\bigr)e ,
\end{equation}
where $w_D$ is its boundary word. If $D$ is a single vertex, $w_D$ is empty. Otherwise let $\varepsilon$ be the first edge of the boundary cycle,
from the base vertex $v_0$ to $v_1$, with label $x^{\pm1}$.

If $\varepsilon$ lies on no face, the boundary cycle traverses it twice, and removing it leaves diagrams $D_0\ni v_0$ and $D_1\ni v_1$ with
$w_D=x^{\pm1}w_{D_1}x^{\mp1}w_{D_0}$, the words read from $v_1$ and $v_0$. By (W1) and (W4), $E(\phi(w_D))\le2e+E(\phi(w_{D_1}))+E(\phi(w_{D_0}))$, and
the vertices and faces of $D_0$ and $D_1$ add up to those of $D$.

Otherwise $\varepsilon$ lies on the boundary of a face $f$. Let $c$ be the boundary cycle of $f$ read from $v_0$ and beginning with $\varepsilon$,
so that $c=\varepsilon\pi$ with $\pi$ a path in $\partial f$ from $v_1$ to $v_0$. Removing $f$ and $\varepsilon$ merges $f$ with the outer
region and leaves a van Kampen diagram $D'$ with the same vertices and one face fewer, whose boundary word read from $v_0$ is
$w_{D'}=\pi^{-1}\varrho$, where $w_D=\varepsilon\varrho$. Hence
$w_D=c\,w_{D'}$ freely. The word $c$ is a cyclic permutation of some $r^{\pm1}$, $r\in R$, that is $c=y^{-1}r^{\pm1}y$ freely with $|y|\le m-1$, so
$E(\phi(c))\le e+2(m-1)e$ by (W4) and (W2), and $E(\phi(w_D))\le(2m-1)e+E(\phi(w_{D'}))$ by (W1).

In both cases~\eqref{eq:charge} follows from the induction hypothesis. To finish, $V-1$ is at most the number of edges of $D$, since $D$ is
connected. Counting the two sides of every edge gives twice the number of edges as at most $mA+|w|$, the sum of the lengths of the face
boundaries and of the boundary cycle. So~\eqref{eq:charge} gives $E(\phi(w))\le((2m-1)A+mA+|w|)e$.
\end{proof}

\subsection{Exact \texorpdfstring{$S_3$}{S3} rounding in place}

An approximate copy of $S_3$ is rounded to an exact one by functional calculus on the same space. The state is untouched, so no entropy is
gained or lost. The lemma is~\cite[Lemma~B.3]{DouglasMghirbiB}.

\begin{lemma}[Exact rounding]\label{lem:rounding}
Let $X$ and $Y$ be unitaries on $\mathbb C^M$, and suppose that the displacements of $X^2$, $Y^2$ and $(XY)^3$ and the centrality defects of
$X$ and $Y$ are at most $\delta$. There are unitaries $x,c$ on $\mathbb C^M$ with $x^2=c^3=I$ and $xcx=c^{-1}$, such that
\[
\|(x-X)\xi\|_F\le\delta,\qquad\chi(x)\le3\delta,\qquad\|(c-XY)\xi\|_F\le29\delta,\qquad\chi(c)\le56\delta .
\]
\end{lemma}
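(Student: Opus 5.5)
The plan follows the proof of Lemma~\ref{lem:imp-s3round}, replacing the flat norm by the weighted norms. Every spectral rounding will be done by functional calculus, so each rounded operator is a function of an operator that nearly commutes with $\xi$. The key tool is a weighted version of the estimate behind that lemma: if $f$ is a function on the unit circle with $|f(z)-z|\le|p(z)|$ for a polynomial $p$, and $X$ is a unitary with spectral projections $E(dz)$, then
\[
\|(f(X)-X)\xi\|_F^2=\int|f(z)-z|^2\,\Tr(\xi E(dz)\xi)\le\|p(X)\xi\|_F^2 .
\]
This gives displacement bounds directly. Centrality defects need more care, because $f$ is discontinuous.

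\emph{Step 1: the involution.} Let $x=\mathrm{sgn}$ of the real part of $X$, with ties sent to $+1$. Then $|x(z)-z|\le|z^2-1|$, so $\|(x-X)\xi\|_F\le E(X^2)\le\delta$. For the defect I would not use continuity of $f$. Instead write
\[
x\xi-\xi x=(x-X)\xi-\xi(x-X)+(X\xi-\xi X).
\]
The last term is at most $\chi(X)\le\delta$. For the middle term, (W5) applies because $x-X$ is a function of $X$ and hence normal, giving $\|\xi(x-X)\|_F=\|(x-X)^*\xi\|_F\le\delta$. This yields $\chi(x)\le3\delta$. Do the same for $Y$ to get an involution $y$ with the same bounds.

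\emph{Step 2: the three-cycle.} Set $C=xy$. Then $xCx=C^*$ exactly, $C$ is normal, and $\chi(C)\le6\delta$ by (W2). Its displacement from $XY$ is at most $2\delta+\chi$-terms. I would bound it using (W3):
\[
\|(xy-XY)\xi\|_F\le\|x(y-Y)\xi\|_F+\|(x-X)Y\xi\|_F\le\delta+(\delta+2\chi(Y))\le4\delta.
\]
Hence $E(C^3)\le E((XY)^3)+3\cdot(\text{this})+$ commutation terms, which is a constant times $\delta$, roughly $25\delta$ after the (W3) corrections.

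Now round the spectrum of $C$ to cube roots of unity, using the conjugation-equivariant rule from Lemma~\ref{lem:imp-s3round}: the nearest root, with $-1$ sent to $1$. Call the result $c=g(C)$. Then $c^3=I$, and $xcx=g(xCx)=g(C^*)=c^*$ because $g$ commutes with conjugation. The same spectral estimate gives $\|(c-C)\xi\|_F\le E(C^3)$. Combining with $\|(C-XY)\xi\|_F$ gives the bound $29\delta$, provided the constants are tracked carefully.

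Finally, $\chi(c)\le 2\|(c-C)\xi\|_F+\chi(C)$ by the same normal-operator trick as in Step~1. With $\|(c-C)\xi\|_F\le25\delta$ and $\chi(C)\le6\delta$, this gives $56\delta$.

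The main obstacle is the bookkeeping that shows $E(C^3)\le25\delta$. This requires expanding $(xy)^3-(XY)^3$ telescopically and using (W3) repeatedly, paying $\chi$ each time a factor is moved past $\xi$. If that bound does not come out, I would reorganize so that the rounding is applied to $XY$ itself rather than to $xy$, and then repair the relation $xcx=c^{-1}$ separately.
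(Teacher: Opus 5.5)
Your proposal is the paper's proof. It uses the same roundings by functional calculus, the same bounds $\|(xy-XY)\xi\|_F\le4\delta$ and $\chi(xy)\le6\delta$, and the same use of (W5) for the normal differences $x-X$ and $c-xy$, which gives $\chi(x)\le3\delta$ and $\chi(c)\le6\delta+2\cdot25\delta$.

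The step you left open works out to exactly $25\delta$ if you telescope with the unitary $C=xy$ on the left and $Z=XY$ on the right:
$C^3-Z^3=C^2(C-Z)+C(C-Z)Z+(C-Z)Z^2$.
Then (W3) with $\|C-Z\|\le2$, $\chi(Z)\le2\delta$ and $\chi(Z^2)\le4\delta$ gives $E(C^3)\le\delta+4\delta+8\delta+12\delta=25\delta$. The opposite order would charge $\chi(C^k)$ instead and overshoot the constants $29$ and $56$.
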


\begin{proof}
Let $\mathrm{sgn}\,t=1$ for $t\ge0$ and $-1$ for $t<0$, and let $x=\mathrm{sgn}(\mathrm{Re}\,X)$ and $y=\mathrm{sgn}(\mathrm{Re}\,Y)$ by functional calculus;
they are self-adjoint unitaries. On the unit circle $|z-\mathrm{sgn}\,\mathrm{Re}\,z|\le|z^2-1|$: for $z=e^{i\vartheta}$ with $|\vartheta|\le\pi/2$ this reads
$2|\sin(\vartheta/2)|\le2|\sin\vartheta|$, which holds as $\cos(\vartheta/2)\ge\frac12$, and the case $\mathrm{Re}\,z<0$ follows by symmetry. Hence
$|x-X|^2\le|X^2-I|^2$ as functions of the normal operator $X$, and $\|(x-X)\xi\|_F\le E(X^2)\le\delta$; by (W5) also $\|\xi(x-X)\|_F\le\delta$, so
$\chi(x)\le\chi(X)+2\delta\le3\delta$. The same holds for $y$.

Put $Z=XY$ and $w=xy$, so that $xwx=yx=w^*$. Then $\chi(Z)\le2\delta$, $\chi(w)\le6\delta$, and, by (W3),
$\|(w-Z)\xi\|_F\le\|x(y-Y)\xi\|_F+\|(x-X)Y\xi\|_F\le\delta+3\delta=4\delta$. From $w^3-Z^3=w^2(w-Z)+w(w-Z)Z+(w-Z)Z^2$ and (W3),
\[
E(w^3)\le E(Z^3)+4\delta+(4\delta+2\cdot2\delta)+(4\delta+2\cdot4\delta)\le25\delta .
\]
Let $f$ send each point of the unit circle to a nearest cube root of unity, with the ties $e^{\pm i\pi/3}$ and $-1$ sent to $1$. Then
$f(\bar z)=\overline{f(z)}$, and $|z-f(z)|\le|z^3-1|$: for $z=-1$ both sides equal $2$, and otherwise $z=f(z)e^{i\vartheta}$ with
$|\vartheta|\le\pi/3$, and the inequality reads $2|\sin(\vartheta/2)|\le2|\sin(3\vartheta/2)|$. Let $c=f(w)$. Then $c^3=I$, and $xcx=f(xwx)=f(w^*)=c^*=c^{-1}$. As before
$\|(c-w)\xi\|_F\le E(w^3)\le25\delta$ and $\|\xi(c-w)\|_F\le25\delta$. Hence $\|(c-Z)\xi\|_F\le29\delta$ and
$\chi(c)\le\chi(w)+\|(c-w)\xi\|_F+\|\xi(c-w)\|_F\le56\delta$.
\end{proof}

\subsection{A weighted sector bound}

Rounded copies of $S_3$ that nearly commute in the weighted sense force entropy. We work in the Hilbert space $\mathcal H$ of operators on
$\mathbb C^M$ with the Hilbert--Schmidt inner product $\langle Y,Z\rangle=\Tr(Y^*Z)$, identified with $\mathbb C^M\otimes\mathbb C^M$. Left multiplications
act on the first factor, which we call the system, and right multiplications on the second, the reference; they commute. The unit vector $\xi$
purifies $\rho$ on the system, and its reference marginal has the same nonzero spectrum, so it has entropy $S(\rho)$. The first lemma
is~\cite[Lemma~4.1]{DouglasMghirbiA}, with its short proof.

\begin{lemma}[Projection products]\label{lem:projection-product}
Let $P_1,\dots,P_k$ be orthogonal projections on a Hilbert space. Every vector $y$ satisfies
$\|y\|^2-\mathrm{Re}\langle y,P_k\cdots P_1y\rangle\le2\sum_l\|(I-P_l)y\|^2$.
\end{lemma}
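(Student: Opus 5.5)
The plan is to compare $\|y\|^2-\mathrm{Re}\langle y,P_k\cdots P_1y\rangle$ with the telescoping sum of the one-step losses, and to use that each $P_l$ is a contraction.

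Put $y_0=y$ and $y_l=P_ly_{l-1}$, so that $y_k=P_k\cdots P_1y$. Since $\mathrm{Re}\langle y,y_k\rangle=\|y\|^2-\tfrac12\|y\|^2-\tfrac12\|y_k\|^2+\tfrac12\|y-y_k\|^2$, it suffices to bound two quantities:
\[
\|y\|^2-\|y_k\|^2\qquad\text{and}\qquad\|y-y_k\|^2 .
\]

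For the first, Pythagoras gives $\|y_{l-1}\|^2-\|y_l\|^2=\|(I-P_l)y_{l-1}\|^2$, so $\|y\|^2-\|y_k\|^2=\sum_l\|(I-P_l)y_{l-1}\|^2$. For the second, write $y-y_k=\sum_l(I-P_l)y_{l-1}$; the increments $(I-P_l)y_{l-1}$ are not orthogonal, so I avoid the second quantity directly. Instead I use the identity
\[
\|y\|^2-\mathrm{Re}\langle y,y_k\rangle=\sum_l\mathrm{Re}\langle y,y_{l-1}-y_l\rangle=\sum_l\mathrm{Re}\langle (I-P_l)y,(I-P_l)y_{l-1}\rangle,
\]
which holds because $I-P_l$ is a self-adjoint idempotent. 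By Cauchy--Schwarz and $ab\le\tfrac12(a^2+b^2)$, each term is at most $\tfrac12\|(I-P_l)y\|^2+\tfrac12\|(I-P_l)y_{l-1}\|^2$.

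It remains to bound $\sum_l\|(I-P_l)y_{l-1}\|^2=\|y\|^2-\|y_k\|^2$ by a multiple of $\sum_l\|(I-P_l)y\|^2$. This is the main obstacle, since the iterates drift away from $y$. Here I would use $\|y\|^2-\|y_k\|^2\le 2\,\mathrm{Re}\langle y,y-y_k\rangle$, which follows from $\|y_k\|\le\|y\|$ together with
\[
\|y\|^2-\|y_k\|^2\le\|y\|^2-\|y_k\|^2+\|y-y_k\|^2=2\,\mathrm{Re}\langle y,y-y_k\rangle .
\]
Write $X=\|y\|^2-\mathrm{Re}\langle y,y_k\rangle=\mathrm{Re}\langle y,y-y_k\rangle$ and $T=\sum_l\|(I-P_l)y\|^2$. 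The displayed identity and the Cauchy--Schwarz step give $X\le\tfrac12T+\tfrac12\sum_l\|(I-P_l)y_{l-1}\|^2$, and the inequality just shown gives $\sum_l\|(I-P_l)y_{l-1}\|^2\le 2X$. Hence $X\le\tfrac12T+X$, which is vacuous. So the crude split fails, and the weights have to be rebalanced.

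The fix is to apply Cauchy--Schwarz with a weight, $ab\le\tfrac1{2\lambda}a^2+\tfrac\lambda2b^2$. This yields $X\le\tfrac1{2\lambda}T+\tfrac\lambda2\cdot2X$, and for any $\lambda<1$ it gives $X\le T/(2\lambda(1-\lambda))$. Taking $\lambda=\tfrac12$ gives $X\le 2T$, exactly the constant in the lemma. If I find that $X$ may be negative, the bound is trivial in that case. I would double-check the inequality $\|y\|^2-\|y_k\|^2\le 2X$, which is the only non-mechanical step.
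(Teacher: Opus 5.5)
Your argument is correct and is essentially the paper's proof. Both use the telescoping identity $\|y\|^2-\mathrm{Re}\langle y,y_k\rangle=\sum_l\mathrm{Re}\langle(I-P_l)y,(I-P_l)y_{l-1}\rangle$ and the bound $\|y\|^2-\|y_k\|^2\le2X$ obtained from $\|y-y_k\|^2\ge0$; your weighted AM--GM step at $\lambda=\tfrac12$ with absorption just replaces the paper's Cauchy--Schwarz step $X\le\sqrt{T(\|y\|^2-\|y_k\|^2)}\le\sqrt{2TX}$.

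Your opening display has a sign slip: it should read $\|y\|^2-\mathrm{Re}\langle y,y_k\rangle=\tfrac12(\|y\|^2-\|y_k\|^2)+\tfrac12\|y-y_k\|^2$. You restate it correctly where you use it, and that identity already gives $X\ge0$, so no case split is needed.
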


\begin{proof}
By homogeneity let $\|y\|=1$. Put $y_0=y$, $y_l=P_ly_{l-1}$ and $s_l=y_{l-1}-y_l$, which lies in the range of $I-P_l$. Then
$\sum_l\|s_l\|^2=1-\|y_k\|^2=:\beta$, and $\alpha:=1-\mathrm{Re}\langle y,y_k\rangle=\mathrm{Re}\sum_l\langle(I-P_l)y,s_l\rangle\le\sqrt{E\beta}$ with
$E=\sum_l\|(I-P_l)y\|^2$. Since $0\le\|y-y_k\|^2=2\alpha-\beta$, we get $\alpha\le\sqrt{2E\alpha}$, that is $\alpha\le2E$.
\end{proof}

Let $c_1,\dots,c_N$ be unitaries on $\mathbb C^M$ with $c_i^3=I$, and write $c_i=\sum_{k\in\Z/3}\omega^kP_i^k$ with $\omega=e^{2\pi i/3}$ and spectral
projections $P_i^k$. For every $Y\in\mathcal H$,
\begin{equation}\label{eq:spectral-identities}
\|(c_i-I)Y\|_F^2=3\sum_{k\ne0}\|P_i^kY\|_F^2,\qquad\|[c_i,Y]\|_F^2=3\sum_{k\ne l}\|P_i^kYP_i^l\|_F^2,
\end{equation}
because $|\omega^k-\omega^l|^2=3$ for $k\ne l$ and the pieces $P_i^kYP_i^l$ are orthogonal. For $\mathbf a=(a_1,\dots,a_{i-1})\in(\Z/3)^{i-1}$ put
$Q_{\mathbf a}=P_{i-1}^{a_{i-1}}\cdots P_1^{a_1}$. Summing over the coordinates one at a time gives
$\sum_{\mathbf a}Q_{\mathbf a}^*Q_{\mathbf a}=\sum_{\mathbf a}Q_{\mathbf a}Q_{\mathbf a}^*=I$.

\begin{lemma}[Transfer to the reference]\label{lem:transfer}
For every $Y\in\mathcal H$ and $1\le i\le N$,
\[
\sum_{\mathbf a}\|Q_{\mathbf a}Y-YQ_{\mathbf a}\|_F^2\ \le\ \frac43\sum_{k<i}\|[c_k,Y]\|_F^2 .
\]
\end{lemma}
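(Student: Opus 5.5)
The plan is to write $Q_{\mathbf a}Y-YQ_{\mathbf a}$ as a telescoping sum over the factors of $Q_{\mathbf a}$ and then sum over $\mathbf a$ using the orthogonality of spectral pieces. A direct triangle-inequality bound would lose a factor of order $i$, so instead I will organize the sum as an orthogonal decomposition. Concretely, the cleanest route goes through the commuting family of orthogonal projections on $\mathcal H\otimes\ell^2((\Z/3)^{i-1})$, or equivalently through Lemma~\ref{lem:projection-product}.

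The key reformulation runs as follows. Consider the map $\mathcal D(Y)=\sum_{\mathbf a}Q_{\mathbf a}YQ_{\mathbf a}^*$. Expanding, $\sum_{\mathbf a}\|Q_{\mathbf a}Y-YQ_{\mathbf a}\|_F^2=\sum_{\mathbf a}(\|Q_{\mathbf a}Y\|_F^2+\|YQ_{\mathbf a}\|_F^2)-2\,\mathrm{Re}\sum_{\mathbf a}\langle Q_{\mathbf a}Y,YQ_{\mathbf a}\rangle$. The identities $\sum_{\mathbf a}Q_{\mathbf a}^*Q_{\mathbf a}=\sum_{\mathbf a}Q_{\mathbf a}Q_{\mathbf a}^*=I$ make the first sum equal $2\|Y\|_F^2$. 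The cross term is $\Tr(Y^*\sum_{\mathbf a}Q_{\mathbf a}^*YQ_{\mathbf a})$. Write $\mathcal T_k(Z)=\sum_{l}P_k^lZP_k^l=\frac13\sum_{s}c_k^{-s}Zc_k^{s}$, the pinching by $c_k$, which is an orthogonal projection on $\mathcal H$. Then $\sum_{\mathbf a}Q_{\mathbf a}^*YQ_{\mathbf a}=\mathcal T_1\mathcal T_2\cdots\mathcal T_{i-1}(Y)$, which follows by peeling the sums over $a_1$, then $a_2$, and so on. The quantity to bound is therefore $2(\|Y\|_F^2-\mathrm{Re}\langle Y,\mathcal T_1\cdots\mathcal T_{i-1}Y\rangle)$.

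Lemma~\ref{lem:projection-product}, applied to the projections $\mathcal T_k$ in the appropriate order, bounds this by $4\sum_{k<i}\|(I-\mathcal T_k)Y\|_F^2$. By \eqref{eq:spectral-identities}, $\|(I-\mathcal T_k)Y\|_F^2=\sum_{l\ne l'}\|P_k^lYP_k^{l'}\|_F^2=\frac13\|[c_k,Y]\|_F^2$, which gives exactly $\frac43\sum_{k<i}\|[c_k,Y]\|_F^2$.

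The main point to check is the nested identity for the cross term. The sum is $\sum_{\mathbf a}\Tr(Y^*Q_{\mathbf a}^*YQ_{\mathbf a})$, up to the placement of adjoints, and I must make sure it is the pinchings composed in the right order and not something non-projective. Since $Q_{\mathbf a}=P_{i-1}^{a_{i-1}}\cdots P_1^{a_1}$, the expression $Q_{\mathbf a}^*YQ_{\mathbf a}$ summed over $a_{i-1}$ first gives $P_1\cdots\mathcal T_{i-1}(Y)\cdots P_1$, and inducting outward gives $\mathcal T_1\cdots\mathcal T_{i-1}(Y)$. A real inner product with $Y$ is all that Lemma~\ref{lem:projection-product} needs.
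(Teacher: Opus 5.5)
Your proof is correct and is the paper's proof: you expand the square, use $\sum_{\mathbf a}Q_{\mathbf a}^*Q_{\mathbf a}=\sum_{\mathbf a}Q_{\mathbf a}Q_{\mathbf a}^*=I$, identify the cross term with $\mathrm{Re}\langle Y,\mathcal T_1\cdots\mathcal T_{i-1}Y\rangle$ by summing one coordinate at a time, and apply Lemma~\ref{lem:projection-product} together with $\|(I-\mathcal T_k)Y\|_F^2=\frac13\|[c_k,Y]\|_F^2$. Your opening remarks about a telescoping sum, the map $\mathcal D$ and a commuting family of projections play no role and can be dropped. In fact the pinchings by different $c_k$ need not commute, and Lemma~\ref{lem:projection-product} does not require them to.
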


\begin{proof}
The pinchings $\mathcal E_k(Y)=\sum_lP_k^lYP_k^l$ are orthogonal projections on $\mathcal H$, and by~\eqref{eq:spectral-identities}
$\|(I-\mathcal E_k)Y\|_F^2=\sum_{l\ne l'}\|P_k^lYP_k^{l'}\|_F^2=\frac13\|[c_k,Y]\|_F^2$. Both $\sum_{\mathbf a}\|Q_{\mathbf a}Y\|_F^2$ and
$\sum_{\mathbf a}\|YQ_{\mathbf a}\|_F^2$ equal $\|Y\|_F^2$, and summing over the coordinates one at a time,
$\sum_{\mathbf a}\langle YQ_{\mathbf a},Q_{\mathbf a}Y\rangle=\sum_{\mathbf a}\Tr(Q_{\mathbf a}^*Y^*Q_{\mathbf a}Y)=\langle\mathcal E_1\cdots\mathcal E_{i-1}Y,Y\rangle$.
So the left side equals $2\|Y\|_F^2-2\,\mathrm{Re}\langle Y,\mathcal E_1\cdots\mathcal E_{i-1}Y\rangle$, and Lemma~\ref{lem:projection-product} bounds it by
$4\sum_{k<i}\|(I-\mathcal E_k)Y\|_F^2$.
\end{proof}

A three-valued distribution that is nearly symmetric under $k\mapsto-k$ carries about one bit for each unit of weight off $0$. Let $h_2$ be the
binary entropy. With $t=(1+y)/2$,
\[
\ln2\,\bigl(1-h_2(t)\bigr)=\sum_{k\ge1}\frac{y^{2k}}{2k(2k-1)}\ \le\ \sum_{k\ge1}\binom{2k}k\frac{y^{2k}}{(2k-1)4^k}=1-\sqrt{1-y^2}=1-2\sqrt{t(1-t)},
\]
by comparing coefficients: $4^k\le2k\binom{2k}k$ holds for $k=1$, and the right side grows by the factor $2(2k+1)/k\ge4$ from $k$ to $k+1$. For a
distribution $(r_0,r_1,r_2)$ with $s=r_1+r_2>0$ this gives, with $t=r_1/s$,
\begin{equation}\label{eq:binary-hellinger}
H(r_0,r_1,r_2)\ \ge\ s\,h_2(r_1/s)\ \ge\ s-\frac{(\sqrt{r_1}-\sqrt{r_2})^2}{\ln2},
\end{equation}
which also holds when $s=0$.

\begin{theorem}[Weighted sector bound]\label{thm:sector}
Let $c_i,x_i$, $1\le i\le N$, be unitaries on $\mathbb C^M$ with $c_i^3=x_i^2=I$ and $x_ic_ix_i=c_i^{-1}$, with no condition relating different
indices. For a state $\rho$ with $\xi=\sqrt\rho$ put $\Delta_i=E(c_i)$, $\theta_i=\chi(c_i)$, $\theta'_i=\chi(x_i)$, and
\[
T_i=\Bigl(\sum_{k<i}\theta_k^2\Bigr)^{1/2},\qquad K_i=\Bigl(\sum_{k<i}\|[c_i,c_k]\xi\|_F^2\Bigr)^{1/2},\qquad K'_i=\Bigl(\sum_{k<i}\|[x_i,c_k]\xi\|_F^2\Bigr)^{1/2},
\]
\[
w_i=\frac13\Bigl(\Delta_i-\frac4{\sqrt3}T_i\Bigr)_+^2,\qquad\beta_i=\theta'_i+\frac2{\sqrt3}(2T_i+K'_i),\qquad
\pi_i=\frac13\Bigl(\theta_i+\frac2{\sqrt3}(2T_i+K_i)\Bigr)^2 .
\]
If $\pi_i\le2/3$ for all $i$, then
\[
S(\rho)\ \ge\ \sum_{i=1}^N\Bigl[w_i-\frac{\beta_i^2}{2\ln2}-h_2(\pi_i)-\pi_i\Bigr].
\]
\end{theorem}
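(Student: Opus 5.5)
The plan is a sequential measurement on the system half of the purification $\xi\in\mathcal H=\mathbb C^M\otimes\mathbb C^M$, followed by a chain rule for entropy on the reference side. Measure the spectral projections of $c_1,\dots,c_N$ one after another by left multiplication. This produces outcomes $\mathbf a=(a_1,\dots,a_N)\in(\Z/3)^N$ with probabilities $p(\mathbf a)=\|Q_{\mathbf a}\xi\|_F^2$, where $Q_{\mathbf a}$ is the ordered product of the spectral projections $P_N^{a_N}\cdots P_1^{a_1}$. Because left and right multiplications commute and $\xi$ is pure, the entropy of the reference marginal is $S(\rho)$. I aim for
\[
S(\rho)\ \ge\ H(p)-\text{(correction)},
\]
by comparing the measured post-measurement vectors $Q_{\mathbf a}\xi$ with the vectors $\xi Q_{\mathbf a}$. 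The latter are obtained by acting on the reference side, so their weights define a measurement on the reference. The idea is that the reference ``nearly knows'' each outcome. Concretely, I will write $H(p)=\sum_i H(a_i\mid a_{<i})$. For each $i$ I prove a lower bound
\[
H(a_i\mid a_{<i})\ \ge\ w_i-\frac{\beta_i^2}{2\ln2},
\]
and an upper bound of $\pi_i$-size on how much entropy the reference can fail to carry.

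For the lower bound on $H(a_i\mid a_{<i})$, first control the weight off $0$. Lemma~\ref{lem:transfer} with $Y=\xi$ moves $Q_{\mathbf a}$ across $\xi$ at total cost $\frac43T_i^2$. Then \eqref{eq:spectral-identities} applied to $Y=Q_{\mathbf a}\xi$ or $\xi Q_{\mathbf a}$ shows that the conditional weight of $a_i\ne0$, summed over $\mathbf a$, is at least
\[
\tfrac13\bigl(\Delta_i-\tfrac{4}{\sqrt3}T_i\bigr)_+^2=w_i .
\]
Next, near symmetry under $1\leftrightarrow2$ comes from $x_i$. Since $x_iP_i^kx_i=P_i^{-k}$, the Hellinger-type overlap between $\|P_i^1Q_{\mathbf a}\xi\|$ and $\|P_i^2Q_{\mathbf a}\xi\|$ is controlled by
\[
\|(x_iQ_{\mathbf a}-Q_{\mathbf a}x_i)\xi\|\quad\text{and}\quad\chi(x_i).
\]
Summing and again using Lemma~\ref{lem:transfer}, now with $Y=x_i\xi$ or with the commutators $[x_i,c_k]\xi$, gives $\sum(\sqrt{r_1}-\sqrt{r_2})^2\le\beta_i^2/2$. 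Then \eqref{eq:binary-hellinger} yields the claimed bound.

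For the reference side, I bound the increase in entropy of the reference state after pinching by the $c_i$-projections acting on the right. This uses the same ledger as in Lemma~\ref{lem:imp-sector}: commuting pinchings, monotonicity of relative entropy, and a block of bounded rank receiving weight at most $\pi_i$. Here $\pi_i$ bounds the weight of $\xi Q_{\mathbf a}$ lying off the diagonal blocks $P_i^k\,\cdot\,P_i^k$; this uses $\theta_i$, $T_i$ and $K_i$ through \eqref{eq:spectral-identities} and Lemma~\ref{lem:transfer}. The increase per step is then at most $h_2(\pi_i)+\pi_i$, with $\pi_i\le2/3$ ensuring monotonicity. Since the fully pinched reference state has entropy at least $H(p)$, chaining the steps gives the theorem.

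I expect the main obstacle to be the bookkeeping that keeps all errors in the weighted norms, especially transferring the ordered product $Q_{\mathbf a}$ across $\xi$ without losing a factor of $N$. Lemma~\ref{lem:transfer} is designed for exactly this. The other delicate point is matching the constants $\frac4{\sqrt3}$ and $\frac2{\sqrt3}$, which come from $\sqrt{4/3}$ and the triangle inequality.
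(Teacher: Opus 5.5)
\textbf{The reference-side step has a genuine gap.} Your bound on $H(A_i\mid A_{<i})$ follows the paper: you get the displacement weight $w_i$ from Lemma~\ref{lem:transfer} and \eqref{eq:spectral-identities}, and the near-symmetry $\sum_{\mathbf a}(\sqrt{p(\mathbf a,1)}-\sqrt{p(\mathbf a,2)})^2\le\beta_i^2/2$ fed into \eqref{eq:binary-hellinger}. The trouble starts when you import the ledger of Lemma~\ref{lem:imp-sector}. That ledger rests on two facts that fail here.

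\emph{Commuting pinchings.} There the conditional expectations commute because they come from the right regular representation of $S_3^N$. Here, right multiplication by spectral projections of $c_i$ and of $c_k$ commute only if $c_i$ and $c_k$ commute, and the $c_i$ only nearly commute in the weighted norm. So the monotonicity argument that bounds later increments by earlier ones is unavailable.

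\emph{A flat state.} There the state is $\Pi/D$, so a rank count bounds each entropy increment. Here $\rho$ is arbitrary. For a non-flat state, the weight pushed outside a support does not control entropy growth, since a pinching can flatten the state inside its support.

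\textbf{The final step is unjustified.} You claim that the fully pinched reference state has entropy at least $H(p)$. After sequential non-commuting pinchings $\sigma\mapsto\sum_{\mathbf b}\tilde Q_{\mathbf b}\sigma\tilde Q_{\mathbf b}^*$, the pieces are mutually orthogonal only in the last index. So there is no block structure forcing the entropy above the Shannon entropy of the full record. Moreover, $p(\mathbf a)=\|Q_{\mathbf a}\xi\|_F^2$ is the left-side distribution, not the right-side one. The proposal offers no quantitative replacement for either point.

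\textbf{What fixes it.} Your target $S(\rho)\ge H(p)-\text{correction}$ is the right one; the correction is $S(A_1\cdots A_N\mid\mathrm{Ref})$. Bound it one index at a time instead of by sequential pinching.
\begin{enumerate}
\item Measurements on the system leave the reference marginal unchanged, and conditional entropy given classical outcomes is nonnegative. Hence $S(\rho)\ge I(A_1\cdots A_N:\mathrm{Ref})=\sum_iI(A_i:\mathrm{Ref}\mid A_{<i})$.
\item For each $i$ separately, let the reference measure $c_i$ once, by $Y\mapsto YP_i^l$, on the conditional vectors $Q_{\mathbf a}\xi$. Call the outcome $B_i$.
\item Left and right multiplications commute, so by \eqref{eq:spectral-identities} $\Pr(A_i\ne B_i)=\frac13\sum_{\mathbf a}\|[c_i,Q_{\mathbf a}\xi]\|_F^2$.
\item Split $[c_i,Q_{\mathbf a}\xi]=c_i(Q_{\mathbf a}\xi-\xi Q_{\mathbf a})+\bigl((c_i\xi)Q_{\mathbf a}-Q_{\mathbf a}(c_i\xi)\bigr)+Q_{\mathbf a}[c_i,\xi]$. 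Apply Lemma~\ref{lem:transfer} with $Y=\xi$ and with $Y=c_i\xi$, using $[c_k,c_i\xi]=[c_k,c_i]\xi+c_i[c_k,\xi]$. This gives $\Pr(A_i\ne B_i)\le\pi_i$.
\item Data processing, $I(A_i:\mathrm{Ref}\mid A_{<i})\ge I(A_i:B_i\mid A_{<i})$, and Fano's inequality for three values give $I(A_i:\mathrm{Ref}\mid A_{<i})\ge H(A_i\mid A_{<i})-h_2(\pi_i)-\pi_i$. Here concavity of $h_2$ and monotonicity of $h_2(\pi)+\pi$ on $[0,2/3]$ are used.
\end{enumerate}
Only one reference measurement enters each term of the chain rule, so no commutation on the reference side and no flatness of $\rho$ is ever needed. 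That is the idea your proposal is missing.
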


\begin{proof}
\emph{Measure the copies in turn.} Measure $c_1,\dots,c_N$ in turn on the system, with the projections $P_i^k$, and let $A_i\in\Z/3$ be the outcomes.
Then $\Pr(A_{<i}=\mathbf a,A_i=k)=\|P_i^kQ_{\mathbf a}\xi\|_F^2=:p(\mathbf a,k)$. The measurements do not touch the reference, whose entropy is
$S(\rho)$, so
\[
S(\rho)\ \ge\ I(A_1\cdots A_N:\mathrm{Ref})=\sum_{i=1}^NI(A_i:\mathrm{Ref}\mid A_{<i}),
\]
since the conditional entropy of the reference given the classical outcomes is nonnegative. Fix $i$.

\emph{The copy stays displaced.} By~\eqref{eq:spectral-identities}, $\Pr(A_i\ne0)=\frac13\sum_{\mathbf a}\|(c_i-I)Q_{\mathbf a}\xi\|_F^2$. By
Lemma~\ref{lem:transfer} with $Y=\xi$, the vectors $Q_{\mathbf a}\xi$ and $\xi Q_{\mathbf a}$ differ by at most $\frac2{\sqrt3}T_i$ in $\ell^2$ over $\mathbf a$,
and $\sum_{\mathbf a}\|(c_i-I)\xi Q_{\mathbf a}\|_F^2=\Delta_i^2$ because $\sum_{\mathbf a}Q_{\mathbf a}Q_{\mathbf a}^*=I$. Since $\|c_i-I\|\le2$, the triangle
inequality in $\ell^2$ gives $\Pr(A_i\ne0)\ge w_i$.

\emph{The sign is a fair coin.} Since $x_iP_i^kx_i=P_i^{-k}$ and right multiplication by $x_i$ is unitary,
$\sqrt{p(\mathbf a,-k)}=\|P_i^kx_iQ_{\mathbf a}\xi\|_F$ and $\sqrt{p(\mathbf a,k)}=\|P_i^kQ_{\mathbf a}\xi x_i\|_F$, so
$\sum_{\mathbf a,k}(\sqrt{p(\mathbf a,k)}-\sqrt{p(\mathbf a,-k)})^2\le\sum_{\mathbf a}\|x_iQ_{\mathbf a}\xi-Q_{\mathbf a}\xi x_i\|_F^2$. Write
\[
x_iQ_{\mathbf a}\xi-Q_{\mathbf a}\xi x_i=x_i(Q_{\mathbf a}\xi-\xi Q_{\mathbf a})+\bigl((x_i\xi)Q_{\mathbf a}-Q_{\mathbf a}(x_i\xi)\bigr)+Q_{\mathbf a}(x_i\xi-\xi x_i).
\]
In $\ell^2$ over $\mathbf a$ the first term is at most $\frac2{\sqrt3}T_i$ and the third is $\theta'_i$. For the second apply Lemma~\ref{lem:transfer} with
$Y=x_i\xi$: since $[c_k,x_i\xi]=[c_k,x_i]\xi+x_i[c_k,\xi]$, it is at most $\frac2{\sqrt3}(K'_i+T_i)$. So the sum is at most $\beta_i^2$, and since the
terms $k=1$ and $k=2$ contribute equally, \eqref{eq:binary-hellinger} gives
\[
H(A_i\mid A_{<i})\ \ge\ \Pr(A_i\ne0)-\frac1{\ln2}\sum_{\mathbf a}\bigl(\sqrt{p(\mathbf a,1)}-\sqrt{p(\mathbf a,2)}\bigr)^2\ \ge\ w_i-\frac{\beta_i^2}{2\ln2}.
\]

\emph{The reference knows the outcome.} Let the reference measure $c_i$ from the right, with the projections $Y\mapsto YP_i^l$, and call the outcome
$B_i$. Left and right multiplications commute, so $\Pr(A_{<i}=\mathbf a,A_i=k,B_i=l)=\|P_i^kQ_{\mathbf a}\xi P_i^l\|_F^2$, and
by~\eqref{eq:spectral-identities} $\Pr(A_i\ne B_i)=\frac13\sum_{\mathbf a}\|[c_i,Q_{\mathbf a}\xi]\|_F^2$. The same three-term splitting gives
\[
[c_i,Q_{\mathbf a}\xi]=c_i(Q_{\mathbf a}\xi-\xi Q_{\mathbf a})+\bigl((c_i\xi)Q_{\mathbf a}-Q_{\mathbf a}(c_i\xi)\bigr)+Q_{\mathbf a}(c_i\xi-\xi c_i),
\]
and Lemma~\ref{lem:transfer} with $Y=\xi$ and $Y=c_i\xi$ gives $\Pr(A_i\ne B_i)\le\pi_i$. By data processing and Fano's inequality for three values,
\[
I(A_i:\mathrm{Ref}\mid A_{<i})\ \ge\ I(A_i:B_i\mid A_{<i})\ \ge\ H(A_i\mid A_{<i})-h_2(\pi_i)-\pi_i ,
\]
where we used that $h_2(\pi)+\pi$ increases on $[0,2/3]$. Summing over $i$ proves the theorem.
\end{proof}

\begin{corollary}\label{cor:sector}
In Theorem~\ref{thm:sector}, suppose $\Delta_i\ge1/2$ for all $i$, and that $\theta_i$, $\theta'_i$, $\|[c_i,c_k]\xi\|_F$ and $\|[x_i,c_k]\xi\|_F$ are at
most $b$ for all $i\ne k$, where $\sqrt N\,b\le10^{-3}$. Then $S(\rho)\ge N/16$.
\end{corollary}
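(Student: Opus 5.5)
The plan is to plug the uniform bounds into the definitions of Theorem~\ref{thm:sector} and check that every bracket is at least $1/16$. Fix $i$. Each $\theta_k\le b$, so $T_i\le\sqrt{i-1}\,b\le\sqrt N\,b\le10^{-3}$. Each of the $i-1$ terms in $K_i$ and $K'_i$ is at most $b$, so $K_i,K'_i\le10^{-3}$ as well. Hence $\beta_i\le b+\frac2{\sqrt3}(3\cdot10^{-3})<5\cdot10^{-3}$, using $b\le\sqrt N\,b$ since $N\ge1$. Likewise
\[
\sqrt{3\pi_i}\le b+\tfrac2{\sqrt3}\cdot3\cdot10^{-3}<5\cdot10^{-3},
\]
so $\pi_i<10^{-5}$. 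In particular the hypothesis $\pi_i\le2/3$ holds.

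For the displacement term we use $\Delta_i\ge1/2$ and $\frac4{\sqrt3}T_i<2.4\cdot10^{-3}$, which give
\[
w_i\ge\tfrac13(0.4976)^2>0.0825 .
\]
Next we estimate the losses. We have $\beta_i^2/(2\ln2)<2\cdot10^{-5}$. Since $h_2$ increases on $[0,\frac12]$, also $h_2(\pi_i)+\pi_i\le h_2(10^{-5})+10^{-5}<2\cdot10^{-4}$; here one uses $h_2(x)\le x\log(e/x)$, so $h_2(10^{-5})\approx1.8\cdot10^{-4}$.

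Each bracket is therefore at least $0.0825-0.00022>1/16=0.0625$, and summing over $i$ gives $S(\rho)\ge N/16$. No step is a real obstacle. The only point needing care is that $T_i$, $K_i$ and $K'_i$ are square roots of sums of at most $N$ squared terms each bounded by $b$, so the hypothesis $\sqrt N\,b\le10^{-3}$ controls them uniformly in $i$.
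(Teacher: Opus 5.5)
Your proposal is correct and follows the paper's proof: bound $T_i,K_i,K'_i$ by $\sqrt N\,b\le10^{-3}$, substitute into $w_i$, $\beta_i$ and $\pi_i$, and use $h_2(\pi)\le\pi\log(e/\pi)$ to show that each summand exceeds $1/16$. Your constants are slightly looser than the paper's ($5\cdot10^{-3}$ in place of $4.5\cdot10^{-3}$), but your arithmetic checks out and still gives each bracket above $0.082$.
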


\begin{proof}
Now $T_i,K_i,K'_i\le\sqrt N\,b\le10^{-3}$ and $\theta_i,\theta'_i\le10^{-3}$. So $w_i\ge\frac13(\frac12-\frac4{\sqrt3}10^{-3})^2>0.0825$, and $\beta_i$ and
$\sqrt{3\pi_i}$ are at most $10^{-3}+2\sqrt3\cdot10^{-3}<4.5\cdot10^{-3}$. Hence $\beta_i^2/(2\ln2)<1.5\cdot10^{-5}$ and $\pi_i<6.8\cdot10^{-6}$, and with
$h_2(\pi)\le\pi\log(e/\pi)$ we get $h_2(\pi_i)<1.3\cdot10^{-4}$. Each summand exceeds $0.0823>1/16$.
\end{proof}

\subsection{Completion}

\begin{proof}[Proof of Theorem~\ref{thm:weighted}]
Let $\phi$ be given by Lemma~\ref{lem:weighted-extraction}, with $e=256\,m\sqrt{d\nu}$, and put
\[
b_*=100\,m\,(A+L+1)\,e=25600\,m^2(A+L+1)\sqrt{d\nu},\qquad\delta_L=(2L+1)e .
\]
Condition~\eqref{eq:weighted-condition} gives $\sqrt N\,b_*\le25600\cdot2^{-25}<10^{-3}$.

\emph{Round the copies.} Let $\psi_i=\phi(u_i)$, so $\chi(\psi_i)\le Le$, and put $X_i=\psi_i\phi(a)\psi_i^*$, $Y_i=\psi_i\phi(b)\psi_i^*$ and $Z_i=X_iY_i$. Since
$a^2$, $b^2$ and $(ab)^3$ lie in $R$, (W4) and Lemma~\ref{lem:weighted-extraction} give $E(X_i^2),E(Y_i^2),E(Z_i^3)\le e+2Le=\delta_L$, and (W2) gives
$\chi(X_i),\chi(Y_i)\le\delta_L$. Lemma~\ref{lem:rounding} gives unitaries $x_i,c_i$ with $x_i^2=c_i^3=I$, $x_ic_ix_i=c_i^{-1}$ and
\[
\|(x_i-X_i)\xi\|_F\le\delta_L,\qquad\chi(x_i)\le3\delta_L,\qquad\|(c_i-Z_i)\xi\|_F\le29\delta_L,\qquad\chi(c_i)\le56\delta_L .
\]
By (W4) and Lemma~\ref{lem:weighted-extraction}(iii), $E(Z_i)\ge E(\phi(a)\phi(b))-2Le\ge1-\delta_L$, so $\Delta_i=E(c_i)\ge1-30\delta_L$.

\emph{The raw copies nearly commute.} Let $i\ne k$ and $x,y\in\{a,b\}$, and let $P=\psi_i\phi(x)\psi_i^*$ and $P'=\psi_k\phi(y)\psi_k^*$. The group commutator
$g=[u_ixu_i^{-1},u_kyu_k^{-1}]$ is a word of length at most $8L+4$ and area at most $A$, so Lemma~\ref{lem:charged-area} gives
$E(\phi(g))\le(3mA+8L+4)e$. Since $PP'-P'P=(\phi(g)-I)P'P$ and $\chi(P'P)\le2\delta_L$, (W3) gives
\[
\|(PP'-P'P)\xi\|_F\le\gamma:=(3mA+8L+4)e+4\delta_L=(3mA+16L+8)e .
\]
Expanding $[X_i,X_kY_k]=[X_i,X_k]Y_k+X_k[X_i,Y_k]$ and $[X_iY_i,Z_k]=X_i[Y_i,Z_k]+[X_i,Z_k]Y_i$ and using (W3),
$\|[X_i,Z_k]\xi\|_F,\|[Y_i,Z_k]\xi\|_F\le2\gamma+2\delta_L$ and $\|[Z_i,Z_k]\xi\|_F\le4\gamma+6\delta_L$.

\emph{The rounded copies nearly commute too.} For unitaries $U,V,U_0,V_0$,
\[
[U,V]-[U_0,V_0]=U(V-V_0)+(U-U_0)V_0-V(U-U_0)-(V-V_0)U_0,
\]
so by (W3) the weighted norms of the two commutators differ by at most
$2\|(U-U_0)\xi\|_F+2\|(V-V_0)\xi\|_F+2\chi(U_0)+2\chi(V_0)$. With $(U,U_0,V,V_0)=(x_i,X_i,c_k,Z_k)$ and $(c_i,Z_i,c_k,Z_k)$ this gives
\begin{gather*}
\|[x_i,c_k]\xi\|_F\le2\gamma+68\delta_L=(6mA+168L+84)e,\\
\|[c_i,c_k]\xi\|_F\le4\gamma+130\delta_L=(12mA+324L+162)e .
\end{gather*}
Since $m\ge6$, these and $\chi(c_i)\le(112L+56)e$ and $\chi(x_i)\le(6L+3)e$ are at most $b_*$, and $\Delta_i\ge1-(60L+30)e\ge1-b_*>1/2$.
Corollary~\ref{cor:sector} gives $S(\rho)\ge N/16$.
\end{proof}

The constant $2^{25}$ is not optimized; the proof uses only $25600\cdot2^{-25}<10^{-3}$, and Corollary~\ref{cor:sector} has room to spare.

\begin{remark}[The constant term]\label{rem:plus-one}
The term $+1$ in~\eqref{eq:weighted-condition} cannot be dropped. Take $N=1$, the empty transport word, so that $L=0$, and $A=0$, which is
allowed since there are no pairs. The gadget channel has a Stinespring dilation with a pure environment of dimension $r_*$, so there is a round
with $M=r_*$, $\rho$ pure and $\Psi=\Phi_G$ exactly. Then $\ddia(\Psi,\Phi_G)=0$ and the condition with $A+L$ in place of $A+L+1$ holds, since its left
side vanishes, but $S(\rho)=0<1/16$. With the $+1$, the condition forces $\nu\le2^{-50}m^{-4}d^{-1}$, while this round has $\ell=0$ and
$\Delta S=S(T(\rho))$, the entropy of the complementary output at the maximally mixed input. That output is diagonal with entries
$\|A_g\|_F^2/d\ge1/d$, and there are at least two labels, since $ab\ne1$; so $\Delta S\ge h_2(1/d)$, far above the threshold.
\end{remark}
\section{Proof of the streaming theorem}\label{app:streaming}

This appendix proves Theorem~\ref{thm:streaming}, the streaming service at the rank rate. Theorem~B uses it through
Corollary~\ref{cor:streaming-lamps}, which gives the services at the rank rate in Section~\ref{sec:main-channel-proof}.

The device keeps one rectangular dilation $W$ for all rounds and lets the part of the memory in use, the \emph{active space}, grow by the
factor $k'/k$ in every round; that growth is the whole purity bill. Three things are then controlled. The Kraus histories must stay nearly
orthogonal, and a Haar-random unitary of the active space before each round ensures this, through a concentration bound on the Gram matrix of
the histories. The concentration bound needs the state of everything outside the memory to stay spread out, and exporting about
$\frac12\log r$ qubits per round ensures that. And the error follows from the Gram bound alone, because the final vector of the service differs
from the ideal one by the polar part of a frame. The proof fixes the schedule, chooses the unitaries round by round, and then bounds the error
and the memory.

\begin{proof}[Proof of Theorem~\ref{thm:streaming}]
We use the frames of~\cite[Appendix~A]{DouglasMghirbiB}. A \emph{frame} is a linear map $F:\mathbb C^M\to\mathbb C^D\otimes\mathbb C^S$; its columns
are the vectors, in the memory $\mathbb C^D$ and an inaccessible exterior $\mathbb C^S$, attached to the Kraus histories. The exterior holds the
purifier of the initial memory, the exported qubits and the purifiers of the imported ones. Put $g(F)=\|F^*F\|$ and
$\alpha(F)=\|\mathrm{Tr}_{\mathbb C^D}FF^*\|$. Two lemmas from there carry the argument.
\begin{itemize}[leftmargin=1.2em,itemsep=1pt]
\item \emph{Concentration}~\cite[Lemma~A.1]{DouglasMghirbiB}. For Hermitian $X$ on $\mathbb C^D$ and Haar-random $V\in U(D)$, the matrix
$Z=F^*((V^*XV)\otimes I)F-\tau_D(X)F^*F$ satisfies $\Pr\{\|Z\|\ge s\}\le2M\exp[-Ds^2/(4\alpha(F)g(F)\|X\|^2)]$.
\item \emph{Export}~\cite[Lemma~A.4]{DouglasMghirbiB}. If $\mathbb C^D=\mathbb C^{D/e}\otimes\mathbb C^e$, there is a unitary of $\mathbb C^D$ after
which exporting the factor $\mathbb C^e$ and importing a maximally mixed $e$-dimensional register give a frame $F'$ with $F'^*F'=F^*F$ and, for every
$\eta>0$, $\alpha(F')\le(1+\eta)\alpha(F)/e^2+4e^2g(F)\ln(8Se^2)/(\eta D)$.
\end{itemize}

\emph{Fix the schedule and the dimensions.} Put $h=\log r$, $E_p=2^{\lceil h/2\rceil}$, $J_t=\lfloor th/2\rfloor$, $j_t=J_t-J_{t-1}\le\lceil h/2\rceil$ and
$e_t=2^{j_t}$, a divisor of $E_p$. Choose $D_0$, a multiple of $kE_p$, with $D_0\ge nkE_p$ and
\begin{equation}\label{eq:D0}
D_0^2\ \ge\ \frac{32\,A_*\,r^4n^2\ln(8r^{n+2})}{\epsilon^2},\qquad\text{where}\quad A_*=12\bigl(1+8E_p^2n^2L_*\bigr),\quad L_*=\ln(8D_0r^nE_p^2).
\end{equation}
Put $q_0=k'/k$. For $t\ge1$ let $\widetilde D_t=q_0D_{t-1}$, a multiple of $k'E_p$, and let $D_t$ be the least multiple of $kE_p$ with
$D_t\ge\widetilde D_t$. Then $D_t\le q_0D_{t-1}+kE_p$, so $D_n\le q_0^n(D_0+nkE_p)\le2q_0^nD_0$. The memory has $Q=\lceil\log D_n\rceil$ qubits, of
which the last $\lceil h/2\rceil$ form a \emph{port}. A space whose dimension $D'$ is a multiple of $E_p$ is realized as a subspace of the first
$Q-\lceil h/2\rceil$ qubits of dimension $D'/E_p$, tensored with the port. The active space of round $t$ is such a space of dimension $D_{t-1}$,
identified with $\mathbb C^k\otimes\mathbb C^{D_{t-1}/k}$.

\emph{The device never leaves the active spaces.} The memory starts maximally mixed on the active space of round $1$, of dimension $D_0$. In
round $t$ the device applies a unitary $V_t$ of the active space and then $W\otimes I_{D_{t-1}/k}$, which maps the input and the active space
isometrically onto the output and a space of dimension $\widetilde D_t$. It releases the output. It then applies a unitary $V'_t$ of that space;
since $V'_t$ acts on the memory alone, it could equally act before the release. If $t<n$, the device exports $j_t$ port qubits and imports
$j_t$ qubits in the maximally mixed state in their place. The resulting space of dimension $\widetilde D_t$ lies in the active space of round
$t+1$. These maps are completed arbitrarily to unitaries of the input and the whole memory; the completion never acts, since the memory never
leaves the active spaces. So $C=0$, $J=J_{n-1}$, and $P=Q-\log D_0\le n\log q_0+2$.

\emph{Frames track the histories.} Purify the initial memory: its frame $F_0$ is one unit vector, with $\alpha(F_0)=1/D_0$. Let $F$ be the frame
before round $t$, with one column $F_y$ for each history $y\in[r]^{t-1}$. After the call the frame has the columns $(B_i\otimes I)(V_t\otimes I)F_y$,
indexed by the histories $(y,i)$. Its Gram matrix has the blocks $F^*\bigl((V_t^*(B_i^*B_j\otimes I)V_t)\otimes I\bigr)F$, whose Haar means are
$\tau(B_i^*B_j\otimes I)F^*F=\delta_{ij}F^*F$ by the first identity in~\eqref{eq:rectangular}. By the second identity and cyclicity of the partial
trace over the memory, the exterior marginal after the call is exactly $r$ times the one before, whatever $V_t$ is; so the call multiplies
$\alpha$ by $r$.

\emph{Choose the unitaries round by round.} We choose $V_1,V'_1,V_2,\dots$ in turn so that the frame $F_t$ after round $t$ satisfies
\begin{equation}\label{eq:streaming-induction}
\|F_t^*F_t-I\|\le t\epsilon/n,\qquad \alpha(F_t)\le A_*/D_0 .
\end{equation}
This holds for $t=0$. Suppose it holds for $t-1$; then $g(F_{t-1})\le2$. Since $B_i^*B_i\le rI$, the Hermitian and anti-Hermitian parts of the $B_i^*B_j$
have norm at most $r$. Apply the concentration lemma to these $2r^2$ operators, tensored with $I_{D_{t-1}/k}$, with $D=D_{t-1}\ge D_0$, at most $r^n$
columns and level $s=\epsilon/(2rn)$: by~\eqref{eq:D0} the failure probabilities add up to at most
$4r^{n+2}\exp[-D_0^2\epsilon^2/(32A_*r^4n^2)]\le\frac12$. So some $V_t$ makes every block of the new Gram matrix deviate from its mean by at most
$\epsilon/(rn)$, hence the $r\times r$ block matrix by at most $\epsilon/n$, and the Gram matrix after the call is within $t\epsilon/n$ of $I$. Exports and
imports preserve it. Choose $V'_t$ by the export lemma with $e=e_t$, $\eta=1/n$ and $D=\widetilde D_t\ge D_0$. The exterior has dimension
$S\le D_0\,2^{2J_{t-1}}\le D_0r^n$, so $\ln(8Se_t^2)\le L_*$, and
\[
\alpha(F_t)\ \le\ \Bigl(1+\frac1n\Bigr)\frac{r\,\alpha(F_{t-1})}{e_t^2}+\frac{8E_p^2nL_*}{D_0}.
\]
Put $c_t=2^{2J_t}r^{-t}$, so that $c_0=1$, $\frac14<c_t\le1$ and $c_tr/e_t^2=c_{t-1}$. Multiplying by $c_t$ and unrolling gives
$c_t\alpha(F_t)\le(1+\frac1n)^t(1+8E_p^2n^2L_*)/D_0$, hence $\alpha(F_t)\le4e(1+8E_p^2n^2L_*)/D_0\le A_*/D_0$. In round $n$ there is no export, and only the
Gram bound is needed.

\emph{The error is at most $\epsilon$.} Purify the observer. In the ideal experiment, with the Stinespring isometry
$\xi\mapsto\sum_iA_i\xi\otimes|i\rangle$ in each round, its final vector is $\sum_y\psi_y\otimes|y\rangle$. In the service it is
$\sum_y\psi_y\otimes F_n|y\rangle$ with the same vectors $\psi_y$, because the device acts on the input and the memory through
$W=\sum_iA_i\otimes B_i$ and unitaries of the memory, and the observer never touches the memory or the exterior. The polar part
$T=F_n(F_n^*F_n)^{-1/2}$ is an isometry with $\|F_n-T\|=\|(F_n^*F_n)^{1/2}-I\|\le\epsilon$. So the actual final vector is within $\epsilon$ of the
ideal one mapped by $I\otimes T$, and after tracing out the memory and the exterior the observer's final states are within $\epsilon$ in trace
distance.

\emph{The memory is as claimed.} We have $Q\le\log D_n+1\le\log D_0+n\log q_0+2$. The conditions on $D_0$ hold for
$D_0=kE_p\lceil K_r(n/\epsilon)^4(1+\ln k)\rceil$ with $K_r$ large enough in terms of $r$, and then
$\log D_0=\log k+O_r(\log(n/\epsilon)+\log(2+\log k))$.
\end{proof}
\section{The lazy doubling in detail}\label{app:lazy}

This appendix proves parts (b) and (c) of Proposition~\ref{prop:lazy}, used in Section~\ref{sec:physics-lazy}, and
Proposition~\ref{prop:compositions}, used in Appendix~\ref{sec:physics-finer}; it also records two remarks on $\Theta_+$. Part (b) needs the
one-round theorem in a commutator form (Lemma~\ref{lem:commutator-form}), a weighted intertwining lemma (Lemma~\ref{lem:twisted}) and a block
extraction with a smooth cutoff (Lemma~\ref{lem:block-extraction}).

Throughout, $\Phi$ is a general unital channel on $M_d$, with doubling
$\Theta(\begin{smallmatrix}x&y\\u&w\end{smallmatrix})=\mathrm{diag}(\Phi(w),\Phi^*(x))$ and $\Theta_+=\frac12(\mathrm{id}+\Theta)$ on $M_{2d}$. Let
$\mathcal S_\Phi\subseteq M_d$ be the span of the Kraus operators of $\Phi$, so that the Choi support of $\Phi$ is
$\{|A\rangle\!\rangle:A\in\mathcal S_\Phi\}$, and let $\Pi_\Phi$ also denote the Hilbert--Schmidt projection of $M_d$ onto $\mathcal S_\Phi$. Since
$\Theta(X)=\sum_g(e_{01}\otimes A_g)X(e_{01}\otimes A_g)^*+(e_{10}\otimes A_g^*)X(e_{10}\otimes A_g^*)^*$ for Kraus operators $A_g$ of $\Phi$, the Choi
support of $\Theta_+$ corresponds to the span of the three pairwise orthogonal subspaces
\begin{equation}\label{eq:lazy-support}
\mathbb C\,1_{2d},\qquad e_{01}\otimes\mathcal S_\Phi,\qquad e_{10}\otimes\mathcal S_\Phi^*
\end{equation}
of $M_{2d}=M_2\otimes M_d$. As in Appendix~\ref{sec:weighted-proof}, $\|\cdot\|_F$ is the Hilbert--Schmidt norm and $\|\cdot\|$ the operator
norm. An operator $X$ on the bath $\mathbb C^M$ is identified with $1\otimes X$ when it acts on $\mathbb C^d\otimes\mathbb C^M$, so that
$\|1\otimes X\|_F=\sqrt d\,\|X\|_F$.

\begin{lemma}[Commutator form of the one-round theorem]\label{lem:commutator-form}
Lemma~\ref{lem:weighted-extraction}, Theorem~\ref{thm:weighted} and Corollary~\ref{cor:amplification} remain true if, in the definition
$\nu=\ell+\Delta S$, the entropy production $\Delta S$ of the round $\Psi(X)=\mathrm{Tr}_M\,U(X\otimes\rho)U^*$ is replaced by any $\tilde s\ge0$ with
$\|[U,1\otimes\sqrt\rho\,]\|_F^2\le4d\,\tilde s\ln2$, and if the hypothesis $\ddia(\Psi,\Phi_G)\le1/8$ is replaced by the weaker hypothesis
\begin{equation}\label{eq:coherence}
|\langle z|\Psi(|z\rangle\langle0|)|0\rangle|\le\tfrac34,
\end{equation}
where $0$ and $z$ are the anchor positions of the empty word and of $j$; in Lemma~\ref{lem:weighted-extraction}(iii) the bound then reads
$E(\phi(a)\phi(b))\ge0.66-e$. Conversely, every round can be rewritten, without changing its channel or its leakage, as a round
$(U',\rho)$ with $U'=(1\otimes W)U$ for which $\tilde s=\Delta S$ is admissible.
\end{lemma}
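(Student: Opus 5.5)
The plan is to re-read the proof of Lemma~\ref{lem:weighted-extraction} and isolate exactly where entropy production and the diamond distance enter. Everything downstream of that lemma, namely Lemma~\ref{lem:charged-area}, Lemma~\ref{lem:rounding}, Theorem~\ref{thm:sector} and the completion of Theorem~\ref{thm:weighted}, uses only properties (i)--(iii) of the extracted unitaries $\phi(x)$ together with the weighted norms at $\rho$. It never touches $\Delta S$ or $\ddia$ directly. So the task reduces to the extraction lemma, followed by rechecking the numerical slack in the final step with the weaker bound in (iii).

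\emph{Entropy production.} In the extraction, $\Delta S$ is used only in the first paragraph, to prove the block commutator bound~\eqref{eq:block-commutators}:
\[
\sum_{o,o'}\|[A_{oo'},\xi]\|_F^2=\|[U,I\otimes\xi]\|_F^2\le4d\,\Delta S\ln2=\epsilon_S^2 .
\]
That paragraph first normalizes the output basis by $U\mapsto(1\otimes W)U$. If we are given $U$ and $\tilde s$ with $\|[U,1\otimes\sqrt\rho\,]\|_F^2\le4d\tilde s\ln2$, we skip the normalization and the Birkhoff/rearrangement argument, and set $\epsilon_S^2=4d\tilde s\ln2$ directly. The rest only uses $\epsilon_\ell\le\eta$ and $\epsilon_S\le2\sqrt{\ln2}\,\eta$ with $\eta=\sqrt{d(\ell+\tilde s)}$, which holds verbatim. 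The converse statement is precisely the content of that normalization paragraph: $U'=(1\otimes W)U$ has the same channel and leakage, and it satisfies the bound with $\tilde s=\Delta S$.

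\emph{Diamond distance.} The hypothesis $u\le1/8$ is used only in the last paragraph, to show $|\Tr(\rho B_\varnothing^*B_j)|\le2u$. Under~\eqref{eq:coherence} we use instead the identity $\langle z|\Psi(|z\rangle\langle0|)|0\rangle=\Tr(B_j\rho B_\varnothing^*)=\Tr(\rho B_\varnothing^*B_j)$, which bounds this quantity by $3/4$ directly. The anchor bounds then give $|\Tr(\rho V_\varnothing^*V_j)|\le3/4+2\epsilon_\ell$, hence
\[
E(\phi(j))^2\ge2-\tfrac32-4\epsilon_\ell=\tfrac12-4\eta .
\]
Provided $\eta$ is tiny, $\sqrt{1/2-4\eta}\ge0.66$, so $E(\phi(a)\phi(b))\ge0.66-e$. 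When $\eta$ is not tiny, $e=256m\eta$ exceeds $0.66$ and the bound is trivial. I expect this case split, making sure the threshold on $\eta$ matches the trivial regime, to be the only delicate bookkeeping.

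\emph{Downstream bounds.} In the completion of Theorem~\ref{thm:weighted}, the bound on $\Delta_i$ becomes $\Delta_i\ge0.66-(60L+30)e\ge0.66-b_*>1/2$, since $b_*<10^{-3}$. Corollary~\ref{cor:sector} therefore still applies, and $S(\rho)\ge N/16$. Corollary~\ref{cor:amplification} is a pure parameter choice on top of Theorem~\ref{thm:weighted}, so it transfers unchanged.
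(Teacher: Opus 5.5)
Your proposal is correct and follows the paper's proof essentially step for step. Entropy production enters the extraction only through~\eqref{eq:block-commutators}, and the diamond hypothesis only through the bound on $|\Tr(\rho B_\varnothing^*B_j)|$ in the displacement step. The split at $\eta\le1/64$, with $e>1$ otherwise, yields $0.66-e$, and $\Delta_i\ge0.66-b_*>\frac12$ is all that Corollary~\ref{cor:sector} needs.

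One small correction: the extraction also uses the combined bounds $\epsilon_\ell+\epsilon_S\le\sqrt{1+4\ln2}\,\eta$ and $2\epsilon_\ell^2+2\epsilon_S^2\le8\ln2\,\eta^2$, not only the two separate bounds you list, which would not give the constant in $\chi(V_w)<4.3\eta$. These combined bounds also hold verbatim, by Cauchy--Schwarz, from $\epsilon_S^2\le4d\tilde s\ln2$ and $\nu=\ell+\tilde s$.
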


\begin{proof}
The rewriting is the normalization step of the proof of Lemma~\ref{lem:weighted-extraction}, after which~\eqref{eq:block-commutators} is
the inequality $\|[U',1\otimes\sqrt\rho\,]\|_F^2\le4d\,\Delta S\ln2$; its proof uses only that $U'$ is a unitary on $\mathbb C^d\otimes\mathbb C^M$. In
the proof of Lemma~\ref{lem:weighted-extraction}, $\Delta S$ enters only through~\eqref{eq:block-commutators}. Afterwards only the bounds
$\epsilon_S\le2\sqrt{\ln2}\,\eta$, $\epsilon_\ell+\epsilon_S\le\sqrt{1+4\ln2}\,\eta$ and $2\epsilon_\ell^2+2\epsilon_S^2\le8\ln2\,\eta^2$ with $\eta=\sqrt{d\nu}$ are used,
together with the unitarity of $U$, its leakage and $\ddia(\Psi,\Phi_G)$. These bounds hold with $\epsilon_S^2\le4d\tilde s\ln2$ and $\nu=\ell+\tilde s$.
The diamond hypothesis enters only in the displacement step, through
$|\Tr(\rho B_\varnothing^*B_j)|=|\langle z|\Psi(|z\rangle\langle0|)|0\rangle|\le2u\le\frac14$. Under~\eqref{eq:coherence} instead,
$E(\phi(j))^2\ge2-\frac32-4\epsilon_\ell\ge\frac12-4\eta$, so $E(\phi(a)\phi(b))\ge\sqrt{1/2-4\eta}-e\ge0.66-e$ if $\eta\le\frac1{64}$, while $e>1$ otherwise. In
the proof of Theorem~\ref{thm:weighted} this gives $E(Z_i)\ge0.66-\delta_L$ and $\Delta_i\ge0.66-30\delta_L\ge0.66-b_*>\frac12$ in place of
$\Delta_i\ge1-b_*$, and $\Delta_i\ge\frac12$ is all that Corollary~\ref{cor:sector} needs. Theorem~\ref{thm:weighted} and Corollary~\ref{cor:amplification} use
the round only through Lemma~\ref{lem:weighted-extraction}.
\end{proof}

\begin{lemma}[Weighted intertwining]\label{lem:twisted}
Let $B$ be an operator on $\mathbb C^d\otimes\mathbb C^M$ with $\|B\|\le1$, let $R,R'$ be self-adjoint operators on $\mathbb C^M$, $\xi$ an operator on $\mathbb C^M$, and
$\Sigma=BR-R'B$.
\begin{enumerate}[label=(\roman*),itemsep=1pt]
\item If $h$ is Lipschitz with constant $L_h$ on an interval containing the spectrum of $R$, then $\|[h(R),\xi]\|_F\le L_h\,\|[R,\xi]\|_F$.
\item If $g(x)=\int\hat g(t)e^{itx}dt$ with $M_1=\int|t\,\hat g(t)|\,dt$ and $M_2=\int t^2|\hat g(t)|\,dt$ finite, then
\[
\bigl\|\bigl(Bg(R)-g(R')B\bigr)\xi\bigr\|_F\ \le\ M_1\,\|\Sigma\xi\|_F+\tfrac12M_2\,\|\Sigma\|\,\sqrt d\,\|[R,\xi]\|_F .
\]
\end{enumerate}
\end{lemma}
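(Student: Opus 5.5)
For part (i) I plan to use the spectral decomposition of $R$, with no Fourier analysis. Write $R=\sum_k r_kE_k$ with spectral projections $E_k$. Then
\[
[h(R),\xi]=\sum_{k,l}(h(r_k)-h(r_l))E_k\xi E_l,\qquad [R,\xi]=\sum_{k,l}(r_k-r_l)E_k\xi E_l .
\]
The blocks $E_k\xi E_l$ are Hilbert--Schmidt orthogonal, and $|h(r_k)-h(r_l)|\le L_h|r_k-r_l|$ because all the eigenvalues lie in the interval. Comparing the two sums block by block gives $\|[h(R),\xi]\|_F\le L_h\|[R,\xi]\|_F$ at once. This is the standard Schur-multiplier argument for Lipschitz functions in the Hilbert--Schmidt norm.

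For part (ii) I will reduce to exponentials. By linearity of $g\mapsto Bg(R)-g(R')B$ in $\hat g$ and the triangle inequality, it suffices to show, for each real $t$,
\[
\bigl\|(Be^{itR}-e^{itR'}B)\xi\bigr\|_F\le|t|\,\|\Sigma\xi\|_F+\tfrac12t^2\|\Sigma\|\sqrt d\,\|[R,\xi]\|_F .
\]
Integrating this against $|\hat g(t)|\,dt$ then yields the claim with constants $M_1$ and $M_2$.

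To get this exponential bound I use Duhamel's formula,
\[
Be^{itR}-e^{itR'}B=\int_0^1\frac{d}{ds}\Bigl(e^{istR'}Be^{i(1-s)tR}\Bigr)ds=it\int_0^1e^{istR'}\,\Sigma\,e^{i(1-s)tR}\,ds .
\]
The derivative of the integrand is $e^{istR'}(itR'B-itBR)e^{i(1-s)tR}=-it\,e^{istR'}\Sigma e^{i(1-s)tR}$, and the sign is irrelevant for the norm. Apply this operator to $\xi$. Since $e^{istR'}$ is unitary it drops out of the Hilbert--Schmidt norm, leaving $\|\Sigma e^{i(1-s)tR}\xi\|_F$. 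I then move the unitary past $\xi$:
\[
\Sigma e^{i(1-s)tR}\xi=\Sigma\,\xi\,e^{i(1-s)tR}+\Sigma\,[e^{i(1-s)tR},\xi].
\]
The first term has norm $\|\Sigma\xi\|_F$. For the second, I bound $\|\Sigma[e^{iuR},\xi]\|_F\le\|\Sigma\|\,\|1\otimes[e^{iuR},\xi]\|_F=\|\Sigma\|\sqrt d\,\|[e^{iuR},\xi]\|_F$, using the convention that bath operators act as $1\otimes X$ with $\|1\otimes X\|_F=\sqrt d\|X\|_F$. Part (i), applied to $h(x)=e^{iux}$ (whose real and imaginary parts together give Lipschitz constant $|u|$, which again follows from the block decomposition), gives $\|[e^{iuR},\xi]\|_F\le|u|\,\|[R,\xi]\|_F$. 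Putting $u=(1-s)t$ and integrating $\int_0^1|t|(1-s)|t|\,ds=t^2/2$ produces the factor $\tfrac12t^2$.

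The only subtle point is the placement of $\xi$. It must sit on the right, next to $e^{itR}$ and not $e^{itR'}$, so that the commutator that appears is $[R,\xi]$ and not $[R',\xi]$. This is exactly why Duhamel's formula is written with $e^{istR'}$ on the left. The hypothesis $\|B\|\le1$ is not needed beyond ensuring that the expressions make sense; the bound depends on $B$ only through $\Sigma$.
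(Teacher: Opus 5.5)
Your proof is correct and follows the paper's argument essentially verbatim. Part (i) is the eigenbasis (Schur-multiplier) comparison. Part (ii) uses Duhamel's formula for $\theta\mapsto e^{i\theta tR'}Be^{i(1-\theta)tR}$, moves $e^{i(1-\theta)tR}$ past $\xi$, applies (i) to $x\mapsto e^{i(1-\theta)tx}$, and integrates against $|\hat g(t)|\,dt$. (Your first Duhamel equality has the endpoints reversed, since $\int_0^1\frac{d}{ds}(\cdots)\,ds=e^{itR'}B-Be^{itR}$, but as you note the sign does not affect the bound.)
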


\begin{proof}
(i) In an eigenbasis of $R$ with eigenvalues $r_k$, the entries of $[h(R),\xi]$ and $[R,\xi]$ are $(h(r_k)-h(r_l))\xi_{kl}$ and $(r_k-r_l)\xi_{kl}$.
(ii) Differentiating $\theta\mapsto e^{i\theta tR'}Be^{i(1-\theta)tR}$ gives
$Be^{itR}-e^{itR'}B=it\int_0^1e^{i\theta tR'}\,\Sigma\,e^{i(1-\theta)tR}\,d\theta$. Write $e^{i(1-\theta)tR}\xi=\xi e^{i(1-\theta)tR}+[e^{i(1-\theta)tR},\xi]$ and use
(i) with $h(x)=e^{i(1-\theta)tx}$; unitaries do not change Hilbert--Schmidt norms, so
$\|(Be^{itR}-e^{itR'}B)\xi\|_F\le|t|\,\|\Sigma\xi\|_F+\frac{t^2}2\|\Sigma\|\sqrt d\,\|[R,\xi]\|_F$. Integrate against $|\hat g(t)|\,dt$.
\end{proof}

We fix the cutoff once and for all. Let $S:\mathbb R\to[0,1]$ be smooth with $S=0$ on $(-\infty,0]$ and $S=1$ on $[1,\infty)$, let $0<s_0\le\frac14$, and put
\[
g(x)=\sin\bigl(\tfrac\pi2S(2x/s_0-1)\bigr)\,S(3-2x),\qquad g_2(x)=\cos\bigl(\tfrac\pi2S(2x/s_0-1)\bigr),\qquad f(x)=\sqrt x\,g(x).
\]
Then $g$ is smooth with compact support, so $M_g:=\int(|t|+t^2)|\hat g(t)|\,dt<\infty$; on $[0,1]$ we have $0\le g\le1$, $g=0$ on $[0,s_0/2]$,
$g=1$ and $g_2=0$ on $[s_0,1]$, and $g^2+g_2^2=1$. Let $L_g$ be a common Lipschitz constant of $g$ and $g_2$ on $[0,1]$, put $\sigma=\sqrt{2/s_0}$ and
\[
K_a=11M_g+1+1.5L_g+2\sigma(4+1.5L_g).
\]
These depend only on $s_0$.

\begin{lemma}[Block extraction]\label{lem:block-extraction}
Let $\Phi$ be a unital channel on $M_d$, and let $\Psi(X)=\mathrm{Tr}_M\,U(X\otimes\rho)U^*$ be a round of $\Theta_+$, with $U$ unitary on
$\mathbb C^2\otimes\mathbb C^d\otimes\mathbb C^M$, $u=\ddia(\Psi,\Theta_+)$ and leakage $\ell$ relative to $\Theta_+$. Put
$\epsilon_\ell=\sqrt{2d\ell}$, $\epsilon_S=\|[U,1\otimes\sqrt\rho\,]\|_F$, $\gamma=\epsilon_\ell+\epsilon_S$ and $\mathcal E=2u+s_0+4K_a\sqrt d\,\gamma$. If $\mathcal E<\frac12$,
there are a unitary $U'$ on $\mathbb C^d\otimes\mathbb C^M$ and a state $\rho'$ on $\mathbb C^M$ such that the round
$\Psi'(X)=\mathrm{Tr}_M\,U'(X\otimes\rho')U'^*$ of $\Phi$, with leakage $\ell_\Phi(\Psi')$ relative to $\Phi$, satisfies
\[
\ddia(\Psi',\Phi)\le\frac{\mathcal E}{\frac12-\mathcal E},\qquad \ell_\Phi(\Psi')\le\frac{4K_a^2\gamma^2}{d(\frac12-\mathcal E)},\qquad
\bigl\|[U',1\otimes\sqrt{\rho'}\,]\bigr\|_F^2\le\frac{8K_a^2\gamma^2}{\frac12-\mathcal E},
\]
and $|\langle x|\Psi'(|x\rangle\langle y|)|y\rangle|\le \mathcal E/(\frac12-\mathcal E)$ for all unit vectors $x,y$ with $\langle x|\Phi(|x\rangle\langle y|)|y\rangle=0$.
\end{lemma}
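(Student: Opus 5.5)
Write $U=\sum_{\alpha,\beta\in\{0,1\}}e_{\alpha\beta}\otimes U_{\alpha\beta}$ in $2\times2$ blocks over the qubit factor, each $U_{\alpha\beta}$ an operator on $\mathbb C^d\otimes\mathbb C^M$, and put $\xi=\sqrt\rho$. The plan is to read everything off the Choi state of $\Psi$, as in the leakage step of Lemma~\ref{lem:weighted-extraction}. There $\mathcal J(\Psi)$ is the marginal of the unit vector $(2d)^{-1/2}\sum|oo'\rangle\otimes A_{oo'}\xi$. By~\eqref{eq:lazy-support}, projecting onto the Choi support of $\Theta_+$ splits each block into a supported part plus a residual, with total weighted residual $2d\ell=\epsilon_\ell^2$. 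The supported parts are as follows. The diagonal blocks $U_{00},U_{11}$ become $1_d\otimes X_0+{}$residual and $1_d\otimes X_1+{}$residual for bath operators $X_0,X_1$. The block $U_{01}$ becomes an element of $\mathcal S_\Phi\otimes M_M$ plus residual. The block $U_{10}$ becomes an element of $\mathcal S_\Phi^*\otimes M_M$ plus residual. The $2$-norms of $\Pi$-orthogonal pieces are controlled because the leakage is a weighted Hilbert--Schmidt mass. The closeness $u=\ddia(\Psi,\Theta_+)$ fixes the weights: tested on inputs $|0\rangle\langle0|\otimes\sigma$, the mass of $U_{01}$ in the weighted norm is about $\tfrac12$, and so is that of the identity component, up to $2u$.

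Next form $R=\mathrm{Tr}$-free bath operator $R$ with $U_{01}^*U_{01}\approx 1\otimes R$ in the $\xi$-weighted norm. I would take $R$ as the bath operator obtained by applying $\tau_d\otimes\mathrm{id}$ to $U_{01}^*U_{01}$. Unitality of $\Phi$, together with the supported form of $U_{01}$, gives $\|(U_{01}^*U_{01}-1\otimes R)\xi\|_F\lesssim\gamma$: the supported part $\sum_gA_g\otimes Y_g$ has $\sum A_g^*A_g=I$ and cross terms $A_g^*A_h$, and Choi-support orthogonality under the weighted trace kills the off-diagonal contributions up to $\epsilon_S$. Then $\Tr(\rho R)\approx\tfrac12$ and $0\le R\le1$. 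Define
\[
U'_{\rm raw}=U_{01}\,\bigl(1\otimes\sigma\,g(R)\bigr)\Big/\!\ \ \text{(suitably normalized)},\qquad \rho'=\frac{g(R)\,\rho\,g(R)}{\Tr(\rho g(R)^2)} .
\]
Here $g$ cuts off the spectrum of $R$ below $s_0$, and $\sqrt x\,g(x)=f(x)$ makes $U_{01}R^{-1/2}g(R)$ a bounded near-isometry on the range of $g(R)$. Complete it to a unitary $U'$ by a polar decomposition of $U_{01}(1\otimes f(R)/R)$ restricted there, and by the identity on the complementary range, where $g_2=1$. The normalization $\Tr(\rho g^2)\ge\tfrac12-\mathcal E$ follows from $\Tr\rho R\approx\frac12$, $R\le1$ and the cutoff costing at most $s_0$. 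This is where the factor $(\tfrac12-\mathcal E)^{-1}$ in all three bounds comes from.

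The estimates then go as follows. The commutator $[U',1\otimes\sqrt{\rho'}]$ is bounded through Lemma~\ref{lem:twisted}. Part (i) controls $[g(R),\xi]$ and $[g_2(R),\xi]$ by $L_g\|[R,\xi]\|_F$, and $\|[R,\xi]\|_F\lesssim\epsilon_S$ follows from $[U,1\otimes\xi]$ blockwise. Part (ii) applies with $B=U_{01}$ and $\Sigma=U_{01}(1\otimes R)-(1\otimes R')U_{01}$, where $R'$ is the bath operator for $U_{01}U_{01}^*$; it moves $g(R)$ across $U_{01}$ at cost $M_g(\|\Sigma\xi\|_F+\sqrt d\,\|[R,\xi]\|_F)$. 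Collecting constants gives $K_a\gamma$ for $\|(U'-\text{ideal})\xi'\|$-type quantities, and hence the leakage and commutator bounds with the $1/d$ conversion $\|1\otimes X\|_F=\sqrt d\|X\|_F$. The diamond bound and the coherence bound come from writing $\Psi'$ as the $01$ block channel of $\Psi$, reweighted by $g(R)$ and divided by $\Tr(\rho g^2)$. The $01$ block of $\Psi$ is within $2u$ of $\frac12\Phi$, the cutoff changes it by at most $s_0$, and the polar correction changes it by $4K_a\sqrt d\gamma$; dividing by the weight $\ge\frac12-\mathcal E$ gives $\mathcal E/(\tfrac12-\mathcal E)$. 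The coherence statement is the same estimate applied entrywise.

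I expect the main obstacle to be the isometry step. It requires showing $\|(U_{01}^*U_{01}-1\otimes R)\xi\|_F=O(\sqrt d\gamma)$ and $\|\Sigma\xi\|_F=O(\sqrt d\gamma)$ with only weighted, not operator-norm, control. The route I would try first is to use column and row unitarity of $U$, as in the anchor step of Lemma~\ref{lem:weighted-extraction}. This gives $U_{00}^*U_{00}+U_{10}^*U_{10}\approx1-U_{01}^*U_{01}$-type identities. The diagonal blocks are nearly $1\otimes X_\alpha$, which forces $U_{01}^*U_{01}$ to be nearly of product form with bath factor $1-X_1^*X_1$. That factor is exactly the $R$ above, and it yields the estimate with the stated constants.
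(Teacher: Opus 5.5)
Your skeleton is the paper's: residuals from leakage, $R=1-\alpha^*\alpha$ from column unitarity $U_{01}^*U_{01}=1-U_{11}^*U_{11}$, then $R'=1-\alpha\alpha^*$, $\Sigma$, Lemma~\ref{lem:twisted} and the cutoff. Your first heuristic for $R$, via unitality, does not work, since the products $A_g^*A_h$ are only Hilbert--Schmidt orthogonal; but your fallback is exactly the paper's route.

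The genuine gap is in the construction of $U'$. A polar part of $U_{01}R^{-1/2}g(R)$ carries $\mathbb C^d\otimes\ker(R-\mu)$ approximately onto $\mathbb C^d\otimes\ker(R'-\mu)$, not onto itself. So it need not nearly commute with any single bath state, and extending it by the identity need not even give a unitary.

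For a counterexample, let $\Phi$ be a unitary channel. Realize $\Theta_+$ with a flat bath by a coin: coin $0$ applies $\mathrm{id}$, and coin $1$ applies the doubled unitary $V$ with a phase qubit, as in the proof of Proposition~\ref{prop:lazy}(c). Then follow this by a bath unitary that flips the coin. The channel is unchanged and $\rho$ is flat, so $u=\ell=\epsilon_S=0$, and the lemma demands $[U',1\otimes\sqrt{\rho'}\,]=0$. But here $R$ and $R'$ are the projections onto coin $1$ and coin $0$, and $U_{01}$ maps coin $1$ into coin $0$. Your $U'$ therefore maps the support of $\rho'$, which lies in coin $1$, into coin $0$. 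So $U'(1\otimes\sqrt{\rho'})$ and $(1\otimes\sqrt{\rho'})U'$ have orthogonal ranges, and $\|[U',1\otimes\sqrt{\rho'}\,]\|_F^2\ge d$.

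What is missing is a realignment of the output bath, the analogue of the normalization step of Lemma~\ref{lem:weighted-extraction}. The paper proceeds in four steps.
\begin{enumerate}
\item It takes the unitary polar part $O$ of $Z_0=U_{01}g(R)\xi$ itself.
\item It shows that $Z_0$ is close both to $V_1f(R)\xi$ and to $\xi f(R')V_1'$. Here $V_1,V_1'$ are the unitary polar parts of $U_{01}R^{-1/2}P$ and $R'^{-1/2}P'U_{01}$, and $P,P'$ are the spectral projections of $R,R'$ for $(s_0/2,\infty)$. This is where $\Sigma$ and Lemma~\ref{lem:twisted}(ii) are needed.
\item It deduces from the Araki--Yamagami inequality that $O$ nearly intertwines $\zeta_1=|f(R)\xi|$ with $\zeta_2=|(\xi f(R'))^*|$.
\item It uses the Hoffman--Wielandt inequality to find a bath unitary $W_h$ carrying an eigenbasis of $\zeta_1$ to one of $\zeta_2$ in the same order, and sets $U'=(1\otimes W_h^*)O$.
\end{enumerate}
In the example, $W_h$ undoes the coin flip.

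Your $\rho'$ also has the wrong weight. The $01$ block reweighted by $g(R)$, namely $w\mapsto\mathrm{Tr}_M[U_{01}(w\otimes g(R)\rho g(R))U_{01}^*]$, has trace about $\Tr(\rho\,g(R)Rg(R))=\Tr(\rho f(R)^2)$. Dividing it by $\Tr(\rho g(R)^2)$ therefore does not give a channel. For $U=2^{-1/2}(1+iV)$ with a flat bath, where $R=\frac12$ and $g(R)=1$, it gives $\frac12\Phi$.

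If instead you pair the isometric part of $U_{01}$ with $g(R)\rho g(R)$, you average the bath sectors with weights $\rho$ rather than $\rho R$. Realize $\Theta_+$ on two equally weighted flat sectors by $\cos\theta_k+i\sin\theta_kV_k$, where $V_k$ is the doubling of a unitary channel $\Phi_k$, $\sin^2\theta_1=\frac14$ and $\sin^2\theta_2=\frac34$. This is $\Theta_+$ for $\Phi=\frac14\Phi_1+\frac34\Phi_2$, with $u=\ell=\epsilon_S=0$ and $g(R)=1$, but your round implements $\frac12\Phi_1+\frac12\Phi_2$.

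The correct state is $\rho'=\zeta_1^2/c_1=\xi f(R)^2\xi/c_1$ with $c_1=\Tr(\rho f(R)^2)$. The bound $c_1\ge\frac12-\mathcal E$ then comes from evaluating at $\tau_d$ the estimate that $w\mapsto\mathrm{Tr}_M[O(w\otimes\zeta_1^2)O^*]$ is within $\mathcal E$ of $\frac12\Phi$ in diamond norm.
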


\begin{proof}
Put $\xi=\sqrt\rho$ and write $U=\sum_{a,b\in\{0,1\}}e_{ab}\otimes U_{ab}$ with $D_0=U_{00}$, $B=U_{01}$, $D_1=U_{11}$, operators on $\mathbb C^d\otimes\mathbb C^M$.

\emph{Leakage bounds the residuals.} As in the proof of Lemma~\ref{lem:weighted-extraction}, $\mathcal J(\Psi)$ is the marginal of the unit vector
$(2d)^{-1/2}U\xi\in M_{2d}\otimes M_M$, so $\ell=(2d)^{-1}\|(\Pi^\perp\otimes\mathrm{id})(U\xi)\|_F^2$, with $\Pi$ the projection of $M_{2d}$ onto the span
of~\eqref{eq:lazy-support}. Put $\alpha=(2d)^{-1}(\Tr\otimes\mathrm{id})(U)$, the average of the $2d$ diagonal entries of $U$ in $M_M$, and
$E_a=D_a-\alpha$, $F=(\Pi_\Phi^\perp\otimes\mathrm{id})(B)$. By~\eqref{eq:lazy-support}, $(\Pi^\perp\otimes\mathrm{id})(U)$ has the diagonal blocks $E_0,E_1$ and
the upper right block $F$; since $\Pi^\perp\otimes\mathrm{id}$ commutes with right multiplication by $\xi$, the numbers $e_a=\|E_a\xi\|_F$ and
$e_F=\|F\xi\|_F$ are at most $\epsilon_\ell$. Also $\|\alpha\|\le1$ and $\|E_a\|\le2$.

\emph{The blocks nearly commute with $\xi$.} Put $k_X=\|[X,\xi]\|_F$ for $X\in\{D_0,B,D_1\}$; these are at most $\epsilon_S$. The diagonal entries of
$U$ are entries of $D_0$ and $D_1$, so by Cauchy--Schwarz
$k_\alpha:=\sqrt d\,\|[\alpha,\xi]\|_F\le(k_{D_0}^2+k_{D_1}^2)^{1/2}/\sqrt2\le\epsilon_S/\sqrt2$. With $R=1-\alpha^*\alpha$ and
$R'=1-\alpha\alpha^*$, both between $0$ and $1$, $k_R:=\sqrt d\,\|[R,\xi]\|_F\le2k_\alpha\le\sqrt2\,\epsilon_S$, and likewise $k_{R'}\le\sqrt2\,\epsilon_S$.

\emph{The unitarity defects are small.} Unitarity gives $B^*B=1-D_1^*D_1$ and $BB^*=1-D_0D_0^*$, hence
\begin{gather*}
\Delta:=R-B^*B=\alpha^*E_1+E_1^*D_1,\qquad\Delta':=R'-BB^*=\alpha E_0^*+E_0D_0^*,\\
\Sigma:=BR-R'B=B\Delta-\Delta'B,
\end{gather*}
with $\|\Delta\|,\|\Delta'\|\le1$ and $\|\Sigma\|\le2$. From $\xi E_1=E_1\xi-[D_1,\xi]+[\alpha,\xi]$ and $D_1\xi=\xi D_1+[D_1,\xi]$,
$\|\Delta\xi\|_F\le e_1+\|\xi E_1\|_F+2k_{D_1}\le2e_1+3k_{D_1}+k_\alpha\le4\gamma$. Symmetrically $\|\Delta'\xi\|_F\le2e_0+3k_{D_0}+k_\alpha\le4\gamma$,
and $\Sigma\xi=B\Delta\xi-\Delta'\xi B-\Delta'[B,\xi]$ gives $\|\Sigma\xi\|_F\le9\gamma$.

\emph{$Z_0$ factors in two ways.} Let $P$ and $P'$ be the spectral projections of $R$ and $R'$ for $(s_0/2,\infty)$. Since $g$ vanishes on $[0,s_0/2]$,
$g(R)=R^{-1/2}Pf(R)$ with $\|R^{-1/2}P\|\le\sigma$, and likewise for $R'$. The operators $B_1=BR^{-1/2}P$ and $B_1'=R'^{-1/2}P'B$ satisfy
\[
B_1^*B_1=P-\tilde\Delta,\quad \tilde\Delta=PR^{-1/2}\Delta R^{-1/2}P,\qquad B_1'B_1'^*=P'-\tilde\Delta',\quad\tilde\Delta'=P'R'^{-1/2}\Delta'R'^{-1/2}P'.
\]
Take polar decompositions $B_1=V_1|B_1|$ and $B_1'=|B_1'^*|V_1'$ with $V_1,V_1'$ unitary. On the range of $1\otimes P$ we have
$|B_1|=1+\varphi(\tilde\Delta)$ with $\varphi(x)=\sqrt{1-x}-1$ and $\tilde\Delta\le1$, and $\varphi(x)^2\le x^2$; so $\|(|B_1|-P)Y\|_F\le\|\tilde\Delta Y\|_F$
for every $Y$ with range in that subspace, and similarly for $B_1'$. Put $Z_0=Bg(R)\xi=B_1f(R)\xi$. Then
$Z_0-V_1f(R)\xi=V_1(|B_1|-P)f(R)\xi$, and $\tilde\Delta f(R)\xi=PR^{-1/2}\Delta g(R)\xi$, while
$\|\Delta g(R)\xi\|_F\le\|\Delta\xi\|_F+\sqrt d\,\|[g(R),\xi]\|_F\le\|\Delta\xi\|_F+L_g\,k_R$ by Lemma~\ref{lem:twisted}(i). Hence
\[
a_1:=\|Z_0-V_1f(R)\xi\|_F\le\sigma(4+1.5L_g)\gamma .
\]
On the other side,
$Z_0=(Bg(R)-g(R')B)\xi+g(R')[B,\xi]+[g(R'),\xi]B+\xi f(R')B_1'$ and $\xi f(R')B_1'=\xi f(R')V_1'+\xi f(R')(|B_1'^*|-P')V_1'$, where
$\|\xi f(R')(|B_1'^*|-P')\|_F\le\|\tilde\Delta'f(R')\xi\|_F\le\sigma(\|\Delta'\xi\|_F+L_g\,k_{R'})$. By Lemma~\ref{lem:twisted}(ii),
$\|(Bg(R)-g(R')B)\xi\|_F\le M_g(\|\Sigma\xi\|_F+k_R)\le11M_g\gamma$. Hence
\[
a_2:=\|Z_0-\xi f(R')V_1'\|_F\le11M_g\gamma+(1+1.5L_g)\gamma+\sigma(4+1.5L_g)\gamma,\qquad a_1+a_2\le K_a\gamma .
\]

\emph{The polar part of $Z_0$ nearly intertwines $\zeta_1$ and $\zeta_2$.} Put $\zeta_1=|f(R)\xi|=(\xi f(R)^2\xi)^{1/2}$ and
$\zeta_2=|(\xi f(R'))^*|=(\xi f(R')^2\xi)^{1/2}$. Then $|V_1f(R)\xi|=\zeta_1$ and $|(\xi f(R')V_1')^*|=\zeta_2$, as operators $1\otimes\zeta_i$. By the
Araki--Yamagami inequality $\||X|-|Y|\|_F\le\sqrt2\,\|X-Y\|_F$~\cite{ArakiYamagami1981}, applied to $Z_0$ and to $Z_0^*$,
$\||Z_0|-\zeta_1\|_F\le\sqrt2\,a_1$ and $\||Z_0^*|-\zeta_2\|_F\le\sqrt2\,a_2$. Let $Z_0=O|Z_0|=|Z_0^*|O$ with $O$ unitary. Then
\[
\|O\zeta_1-\zeta_2O\|_F\le\|O(\zeta_1-|Z_0|)\|_F+\|(|Z_0^*|-\zeta_2)O\|_F\le\sqrt2(a_1+a_2)=:\delta .
\]

\emph{Build the new round.} The Hermitian operators $O(1\otimes\zeta_1)O^*$ and $1\otimes\zeta_2$ are within $\delta$ in $\|\cdot\|_F$, and their
spectra are those of $\zeta_1$ and $\zeta_2$, each repeated $d$ times. By the Hoffman--Wielandt inequality~\cite{HoffmanWielandt1953}, the decreasing
eigenvalue lists $(x_k)$ of $\zeta_1$ and $(y_k)$ of $\zeta_2$ satisfy $d\sum_k(x_k-y_k)^2\le\delta^2$; so the unitary $W_h$ of $\mathbb C^M$ that maps
an eigenbasis of $\zeta_1$ to one of $\zeta_2$ in the same order has $\|1\otimes(W_h\zeta_1W_h^*-\zeta_2)\|_F\le\delta$. Put $U'=(1\otimes W_h^*)O$,
$c_1=\Tr\zeta_1^2=\Tr\rho f(R)^2$ and $\rho'=\zeta_1^2/c_1$. Then
$\|[U',1\otimes\zeta_1]\|_F\le\|O\zeta_1-\zeta_2O\|_F+\|(\zeta_2-W_h\zeta_1W_h^*)O\|_F\le2\delta$, which is the third bound once we know
$c_1\ge\frac12-\mathcal E$. The channel of $(U',\rho')$ is $\Psi'=\Phi_{O\zeta_1}/c_1$, where $\Phi_X(w)=\mathrm{Tr}_M[X(w\otimes1)X^*]$. Its leakage is
$(dc_1)^{-1}\|(\Pi_\Phi^\perp\otimes\mathrm{id})(O\zeta_1)\|_F^2$, and
\[
\|(\Pi_\Phi^\perp\otimes\mathrm{id})(O\zeta_1)\|_F\le\|(\Pi_\Phi^\perp\otimes\mathrm{id})(Z_0)\|_F+\sqrt2\,a_1\le e_F+L_g\,k_R+\sqrt2\,a_1\le2K_a\gamma,
\]
since $(\Pi_\Phi^\perp\otimes\mathrm{id})(Bg(R)\xi)=F\xi\,g(R)+(\Pi_\Phi^\perp\otimes\mathrm{id})(B[g(R),\xi])$. This is the second bound.

\emph{The new round is close to $\Phi$.} For the completely positive maps $\Phi_X$ just defined, $\|\Phi_X\|_\diamond=\|\mathrm{Tr}_M(X^*X)\|$, and
$\|\Phi_X-\Phi_Y\|_\diamond\le\|X-Y\|_F\bigl(\|\mathrm{Tr}_M X^*X\|^{1/2}+\|\mathrm{Tr}_M Y^*Y\|^{1/2}\bigr)$, where $\mathrm{Tr}_M$ traces out the input bath.
Let $Y=B\xi$. The upper left block of $\Psi(e_{11}\otimes w)$ is $\Phi_Y(w)$ and that of $\Theta_+(e_{11}\otimes w)$ is $\frac12\Phi(w)$; compressing to
the block, $\|\Phi_Y-\frac12\Phi\|_\diamond\le2u$. Since $g^2+g_2^2=1$ on $[0,1]$, $\Phi_Y-\Phi_{Z_0}-\Phi_{Bg_2(R)\xi}$ is
$w\mapsto\mathrm{Tr}_M[B(w\otimes X)B^*]$ with $X=\sum_{h\in\{g,g_2\}}h(R)[h(R),\rho]$, and $\|X\|_1\le4L_g\|[R,\xi]\|_F$, so its diamond norm is at most
$4L_g\,k_R/\sqrt d$, by splitting $X$ into positive and negative parts and using $B^*B\le1$. Next,
$\|\Phi_{Bg_2(R)\xi}\|_\diamond\le\Tr\rho\,g_2(R)Rg_2(R)+\|\xi g_2(R)\Delta g_2(R)\xi\|_1\le s_0+\sqrt d(\|\Delta\xi\|_F+L_g\,k_R)$, because $xg_2(x)^2\le s_0$ on
$[0,1]$. And $\mathrm{Tr}_M\zeta_1^2=c_1\le1$ and $\mathrm{Tr}_M Z_0^*Z_0\le1$ give $\|\Phi_{O\zeta_1}-\Phi_{Z_0}\|_\diamond\le2\sqrt2\,a_1$. Adding up,
\[
\bigl\|\Phi_{O\zeta_1}-\tfrac12\Phi\bigr\|_\diamond\le2u+s_0+2\sqrt2\,a_1+\sqrt d(4+1.5L_g)\gamma+6L_g\gamma/\sqrt d\le \mathcal E .
\]
Evaluating at $\tau_d$ gives $|c_1-\frac12|\le \mathcal E$, so $c_1\ge\frac12-\mathcal E$, and $\|\Psi'-\Phi\|_\diamond\le \mathcal E/c_1+|1/(2c_1)-1|\le2\mathcal E/c_1$. If
$\langle x|\Phi(|x\rangle\langle y|)|y\rangle=0$, then $|\langle x|\Phi_{O\zeta_1}(|x\rangle\langle y|)|y\rangle|\le\|\Phi_{O\zeta_1}-\frac12\Phi\|_\diamond\le \mathcal E$, and dividing by
$c_1$ gives the last bound.
\end{proof}

\begin{proof}[Proof of Proposition~\ref{prop:lazy}(b)]
Let a service of $\Theta_+$ for $n$ rounds have error $\epsilon\le\frac1{16}$ and $P+C\le B$ with $\log n\le B$. Theorem~\ref{thm:extraction},
applied to $\Theta_+$ on $M_{1746}$, gives a round with bath $\mathbb C^M$, $M=2^Q$, $u=\ddia(\Psi_t,\Theta_+)\le\frac1{15}$ and
$\nu_t=\ell_t+\Delta S_t\le t_n\le\frac{16}{15}\cdot\frac{B+1+\log n}n\le\frac{3.2\,B}n$, as in the proof of Theorem~\ref{thm:main-channel}. The normalization
step of the proof of Lemma~\ref{lem:weighted-extraction}, which uses only that $U_t$ is a unitary on $\mathbb C^{2d}\otimes\mathbb C^M$, replaces $U_t$ by
$(1\otimes W)U_t$ without changing $\Psi_t$ or its leakage, after which $\epsilon_S^2=\|[U_t,1\otimes\sqrt{\rho_t}\,]\|_F^2\le4(2d)\Delta S_t\ln2$; so
$\gamma^2\le2(2d\ell_t+8d\ln2\,\Delta S_t)\le12d\,\nu_t$. Apply Lemma~\ref{lem:block-extraction} with $s_0=\frac1{100}$ and $d=873$:
$\mathcal E\le\frac2{15}+\frac1{100}+14K_ad\sqrt{\nu_t}\le\frac3{14}$ as soon as $14K_ad\sqrt{3.2B/n}\le0.07$. Then $c_1\ge\frac12-\mathcal E\ge\frac27$, and the anchors $0$
and $z=z(j)$ satisfy $\langle z|\Phi(|z\rangle\langle0|)|0\rangle=\tau(\lambda([j]))=0$, so the round $\Psi'$ satisfies~\eqref{eq:coherence}. By
Lemma~\ref{lem:commutator-form} it may be used in Corollary~\ref{cor:amplification} with
\[
\nu'=\ell_\Phi(\Psi')+\frac{\|[U',1\otimes\sqrt{\rho'}]\|_F^2}{4d\ln2}\le\Bigl(4+\frac2{\ln2}\Bigr)\frac{K_a^2\gamma^2}{d\,c_1}\le290K_a^2\,\nu_t\le928K_a^2\,\frac Bn .
\]
Let $C_R$ and $y_0$ be the constants of Corollary~\ref{cor:amplification}(a) for $R_\Phi$ and the transport of Section~\ref{sec:lamp-channel}, and put
$y_n=n/(928K_a^2C_RB)$, so that $C_R\nu'y_n\le1$ and $y_n\le n$. With
$c_0=\min\{1/(928K_a^2C_Ry_0),\ (0.07/(14K_ad))^2/3.2\}$ and $B\le c_0n$, the corollary gives
\[
Q=\log M\ \ge\ S(\rho')\ \ge\ \frac{y_n}{32(\log y_n)^{2\kappa}}\ \ge\ \frac1{29696\,K_a^2C_R}\cdot\frac n{B(\log n)^{2\kappa}} .
\]
For the second statement take $B=\max(1,A_0)\log n$.
\end{proof}

\begin{proof}[Proof of Proposition~\ref{prop:lazy}(c)]
Let $E$, $K_G$ and a $(1-r^{-1})$-expansive $r^{-1}$-morphism $f:E\to\Sym(\Omega)$ be as in Proposition~\ref{prop:sofic-upper}, with its unitary
$U_f=\sum_ac_aE_a\otimes P_{\pi_a}$ and its set $\Omega_0$ of good points, at which $\mathrm{Tr}_\Omega U_f(w\otimes|\omega\rangle\langle\omega|)U_f^*=\Phi(w)$. Let
$\Omega_1$ be the set of $\omega$ with $(f([x])f([y]))^{-1}(\omega)=f([z])^{-1}(\omega)$ for every triangle $(x,y,z)$ and $f(g)^{-1}(\omega)\ne f(h)^{-1}(\omega)$ for
distinct labels $g,h$. As $U_f^*(\psi\otimes|\omega\rangle)=\sum_a\overline{c_a}E_a^*\psi\otimes|\pi_a^{-1}(\omega)\rangle$, the computation of
Proposition~\ref{prop:sofic-upper} gives $\mathrm{Tr}_\Omega U_f^*(x\otimes|\omega\rangle\langle\omega|)U_f=\Phi^*(x)$ for $\omega\in\Omega_1$. A point outside $\Omega_1$
is the image under the bijection $f([z])$ of a point where $f([x])f([y])$ and $f([z])$ differ, or under $f(g)$ of a point where $f(g)$ and $f(h)$
agree; so $|\Omega\setminus\Omega_1|\le K_GM/r$, and $\Omega_*=\Omega_0\cap\Omega_1$ has $|\Omega_*|\ge(1-2K_G/r)M$.

Let the bath be a coin qubit, a phase qubit carrying $z=\mathrm{diag}(1,i)$, and $\mathbb C^\Omega$, and put
$W_f=|0\rangle\langle0|\otimes1+|1\rangle\langle1|\otimes V_f$ with $V_f=\bigl(\begin{smallmatrix}0&U_f\otimes z\\U_f^*\otimes z^*&0\end{smallmatrix}\bigr)$. For every
state $\beta$ that is a mixture of the $|\omega\rangle\langle\omega|$, $\omega\in\Omega_*$, the bath state $\tau_2\otimes\tau_2\otimes\beta$ makes $W_f$ apply
exactly $\Theta_+$. The coin splits the channel into $\frac12\mathrm{id}$ and $\frac12$ the channel of $V_f$, the phase qubit removes the off-diagonal
blocks since $\tau(z^2)=0$, and the diagonal blocks are $\Phi(w)$ and $\Phi^*(x)$ at the points of $\Omega_*$. Now repeat the exact service of
Proposition~\ref{prop:sofic-upper}(c) with $K_G$ replaced by $2K_G$ and two more qubits, which are imported maximally mixed. With
$\eta=B/(4n)$ and $r=2K_G/\eta$ the imported register is maximally mixed on $\mathrm{span}\{|\omega\rangle:\omega\in\Omega_*\}\otimes\mathbb C^{k_0}$ and on the
two qubits, its deficit is below $B/n$, every round applies exactly $\Theta_+$, and $Q=\lceil\log\sigma_E(8K_Gn/B)+\log(4n/B)\rceil+2$. The bound
$\log\sigma_E(R)\le C_ER\log R$ of Theorem~\ref{thm:main-group}(b) gives $Q\le K(n/B)\log(2n/B)$. With $\beta$ maximally mixed on all of $\Omega$,
the same unitary gives a finite tracial factorization of $(1-\vartheta)\Theta_++\vartheta\,\Upsilon$ for some channel $\Upsilon$ and some $\vartheta\le2K_G/r$, for every $r$; hence
$\Theta_+\in\cK_{1746}$.
\end{proof}

\begin{remark}[Exact finite factorizations of $\Theta_+$]\label{rem:lazy-exact}
$\Theta_+$ is an exact autonomous value but not an exact finite factor. Indeed, if $\Theta_+(X)=(\mathrm{id}\otimes\tau_m)[W(X\otimes1)W^*]$ for a unitary
$W$ on $\mathbb C^{2d}\otimes\mathbb C^m$, the proof of Lemma~\ref{lem:block-extraction} with $\rho=\tau_m$ has $\ell=0$ and $\epsilon_S=0$: the diagonal blocks of
$W$ are both $1\otimes\alpha$, its upper right block $B$ lies in $\mathcal S_\Phi\otimes M_m$, $B^*B=1\otimes R$ and $BB^*=1\otimes R'$. Then
$B(1\otimes R)=(1\otimes R')B$, so for every eigenvalue $\mu>0$ of $R$ the operator $\mu^{-1/2}B$ maps $\mathbb C^d\otimes\ker(R-\mu)$ unitarily onto
$\mathbb C^d\otimes\ker(R'-\mu)$, and the upper left block of $\Theta_+(e_{11}\otimes w)$ writes $\Phi$ as the convex combination
$\sum_\mu\frac{2\mu m_\mu}m\,\Phi_\mu$ of exact factors with baths of dimension $m_\mu=\dim\ker(R-\mu)\le m$. For the channel of
Theorem~\ref{thm:main-channel} this is impossible: such a combination is a single round with $\ell=\Delta S=0$, a direct sum of flat baths, and
Theorem~\ref{thm:weighted} with $\nu=0$ would give it entropy at least $N/16$ for every $N$.
\end{remark}

\begin{remark}[A lower exchange]\label{rem:lazy-exchange}
At an anchor $z$ of the gadget, $\Phi(|z\rangle\langle z|)=|z\rangle\langle z|$, so $\Theta_+$ maps $|1\rangle\langle1|\otimes|z\rangle\langle z|$ to the flat state
$\frac12(|0\rangle\langle0|+|1\rangle\langle1|)\otimes|z\rangle\langle z|$ of rank two. Proposition~\ref{prop:purity-budget} with $r=2$ therefore gives $P+C<3Q+6$ for
services of $\Theta_+$ with error at most $1/16$ and exchange $J\le n/2$. Proposition~\ref{prop:lazy}(b), used as in the proof of
Theorem~\ref{thm:main-channel}(b) with $B=\max(3Q+6,\log n)$, then gives $Q\ge c\sqrt n/(\log n)^\kappa$ for such services, whatever their purity. If
$B>c_0n$, then already $Q\ge(c_0n-6)/3$.
\end{remark}

\begin{proof}[Proof of Proposition~\ref{prop:compositions}]
Let $\mathcal F$ be the set of labels, and write $A_g=\sum_{[w_a]=g}c_aE_a$, so that $\Phi_G\Phi_G^*(X)=\sum_{g,h}A_gA_h^*XA_hA_g^*$. A product $A_gA_h^*$
is a sum of terms $c_a\overline{c_b}E_aE_b^*$ over slots $a,b$ in a common column, so it vanishes unless $g$ and $h$ label two slots of one column. In a
triangle block with labels $1,y$ in the first row and $x,z$ in the second, where $z=xy$, the columns carry the labels $\{1,x\}$ and $\{y,z\}$; so
$A_gA_h^*\ne0$ forces $hg^{-1}\in K=\{1\}\cup\{x^{\pm1}\}$, the union over the triangles. For $k\in K$ let $\pi_k$ be a permutation of $\mathcal F$ that
extends the partial map $g\mapsto kg$, defined when $g,kg\in\mathcal F$, with $\pi_1=\mathrm{id}$ and $\pi_{k^{-1}}=\pi_k^{-1}$. For $k\ne k^{-1}$ extend
one of the two maps and invert it; for an involution $k$ the partial map is an involution on its domain, which we extend by the identity.

The bath is $r_*+|K|-1\le3r_*-2$ qubits. On the first $r_*$, indexed by $\mathcal F$, let $V_g$ be the Pauli matrix $Z$ on the qubit $g$, so that
$\tau(V_gV_h)=\delta_{gh}$, and let $P_k$ permute these qubits by $\pi_k$, so that $P_kV_gP_k^*=V_{kg}$ whenever $g,kg\in\mathcal F$. On the other
qubits choose diagonal unitaries $\chi_k$ with $\chi_1=I$, $\chi_{k^{-1}}=\chi_k^*$ and $\tau(\chi_k^*\chi_l)=\delta_{kl}$: for an involution, $Z$ on one
qubit, and for a pair $k\ne k^{-1}$, $\mathrm{diag}(1,i,-1,-i)$ and its adjoint on two qubits, different pairs on different qubits. Put
$R_k=P_k\otimes\chi_k$, so that $R_{k^{-1}}=R_k^*$ and $R_kV_gR_k^*=V_{kg}$ for $g,kg\in\mathcal F$, and put $C_{g,h}=V_gR_{hg^{-1}}^*$. These are
orthonormal for the normalized trace: if $hg^{-1}\ne h'g'^{-1}$ the factor $\chi$ has trace zero, and otherwise
$\tau(C_{g,h}^*C_{g',h'})=\tau(R_kV_gV_{g'}R_k^*)=\delta_{gg'}$, which then forces $h=h'$. In the same way
$\tau(C_{g,h}C_{g',h'}^*)=\tau(V_gV_{g'})=\delta_{gg'}$ when $hg^{-1}=h'g'^{-1}$, and it vanishes otherwise.

Let $W=\sum A_gA_h^*\otimes C_{g,h}$, summed over the pairs with $hg^{-1}\in K$; the other products $A_gA_h^*$ vanish. On the block of a symbol
with label $g$, including the first block, where $g=1$, it is $V_g$. On a triangle block, with $H$ the Hadamard matrix and using
$R_xV_1R_x^*=V_x$ and $R_xV_yR_x^*=V_z$,
\[
W=\frac12\begin{pmatrix}V_1+V_y&(V_1-V_y)R_x^*\\ R_x(V_1-V_y)&V_x+V_z\end{pmatrix}
=\begin{pmatrix}I&0\\0&R_x\end{pmatrix}(H\otimes I)\begin{pmatrix}V_1&0\\0&V_y\end{pmatrix}(H\otimes I)\begin{pmatrix}I&0\\0&R_x^*\end{pmatrix},
\]
a self-adjoint unitary; triangles with coinciding labels are covered by the same computation. So $W$ is unitary, and by orthonormality of the
$C_{g,h}$, $(\mathrm{id}\otimes\tau)[W(X\otimes I)W^*]=\sum_{g,h}A_gA_h^*XA_hA_g^*=\Phi_G\Phi_G^*(X)$. For $\Phi_G^*\Phi_G$ use right translations
$g\mapsto gk$ with $k\in\{1\}\cup\{y^{\pm1}\}$, the operators $C_{g,h}=V_gR_{g^{-1}h}^*$, and rows instead of columns; on a triangle block the rows carry
$\{1,y\}$ and $\{x,z\}$, and the same computation applies with $x$ and $y$ exchanged. For $\Theta^2$, one further maximally mixed qubit dephases the
two blocks, and controlled on the block the two factorizations act on a common bath. Refreshing the whole bath in every round gives the services.
\end{proof}
\section{Finer constraints on autonomous limits}\label{sec:physics-finer}

This appendix complements Section~\ref{sec:physics-autonomous}: a return inequality that autonomous limits obey, an exact finite
factor that it excludes, and the cheapness of adjoint compositions of gadget channels.

The limit $\overline\Phi$ of Proposition~\ref{prop:autonomous} is more than self-adjoint and positive. It is a compression of a conditional
expectation, and this ties the decay of a coherence to the return probability of a state.

\begin{proposition}[A return inequality]\label{prop:return}
Let $T=E\circ E_{\mathcal N}\circ\iota$ be as in Proposition~\ref{prop:autonomous}, on $M_D$. If $T(e_{00})=e_{00}$ and $T(e_{0j})=a\,e_{0j}$ for some
$j\ne0$ and $0\le a\le1$, then
\[
\langle j|T(e_{jj})|j\rangle\ \ge\ a^2+\frac{(1-a)^2}{D-1}.
\]
\end{proposition}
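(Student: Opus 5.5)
The plan is to work in $L^2(\mathcal M,\tau)$ as in the proof of Proposition~\ref{prop:autonomous}. There $T=\iota^*P\iota$, with $P$ the orthogonal projection onto $L^2(\mathcal N)$, which restricts to $E_{\mathcal N}$ on $\mathcal M$. Put $p=e_{00}\otimes1$ and $x=e_{0j}\otimes1$, so that $x^*x=e_{jj}\otimes1$. The hypotheses become statements about $E_{\mathcal N}$. From $T(e_{00})=e_{00}$ we get $\langle\iota e_{00},P\iota e_{00}\rangle=\|\iota e_{00}\|^2$, hence $P\iota e_{00}=\iota e_{00}$; so $p\in\mathcal N$ is a projection with $\tau(p)=1/D$. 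From $T(e_{0j})=a\,e_{0j}$ we get $\|E_{\mathcal N}(x)\|_2^2=\langle x,Px\rangle=a/D$. The quantity to bound is $\langle j|T(e_{jj})|j\rangle=D\,\mathrm{tr}_d(e_{jj}T(e_{jj}))=D\,\tau(q^2)$, where $q=E_{\mathcal N}(e_{jj}\otimes1)$.

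Put $v=E_{\mathcal N}(x)$ and $w=v^*v$. I will establish four facts.
\begin{enumerate}[label=(\roman*),itemsep=1pt]
\item Since $x=px$ and $xp=0$, the bimodule property gives $v=pv$ and $vp=0$.
\item Kadison--Schwarz gives $q\ge w$. Together with $vv^*\le E_{\mathcal N}(xx^*)=E_{\mathcal N}(p)=p$, this shows that the support of $vv^*$ lies under $p$. Since $e_{jj}\le1-e_{00}$ and $1-p\in\mathcal N$, we have $q\le1-p$, and also $w\le1-p$.
\item The traces are $\tau(w)=a/D$ and $\tau(q)=\tau(e_{jj}\otimes1)=1/D$.
\item In the finite algebra $\mathcal M$, the support projections of $v^*v$ and $vv^*$ are equivalent, so the support of $w$ has trace at most $\tau(p)=1/D$.
\end{enumerate}

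Now write $q=w+r$ with $r\ge0$. Then $r\le q\le1-p$, so $r$ is supported under $1-p$, and $\tau(r)=(1-a)/D$. Expanding gives
\[
\tau(q^2)=\tau(w^2)+2\tau(w^{1/2}rw^{1/2})+\tau(r^2)\ \ge\ \tau(w^2)+\tau(r^2).
\]
The Cauchy--Schwarz inequality $\tau(y^2)\ge\tau(y)^2/\tau(\mathrm{supp}\,y)$ for $y\ge0$ then gives two bounds. For $w$ it gives $\tau(w^2)\ge(a/D)^2\,D=a^2/D$. For $r$, whose support has trace at most $\tau(1-p)=(D-1)/D$, it gives $\tau(r^2)\ge(1-a)^2/(D(D-1))$. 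Multiplying by $D$ gives the claimed inequality.

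The step I expect to need the most care is the support estimate for $w$. The naive bound $w\le1$ only yields $\tau(w^2)\le\tau(w)$, which points the wrong way. The bound $\tau(w^2)\ge a^2/D$ needs both $vv^*\le p$, which comes from Kadison--Schwarz applied to $x^*$, and the transfer from $vv^*$ to $v^*v$ by equivalence of support projections in a finite von Neumann algebra. I would check these two facts before the rest.
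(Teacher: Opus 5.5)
Your proof is correct and follows the paper's argument essentially step for step: the same splitting of $E_{\mathcal N}(e_{jj}\otimes1)$ as $v^*v$ plus a positive remainder supported under $1-p$, Kadison--Schwarz, $\tau(v^*v\,r)\ge0$, and two Cauchy--Schwarz estimates over supports of trace $1/D$ and $(D-1)/D$. The only difference is the step you flagged. The paper avoids equivalence of support projections by writing $\tau((v^*v)^2)=\tau((vv^*)^2)$ and applying Cauchy--Schwarz to $vv^*$, which lies under $p$ already because $v=pv$.
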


\begin{proof}
Write $P=E_{\mathcal N}$, the orthogonal projection onto $L^2(\mathcal N)$, and identify $x\in M_D$ with $\iota(x)$; then
$\mathrm{tr}_D(x^*T(y))=\langle x,Py\rangle$ with $\langle x,y\rangle=\tau(x^*y)$. Put $p_0=e_{00}$, $p_j=e_{jj}$ and $x=e_{0j}$. Since
$\|Pp_0\|_2^2=\langle p_0,T(p_0)\rangle=\|p_0\|_2^2$, we have $Pp_0=p_0$, so $p_0\in\mathcal N$. Let $y=P(x)$, $X=P(p_j)$ and $B=y^*y$. As $P$ is
$\mathcal N$-bimodular, $p_0y=P(p_0x)=y$, $yp_0=P(xp_0)=0$ and $p_0X=P(p_0p_j)=0$; so $yy^*$ is supported in $p_0$, and $B$ and $X$ in $1-p_0$. The
Schwarz inequality for conditional expectations gives $B\le P(x^*x)=X$. Moreover $\tau(X)=\tau(p_j)=1/D$ and $\tau(B)=\|Px\|_2^2=\langle x,T(x)\rangle=a/D$,
so $R=X-B\ge0$ has $\tau(R)=(1-a)/D$. By the Cauchy--Schwarz inequality, $\tau(B^2)=\tau((yy^*)^2)\ge\tau(yy^*)^2/\tau(p_0)=a^2/D$ and
$\tau(R^2)\ge\tau(R)^2/\tau(1-p_0)=(1-a)^2/(D(D-1))$, and $\tau(BR)\ge0$. Hence
\[
\frac1D\,\langle j|T(e_{jj})|j\rangle=\langle p_j,Pp_j\rangle=\tau(X^2)=\tau(B^2)+2\tau(BR)+\tau(R^2)\ \ge\ \frac{a^2}D+\frac{(1-a)^2}{D(D-1)} .\qedhere
\]
\end{proof}

\begin{remark}[Self-adjoint and positive is not enough]\label{rem:qutrit}
Haagerup and Musat give self-adjoint positive maps that are not such compressions because they are not
factorizable~\cite[Theorem~5.4]{HaagerupMusat2011}. One exact finite factor is excluded all the same. Let $a=\sqrt3/2$ and $b=\frac12$, and let
$\Lambda$ on $M_3$ have the Kraus operators $\mathrm{diag}(1,a,a)$, $b\,e_{12}$ and $-b\,e_{21}$. With
$\omega=e^{2\pi i/3}$ and $W_k=1\oplus\bigl(\begin{smallmatrix}a&b\omega^k\\-b\omega^{-k}&a\end{smallmatrix}\bigr)$ one has
$\Lambda=\frac13\sum_{k=0}^2\mathrm{Ad}\,W_k$, so $\Lambda$ has a finite tracial factorization with bath dimension three. It is self-adjoint, and positive
for the Hilbert--Schmidt inner product, with eigenvalues between $\frac12$ and $1$. But $\Lambda(e_{00})=e_{00}$, $\Lambda(e_{01})=a\,e_{01}$ and
$\langle1|\Lambda(e_{11})|1\rangle=a^2$, which violates Proposition~\ref{prop:return}. Hence $\Lambda$ is not the long-time limit of any autonomous
dynamics with a tracial bath.
\end{remark}

Composing a gadget channel with its adjoint erases the multiplication of the group. In a triangle block the unitarity of the gadget forces the
coefficient of the product label to be the product of the coefficients, and this is what the one-round theorem reads. In $\Phi\Phi^*$ a triangle only
requires a permutation $\pi_x$ of the labels that carries $1$ to $x$ and $y$ to $xy$, and no relation between the permutations $\pi_x$ for
different $x$ is needed.

\begin{proposition}[Adjoint compositions of gadgets are exact finite factors]\label{prop:compositions}
Let $\Phi_G$ be a gadget channel with $r_*$ labels. Then $\Phi_G\Phi_G^*$ and $\Phi_G^*\Phi_G$ have finite tracial factorizations with baths of at
most $3r_*-2$ qubits. Consequently, for every $n$, they have exact causal services with $Q\le3r_*-2$, $P=C=0$ and $J=(n-1)Q$, and so does
$\Theta^2$ in Remark~\ref{rem:doubling}, with one more qubit.
\end{proposition}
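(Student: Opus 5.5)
The plan is to exhibit an explicit unitary on $\mathbb C^d\otimes(\mathbb C^2)^{\otimes q}$, $q\le3r_*-2$, whose tracial channel is exactly $\Phi_G\Phi_G^*$, and then read off the services. I would start from the Kraus form $\Phi_G(X)=\sum_gA_gXA_g^*$ over the labels $g$, with $A_g=\sum_{[w_a]=g}c_aE_a$, so that $\Phi_G\Phi_G^*(X)=\sum_{g,h}A_gA_h^*XA_hA_g^*$. The first step is a support count: $A_gA_h^*$ is a sum of products $E_aE_b^*$ of slots in a common column. Every column of the gadget lies either in a diagonal symbol block, which carries one label, or in a triangle block. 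In a triangle block the columns carry $\{1,x\}$ and $\{y,xy\}$. So $A_gA_h^*\ne0$ forces $hg^{-1}\in K:=\{1\}\cup\{x^{\pm1}\}$, with $x$ ranging over the first entries of the triangles, and $|K|\le2r_*-1$ or so.

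Next I would build the bath operators. On $r_*$ qubits indexed by the labels, put $V_g=Z$ on qubit $g$, so that $\tau(V_gV_h)=\delta_{gh}$. For each $k\in K$, choose a permutation $\pi_k$ of the label set extending the partial map $g\mapsto kg$, with $\pi_{k^{-1}}=\pi_k^{-1}$. For an involution $k$, extend by the identity, which keeps it an involution. Let $P_k$ permute the qubits accordingly. On extra qubits, choose diagonal unitaries $\chi_k$ with $\chi_{k^{-1}}=\chi_k^*$ and $\tau(\chi_k^*\chi_l)=\delta_{kl}$: a $Z$ for an involution, and $\mathrm{diag}(1,i,-1,-i)$ on two qubits for an inverse pair. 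This keeps the total at $r_*+|K|-1$ qubits. Put $R_k=P_k\otimes\chi_k$ and $C_{g,h}=V_gR^*_{hg^{-1}}$. I would then check that the $C_{g,h}$ are orthonormal in the normalized trace. I would also check the second-moment identity $\tau(C_{g,h}C^*_{g',h'})=\delta_{(g,h),(g',h')}$ on the relevant pairs. With both, the candidate $W=\sum A_gA_h^*\otimes C_{g,h}$ gives $(\mathrm{id}\otimes\tau)[W(X\otimes I)W^*]=\Phi_G\Phi_G^*(X)$.

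The main obstacle is unitarity of $W$, since nothing forces the $R_x$ for different $x$ to satisfy any relation. The plan is to verify it block by block, which works because $W$ is block diagonal with respect to the gadget's blocks. On a symbol block, $W$ is just $V_g$. On a triangle block $(x,y,z)$, I would compute the four entries and use $R_xV_1R_x^*=V_x$ and $R_xV_yR_x^*=V_z$. These are exactly the points where $\pi_x$ is pinned down. The expected result is the factorization $\mathrm{diag}(I,R_x)(H\otimes I)\mathrm{diag}(V_1,V_y)(H\otimes I)\mathrm{diag}(I,R_x^*)$, a product of unitaries. The delicate point is degenerate triangles with coinciding labels, which I expect the same formula to cover.

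For $\Phi_G^*\Phi_G$, I would repeat the construction with right translations $g\mapsto gk$, $k\in\{1\}\cup\{y^{\pm1}\}$, and with rows in place of columns. For the services, refresh the whole bath of $Q\le3r_*-2$ maximally mixed qubits each round, as in the proof of Proposition~\ref{prop:sofic-upper}(a) with $\vartheta=0$. Every round then applies the channel exactly, with $P=C=0$ and $J=(n-1)Q$. For $\Theta^2=\mathrm{diag}(\Phi\Phi^*,\Phi^*\Phi)$, add one maximally mixed qubit that dephases the two diagonal blocks, and run the two factorizations controlled on the block over a common bath.
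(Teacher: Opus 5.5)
Your proposal is correct and follows the paper's proof essentially step by step. It uses the same column-support count producing $K$, the same bath of $r_*+|K|-1$ qubits with $V_g$, $R_k=P_k\otimes\chi_k$ and $C_{g,h}=V_gR^*_{hg^{-1}}$, the same Hadamard factorization $\mathrm{diag}(I,R_x)(H\otimes I)\mathrm{diag}(V_1,V_y)(H\otimes I)\mathrm{diag}(I,R_x^*)$ of each triangle block, and the same right-translation, refreshing-bath and dephasing-qubit steps. Your hedge ``$|K|\le2r_*-1$ or so'' is exact: the first entries of the triangles have labels in the label set, which contains $1$, so $|K|\le1+2(r_*-1)$.
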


The proof is in Appendix~\ref{app:lazy}. The exchange $(n-1)Q$ of these services is far above the rank rate, as for the services of
Proposition~\ref{prop:sofic-upper}. At its own rank rate $\Phi\Phi^*$ is cheap too: as an exact finite factor with a bath of at most $3r_*-2$
qubits, it has services with $P=C=0$ and memory $O(r_*+\log(n/\epsilon))$ by~\cite[Theorem~8.6]{DouglasMghirbiB}. An autonomous dynamics with a tracial bath therefore cannot be expensive through $\Phi\Phi^*$. It can be through a lazy version
of the doubling (Section~\ref{sec:physics-lazy}).
\section{Hyperlinearity transfers from $F$ to $\mathcal L$}\label{app:hyperlinear}

Proposition~\ref{prop:thompson-hyperlinear} states that $\mathcal L=\bigl(\bigoplus_{\mathbb D}S_3\bigr)\rtimes F$ is hyperlinear exactly when
$F$ is. This appendix proves it by embedding $L(\mathcal L)$ into an ultrapower $L(F)^\omega$: a copy of $S_3$ is built from a localized
sequence of $\mathrm{II}_1$ subfactors that $F_{1/2}$ asymptotically fixes. Nothing else in the paper depends on this appendix.

\begin{proof}[Proof of Proposition~\ref{prop:thompson-hyperlinear}]
Subgroups of hyperlinear groups are hyperlinear, so suppose that $F$ is hyperlinear. A countable group $\Gamma$ is hyperlinear if and only if
its von Neumann algebra $L(\Gamma)$ has a trace-preserving embedding into an ultrapower $\mathcal R^\omega$ of the hyperfinite
$\mathrm{II}_1$ factor~\cite{CapraroLupini2015}. Then every separable von Neumann subalgebra of $L(\Gamma)^\omega$ embeds in
$\mathcal R^\omega$ as well, by a diagonal argument. It therefore suffices to embed $L(\mathcal L)$ into $L(F)^\omega$ preserving the trace,
for a free ultrafilter $\omega$ on $\mathbb N$. We write $\tau$ for the canonical traces and $F_J$ for the subgroup of elements supported in an
interval $J$. If $J,J'$ are disjoint, $L(F_J)$ and $L(F_{J'})$ commute and $\tau(xy)=\tau(x)\tau(y)$ for $x\in L(F_J)$ and $y\in L(F_{J'})$,
since $F_J$ and $F_{J'}$ generate their direct product. If $f,f'\in F$ agree on $J$, then $\lambda(f)x\lambda(f)^{-1}=\lambda(f')x\lambda(f')^{-1}$
for $x\in L(F_J)$, because $f'^{-1}f$ fixes $J$ pointwise and therefore commutes with $F_J$.

\emph{A localized copy of $S_3$.} Choose $\sigma\in F$ supported in $[\frac12,\frac34]$, with $\sigma(x)<x$ on $(\frac12,\frac34)$ and
$\sigma(x)=\frac12+\frac12(x-\frac12)$ near $\frac12$, and a dyadic $x_0\in(\frac12,\frac34)$. The intervals
$J_k=\sigma^k((\sigma(x_0),x_0))$, $k\in\mathbb Z$, are disjoint and accumulate at $\frac12$ as $k\to\infty$. For a nonempty open interval $J$ with
dyadic endpoints, $F_J\cong F$ has infinite conjugacy classes, since a nontrivial element has conjugates with infinitely many different
supports, so $L(F_J)$ is a $\mathrm{II}_1$ factor. In $L(F_{J_0})$ choose a unital copy of $M_6$ and, in its relative commutant, which is again a
$\mathrm{II}_1$ factor, a self-adjoint $Z_0$ with uniform distribution on $[0,1]$. Let $v_0:S_3\to U(M_6)$ be the regular representation, so
that $\tau(v_0(g))=0$ for $g\ne1$, and put $v_k=\lambda(\sigma)^kv_0\lambda(\sigma)^{-k}$ and $Z_k=\lambda(\sigma)^kZ_0\lambda(\sigma)^{-k}$,
which lie in $L(F_{J_k})$. By the factorization of traces, the $Z_k$ are independent and uniform. For a finite $K\subset\mathbb Z$ and $k\in K$,
let $p^K_k$ be the spectral projection of the event that $Z_k$ is the largest of the $Z_j$, $j\in K$. Ties have trace zero, so these projections
are orthogonal and sum to $1$, and they commute with every $v_j(g)$. Hence
\[
V_K(g)=\sum_{k\in K}p^K_kv_k(g)
\]
is a representation of $S_3$. Since $v_k(g)$ lies in a copy of $M_6$ whose relative commutant contains $Z_k$, and the other $Z_j$ lie in
algebras with factorized trace, $\tau(p^K_kv_k(g))=\tau(p^K_k)\tau(v_0(g))$; so $\tau(V_K(g))=0$ for $g\ne1$. If $K$ is an interval of $n$ integers and $|q|\le n$,
the two selected indices agree unless the largest $Z_j$ over $K\cup(K+q)$ has its index outside $K\cap(K+q)$, an event of trace
$2|q|/(n+|q|)$. On its complement $V_K(g)=V_{K+q}(g)$. Hence $\|V_{K+q}(g)-V_K(g)\|_2^2\le8|q|/(n+|q|)$.

\emph{Invariance.} Put $V_n=V_{\{n,\dots,2n-1\}}$, and let $h\in F_{1/2}$. Near $\frac12$ on the right, $h(x)=\frac12+2^m(x-\frac12)$ for an
integer $m$. Since $\sigma$ has slope $\frac12$ on the right of $\frac12$, $h$ agrees with $\sigma^{-m}$ on $J_k$ for all large $k$. Conjugation by $\lambda(h)$ therefore carries $v_k$ and $Z_k$ to
$v_{k-m}$ and $Z_{k-m}$ for large $k$, and $V_n$ to $V_{\{n,\dots,2n-1\}-m}$ for large $n$, so $\|\lambda(h)V_n(g)\lambda(h)^{-1}-V_n(g)\|_2\to0$.
Thus $V(g)=(V_n(g))_\omega$ is a representation of $S_3$ in $L(F)^\omega$ with $\tau(V(g))=0$ for $g\ne1$, and it commutes with $\lambda(F_{1/2})$.

\emph{The embedding.} For $x\in\mathbb D$ choose $f_x\in F$ with $f_x(\frac12)=x$ and put $V_x=\lambda(f_x)V\lambda(f_x)^{-1}$. By the
invariance of $V$ under $\lambda(F_{1/2})$, $V_x$ does not depend on the choice of $f_x$, and $\lambda(f)V_x\lambda(f)^{-1}=V_{f(x)}$ for $f\in F$.
The representatives $V_n(g)$ lie in $L(F_{I_n})$, where $I_n=\bigcup_{n\le k<2n}J_k$ shrinks to $\frac12$ from the right, so those of $V_x$ lie
in $L(F_{f_x(I_n)})$. For $x\ne y$ the intervals $f_x(I_n)$ and $f_y(I_n)$ are disjoint for large $n$. So copies at distinct points commute, and
the $V_x$ together with $\lambda$ define a homomorphism $\pi:\mathcal L\to U(L(F)^\omega)$. Let $\gamma=\bigl(\prod_{x\in X}\ell_x\bigr)f$, with
$X$ finite, $\ell_x$ in the copy of $S_3$ at $x$ and $f\in F$. If $f=1$ and some $\ell_x\ne1$, the factors lie in commuting algebras with
factorized trace for large $n$, so $\tau(\pi(\gamma))=\prod_x\tau(V(\ell_x))=0$. If $f\ne1$, choose a point moved by $f$ outside $X$. For large
$n$ it lies outside the closure of every $f_x(I_n)$, so $f$ is not in the group $K_n$ generated by the $F_{f_x(I_n)}$. Then $\tau(a\lambda(f))=0$
for every $a\in L(K_n)$, in particular for the representatives of $\prod_x\pi(\ell_x)$. Hence $\tau(\pi(\gamma))=0$ whenever $\gamma\ne1$, and
$\pi$ extends to a trace-preserving embedding of $L(\mathcal L)$ into $L(F)^\omega$.
\end{proof}
\section{The grid host}\label{sec:grid}

This appendix is a sketch: its conclusions are conditional on proofs that we only outline, and they are stated in the indicative with that
understanding. Nothing elsewhere in the paper depends on them. Placing the lamps on configurations of the cells of a grid, rather than of the
points of $Y$, gives a stronger lower bound with the same upper bound. It also gives a nearly linear exponent $r^{1-o(1)}$ from Lee's
exponential Dehn bound alone.

\begin{remark}[The grid host]\label{rem:grid}
Let $I=Y\times Y$, let $T=H\times H$ act on $I$ coordinatewise, and let
\[
\Gamma_2=\Bigl(\bigoplus_{v\in V_2}S_3\Bigr)\rtimes\bigl(V_2\rtimes(\GL_{\rm fin}(I)\rtimes T)\bigr),\qquad V_2=\F_2^{(I)} .
\]
It is locally finite-by-$\Z^4$ and generated by seven elements: the generators $p_1,q_1,p_2,q_2$ of the two copies of $H$, the translation $t$
by the cell $o=((1,1),(1,1))$, the element $c=d\,ab$, where $d$ is the transvection that adds $e_o$ to $e_{((1,2),(1,2))}$, and $a$.
Transvections within a row or a column are not needed as generators, since commutators $[E_{x\leftarrow y},E_{y\leftarrow z}]=E_{x\leftarrow z}$
of transvections of the third kind produce them. A certificate of $77$ relations, of total length $5445$ and maximum length $224$, is listed in
\texttt{anc/relators.txt} and checked by \texttt{anc/relators\_check.py}. It has the families of Table~\ref{tab:certificate},
with two copies of the Houghton relators and their commutation, and with stabilizer and action relations for three kinds of transvection: within
a row, within a column, and neither.

Transport now happens at scale $\sqrt N$. $N$ cells fit in a square of side $\lceil\sqrt N\,\rceil$, so the transporters have length $O(\sqrt N)$,
and the local fillings cost $C(\delta_H(C\sqrt N)+N)$; the term $N$ counts the swaps of letters from the two copies of $H$ in a change of
transporter. The clearing argument of Theorem~\ref{thm:pair-area} then bounds the pair areas by $CN^2(\delta_H(C\sqrt N)+N)$, which is at most
$CN^{2+p_H/2}$ by Proposition~\ref{prop:dehn}(b) and at most $e^{C\sqrt N}$ by Proposition~\ref{prop:dehn}(a). As in the proof of
Theorem~\ref{thm:main-group}, the chunk $E_2$ of this certificate therefore has
\[
\log\sigma_{E_2}(r)\ \ge\ c\,\frac{r}{(\log r)^{2+p_H/2}},\qquad 2+\tfrac{p_H}2<7.38,
\]
and, using only Lee's bound, $\log\sigma_{E_2}(r)\ge c\,r\,e^{-C\sqrt{\log r}}$; unitary models need $\log D\ge c_\Delta e^{-2}/(\log(1/e))^{4+p_H}$.
In the other direction, hashing the configurations directly, as in Proposition~\ref{prop:upper}, needs a configuration space of dimension of
order $r^2$. Hashing the cells first avoids this. A cell $(y_1,y_2)$ gets the label $a_1(f_1^{-1}y_1)+a_2(f_2^{-1}y_2)\in\F_q$. Here the $f_j$ are
positions in finite F{\o}lner charts of $H$ of size $e^{O(r\log r)}$, inverses of right F{\o}lner sets because the charts move by left
multiplication, and the $a_j$ are random tables on $O(r)$ points. The label is covariant under the charts, and it separates the finitely many
cells of a chunk with probability $1-O(1/q)$. The configurations are then compressed linearly to $\F_2^{\F_q}$ and hashed affinely as in
Proposition~\ref{prop:upper}. With $q$ of order $\ell^2r$, for a chunk of words of length at most $\ell$, and the other parameters of order $r$,
this gives $\log\sigma_{E'}(r)\le C_{E'}r\log(2+r)$ for every chunk $E'$ of $\Gamma_2$. The same holds for the $d$-dimensional grid host, built
in the same way on the cells of $Y^d$ with $H^d$ acting, for every fixed $d$. Hence $\sigma_{E_2}(r)=\exp(r^{1+o(1)})$ as well, and no
fixed-dimensional grid host exceeds $\exp(r^{1+o(1)})$. The proofs follow those of Sections~\ref{sec:fillings}--\ref{sec:profile} with the
changes indicated.
\end{remark}

\begin{remark}[The gadget channel of the grid host]\label{rem:grid-channel}
Adding $b$ and $j$ and the four words of Section~\ref{sec:lamp-channel} to the certificate of $\Gamma_2$ gives $s=9$, $r_0=81$ and $\Lambda=5462$,
so the gadget channel $\Phi_2$ of $\Gamma_2$ acts on $M_d$ with
\[
d=1+4\cdot9+3\cdot5462-4\cdot81=16099,
\]
and has $m=224$. Of its $81$ words, $78$ have length at least three, and the next-to-last prefix of each of them represents
the inverse of its last letter; $a=a^{-1}$, $b=b^{-1}$ and $j=ab$ give three more coincidences. Its Choi rank, the number of
distinct labels (Section~\ref{sec:lamp-channel}), is therefore at most $1+18+5300-78-3=5238$. With the transport of Remark~\ref{rem:grid}, words of length $O(N^{3/2})$ and pair areas
$O(N^2(\delta_H(C\sqrt N)+N))$, the hypothesis of Corollary~\ref{cor:amplification}(a) holds with the exponent $2+p_H/2$, and part~(b) of that
corollary holds with Lee's exponential Dehn bound~\cite{Lee2012} alone. The proof of Theorem~\ref{thm:main-channel} then gives three bounds for
$\Phi_2$. At purity $O(\log n)$ the memory is at least $c\,n/(\log n)^{p_H+5}$, where $p_H+5<15.75$, and at least $n\,2^{-C\sqrt{\log n}}$ using
only Lee's bound. At the rank rate it is at least $c\sqrt n/(\log n)^{2+p_H/2}$, whatever the purity. By the two-stage hashing of
Remark~\ref{rem:grid}, $\log\sigma(r)=O(r\log r)$ for $\Gamma_2$ as well, which gives an upper bound of order $n\log n$. For $\Gamma$ itself,
Lee's bound alone gives pair areas $2^{O(N)}$, and the argument gives only memory $n^{c}$ for some $c>0$.
\end{remark}
\clearpage
\section{Results from the companion papers, and machine checks}\label{app:provenance}

Table~\ref{tab:dependencies} lists every result of the companion papers~\cite{DouglasMghirbiA,DouglasMghirbiB} on which an argument here
depends, and where it is used. Results of the companions mentioned only for comparison are cited in the text.

\begin{table}[ht]
\centering\small
\begin{tabular}{@{}p{0.47\textwidth}p{0.2\textwidth}p{0.25\textwidth}@{}}
\toprule
result & source & used in\\
\midrule
counting criterion (proof in App.~\ref{app:imported}) & \cite[Thm.~8.1]{DouglasMghirbiA} & Thm.~\ref{thm:criterion}, every lower bound\\
transport by the powers of one element & \cite[Cor.~8.2]{DouglasMghirbiA} & Section~\ref{sec:criterion}\\
Johnson's presentation of $H_3$ and its conventions & \cite[Section~2]{DouglasMghirbiA} & Appendix~\ref{sec:houghton}\\
far-commutator area and polynomial Dehn bound for $H_3$ & \cite[Thms.~3.2, 3.3]{DouglasMghirbiA} & Prop.~\ref{prop:dehn}\\
quadratic lower bound along a ray & \cite[Prop.~3.4]{DouglasMghirbiA} & Sections~\ref{sec:intro}, \ref{sec:criterion}, \ref{sec:ladder}\\
bounded or at least $2r$ (sharpening Cornulier's Fact~3.5); embeddings in finite groups & \cite[Lemma~2.1]{DouglasMghirbiA} & Section~\ref{sec:criterion}\\
superpolynomial profile of $H_3$ & \cite[Thm.~1.1]{DouglasMghirbiA} & Section~\ref{sec:thompson}\\
projection products, exact $S_3$ rounding, sector bounds & \cite[Section~4]{DouglasMghirbiA} & Lemma~\ref{lem:projection-product}, App.~\ref{app:imported}\\
\midrule
causal services, exchange, purity and error & \cite[Def.~2.3, Section~3]{DouglasMghirbiB} & Section~\ref{sec:channel}\\
gadget channels: Choi rank $r_*$, flat probe, entropy $\log r_*$ & \cite[Prop.~9.1]{DouglasMghirbiB} & Section~\ref{sec:channel-setting}\\
the rank rate is the optimal exchange rate of gadget channels in $\cK_d$ & \cite[Props.~3.1, 9.1]{DouglasMghirbiB} & Thm.~B(b)\\
sequential extraction (proof in App.~\ref{app:imported}) & \cite[Thm.~5.2]{DouglasMghirbiB} & Thm.~\ref{thm:extraction}\\
purity budget $14Q+3$ at the rank rate (sharpened here) & \cite[Thm.~5.8]{DouglasMghirbiB} & Prop.~\ref{prop:purity-budget}\\
entropy variance & \cite[Lemma~5.9]{DouglasMghirbiB} & Lemma~\ref{lem:surprisal}\\
rolling-reservoir compiler & \cite[Appendix~A, Lemmas~A.1, A.4]{DouglasMghirbiB} & Thm.~\ref{thm:streaming}\\
channel profiles are group profiles; hyperlinear groups give channels in $\cK_d$ & \cite[Thm.~9.2]{DouglasMghirbiB} & Section~\ref{sec:channel}\\
distances between channels and the Choi profile & \cite[Def.~2.1, Section~2]{DouglasMghirbiB} & Cor.~\ref{cor:profile}\\
exact factors are cheap at their rank rate & \cite[Thm.~8.6]{DouglasMghirbiB} & Appendix~\ref{sec:physics-finer}\\
one-round theorem for the Houghton gadget (generalised here) & \cite[Thm.~5.5, Appendix~B]{DouglasMghirbiB} & Thm.~\ref{thm:weighted}, Appendix~\ref{sec:weighted-proof}\\
weighted norms; charged transport and weighted extraction for the Houghton gadget (generalised here) & \cite[Appendix~B, Lemmas~B.2, B.4]{DouglasMghirbiB} & Lemmas~\ref{lem:weighted-extraction}, \ref{lem:charged-area}\\
weighted $S_3$ rounding & \cite[Lemma~B.3]{DouglasMghirbiB} & Lemma~\ref{lem:rounding}\\
weighted sector bound (proved here by sequential measurement) & \cite[Thm.~B.1]{DouglasMghirbiB} & Thm.~\ref{thm:sector}\\
\bottomrule
\end{tabular}
\caption{Results of the companion papers used in this paper (numbering of their version 1.1.2).}
\label{tab:dependencies}
\end{table}

\subsection{Machine checks}\label{app:machine-checks}

The $25$ freely reduced words are listed in the file \texttt{anc/relators1d.txt}. The script \texttt{anc/relators1d\_check.py}
rebuilds each word from the abbreviations and checks it, independently of Lemma~\ref{lem:validity}, in the faithful action of $\Gamma$ on
$V\times S_3$. For a word $w$ of length $l$ it checks that the exponent sums of $p$ and $q$ vanish; that $w$ fixes the empty configuration and
every configuration $\{y\}$ with $y$ of height at most $l+2$ or far out on a ray; and that it fixes $(v,s)$ for every $s\in S_3$ at every
configuration $v$ at which one of its lamp letters acts. These checks are exhaustive: while $w$ is applied, a point of height greater than $l+2$
never reaches the points $1$ and $2$ and is moved only by translations along its ray, so the checks determine the affine part of $w$, and then
the lamps of $w$ sit at the configurations just listed. The script also checks random configurations, that the abbreviations act as stated, and,
as a negative control, that deleting any single letter from a sample of eleven relations produces a word that acts nontrivially.

The script \texttt{anc/embedding3v\_check.py} composes the prefix tables exactly and checks the identities of the proof of
Proposition~\ref{prop:3V} and
that the images of $p,q,t,c,a$ satisfy the $25$ relations of Table~\ref{tab:certificate}.

The script \texttt{anc/choi\_rank\_check.py} computes the Choi rank of the lamp channel. It sorts the identity and the $276$ symbols of
the gadget of $R_\Phi$ into classes of equal elements of $\Gamma$ twice: once with the faithful action on configurations and the triviality
test just described, and once with the exact images in $3V$. The two partitions agree and have $130$ classes, so $r_*=130$.

The fillings in the proof of Theorem~\ref{thm:dyadic} have been generated and checked by free reduction for $k\le64$; the largest ratio of cell count to $k^2$ among
them is $15/4$, at $k=2$.



\end{document}